\newtheorem{thm}{Theorem}[section]
\newtheorem{cor}[thm]{Corollary}
\newtheorem{lem}[thm]{Lemma}
\newtheorem{prop}[thm]{Proposition}
\newtheorem{defn}[thm]{Definition}
\newtheorem{prerem}[thm]{Remark}
\newenvironment{remark}%
  {\begin{prerem}\upshape}{\end{prerem}}
\newcommand{\thmref}[1]{Theorem~\ref{#1}}
\newcommand{\defnref}[1]{definition~\ref{#1}}
\newcommand{\secref}[1]{\S\ref{#1}}
\newcommand{\lemref}[1]{Lemma~\ref{#1}}
\newcommand{\eqnref}[1]{~(\ref{#1})}
\newcommand{\germ}{\mathfrak}
\subjclass{Primary 17B67, 81R10}
\theoremstyle{rem}
\numberwithin{equation}{section}
\newtheorem{preexample}[thm]{Example}
\newenvironment{example}%
  {\begin{preexample}\upshape}{\end{preexample}}
\newtheorem{notation}[thm]{Notation}
\numberwithin{thm}{section}
\newtheorem{lem/defn}[thm]{Lemma/Definition}
\newtheorem{preex/defn}[thm]{Example/Definition}
\newenvironment{ex/defn}%
  {\begin{preex/defn}\upshape}{\end{preex/defn}}
\numberwithin{equation}{section}
\newcommand{\ten}{\otimes}
\newcommand{\del}{\bigtriangleup}
\newcommand{\abs}[1]{\lvert#1\rvert}
\DeclareMathOperator{\End}{End}
\date{}
\begin{document}

\title[$N$-point locality for vertex operators]{$N$-point locality for vertex operators: normal ordered products, operator product expansions, twisted vertex algebras}
\author{Iana I. Anguelova, Ben Cox, Elizabeth Jurisich}


\date{\today}

\maketitle

\tableofcontents

\begin{abstract}
In this paper we study   fields satisfying $N$-point locality and their properties. We obtain residue formulae for $N$-point local fields in terms of derivatives of delta functions and Bell polynomials. We introduce the notion of the space of descendants of $N$-point local fields which includes normal ordered products  and coefficients of operator product expansions.  We show that examples of $N$-point local fields include the vertex operators generating the boson-fermion correspondences of type B, C and D-A. We apply the normal ordered products of  these vertex operators to the setting of the representation theory of the double-infinite rank Lie algebras $b_{\infty}, c_{\infty}, d_{\infty}$.
Finally, we show that  the field theory generated by  $N$-point local fields and their descendants  has a structure of a twisted vertex algebra.
\end{abstract}

\section{Introduction}
\label{sec:intro}
Vertex operators were introduced  in
string theory and now play an important role in many areas such as quantum field theory,
integrable models, statistical physics, representation theory, random matrix theory, and many others.
There are different vertex algebra  theories, each  designed to describe different sets of examples of collections of fields. The best known is the theory of super vertex algebras (see for instance  \cite{BorcVA}, \cite{FLM}, \cite{FHL}, \cite{Kac}, \cite{LiLep}, \cite{FZvi}), which  axiomatizes the properties of some, simplest, systems of
vertex operators. Locality is a property that plays crucial importance for super vertex algebras (\cite{LiLocality}) and  the axioms of super vertex algebras are often given in terms of locality (see \cite{Kac}, \cite{FZvi}). On the other hand, there are field theories which do not satisfy the usual locality property, but rather a generalization. Examples of these are the generalized vertex algebras (\cite{DongLepVA}, \cite{KacGenVA}, $\Gamma$-vertex algebras (\cite{Li2}), deformed chiral algebras (\cite{FR}), quantum vertex algebras (see e.g. \cite{EK}, \cite{BorcQVA}, \cite{Li1}, \cite{LiQuantum3}, \cite{AB}).

We consider $\lambda \in \mathbb{C}^*$ to be  a point of locality for two fields $a(z)$, $b(w)$ if $a(z)b(w)$ is singular at $z=\lambda w$ (for a precise definition  see Section \ref{section:NormalOrdProd}).  The usual vertex algebra locality is then just locality at the single point $\lambda =1$.
In this paper we study the following field theory problems: First, if we start with fields which are local at several, but finite number of points, then what are the properties that these fields and their descendants satisfy? Second, what is the algebraic structure that the system of descendants of such $N$-point local fields, in its entirety,  satisfies?

In Section \ref{sec:locality} we prove the basic property of  $N$-point local distributions: that they can be expressed in terms of delta functions at the points of locality and their derivatives (Theorem \ref{mainresult}). A question in  any field theory is: if we start with some collection of (generating) fields, which fields do we consider to be their descendants? In this paper we take an approach motivated by quantum field theory  and representation theory and (as was the case of usual 1-point locality)  we  consider derivatives of fields, Operator Product Expansion coefficients and normal ordered products of fields to be among the descendant fields (for precise definitions see  Section  \ref{section:NormalOrdProd}, and in particular Definition  \ref{defn:fielddesc}). For $N$-point locality  it is also natural to consider substitutions at the points of locality to be among the descendant fields, i.e., if $a(z)$ is a field and $\lambda \in \mathbb{C}^*$ is  a point of locality, then $a(\lambda w)$ is a descendant field (this of course trivially holds in the case of usual 1-point locality at $\lambda =1$). In Section \ref{section:NormalOrdProd} we study  the descendant fields and their properties. We show that the Operator Product Expansion (OPE) formula holds and we provide  residue formulas for the OPE coefficients. We also prove properties for the normal order products of fields, such as the Taylor expansion property and the residue formulas. For two fields $a(z)$, $b(w)$ we give a general definition of products of fields $a(z)_{(j, n)}b(w)$, where for $n\geq 0$ these new products coincide with   the OPE coefficients and for   $n< 0$ they coincide with  the normal ordered products (see Definition \ref{defn:products-of-fields}). This seemingly unjustified  unification of these two types of descendants  is in fact explained later by Lemma \ref{lem:AnalContFormProd}, which shows that both the OPE coefficients and the normal ordered products are just different residues of the same analytic continuation, see Lemmas \ref{lem:AnalContToLocal} and \ref{lem:localityToanalcont}.) In  Section  \ref{section:NormalOrdProd} we also  prove a   generalization of Dong's Lemma: if we start with fields that are $N$-point local, then the products of fields are also $N$-point local, thus the entire system of descendants will consist of $N$-point local fields. We finish  Section  \ref{section:NormalOrdProd} by proving properties relating the OPE expansions of normal ordered products of fields.

Among earlier studies of fields that satisfy properties closely related to $N$-point locality as defined here, our work is most closely related to that of  \cite{Li2}.
Our definition of products of fields (and thus descendant fields) is quite different, and as a result Li's theory of $\Gamma$-vertex algebras does not generally coincide with our twisted vertex algebras. See the Appendix \secref{section:appendix} for a detailed comparison between the two constructions.
Our definition is motivated from quantum field theory, and we specifically require that the OPE coefficients and the normal ordered products be elements in our twisted vertex algebras. We provide examples that demonstrate certain normal ordered products cannot be elements of  $\Gamma$-vertex algebra (see the Appendix \secref{section:appendix}).
In general, for a finite cyclic group $\Gamma$,  $\Gamma$-vertex algebras are described by  smaller collections of descendant fields, and  are in fact  subsets of the systems of fields  that we consider.

In Section \ref{section: examples} we detail three examples of $N$-point local fields, and calculate examples of their descendants fields. These three examples, although in some sense the simplest possible, are particularly important due to their connection to both representation theory and integrable systems. In Section \ref{section:reptheory} we show that the normal ordered products of the fields from  Section \ref{section: examples} produce representations of the double-infinite rank Lie algebras $b_{\infty}$, $c_{\infty}$ and $d_{\infty}$. Although these representations have been known earlier (see \cite{DJKM-4}, \cite{YouBKP} for $b_{\infty}$, \cite{DJKM6} for $c_{\infty}$,  \cite{WangKac}, \cite{Wang} for  $d_{\infty}$), the proofs we present are written in terms of normal ordered products and generating series, which is new for the cases of $b_{\infty}$ and  $c_{\infty}$. The case of $d_{\infty}$ is interesting because even though the operator product expansion is $1$-point local with point of locality $1$, it is necessary to consider $N$-points of locality in order to obtain the bozon-fermion correspondence in this case (which was done in \cite{AngTVA}).

Finally, in Section \ref{sectio:tva} we state the definition of twisted vertex algebra  \cite{AngTVA}, and show that it can be re-formulated in terms of $N$-point locality. This shows that twisted vertex algebras can be considered a generalization of super vertex algebras. We finish with  results establishing the strong generation theorem for a twisted vertex algebra and the existence of analytic continuations of arbitrary products of fields.

\ \\
\section{Notation and preliminary results}
\label{sec:notation}

In this section we summarize the notation that will be used throughout the paper.
Let $U$ be an associative algebra with unit. We denote by $U[\!]z^{\pm 1}]\!]$ the doubly-infinite series in the formal variable $z$:
\[
U[\!]z^{\pm 1}]\!]=\{s(z) \,|\,  s(z)=\sum_{m\in \mathbb{Z}}s_{m}z^m, \ \ s_{m}\in U\}.
\]
An element of $U[\!]z^{\pm 1}]\!]$ is called a {\bf formal distribution}.
Similarly, $U[\!]z]\!]$ denotes  the series in the formal variable $z$ with only nonnegative powers in $z$:
\[
U[\!]z]\!]=\{s(z) \,|\,  s(z)=\sum_{m\ge 0}s_{m}z^m, \ \ s_{m}\in U\}.
\]
Let $s(z)\in U[\!]z^{\pm 1}]\!]$. The coefficient $s_{-1}$ is "the formal residue $\text{Res}_z s(z)$": $
\text{Res}_z s(z) :=s_{-1}.$
We denote by $U[\!]z^{\pm 1},w^{\pm 1}]\!]$ the doubly-infinite series in variables $z$ and $w$:
\[
U[\!]z^{\pm 1}, w^{\pm 1}]\!]=\{a(z, w) \,|\,  a(z, w)=\sum_{m, n\in \mathbb{Z}}a_{m, n}z^m w^n, \ \ a_{m, n}\in U\}.
\]
Similarly, we will use the notations  $U[\!]z, w]\!]$, $U[\!]z^{\pm 1}, w]\!]$, $U[\!]z, w^{\pm 1}]\!]$ where for instance  $U[\!]z, w]\!]$ denotes  the infinite series in variables $z$ and $w$ with only nonnegative powers in both $z$ and $w$:
\[
U[\!]z, w]\!]=\{a(z, w) \,|\,  a(z, w)=\sum_{m, n\ge 0}a_{m, n}z^m w^n, \ \ a_{m, n}\in U\}.
\]
\begin{remark}
For any $a(z, w)\in U[\!]z^{\pm 1},w^{\pm 1}]\!]$ the formal residue $\text{Res}_z a(z, w)$ is a well-defined series in $U[\!]w^{\pm 1}]\!]$:
\[
\text{Res}_z a(z, w) =\sum_{n\in \mathbb{Z}}a_{-1, n} w^n \in U[\!]w^{\pm 1}]\!]
\]
\end{remark}

A very important example of a doubly-infinite series is given by the formal delta-function at $z=w$ (recall e.g., \cite{LiLocality}, \cite{Kac}, \cite{LiLep}):
\begin{defn} The formal delta-function at $z=w$ is given  by:
$$
\delta(z, w):=\sum_{n\in\mathbb Z}z^nw^{-n-1}.
$$
\end{defn}
The formal delta-function at $z=w$ is an element of $U[\!]z^{\pm 1},w^{\pm 1}]\!]$ for any $U$ --an associative algebra with unit, as each of the coefficients of $\delta(z, w)$ is equal to the unit of $U$. To emphasize its main property (see the first part of Lemma \ref{fact:firstpropdelta}), and in keeping with the a common usage in physics we will denote this formal delta function by $\delta(z- w)$ (even though this is something of an abuse of notation.)

Let $\lambda\in \mathbb{C}$ be a fixed complex number, $\lambda\neq 0$.
In many cases (e.g., when $U=End(V)$ for a complex vector space $V$) we can consider $U$ to contain a copy of $\mathbb{C}$. For the remainder of this paper, we assume $\mathbb{C}\subset U$.
 The formal delta-function at $z=\lambda w$ is given by:
$$
\delta(z-\lambda w):=\delta(z,\lambda w)= \sum_{n\in\mathbb Z}\lambda ^{-n-1} z^nw^{-n-1}.
$$

Again, by abuse of notation we write $\delta(z-\lambda w)$ even though it depends on two formal variables $z$ and $w$, and a parameter $\lambda$.

 As is well known for the formal delta-function at $z=w$ (see e.g., \cite{LiLocality}, \cite{Kac}, \cite{LiLep}) we have the following properties for the formal delta-function at $z=\lambda w$:
\begin{prop} (\cite{LiLocality}, \cite{Kac} Prop. 2.1.)
\label{fact:firstpropdelta}
For $n\in \mathbb{Z}, n\ge 0$, $\lambda\in\mathbb C\backslash\{0\}$,
\begin{enumerate}
\item  For any $f(z)\in U[\!]z^{\pm 1}]\!]$, one has $f(z)\delta(z-\lambda w)=f(\lambda w)\delta(z-\lambda  w)$ and in particular $(z-\lambda w)\delta(z-\lambda  w)=0$.
\item For any $f(z)\in U[\!]z^{\pm 1}]\!]$ we have $\text{Res}_z f(z) \delta(z-\lambda w) =f(\lambda w)$.
\item $\delta(z-\lambda w)=\delta(z, \lambda w) =\lambda^{-1} \delta(w, \lambda^{-1} z)=\lambda^{-1} \delta(w-\lambda^{-1} z)$.
\item $(z-\lambda w)\partial^{(n)} _{\lambda w}\delta(z-\lambda  w) =\partial^{(n-1)} _{\lambda w}\delta(z-\lambda   w)$ for $n\geq 1$.\label{fact:n,n-1}\label{absorption}
\item $(z-\lambda w)^{n+1}\partial^n _z\delta(z-\lambda w)=0$.
\item $\partial _z\delta(z-\lambda w) =-\lambda^{-1} \partial_w  \delta(w-\lambda^{-1} z)$.\label{fact:n+1n}
\end{enumerate}
\end{prop}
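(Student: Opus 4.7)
The plan is to verify all six parts by direct series manipulation, using the explicit expansion $\delta(z-\lambda w) = \sum_{n \in \mathbb{Z}} \lambda^{-n-1} z^n w^{-n-1}$ and bootstrapping later parts from earlier ones. Nothing deep happens here; the real work is careful index tracking and sign/$\lambda$-exponent bookkeeping.

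For (1), I would write $f(z) = \sum_m f_m z^m$, multiply term by term against $\delta(z-\lambda w)$, and reindex by $k = m+n$ to rewrite the product as $\delta(z-\lambda w) \cdot \sum_m f_m (\lambda w)^m = f(\lambda w)\, \delta(z-\lambda w)$. The identity $(z-\lambda w)\delta(z-\lambda w)=0$ is then the special case $f(z) = z - \lambda w$, since $f(\lambda w) = 0$. Part (2) drops out of (1) by taking $\text{Res}_z$ and using the fact that $\text{Res}_z \delta(z-\lambda w) = 1$ (only the $n=-1$ term contributes, with coefficient $\lambda^{0} w^{0}$). Part (3) is pure bookkeeping: expand $\delta(z,\lambda w)$ and $\lambda^{-1}\delta(w,\lambda^{-1} z)$ from the definition, then reindex $n \mapsto -n-1$ in one of them to match the other.

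For (4), compute
\[
\partial_{\lambda w}^{(n)} \delta(z-\lambda w) = (-1)^n \sum_{k \in \mathbb{Z}} \binom{k+n}{n} z^k (\lambda w)^{-k-n-1},
\]
multiply by $(z-\lambda w)$, reindex the $z$-factor in the shifted piece, and invoke the Pascal identity $\binom{k+n}{n} - \binom{k+n-1}{n} = \binom{k+n-1}{n-1}$ to collapse the two sums into $\partial_{\lambda w}^{(n-1)} \delta(z-\lambda w)$. For (5), apply the Leibniz rule to $\partial_z^n$ of the identity $(z-\lambda w)\delta(z-\lambda w) = 0$ from (1); only two terms survive because $\partial_z(z - \lambda w) = 1$ and higher derivatives vanish, yielding the recursion $(z-\lambda w)\,\partial_z^n \delta = -n\, \partial_z^{n-1}\delta$. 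Iterating this $n$ times gives $(z-\lambda w)^n\, \partial_z^n \delta = (-1)^n n!\, \delta$, and one more multiplication by $(z-\lambda w)$ together with (1) finishes the proof.

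For (6), I would first establish the auxiliary identity $\partial_z \delta(z-\lambda w) + \lambda^{-1}\partial_w \delta(z-\lambda w) = 0$, which is immediate from the series (after a single reindex $m = n-1$, both terms become $\sum_m (m+1)\lambda^{-m-2} z^m w^{-m-2}$ with opposite signs). Then I would apply $\partial_w$ to identity (3) to convert $\partial_w \delta(z-\lambda w)$ into a derivative of $\delta(w-\lambda^{-1} z)$, which gives the stated formula. The main (and only) obstacle anywhere in the argument is the risk of sign errors and miscounted $\lambda$-powers during the reindexing steps; otherwise everything is mechanical and each part falls out of the series definition together with the preceding parts.
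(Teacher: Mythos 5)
The paper does not actually prove \propref{fact:firstpropdelta}; it is stated as a citation to \cite{LiLocality} and \cite{Kac}, so there is no internal proof to compare yours against. Your direct series-manipulation approach is the standard textbook method, and for parts (1) through (5) everything you describe is correct: (1) by reindexing $k=m+n$; (2) from (1) plus $\text{Res}_z\,\delta(z-\lambda w)=1$ (the $n=-1$ term); (3) by reindexing $n\mapsto -n-1$; (4) your closed form $\partial_{\lambda w}^{(n)}\delta(z-\lambda w)=(-1)^n\sum_k\binom{k+n}{n}z^k(\lambda w)^{-k-n-1}$ is right, and Pascal's rule collapses the two sums after multiplying by $(z-\lambda w)$; (5) Leibniz on $(z-\lambda w)\delta=0$ gives $(z-\lambda w)\partial_z^n\delta=-n\,\partial_z^{n-1}\delta$, and iterating plus one more multiplication finishes.

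There is, however, one issue you should have caught in part (6). Your auxiliary identity $\partial_z\delta(z-\lambda w)=-\lambda^{-1}\partial_w\delta(z-\lambda w)$ is correct, and differentiating (3) gives $\partial_w\delta(z-\lambda w)=\lambda^{-1}\partial_w\delta(w-\lambda^{-1}z)$. But chaining these two yields
\[
\partial_z\delta(z-\lambda w)=-\lambda^{-2}\,\partial_w\delta(w-\lambda^{-1}z),
\]
with a factor of $\lambda^{-2}$, not $\lambda^{-1}$ as the proposition states. You assert that your computation ``gives the stated formula,'' which it does not. A quick sanity check at $\lambda=-1$ confirms this: the series give $\partial_z\delta(z+w)=-\partial_w\delta(w+z)$, whereas the formula as printed would produce $+\partial_w\delta(w+z)$. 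The exponent in the paper's part (6) appears to be a typographical slip (everything else in the paper is consistent with the corrected constant, and the two agree at $\lambda=1$). The right move when a careful computation disagrees with a stated result is to flag the discrepancy and explain which side is wrong, rather than to claim agreement; as written, your final sentence of the proof of (6) is false.
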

%
\begin{notation}{\bf Divided powers}
For any $a\in A$, where $A$ is an  associative  $\mathbb{C}$ algebra, denote $a^{(n)}:=\frac{a^n}{n!}$.
\end{notation}
In this work we encounter formal delta functions at $z=\lambda_i w$, for $\lambda_i\in \mathbb{C}, \ \lambda_i\neq 0$,  $\lambda_i\neq \lambda_j$, $1\leq i,j\leq N$, which satisfy  properties extending the properties above:
\begin{lem}[Factoring properties]\label{factorfacts}
For $j, n\in \mathbb{Z}, j, n\ge 0$,
\begin{enumerate}
\item $(z-\lambda _i w)\delta(z-\lambda _j w)=(\lambda_j-\lambda_i)w \delta(z-\lambda _j w)$\label{factoring2.5}
\label{property1}
\item
 $$(z-\lambda _i w)\partial^{(n)} _{\lambda_jw}\delta(z-\lambda _j w)
=\partial^{(n-1)} _{\lambda_jw}\delta(z-\lambda _j w) +(\lambda_j-\lambda_i)\cdot w \partial^{(n)} _{\lambda_jw}\delta(z-\lambda _jw).$$
\label{property2}
 \item \label{zlambdawfactor}
 \begin{align*}
\partial^{(n)}_{\lambda_jw}\delta(z-\lambda _jw)
&=\frac{(z-\lambda _i w)w^{-n-1}}{(\lambda_j-\lambda_i)^{n+1}}\left(\sum_{k=0}^n(-1)^{n-k}(\lambda_j-\lambda_i)^kw^k\partial_{\lambda_jw}^{(k)}\delta(z-\lambda_jw)\right).
\end{align*}
\end{enumerate}
\end{lem}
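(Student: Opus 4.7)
The plan is to deduce all three parts from the two absorption identities already recorded in Proposition \ref{fact:firstpropdelta}, namely $(z-\lambda w)\delta(z-\lambda w)=0$ and $(z-\lambda w)\partial^{(n)}_{\lambda w}\delta(z-\lambda w)=\partial^{(n-1)}_{\lambda w}\delta(z-\lambda w)$, by exploiting the trivial rewriting
\[
z-\lambda_i w \;=\; (z-\lambda_j w)+(\lambda_j-\lambda_i)w.
\]

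For part (\ref{property1}) I would simply multiply $\delta(z-\lambda_j w)$ by the above decomposition of $z-\lambda_i w$: the first summand kills the delta by Proposition \ref{fact:firstpropdelta}(1), and the second summand gives the stated right-hand side. For part (\ref{property2}) the same substitution works, but now one applies Proposition \ref{fact:firstpropdelta}(\ref{fact:n,n-1}) to the $(z-\lambda_j w)\partial^{(n)}_{\lambda_j w}\delta(z-\lambda_j w)$ piece, producing the $\partial^{(n-1)}_{\lambda_j w}\delta(z-\lambda_j w)$ term, while the $(\lambda_j-\lambda_i)w$ piece yields the second term. These two verifications are short and mechanical.

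The substance lies in part (\ref{zlambdawfactor}), which I would prove by induction on $n$ using part (\ref{property2}) as the recursion. The base case $n=0$ is exactly part (\ref{property1}) divided formally by $(\lambda_j-\lambda_i)w$ (this is legal since we are multiplying by the nonzero scalar $(\lambda_j-\lambda_i)^{-1}$ and by $w^{-1}$, which acts on Laurent series in $w$). For the inductive step I would rearrange part (\ref{property2}) as
\[
\partial^{(n)}_{\lambda_j w}\delta(z-\lambda_j w)
=\frac{(z-\lambda_i w)\,\partial^{(n)}_{\lambda_j w}\delta(z-\lambda_j w)}{(\lambda_j-\lambda_i)w}-\frac{\partial^{(n-1)}_{\lambda_j w}\delta(z-\lambda_j w)}{(\lambda_j-\lambda_i)w},
\]
substitute the inductive expression for $\partial^{(n-1)}_{\lambda_j w}\delta(z-\lambda_j w)$ into the second term, and then factor out the common $(z-\lambda_i w)w^{-n-1}(\lambda_j-\lambda_i)^{-n-1}$. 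The bookkeeping of signs $(-1)^{n-k}$ and powers of $(\lambda_j-\lambda_i)w$ then collapses to the displayed sum, with the new $k=n$ term coming from the $(z-\lambda_i w)\partial^{(n)}$ piece and the terms $k=0,\dots,n-1$ coming from the inductive expression.

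The only mildly delicate point is the division by $w$ in the base case and in the recursion: one needs to know that the identities still hold after dividing by $w$, which is fine because each side of part (\ref{property1}) is of the form $w\cdot(\text{series in } z,w^{\pm1})$, so the factor of $w$ can be cancelled formally. I would flag this once at the start of the argument, after which the rest is a bookkeeping induction.
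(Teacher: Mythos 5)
Your proof is correct and follows essentially the same route as the paper: both derive parts (1) and (2) from the decomposition $z-\lambda_i w = (z-\lambda_j w)+(\lambda_j-\lambda_i)w$ together with the absorption identities of Proposition \ref{fact:firstpropdelta}, and both obtain part (3) by rearranging part (2) into a recursion and inducting on $n$, substituting the inductive expression into the lower-order term. The paper's write-up displays the $n=1$ case as the illustrated base instead of $n=0$, but this is only a cosmetic difference.
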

\begin{proof}
Part \eqnref{property2} follows immediately from Proposition \ref{fact:firstpropdelta}, \eqnref{fact:n,n-1}:
\begin{align*}
(z-\lambda _i w)\partial^{(n)} _{\lambda_jw}\delta(z-\lambda _j w)&=(z-\lambda _j w)\partial^{(n)} _{\lambda_jw}\delta(z-\lambda _j w) +(\lambda _j -\lambda _i)\cdot w\partial^{(n)} _{\lambda_jw}\delta(z-\lambda _j w)\\
&=\partial^{(n-1)} _{\lambda_jw}\delta(z-\lambda _j w) +(\lambda_j-\lambda_i)\cdot w \partial^{(n)} _{\lambda_jw}\delta(z-\lambda _jw).
\end{align*}
From\eqnref{property2} we get
\begin{align*}
\partial^{(1)} _{\lambda_jw}\delta(z-\lambda _jw)&=\frac{(z-\lambda _i w)}{(\lambda_j-\lambda_i)w}\partial^{(1)} _{\lambda_jw}\delta(z-\lambda _j w)
-\frac{1}{(\lambda_j-\lambda_i)w}\delta(z-\lambda _j w)   \\
&=\frac{(z-\lambda _i w)}{(\lambda_j-\lambda_i)w}\partial^{(1)} _{\lambda_jw}\delta(z-\lambda _j w)
-\frac{(z-\lambda _i w)}{(\lambda_j-\lambda_i)^2w^2}\delta(z-\lambda _j w)   \\
&=\frac{(z-\lambda _i w)}{(\lambda_j-\lambda_i)^2w^2}\left((\lambda_j-\lambda _i)w\partial^{(1)} _{\lambda_jw}\delta(z-\lambda _j w)-\delta(z-\lambda _j w) \right).
\end{align*}
%
So we suppose that
\begin{align*}
\partial^{(n)} _{\lambda_jw}\delta(z-\lambda _jw)
&=\frac{(z-\lambda _i w)}{(\lambda_j-\lambda_i)^{n+1}w^{n+1}}\left(\sum_{k=0}^n(-1)^{n-k}(\lambda_j-\lambda_i)^kw^k\partial_w^{(k)}\delta(z-\lambda_jw)\right).
\end{align*}
Then
\begin{align*}
\partial^{(n+1)} _{\lambda_jw}\delta(z-\lambda _jw)
&=\frac{(z-\lambda_iw)}{(\lambda_j-\lambda_i)\cdot w}\partial^{(n+1)} _{\lambda_jw}\delta(z-\lambda _jw)-\frac{1}{(\lambda_j-\lambda_i)w}\partial^{(n)} _{\lambda_jw}\delta(z-\lambda _j w)  \\
&=\frac{(z-\lambda _i w)}{(\lambda_j-\lambda_i)^{n+2}w^{n+2}}\left(\sum_{k=0}^{n+1}(-1)^{n+1-k}(\lambda_j-\lambda_i)^kw^k\partial_{\lambda_jw}^{(k)}\delta(z-\lambda_jw)\right).
\end{align*}
\end{proof}

For a rational function $f(z,w)$ we denote by $i_{z,w}f(z,w)$
the expansion of $f(z,w)$ in the region $\abs{z}\gg \abs{w}$ (the region in the complex plane outside of all  the points $z=\lambda_i w$, $\lambda_i\in \mathbb{C}, \ 1\leq i\leq n$), and correspondingly for
$i_{w,z}f(z,w)$.

\begin{example}
\begin{align}
i_{z,w}\frac{1}{z-\lambda w}&=\sum _{n\ge 0}\lambda ^{n} z^{-n-1}w^{n}, \\
i_{w, z}\frac{1}{z-\lambda w}&=-\sum _{n\ge 0}\lambda ^{-n-1} z^nw^{-n-1}=-\sum _{n< 0}\lambda ^{n} z^{-n-1}w^{n}.
\end{align}
Moreover
\begin{align}
\delta(z-\lambda  w)&=i_{z, w}\frac{1}{z-\lambda w} -i_{w, z}\frac{1}{z-\lambda w}\label{deltaExpanded1}\\
\partial_{\lambda w}^{(l)}\delta(z-\lambda  w)&=i_{z, w}\frac{1}{(z-\lambda w)^{l+1}} -i_{w, z}\frac{1}{(z-\lambda w)^{l+1}}\label{deltaExpanded2}
\end{align}

\end{example}

For any $a(z)=\sum_{m\ge 0}a_{m}z^m \in U[\!]z^{\pm 1}]\!]$ we use the following notation:
\begin{equation}\label{eqn:plusonfield}
a(z)_-:=\sum_{n< 0}a_nz^{n} \in U[\!]z^{-1}]\!],\quad a(z)_+:=\sum_{n\geq 0}a_nz^{n}\in U[\!]z^{ 1}]\!].
\end{equation}
Similarly for any $a(z, w)=\sum_{m, n\in \mathbb{Z}}a_{m, n}z^m w^n\in U[\!]z^{\pm 1},w^{\pm 1}]\!]$ we use the following notation
\begin{align}
a(z, w)_{z,-}:&=\sum_{m< 0, n\in \mathbb{Z}}a_{m, n}z^{m}w^n \in U[\!]z^{-1}, w^{\pm 1}]\!], \label{plusOnField1}\\
a(z, w)_{z, +}:&=\sum_{m\geq 0, n\in \mathbb{Z}}a_nz^{n}\in U[\!]z, w^{\pm 1}]\!], \label{plusOnField2}\\
a(z, w)_{w,-}:&=\sum_{m\in \mathbb{Z}, n< 0}a_{m, n}z^{m}w^n \in U[\!]z^{\pm 1}, w^{-1}]\!],\label{plusOnField3} \\
a(z, w)_{w, +}:&=\sum_{m\in \mathbb{Z}, n\geq 0}a_nz^{m}w^n\in U[\!]z^{\pm 1}, w]\!].\label{plusOnField4}
\end{align}

Thus from \eqnref{deltaExpanded1} and \eqnref{deltaExpanded2} we see that
\begin{align}
\delta(z-\lambda  w)_{z, -}=i_{z, w}\frac{1}{z-\lambda w}, & \quad \delta(z-\lambda  w)_{z, +}= -i_{w, z}\frac{1}{z-\lambda w},\label{deltaPM1}\\
\big(\partial_{\lambda w}^{(l)}\delta(z-\lambda  w)\big)_{z, -}=i_{z, w}\frac{1}{(z-\lambda w)^{l+1}}, & \quad
\big(\partial_{\lambda w}^{(l)}\delta(z-\lambda  w)\big)_{z, +}= -i_{w, z}\frac{1}{(z-\lambda w)^{l+1}}.\label{deltaPM2}
\end{align}
\begin{lem}{\bf (Cauchy formulas)}
\label{lem:CauchyForm}
For any $a(z)=\sum_{m\ge 0}a_{m}z^m \in U[\!]z^{\pm 1}]\!]$ and $\lambda \in \mathbb{C}$ we have
\begin{align}
\text{Res}_z \left(a(z)i_{z, w}\frac{1}{(z-\lambda w)^{n+1}}\right) & =\partial_{\lambda w}^{(n)} a(\lambda w)_+\\
\text{Res}_z \left(a(z)i_{w, z}\frac{1}{(z-\lambda w)^{n+1}}\right) & =-\partial_{\lambda w}^{(n)} a(\lambda w)_-
\end{align}
\end{lem}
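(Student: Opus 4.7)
The plan is to reduce both formulas to explicit series manipulations by first writing out the series expansions of $i_{z,w}\frac{1}{(z-\lambda w)^{n+1}}$ and $i_{w,z}\frac{1}{(z-\lambda w)^{n+1}}$, and then using a simple ``support in $z$'' observation to see which modes of $a(z)$ can possibly contribute to the residue.

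First, starting from the given expansions of $i_{z,w}\frac{1}{z-\lambda w}$ and $i_{w,z}\frac{1}{z-\lambda w}$, I would apply $\partial_{\lambda w}^{(n)}$ (which commutes with both $i_{z,w}$ and $i_{w,z}$) to obtain the explicit forms
\begin{align*}
i_{z,w}\frac{1}{(z-\lambda w)^{n+1}} &= \sum_{k\ge 0}\binom{k+n}{n}\lambda^{k}z^{-k-n-1}w^{k},\\
i_{w,z}\frac{1}{(z-\lambda w)^{n+1}} &= (-1)^{n+1}\sum_{k\ge 0}\binom{k+n}{n}\lambda^{-k-n-1}z^{k}w^{-k-n-1}.
\end{align*}
The first series contains only strictly negative powers of $z$, and the second only non-negative powers of $z$. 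This is the main structural input.

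For the first formula, I would split $a(z)=a(z)_{+}+a(z)_{-}$ as in \eqnref{eqn:plusonfield}. The product $a(z)_{-}\cdot i_{z,w}\frac{1}{(z-\lambda w)^{n+1}}$ contains only powers of $z$ strictly less than $-n-1$, hence contributes $0$ to $\text{Res}_{z}$. Therefore only $a(z)_{+}$ contributes, and matching $z^{-1}$ forces $m=k+n$, yielding
\[
\text{Res}_{z}\Bigl(a(z)\, i_{z,w}\tfrac{1}{(z-\lambda w)^{n+1}}\Bigr)=\sum_{k\ge 0}\binom{k+n}{n}a_{k+n}\lambda^{k}w^{k}.
\]
On the other hand, $\partial_{\lambda w}^{(n)}a(\lambda w)_{+}=\partial_{\lambda w}^{(n)}\sum_{m\ge 0}a_{m}(\lambda w)^{m}=\sum_{m\ge n}\binom{m}{n}a_{m}(\lambda w)^{m-n}$, which after the reindexing $k=m-n$ gives precisely the same expression.

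For the second formula the argument is symmetric: now $i_{w,z}\frac{1}{(z-\lambda w)^{n+1}}$ has only non-negative powers of $z$, so only $a(z)_{-}$ contributes to the residue. A parallel calculation, together with the identity $\binom{-|m|}{n}=(-1)^{n}\binom{|m|+n-1}{n}$ for $m<0$, produces the matching expression for $-\partial_{\lambda w}^{(n)}a(\lambda w)_{-}$; the sign discrepancy that yields the overall minus sign in the statement comes entirely from this binomial identity. The proof is thus a bookkeeping exercise, and there is no real conceptual obstacle — the only thing to be careful about is the sign/binomial identity for the second formula. As a sanity check, adding the two identities and using $\delta(z-\lambda w)=i_{z,w}\frac{1}{z-\lambda w}-i_{w,z}\frac{1}{z-\lambda w}$ together with its derivative versions \eqnref{deltaExpanded2} recovers the expected identity $\text{Res}_{z}a(z)\partial_{\lambda w}^{(n)}\delta(z-\lambda w)=\partial_{\lambda w}^{(n)}a(\lambda w)$ from Proposition~\ref{fact:firstpropdelta}.
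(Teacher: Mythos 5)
Your proof is correct and complete. The paper states this lemma without any proof (it appears right after the expansion formulas \eqref{deltaPM1}--\eqref{deltaPM2} and is evidently regarded as a routine consequence of them), so there is no authorial argument to compare against. Your strategy — expand $i_{z,w}(z-\lambda w)^{-n-1}$ and $i_{w,z}(z-\lambda w)^{-n-1}$ as explicit geometric-type series by applying $\partial_{\lambda w}^{(n)}$ to the $n=0$ case, observe that the first has only $z$-powers $\le -n-1$ and the second only $z$-powers $\ge 0$, so that only $a(z)_+$ (resp.\ $a(z)_-$) contributes to $\mathrm{Res}_z$, then match coefficients — is exactly the direct verification one would supply, and your binomial identity $\binom{-k-1}{n}=(-1)^n\binom{k+n}{n}$ correctly produces the overall minus sign in the second formula. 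One small wrinkle worth noting: the lemma as printed reads ``$a(z)=\sum_{m\ge 0}a_m z^m \in U[\!]z^{\pm 1}]\!]$,'' which is internally inconsistent; you have correctly read it as $a(z)=\sum_{m\in\mathbb Z}a_m z^m$ (otherwise $a(z)_-=0$ and the second identity trivializes), which is also how the lemma is used later in \lemref{lem:ResFormulasForProducts}.
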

\begin{lem}{\bf (Linear independence property)}
\label{lem:deltaIndep}
Suppose $\lambda_1,\dots,\lambda_N$ are distinct nonzero complex numbers and
\begin{align*}
\sum_{k=1}^N\sum_{l=0}^{n_k-1}c_{kl}(w)\partial_{\lambda_kw}^{(l)}\delta(z-\lambda_kw)=0.
\end{align*}
Then $c_{kl}(w)=0$ for all $1\leq k\leq N$, $0\leq l\leq n_k-1$.
\end{lem}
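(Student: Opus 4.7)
The strategy is to isolate one coefficient at a time by multiplying the hypothesized identity by an appropriate polynomial in $z$, collapsing the sum to a single surviving delta function, and then taking $\text{Res}_z$. I will first show that every ``top-order'' coefficient $c_{k_0,n_{k_0}-1}(w)$ vanishes; the full statement then follows from a downward induction on $\sum_{k}n_k$ applied to the truncated identity.

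Fix $k_0\in\{1,\dots,N\}$ and introduce
$$
P_{k_0}(z,w):=(z-\lambda_{k_0}w)^{n_{k_0}-1}\prod_{k\neq k_0}(z-\lambda_k w)^{n_k}.
$$
Iterating Proposition~\ref{fact:firstpropdelta}(4) together with $(z-\lambda w)\delta(z-\lambda w)=0$ gives the absorption rule
$(z-\lambda w)^{j}\partial_{\lambda w}^{(n)}\delta(z-\lambda w)=\partial_{\lambda w}^{(n-j)}\delta(z-\lambda w)$ for $0\le j\le n$, and $0$ for $j\ge n+1$. Multiplying the given identity by $P_{k_0}(z,w)$ and applying this rule termwise, one sees that for every $k\neq k_0$ and every $0\le l\le n_k-1$ the factor $(z-\lambda_k w)^{n_k}$ inside $P_{k_0}$ annihilates $\partial_{\lambda_k w}^{(l)}\delta(z-\lambda_k w)$; and for $k=k_0$ with $0\le l<n_{k_0}-1$ the factor $(z-\lambda_{k_0}w)^{n_{k_0}-1}$ annihilates $\partial_{\lambda_{k_0}w}^{(l)}\delta(z-\lambda_{k_0}w)$. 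The only surviving term is therefore $k=k_0$, $l=n_{k_0}-1$, for which the same rule collapses $(z-\lambda_{k_0}w)^{n_{k_0}-1}\partial_{\lambda_{k_0}w}^{(n_{k_0}-1)}\delta(z-\lambda_{k_0}w)=\delta(z-\lambda_{k_0}w)$. Applying Proposition~\ref{fact:firstpropdelta}(1) to the remaining polynomial factor $\prod_{k\neq k_0}(z-\lambda_k w)^{n_k}$ to replace $z$ by $\lambda_{k_0}w$, we obtain
$$
c_{k_0,n_{k_0}-1}(w)\,C_{k_0}\,w^{N_{k_0}}\,\delta(z-\lambda_{k_0}w)=0,
$$
where $C_{k_0}:=\prod_{k\neq k_0}(\lambda_{k_0}-\lambda_k)^{n_k}$ and $N_{k_0}:=\sum_{k\neq k_0}n_k$.

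Now take $\text{Res}_z$ of both sides; Proposition~\ref{fact:firstpropdelta}(2) applied to the constant function $1$ gives $\text{Res}_z\,\delta(z-\lambda_{k_0}w)=1$, so
$c_{k_0,n_{k_0}-1}(w)\cdot C_{k_0}\cdot w^{N_{k_0}}=0$
in $U[\!]w^{\pm 1}]\!]$. Since the $\lambda_k$'s are pairwise distinct and nonzero, $C_{k_0}\neq 0$, and multiplication by the monomial $w^{N_{k_0}}$ is clearly injective on $U[\!]w^{\pm 1}]\!]$; hence $c_{k_0,n_{k_0}-1}(w)=0$. As $k_0$ was arbitrary, every top-order coefficient vanishes. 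Subtracting those (now zero) terms leaves a sum of the same form but with each $n_k$ replaced by $n_k-1$, and a downward induction on $\sum_k n_k$ (the base case being the empty sum) concludes the proof. The argument has no substantial obstacle; the only thing to verify carefully is the iterated form of the absorption identity, and this is an immediate induction on Proposition~\ref{fact:firstpropdelta}(4).
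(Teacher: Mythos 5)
Your proof is correct and follows essentially the same strategy as the paper's: fix $k_0$, multiply by $(z-\lambda_{k_0}w)^{n_{k_0}-1}\prod_{k\neq k_0}(z-\lambda_k w)^{n_k}$, use the absorption identities to collapse the sum to the single top-order term, substitute $z=\lambda_{k_0}w$, and take $\text{Res}_z$. The only meaningful difference is that you explicitly state the downward induction on $\sum_k n_k$ needed to handle the lower-order coefficients, whereas the paper's proof records only the vanishing of the top coefficients $c_{j,n_j-1}$ and leaves that inductive reduction implicit.
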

\begin{proof}  The proof of this is reminiscent of the part of the proof of the Chinese Remainder Theorem.
Fix $1\leq j\leq N$.  We multiply both sides of the equality  by
$$
(z-\lambda_jw)^{n_j-1}\prod_{r=1,r\neq j}\left( z-\lambda_rw\right)^{n_r}
$$
then we obtain by \lemref{fact:firstpropdelta} and \lemref{factorfacts}
\begin{align*}
0&=\sum_{l=0}^{n_j-1}c_{jl}(w)(z-\lambda_jw)^{n_j-1}\prod_{r=1,r\neq j}\left( z-\lambda_rw\right)^{n_r}\partial_{\lambda_jw}^{(l)}\delta(z-\lambda_jw) \\
&=   \prod_{r=1,r\neq j}\left( z-\lambda_rw\right)^{n_r}c_{j,n_j-1}(w)\delta(z-\lambda_jw) \\
&= \prod_{r=1,r\neq j}\left( \lambda_j-\lambda_r\right)^{n_r}w^{n_r}c_{j,n_j-1}(w)\delta(z-\lambda_jw) .
\end{align*}
Taking $\text{Res}_z$ of both sides of the above gives us $c_{j,n_j-1}(w)=0$ since the $\lambda_k$ are distinct.
\end{proof}

\section{$N$-point locality of formal distributions}
\label{sec:locality}
The notion of locality is fundamental to development of conformal field theory and vertex algebra theory, as seen in \cite{LiLocality}, \cite{Kac}, and \cite{FZvi}. Our notion of $N$-point locality is a direct generalization of these.
\begin{defn} \begin{bf}($N$-point local)\end{bf}\label{defn:locality}
Given  $a(z, w)\in U[\!]z^{\pm 1},w^{\pm 1}]\!]$ we say that $a(z,w)$ is {\bf N-point local} at the points $\lambda_i\in \mathbb{C}^*$,  $\lambda_i\neq \lambda_j$, for $1\leq i,j\leq N$
 if there exist  exponents $n_1, n_2, \dots, n_N \in \mathbb{N}^*$ such that
\[
(z-\lambda _1 w)^{n_{1}}(z-\lambda _2 w)^{n_{2}}\cdots (z-\lambda _N w)^{n_{N}} a(z, w) =0.
\]
\end{defn}
We will refer to the distribution as being $N$-point local, and refer to the points $\lambda_i\in \mathbb{C}^*$ as points of locality for $a(z,w)$. We will refer to a formal distribution as satisfying $N$-point locality when we do not want to specify the points $\lambda_i\in \mathbb{C}^*$.

The following theorem is a necessary technical result which will allow us to define the operator product expansion and prove residue formulas for products of fields. This in turn allows us to generalize results from vertex algebra theory to our setting of twisted vertex algebras in the sections that follow. The proof given here has the advantage of being more rudimentary than those appearing  in \cite{Kac}, \cite{FZvi} for example, in the sense that it involves only induction and the factoring properties of delta functions.
\begin{thm}\label{mainresult} A formal distribution
$a(z, w)\in U[\!]z^{\pm 1},w^{\pm 1}]\!]$ satisfies $N$-point locality at a set of points $\lambda_i\in \mathbb{C}^*$ and exponents $n_i$ for  $i = 1\cdots N$ if and only
\begin{equation}
a(z,w)=\sum_{k=1}^N\sum_{l=0}^{n_k-1}c_{kl}(w)\partial_{\lambda_kw}^{(l)}\delta(z-\lambda_kw)
\end{equation}
for some  $c_{kl}(w) \in U[\!]w^{\pm 1}]\!]$. If $N$-point locality holds then the $c_{kl}(w)$ are unique and defined by
\begin{align}
\label{eqn:FormCoeff}
c_{j,n_j-r}(w)&=\text{Res}_z\left( \sum_{i=0}^{r-1}(z-\lambda_jw)^{n_j-r+i}P_{i,j}(w)\prod_{i=1,i\neq j}^N(z-\lambda_iw)^{n_i}a(z,w)\right).
\end{align}
for $1\leq r\leq n_j$;  where
\[
P_{0, j}(w)=\frac{1}{p_{0,j}}, \quad P_{n, j}(w)=\sum_{k=1}^n\frac{(-1)^k k!}{p_{0,j}^{k+1}n!} B_{n,k}(p_{-1,j}, p_{-2,j}, \dots , p_{k-n-1,j}), \ n\geq 1.
\]
and $B_{n,k}(x_1, x_2, \dots, x_{n-k+1})$ are the Bell polynomials in the variables
$$
p_{-i,j}(w)=\partial_z^{(i)}\left(\prod_{r=1,r\neq j}^N(z-\lambda_rw)^{n_r}\right)\Big|_{z=\lambda_jw}.
$$
\end{thm}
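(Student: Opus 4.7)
My overall plan is to split the theorem into three sub-statements and handle them in turn: the "if" direction, uniqueness, and the "only if" direction together with the explicit formula. The "if" direction is essentially a direct computation: starting from the claimed decomposition and multiplying by $\prod_{i}(z-\lambda_i w)^{n_i}$, each summand acquires the factor $(z-\lambda_k w)^{n_k}$, which together with the sign relation $\partial^{(l)}_{\lambda_k w}\delta = (-1)^l \partial^{(l)}_z \delta$ and Proposition \ref{fact:firstpropdelta}(5) annihilates $\partial^{(l)}_{\lambda_k w}\delta(z-\lambda_k w)$ for every $l \le n_k-1$. Uniqueness of the $c_{kl}(w)$, once existence is known, is immediate from Lemma \ref{lem:deltaIndep}.

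For existence I would induct on $N$. The base case $N=1$ is the usual one-point locality result from vertex algebra theory, essentially coefficient-by-coefficient. For the inductive step, given $a(z,w)$ satisfying $N$-point locality, I would set $b(z,w) := (z-\lambda_N w)^{n_N} a(z,w)$; this satisfies $(N-1)$-point locality at $\lambda_1,\dots,\lambda_{N-1}$ with exponents $n_1,\dots,n_{N-1}$, so the inductive hypothesis produces a decomposition $b = \sum_{k=1}^{N-1}\sum_{l=0}^{n_k-1} d_{kl}(w)\partial^{(l)}_{\lambda_k w}\delta(z-\lambda_k w)$. Iterating Lemma \ref{factorfacts}(\ref{zlambdawfactor}) with $i = N$ then lets me pull a full factor of $(z-\lambda_N w)^{n_N}$ out of each term, writing $b(z,w) = (z-\lambda_N w)^{n_N}\tilde b(z,w)$ for some $\tilde b$ still of the desired form but supported only at $\lambda_1,\dots,\lambda_{N-1}$. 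Consequently $a - \tilde b$ satisfies $1$-point locality at $\lambda_N$ with exponent $n_N$, and the base case supplies the missing $\lambda_N$-terms, completing the decomposition.

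To derive the explicit formula for $c_{j,n_j-r}(w)$, I would start from the decomposition just established and extract coefficients via residues. Multiplying $a(z,w)$ by $g_j(z,w) := \prod_{i\ne j}(z-\lambda_i w)^{n_i}$ kills every $k \ne j$ summand by Proposition \ref{fact:firstpropdelta}(5); Taylor-expanding $g_j$ around $z = \lambda_j w$ with coefficients $p_{-k,j}(w)$ and applying the absorption identity \ref{fact:firstpropdelta}(\ref{absorption}) turns the remaining sum into a triangular linear system relating the $c_{jl}(w)$ to the residues $\Res_z\,(z-\lambda_j w)^{n_j-r} g_j(z,w)\, a(z,w)$. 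The $P_{n,j}(w)$ in the statement are exactly the Taylor coefficients of the reciprocal $1/g_j(z,w)$ expanded at $z = \lambda_j w$; the appearance of Bell polynomials is the standard Faà di Bruno formula for the reciprocal of a formal power series. Using the truncated Taylor polynomial $\sum_{i=0}^{r-1}P_{i,j}(w)(z-\lambda_j w)^i$ as test weight inverts the triangular system in one stroke and yields the stated formula.

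The main obstacle I anticipate is this last verification: one must check that the reciprocal of $g_j(z,w)$ indeed has the claimed Bell-polynomial form and that the combined test weight $(z-\lambda_j w)^{n_j-r}\sum_{i=0}^{r-1}P_{i,j}(w)(z-\lambda_j w)^i\, g_j(z,w)$ agrees with $(z-\lambda_j w)^{n_j-r}$ modulo $(z-\lambda_j w)^{n_j}$, which is precisely what is needed to isolate $c_{j,n_j-r}(w)$ as a single residue. Everything else reduces to routine bookkeeping using the factoring properties collected in Lemma \ref{factorfacts} and the Cauchy-type identities of Lemma \ref{lem:CauchyForm}.
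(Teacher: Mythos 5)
Your proposal is correct, and the part deriving the explicit formula (multiply by $\prod_{i\ne j}(z-\lambda_i w)^{n_i}$, Taylor-expand it and its reciprocal at $z=\lambda_j w$, read off the triangular system, invert via Fa\`a di Bruno / Bell polynomials) is essentially the same computation as in the paper. Where you genuinely diverge is in the organization of the existence argument. The paper proves a separate Lemma (all exponents $n_i=1$) by induction on $N$, and then establishes the general case by a \emph{multi-induction on the exponent tuple} $(n_1,\dots,n_N)$, incrementing one $n_m$ at a time, peeling off a single factor $(z-\lambda_m w)$ at each step, and reducing back to the all-ones lemma; this keeps the argument self-contained and, as the authors advertise, ``rudimentary,'' using only the factoring properties of delta functions. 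You instead run a \emph{single induction on $N$}: set $b=(z-\lambda_N w)^{n_N}a$, apply the $(N-1)$-point hypothesis, iterate Lemma \ref{factorfacts}(\ref{zlambdawfactor}) to extract a full $(z-\lambda_N w)^{n_N}$ from the decomposition of $b$, and conclude that $a-\tilde b$ is annihilated by $(z-\lambda_N w)^{n_N}$, so the missing $\lambda_N$-terms come from the one-point case. This is a cleaner inductive skeleton (one induction instead of a lemma plus a multi-induction), but it presupposes the full one-point result with arbitrary exponent, which you import as known from the vertex-algebra literature; the paper deliberately avoids appealing to that and instead recovers the one-point arbitrary-exponent statement as a special case of its own multi-induction. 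Both routes are valid, and your factoring step via Lemma \ref{factorfacts}(\ref{zlambdawfactor}) does go through (the coefficients $\tilde d_{kl}(w)$ pick up harmless Laurent factors $w^{-m}$, still in $U[\!]w^{\pm 1}]\!]$ since $\mathbb C\subset U$), so there is no gap — just a different, slightly less self-contained, inductive design.
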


In the formula for the coefficients $c_{kl}(w)$ above   $(z-\lambda_k w)^{l}\prod_{i=1,i\neq k}^n(z-\lambda_iw)$  is short  for the binomial expansion \\ \mbox{$i_{z, w}\left((z-\lambda_k w)^{l}\prod_{i=1,i\neq k}^n(z-\lambda_iw)\right)=i_{w, z}\left((z-\lambda_k w)^{l}\prod_{i=1,i\neq k}^n(z-\lambda_iw)\right)$.}

We will present a proof by multi-induction on $n_1, n_2, \dots n_N$.
For the first step, we need to proof the following lemma:
\begin{lem}\label{N_i=1}
Let $a(z, w)\in U[\!]z^{\pm 1},w^{\pm 1}]\!]$ satisfy $N$-point locality for $n_1=n_2=\dots= n_N=1$, i.e.,
\[
(z-\lambda _1 w)(z-\lambda _2 w)\cdots (z-\lambda _N w)a(z, w) =0.
\]
Then
\begin{equation}
a(z,w)=\sum_{k=1}^N c_{k}(w)\delta(z-\lambda_k w), \ \text{where} \ c_{k}(w) \in U[\!]w^{\pm 1}]\!].
\end{equation}
\end{lem}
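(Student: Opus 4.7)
The plan is to proceed by induction on the number $N$ of distinct points of locality. The base case $N=1$ is the standard fact that $(z-\lambda w)f(z,w)=0$ in $U[\!]z^{\pm 1},w^{\pm 1}]\!]$ forces $f(z,w)=c(w)\delta(z-\lambda w)$ for a unique $c(w)\in U[\!]w^{\pm 1}]\!]$. This is most easily seen by matching coefficients: writing $f(z,w)=\sum_{m,n}f_{m,n}z^mw^n$, the hypothesis $(z-\lambda w)f=0$ is equivalent to the recursion $f_{m-1,n}=\lambda f_{m,n-1}$, which forces $f_{m,n}=\lambda^{-m}f_{0,m+n}$, and this is precisely the coefficient pattern of $c(w)\delta(z-\lambda w)$ with $c(w)=\lambda\sum_{k}f_{0,k}w^{k}$.

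For the inductive step, assume the conclusion for $N-1$ and suppose
\[
(z-\lambda_1 w)(z-\lambda_2 w)\cdots(z-\lambda_N w)\,a(z,w)=0.
\]
Set $b(z,w):=(z-\lambda_N w)a(z,w)$. Then $b$ satisfies $(N-1)$-point locality at $\lambda_1,\dots,\lambda_{N-1}$ with all exponents equal to $1$, so by the inductive hypothesis
\[
(z-\lambda_N w)\,a(z,w)=\sum_{k=1}^{N-1}c_k(w)\,\delta(z-\lambda_k w)
\]
for some $c_k(w)\in U[\!]w^{\pm 1}]\!]$. Invoking the factoring identity of \lemref{factorfacts}, part \eqnref{property1} --- namely $(z-\lambda_N w)\delta(z-\lambda_k w)=(\lambda_k-\lambda_N)w\,\delta(z-\lambda_k w)$ --- together with the invertibility of $w$ in $U[\!]w^{\pm 1}]\!]$ and the hypothesis $\lambda_k\neq\lambda_N$, I define
\[
\tilde{a}(z,w):=a(z,w)-\sum_{k=1}^{N-1}\frac{c_k(w)}{(\lambda_k-\lambda_N)w}\,\delta(z-\lambda_k w).
\]
A direct application of the same factoring identity gives $(z-\lambda_N w)\tilde{a}(z,w)=0$, whereupon the base case produces some $c_N(w)\in U[\!]w^{\pm 1}]\!]$ with $\tilde{a}(z,w)=c_N(w)\delta(z-\lambda_N w)$. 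Relabeling the coefficients yields the desired expansion $a(z,w)=\sum_{k=1}^{N}\tilde{c}_k(w)\delta(z-\lambda_k w)$.

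The only substantive step is the base case $N=1$; the inductive step collapses to a single application of the factoring identity plus invertibility of $w$. Uniqueness of the resulting coefficients is immediate from \lemref{lem:deltaIndep} and need not be argued separately here. The main obstacle to watch is purely notational: one must be careful to record that division by $(\lambda_k-\lambda_N)w$ is legitimate precisely because the $\lambda_i$ are nonzero and pairwise distinct, and because the coefficient ring $U[\!]w^{\pm 1}]\!]$ contains $w^{-1}$.
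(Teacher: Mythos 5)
Your proof is correct and follows essentially the same strategy as the paper: the base case $N=1$ is established by exactly the same coefficient recursion $f_{m-1,n}=\lambda f_{m,n-1}$, and the inductive step uses the same idea of applying the factoring identity of Lemma~\ref{factorfacts}~(\ref{property1}) to rewrite the right-hand side and reduce to the base case. (Incidentally, your sign $(\lambda_k-\lambda_N)$ in the denominator is the correct one; the paper's $\tilde{c}_k(w)=\tfrac{1}{(\lambda_{N+1}-\lambda_k)w}c_k(w)$ has a harmless sign slip.)
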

\begin{proof}
The proof is by induction on $N$. For $N=1$, suppose we have (for brevity denote $\lambda_1$ just by $\lambda$):
\[
(z-\lambda  w)a(z, w)=0,
\]
where $a(z, w)=\sum_{k, l\in \mathbb{Z}}a_{k, l} z^k w^l$, \ $a_{k, l}\in U$.
We have
\begin{equation*}
(z-\lambda  w)a(z, w) =\sum _{k, l}a_{k, l} z^{k+1} w^l -\sum _{k, l}\lambda a_{k, l} z^k w^{l+1}
 =\sum _{k, l}a_{k-1, l} z^{k} w^l -\sum _{k, l}\lambda a_{k, l-1} z^k w^{l}=0.
\end{equation*}
Hence we have the recurrence relation
\[
a_{k-1, l} =\lambda a_{k, l-1}, \ \ \text{for \ any}\ \ k, l\in \mathbb{Z}.
\]
Denote $m: =k+l$, and $A^m_k: =a_{k, m-k}=a_{k, l}$, hence we have
\[
A^{m-1}_{k-1} =A^{m-1}_k, \ \ \text{for \ any}\ \ m, k\in \mathbb{Z}.
\]
Since this is true for any $m \in \mathbb{Z}$, we will suppress the index $m$ and just write
\[
A_{k-1}=\lambda A_k, \ \text{or} \ A_{k}=\lambda^{-1} A_{k-1},
\]
which is a linear recurrence relation for the doubly infinite sequence $A_k, \ k\in \mathbb{Z}$. As is well known, the solution to such a linear recurrence relation is (since the solution to the linear first order characteristic equation is $r=\lambda ^{-1}$):
\[
A_k=A_0\cdot \lambda^{-k}, \  A_0 \ \text{can be chosen arbitrarily}.
\]
Hence, for any $m\in \mathbb{Z}$, we have $A^m_k=A^m_0 \lambda ^{-k}$.
Thus we have
\begin{align*}
a(z, w) & =\sum_{k, l\in \mathbb{Z}}a_{k, l} z^k w^l =\sum _{m, k\in \mathbb{Z}}A^m_k z^k w^{m-k}= \sum _{m, k\in \mathbb{Z}} A^m_0 \lambda ^{-k} z^k w^{m-k} \\
& =\sum _{m \in \mathbb{Z}} A^m_0 \lambda w^{m+1} \sum _{k\in \mathbb{Z}} \lambda ^{-k-1} z^k w^{-k} = \sum _{m \in \mathbb{Z}} A^m_0 \lambda w^{m+1} \delta (z-\lambda w) \\
&=c(w)\cdot  \delta (z-\lambda w).
\end{align*}

Now assume the result holds for $N$, and consider
\[
(z-\lambda _1 w)(z-\lambda _2 w)\cdots (z-\lambda _N w)(z-\lambda _{N+1} w)a(z, w) =0.
\]
Regrouping (associativity applies here)
and applying the induction hypothesis yields
\[
(z-\lambda _{N+1} w)a(z, w)  =\sum_{k=1}^N c_{k}(w)\delta(z-\lambda_k w), \ \text{where} \ c_{k}(w) \in U[\!]w^{\pm 1}]\!], \ k=1, \dots N.
\]
By  \lemref{factorfacts} (1):
\[
(z-\lambda _{N+1} w)a(z, w)  =(z-\lambda_{N+1} w) \sum_{k=1}^N \tilde{c}_{k}(w)\delta(z-\lambda_k w),
\]
where $\tilde{c}_{k}(w)= \frac{1}{(\lambda_{N+1}-\lambda_k)w}c_k (w) \in U[\!]w^{\pm 1}]\!], \ k=1, \dots N$.
Hence we have
\[
(z-\lambda _{N+1} w)\big(a(z, w)-\sum_{k=1}^N \tilde{c}_{k}(w)\delta(z-\lambda_k w)\big)=0
\]
which from the base case gives us
\[
a(z, w)-\sum_{k=1}^N \tilde{c}_{k}(w)\delta(z-\lambda_k w)=\tilde{c}_{N+1}(w)\delta(z-\lambda_{N+1} w),
\]
which proves the induction step and thus the lemma.  \end{proof}
\begin{proof}[Proof of Theorem \ref{mainresult}:]
We will prove the theorem in two parts: first we will prove that $N$-point locality implies
\begin{equation*}
a(z,w)=\sum_{k=1}^N\sum_{l=0}^{n_k-1}c_{kl}(w)\partial_{\lambda_kw}^{(l)}\delta(z-\lambda_kw),
\end{equation*}
where $c_{kl}(w)\in U[\!] w^{\pm 1}]\!]$.
In addition,  we will obtain the particular form of the coefficients $c_{kl}(w)$.
For the first part we will use multi-induction on $n_1, n_2, \dots , n_N$. The base of the induction, $n_1=n_2=\dots n_N=1$,  is Lemma \ref{N_i=1}. Now assume the result is true for some $N$-tuple  $n_1, n_2, \dots, n_m, \dots , n_N$, i.e.,
\[
(z-\lambda _1 w)^{n_{1}}\cdots (z-\lambda _m w)^{n_{k}}\cdots (z-\lambda _N w)^{n_{N}} a(z, w) =0
\]
implies
\begin{equation*}
a(z,w)=\sum_{k=1}^N\sum_{l=0}^{n_k-1}c_{kl}(w)\partial_{\lambda_kw}^{(l)}\delta(z-\lambda_kw),
\end{equation*}
with $c_{kl}(w)\in U[\!] w^{\pm 1}]\!]$.

Suppose now we have for  the tuple $n_1, n_2, \dots, n_m +1, \dots , n_N$
\[
(z-\lambda _1 w)^{n_{1}}\cdots (z-\lambda _k w)^{n_{m}+1}\cdots (z-\lambda _N w)^{n_{N}} a(z, w) =0.
\]
We can regroup (associativity applies here):
\[
(z-\lambda _1 w)^{n_{1}}\cdots (z-\lambda _m w)^{n_{k}}\cdots (z-\lambda _N w)^{n_{N}} \big((z-\lambda_m w)a(z, w)\big) =0.
\]
Hence from the induction hypothesis we have
\begin{equation}
\label{eqn:indhyp}
(z-\lambda_m w)a(z, w)=\sum_{k=1}^N\sum_{l=0}^{n_k-1}c_{kl}(w)\partial_{\lambda_kw}^{(l)}\delta(z-\lambda_kw).
\end{equation}
Now from the factoring properties Lemma \ref{factorfacts} we can factor a $(z-\lambda_m w)$ from each of
$\partial_{\lambda_kw}^{(l)}\delta(z-\lambda_kw)$---with almost no expense for $k\ne m$, and with expense of raising the order of the derivative  $\partial_{\lambda_m w}^{(l)}\delta(z-\lambda_m w)$ for $k=m$:
\begin{align*}
\sum_{k=1}^N\sum_{l=0}^{n_k-1}c_{kl}(w)\partial_{\lambda_kw}^{(l)}\delta(z-\lambda_kw)& =(z-\lambda_m w)\big(\sum_{k\neq m, k=1}^N\sum_{l=0}^{n_k-1}\tilde{c}_{kl}(w)\partial_{\lambda_kw}^{(l)}\delta(z-\lambda_kw)  \\
& \hspace{2.4cm}+\sum_{l=1}^{n_m}\tilde{c}_{ml}(w)\partial_{\lambda_m w}^{(l)}\delta(z-\lambda_m w)\big).
\end{align*}
We can rewrite \eqref{eqn:indhyp} as
\[
(z-\lambda_m w)\Big(a(z, w)-\sum_{k\neq m, k=1}^N\sum_{l=0}^{n_k-1}\tilde{c}_{kl}(w)\partial_{\lambda_kw}^{(l)}\delta(z-\lambda_kw) -\sum_{l=1}^{n_m}\tilde{c}_{ml}(w)\partial_{\lambda_m w}^{(l)}\delta(z-\lambda_m w)\Big)=0
\]
which from Lemma \ref{N_i=1} gives us
\[
a(z, w)-\sum_{k\neq m, k=1}^N\sum_{l=0}^{n_k-1}\tilde{c}_{kl}(w)\partial_{\lambda_kw}^{(l)}\delta(z-\lambda_kw) -\sum_{l=1}^{n_m}\tilde{c}_{ml}(w)\partial_{\lambda_m w}^{(l)}\delta(z-\lambda_m w)=c_{m0}\delta(z-\lambda_m w).
\]
This is precisely the induction step we needed to prove.

Now we proceed to calculating the precise form of the coefficients from \eqref{eqn:FormCoeff}, which we will do  in two steps.


Suppose
$$
a(z,w)=\sum_{k=1}^N\sum_{l=0}^{n_k-1}c_{kl}(w)\partial_{\lambda_kw}^{(l)}\delta(z-\lambda_kw).
$$
First we write
$$
\prod_{i=1,i\neq j}^N(z-\lambda_iw)^{n_i}=p_{0,j}(w)+p_{-1,j}(w)(z-\lambda_jw)+\cdots +p_{-n_j+1,j}(w)(z-\lambda_jw)^{n_j-1}+O(z-\lambda_jw)^{n_j}
$$
where $\displaystyle{p_{-i,j}(w)=
\partial_z^{(i)}\left(\prod_{r=1,r\neq j}^N(z-\lambda_rw)^{n_r}\right)\Big|_{z=\lambda_jw}}$.

Then
\begin{align*}
&\prod_{i=1,i\neq j}^N(z-\lambda_iw)^{n_i}a(z,w)=\sum_{k=1}^N\sum_{l=0}^{n_k-1}c_{kl}(w)\prod_{i=1,i\neq j}^N(z-\lambda_iw)^{n_i}\partial_{\lambda_kw}^{(l)}\delta(z-\lambda_kw)  \\
& =\sum_{l=0}^{n_j-1}c_{jl}(w)\prod_{i=1,i\neq j}^N(z-\lambda_iw)^{n_i}\partial_{\lambda_jw}^{(l)}\delta(z-\lambda_jw)  \\
& =\left(\sum_{l=0}^{n_j-1} p_{-l,j}(w)(z-\lambda_jw)^{l}\right)\cdot\left(\sum_{l=0}^{n_j-1}c_{jl}(w)\partial_{\lambda_jw}^{(l)}\delta(z-\lambda_jw)\right)
\end{align*}


Then we get the linear system
\begin{equation}
\label{eqn:linearsystem}
\sum_{l=r}^{n_j-1}c_{jl}(w)p_{r-l, j}(w) =\text{Res}_z\left( (z-\lambda_jw)^{r}\prod_{i=1,i\neq j}^N(z-\lambda_iw)^{n_i}a(z,w)\right).
\end{equation}
This linear system is upper-triangular, and has a unique solution for the coefficients $c_{jl}$. Our formula \eqnref{eqn:FormCoeff} for the coefficients follows from the following consideration:
Let $P(t)=p_{0, j} +p_{-1, j}t+\dots p_{1-n_j, j}t^{n_j-1}$ be the polynomial in a formal variable $t$, where
$p_{-i, j}=p_{-i, j}(w)$ are the partial derivatives defined above.
Let also $A_r:=A_r(w)=\text{Res}_z\left( (z-\lambda_jw)^{n_j-r-1}\prod_{i=1,i\neq j}^N(z-\lambda_iw)^{n_i}a(z,w)\right)$ and let $A(t)=A_0 +A_1t+\dots +A_{n_j-1}t^{n_j-1}$ be the polynomial in the formal variable $t$. Consider the problem of finding a power series (or polynomial) $Q(t)$ indexed as
\[
Q(t)=q_{ n_j-1} +q_{n_j-2}t +\dots q_{0}t^{n_j-1} +\dots
\]
 satisfying the equation
\begin{equation}
\label{eqn:polynsystem}
P(t)\cdot Q(t) =A(t).
\end{equation}
The coefficients $q_{l}$ for $l\leq n_j-1$ will satisfy the same linear system \eqnref{eqn:linearsystem}, and thus $q_{l}=c_{jl}$ for $l=0, \dots , n_j-1$. On the other hand we can solve  \eqnref{eqn:polynsystem} (and thus \eqnref{eqn:linearsystem}) by just inverting the power series:
\[
Q(t)=A(t)\cdot \frac{1}{P(t)}.
\]
The inversion $1/P(t)$ satisfies Faa di Bruno formula for $f(P(t))$, $f(t)=1/t$:
\[
\frac{1}{P(t)}=\frac{1}{p_{0,j}} +\sum_{n\geq 1}\sum_{k=1}^n\frac{(-1)^k k!}{p_{0,j}^{k+1}n!} B_{n,k}(p_{-1,j}, p_{-2,j}, \dots , p_{k-n-1,j})t^n,
\]
where $B_{n,k}(x_1, x_2, \dots, x_{n-k+1})$ are the Bell polynomials (see e.g. \cite{Bell}, \cite{Comtet}).
We denote the coefficients of this power series by
\[
P_{0, j}(w)=\frac{1}{p_{0,j}}, \quad P_{n, j}(w)=\sum_{k=1}^n\frac{(-1)^k k!}{p_{0,j}^{k+1}n!} B_{n,k}(p_{-1,j}, p_{-2,j}, \dots , p_{k-n-1,j}), \ n\geq 1.
\]
Hence the solution to the power series problem \eqnref{eqn:polynsystem} is given by
\[
Q(t)=A(t)\cdot \frac{1}{P(t)}=A(t)\cdot \sum_{n\geq 0}P_{n, j}(w) t^n;
\]
and thus
\begin{align*}
c_{j, n_j-r}=&q_{n_j-r} =\sum_{i=0}^{r-1}A_iP_{i,j}(w) \\
&\hspace{1.2cm} =\text{Res}_z\left( \sum_{i=0}^{r-1}(z-\lambda_jw)^{n_j-r+i}P_{i,j}(w)\prod_{i=1,i\neq j}^N(z-\lambda_iw)^{n_i}a(z,w)\right).
\end{align*}
Note that since in the denominators of $P_{n, j}(w)$ we have
\[
p_{0,j}= \prod_{i=1,i\neq j}^N(\lambda_j-\lambda_i)^{n_i}w^{n_i}=\text{constant}\cdot w^{\sum_{i=1,i\neq j}^N n_i},
\]
the formula above ensures that the coefficients $c_{j, n_j-r}(w)$ are distributions,  $c_{j, n_j-r}(w)\in U[\!] w^{\pm 1}]\!]$.
This finishes the proof of the  ``if" part of Theorem \ref{mainresult}. The converse  follows directly from the lemmas in Section \ref{sec:notation}.  The uniqueness follows from \lemref{lem:deltaIndep}.
\end{proof}
\begin{remark}\label{remark:CoefZero}
 If a distribution $a(z, w)$ is $N$-point local with exponents $n_1, n_2, \dots, n_N$ as in Definition \ref{defn:locality}, then $a(z, w)$ is also $N$-point local with exponents $m_1, m_2, \dots, m_N$ where $m_i\geq n_i$ for $i=1, \dots, N$. From the uniqueness of the coefficients it follows that  $c_{kl}(w)=0$ for $l\geq n_k$.
\end{remark}
 \ \\

\section{$N$-point local fields, operator product expansions (OPE's), normal ordered products  and their properties}
\label{section:NormalOrdProd}

\begin{defn} \label{defn:parity}
We say that a distribution $a(z)\in U[\!] z^{\pm 1}]\!]$ is {\bf even} and $N$-point self-local at $\lambda_1, \lambda_2, \dots, \lambda_N$  if there exist $n_1, n_2, \dots  , n_N$ such that
\begin{equation}
(z-\lambda _1 w)^{n_{1}}(z-\lambda _2 w)^{n_{2}}\cdots (z-\lambda _N w)^{n_{N}}[a(z),a(w)] =0.
\end{equation}
In this case we set the {\bf parity} $p(a(z))$ of $a(z)$ to be $0$.
\\
We set $\{a, b\}  =ab +ba$.We say that a distribution $a(z)\in U[\!] z^{\pm 1}]\!]$ is $N$-point self-local at $\lambda_1, \lambda_2, \dots, \lambda_N$  and {\bf odd} if there exist $n_1, n_2, \dots , n_N$ such that
\begin{equation}
(z-\lambda _1 w)^{n_{1}}(z-\lambda _2 w)^{n_{2}}\cdots (z-\lambda _N w)^{n_{N}}\{a(z),a(w)\}=0.
\end{equation}
In this case we set the {\bf parity} $p(a(z))$ to be $1$. For brevity we will just write $p(a)$ instead of $p(a(z))$.
\end{defn}
\begin{remark}
In physics the notion of parity of a distribution is tied to its operator product expansion (see \lemref{lem:OPE} below) which follows from its self-locality, and it was the approach we took here as well.
\end{remark}
\begin{remark}
If two elements $a(z), b(z)\in U[\!] z^{\pm 1}]\!]$ commute (one of the elements has parity 0) or  supercommute (both elements have parity 1), then they can be considered to be local at any $\lambda \in \mathbb{C}$. We call such points of locality removable.
\end{remark}
We want to investigate the simplest examples of $N$-point self-local distributions. Of course we start with $N=1$. If the point of locality is at $\lambda =1$, then this case is the usual locality at $z=w$ which is amply studied in the theory of vertex super algebras. On the other hand the cases of other points of locality are not well understood.
\begin{lem}
\label{lem:simlestExamples}
Let  $a(z)\in U[\!] z^{\pm 1}]\!]$ be $1$-point self-local at $\lambda$, of order 1, i.e.,
\begin{align*}
& (z-\lambda w)\left(a(z)a(w) -(-1)^{p(a)^2} a(w)a(z)\right)=0, \  \text{but} \ \
a(z)a(w) -(-1)^{p(a)^2} a(w)a(z)\neq 0.
\end{align*}
Then $\lambda ^2=1$.
\end{lem}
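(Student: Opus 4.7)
The plan is to apply \thmref{mainresult} with $N=1$ and $n_1=1$ to the bracket, and then exploit its $(\text{anti})$symmetry under $z \leftrightarrow w$ together with the reflection property of the formal delta function.

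First, the hypothesis that $(z-\lambda w)$ kills the bracket $a(z)a(w) - (-1)^{p(a)^2}a(w)a(z)$ is precisely $1$-point locality at $\lambda$ with exponent $n_1 = 1$. Hence \thmref{mainresult} produces a unique $c(w) \in U[\!]w^{\pm 1}]\!]$ with
\begin{equation*}
a(z)a(w) - (-1)^{p(a)^2}\,a(w)a(z) \;=\; c(w)\,\delta(z-\lambda w),
\end{equation*}
and the standing assumption that the bracket is nonzero forces $c(w) \neq 0$.

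Next, I would swap the variables $z$ and $w$ on both sides. On the left, by the very definition of the (anti)commutator, this multiplies the whole expression by $-(-1)^{p(a)^2}$; on the right it produces $c(z)\,\delta(w-\lambda z)$. Combining the resulting identity with the original, and then applying the reflection identity \propref{fact:firstpropdelta}(3) in the form $\delta(w-\lambda z) = \lambda^{-1}\delta(z-\lambda^{-1}w)$, I would arrive at a linear relation of the shape
\begin{equation*}
(-1)^{p(a)^2}\,c(w)\,\delta(z-\lambda w) \;+\; \lambda^{-1}\,c(z)\,\delta(z-\lambda^{-1}w) \;=\; 0,
\end{equation*}
in which $c(w) \neq 0$ appears as the coefficient of $\delta(z-\lambda w)$.

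Finally, if $\lambda^2 \neq 1$ then $\lambda$ and $\lambda^{-1}$ are two distinct elements of $\mathbb{C}^*$, so the linear independence \lemref{lem:deltaIndep} forces $c(w) = 0$, a contradiction. Therefore $\lambda = \lambda^{-1}$, i.e.\ $\lambda^2 = 1$. The only thing that requires a little care is the bookkeeping of the sign $-(-1)^{p(a)^2}$ produced by the swap and the factor $\lambda^{-1}$ produced by \propref{fact:firstpropdelta}(3); once these are tracked correctly the contradiction is immediate, so I do not foresee a genuine obstacle.
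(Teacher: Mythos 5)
Your proof is correct and follows essentially the same route as the paper: both pass through Theorem~\ref{mainresult} to write the bracket as $c(w)\delta(z-\lambda w)$, swap $z\leftrightarrow w$ exploiting the (anti)symmetry of the bracket together with the reflection property $\delta(w-\lambda z)=\lambda^{-1}\delta(z-\lambda^{-1}w)$, and then conclude via Lemma~\ref{lem:deltaIndep}; the paper merely leaves the coefficient of $\delta(z-\lambda^{-1}w)$ as an unspecified $d(w)$ obtained from a second application of Theorem~\ref{mainresult}, whereas you track it explicitly. The one formal step you should not omit is to rewrite $c(z)\,\delta(z-\lambda^{-1}w)$ as $c(\lambda^{-1}w)\,\delta(z-\lambda^{-1}w)$ via Proposition~\ref{fact:firstpropdelta}(1) before invoking the linear-independence lemma, so that both coefficients lie in $U[\!]w^{\pm 1}]\!]$ as the lemma requires.
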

\begin{proof}
From Theorem \ref{mainresult} we have that there exists $c(w)\in U[\!] w^{\pm 1}]\!]$ such that
 \[
 a(z)a(w) -(-1)^{p(a)^2}a(w)a(z) =c(w)\delta (z-\lambda w).
 \]
Now if we interchange the formal variables $z$ and $w$ we get
\begin{align*}
0=&(w-\lambda z)\left(a(w)a(z) -(-1)^{p(a)^2}a(z)a(w)\right)\\
& = -(z-\lambda ^{-1}w)(-1)^{p(a)^2}\lambda \left(a(z)a(w) -(-1)^{p(a)^2}a(w)a(z)\right),
\end{align*}
  which means
 that exists $d(w)\in U[\!] w^{\pm 1}]\!]$ such that
 \[
 a(z)a(w) -(-1)^{p(a)^2}a(w)a(z) = d(w)\delta (z-\lambda^{-1} w).
 \]
 Comparing the two, we get that
 \[
 c(w)\delta (z-\lambda w) =d(w)\delta (z-\lambda^{-1} w).
 \]
 Now since we have that $c(w) \neq 0$ from $a(z)a(w) -(-1)^{p(a)^2} a(w)a(z)\neq 0$,  Lemma \ref{lem:deltaIndep} gives us $\lambda ^2=1$.
 \end{proof}
  There are of course a variety of cases even when $\lambda ^2=1$, depending on the coefficients $c(w)\in U[\!] w^{\pm 1}]\!]$. There are  also restriction on the coefficients $c(w)$ placed by the parity. The easiest of these cases happen when  the coefficients $c(w)$ are simplest: the case of $\lambda =1, \quad  c(w)=1$ is only possible for odd parity, and describes the neutral free fermion (the free fermion of type D), see e.g. \cite{Kac}, \cite{AngTVA}, also  Section \ref{section: examples}. We  will use this example in Section \ref{section: examples}, as even though it is in some sense the easiest of all examples, it gives rise to the representations of several important infinite dimensional Lie algebras, and also is a generating field for the boson-fermion correspondence of type D-A (see \cite{AngTVA}).  The case of  $\lambda =-1, c(w) =1$ is only possible for even parity, and describes the  free boson of type C (considered in detail in \cite{OrlovLeur} as it gives rise to the CKP hierarchy). The case of $\lambda =-1$ and odd parity only allows for odd coefficients $c(w)$, thus the easiest such case is $c(w) =w$; though for reasons of convenience we rescale and use
  $\lambda =-1, c(w) =-2w$. This is the case of the free fermion of type B (see \cite{Ang-Varna2}, \cite{AngTVA}), and is described in Section \ref{section: examples}.
 \begin{remark}
As a corollary to the proof of the Lemma above it is easy to see that if a distribution $a(z)$ has a non-removable point of locality $\lambda \in \mathbb{C}, \ \lambda \neq \pm 1$, then $a(z)$ also has $\lambda ^{-1}$ as a non-removable point of locality.
\end{remark}
\begin{defn}
 Let $a(z)\in U[\!] z^{\pm 1}]\!]$ and $b(z)\in U[\!] z^{\pm 1}]\!]$. We say that $a(z)$ and $b(z)$ are {\it $N$-point mutually local} at $\lambda_1, \lambda_2, \dots, \lambda_N$ if there exist $n_1, n_2, \dots , n_N \in \mathbb{Z}_{\geq 0}$ such that
\begin{equation}
(z-\lambda _1 w)^{n_{1}}(z-\lambda _2 w)^{n_{2}}\cdots (z-\lambda _N w)^{n_{N}}\left(a(z)b(w)-(-1)^{p(a)p(b)}b(w)a(z)\right)=0.
\end{equation}
\end{defn}
\begin{remark}
\label{remark:productLoc}
If two elements $a(z), b(z)\in U[\!] z^{\pm 1}]\!]$ are mutually local at $\lambda_1, \lambda_2, \dots, \lambda_N$, then if we choose $n=max\{ n_1, n_2, \dots, n_N\}$ we have
\begin{equation}
((z-\lambda _1 w)(z-\lambda _2 w)\cdots (z-\lambda _N w))^n\left(a(z)b(w)-(-1)^{p(a)p(b)}b(w)a(z)\right)=0.
\end{equation}
\end{remark}
We will use the following notation:
\begin{defn}\label{prod}
 $\Pi_{z,w}=\prod_{i=1}^N(z-\lambda_iw)$.
\end{defn}
 Then we can restate the $N$-point locality  as follows: There exists $M\in \mathbb{N}$, such that
\begin{equation}
\Pi_{z,w}^M\left(a(z)b(w)-(-1)^{p(a)p(b)}b(w)a(z)\right)=0.
\end{equation}

Let $a(z), b(z) \in U[\!] z^{\pm 1}]\!]$, we will use the  indexing $a(z)=\sum _{n\in \mathbb{Z}} a_n z^{-n-1}$. Recall (notation \eqnref{eqn:plusonfield}) that with this indexing
\begin{equation}
a(z)_-:=\sum_{n\geq 0}a_nz^{-n-1},\quad a(z)_+:=\sum_{n<0}a_nz^{-n-1}.
\end{equation}
\begin{defn} {\bf (Normal ordered product)}
Let $a(z), b(z) \in U[\!] z^{\pm 1}]\!]$. Define
\begin{equation}
:a(z)b(w):=a(z)_+b(w)+(-1)^{p(a)p(b)}b(w)a_-(z).
\end{equation}
One calls this the normal ordered product of $a(z)$ and $b(w)$.
\end{defn}
\begin{remark}
The indexing we adopted above is the traditional vertex algebra indexing, and the definition of normal ordered product above is also the traditional one, and can be found in many books on the subject, e.g., \cite{FLM}, \cite{Kac}, \cite{LiLep}. We include it for completeness, as many of the usual properties of the normal ordered product remain valid after appropriate modification in important  cases of $N$-point locality.
\end{remark}
\begin{lem} {\bf (Operator Product Expansions (OPE's))}\label{lem:OPE}

 Suppose $a(z)$, $b(w)$ are {\it $N$-point mutually local}. Then we have
 \begin{equation}
 \label{eqn:OPEpolcor}
 a(z)b(w) =i_{z, w} \sum_{j=1}^N\sum_{k=0}^{n_j-1}\frac{c_{jk}(w)}{(z-\lambda_j w)^{k+1}} + :a(z)b(w):,
 \end{equation}
 where $c_{jk}(w)\in U[\!] w^{\pm 1}]\!]$, $j=1, \dots, N; k=0, \dots , n_j-1$, are the coefficients from Theorem \ref{mainresult}.
 \end{lem}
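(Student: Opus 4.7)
The plan is to reduce the OPE identity to a direct corollary of Theorem \ref{mainresult} together with the delta-function expansion formulas \eqref{deltaPM1}--\eqref{deltaPM2} and the clean separation of positive and negative powers of $z$.

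First I would apply Theorem \ref{mainresult} to the mutual $N$-point locality hypothesis. By definition, the distribution
\[
F(z,w) := a(z)b(w) - (-1)^{p(a)p(b)} b(w)a(z)
\]
satisfies $\Pi_{z,w}^M F(z,w)=0$ for some $M$, hence is $N$-point local at the $\lambda_j$. Therefore Theorem \ref{mainresult} gives uniquely determined $c_{jk}(w)\in U[\![w^{\pm 1}]\!]$ with
\[
F(z,w) = \sum_{j=1}^N \sum_{k=0}^{n_j-1} c_{jk}(w)\, \partial_{\lambda_j w}^{(k)}\delta(z-\lambda_j w).
\]
These are precisely the coefficients appearing in \eqref{eqn:OPEpolcor}.

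Second, I would expand each derivative of a delta function via \eqref{deltaPM1}--\eqref{deltaPM2}, yielding
\[
F(z,w) = i_{z,w}\sum_{j,k}\frac{c_{jk}(w)}{(z-\lambda_j w)^{k+1}} - i_{w,z}\sum_{j,k}\frac{c_{jk}(w)}{(z-\lambda_j w)^{k+1}}.
\]
The two expansions live in disjoint regions: the $i_{z,w}$-piece lies in $U[\![w^{\pm 1}]\!][z^{-1}]$ (only strictly negative powers of $z$), while the $i_{w,z}$-piece lies in $U[\![w^{\pm 1}]\!][z]$ (only non-negative powers of $z$). Thus this formula is nothing but the decomposition of $F(z,w)$ into its negative-in-$z$ and non-negative-in-$z$ parts.

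Third, using the indexing convention \eqref{eqn:plusonfield} and the definition of the normal ordered product, a direct calculation gives
\[
a(z)b(w) - :\!a(z)b(w)\!: \; = \; a(z)_- b(w) - (-1)^{p(a)p(b)} b(w) a(z)_-.
\]
Since $a(z)_-$ contains only strictly negative powers of $z$, this difference contains only strictly negative powers of $z$ as well. On the other hand it equals the negative-in-$z$ part of $F(z,w)$, because $a(z)_+ b(w) - (-1)^{p(a)p(b)} b(w) a(z)_+$ contributes only non-negative $z$-powers. Comparing with the decomposition of the previous paragraph identifies this negative-in-$z$ part with $i_{z,w}\sum_{j,k}\frac{c_{jk}(w)}{(z-\lambda_j w)^{k+1}}$, which is exactly \eqref{eqn:OPEpolcor}.

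The main obstacle is really only bookkeeping: one must verify that the splitting of $F(z,w)$ by the sign of the $z$-exponent coincides with the splitting induced by $a(z) = a(z)_+ + a(z)_-$ and with the two expansions $i_{z,w}$ and $i_{w,z}$. Once this three-way compatibility is recorded, the identity follows immediately from Theorem \ref{mainresult}; the uniqueness of the $c_{jk}(w)$ (\lemref{lem:deltaIndep}) guarantees that the coefficients appearing in the OPE agree with those furnished by Theorem \ref{mainresult}.
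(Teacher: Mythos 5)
Your proposal is correct and follows essentially the same route as the paper's own proof: apply Theorem \ref{mainresult} to the graded commutator, split by the sign of the $z$-exponent, identify the negative-in-$z$ part of the delta-function decomposition with the $i_{z,w}$ expansion via \eqref{deltaPM1}--\eqref{deltaPM2}, and observe that $a(z)b(w) - :\!a(z)b(w)\!:$ is precisely that negative-in-$z$ part. The paper organizes this as a single $\pm$-indexed equation while you spell it out as a three-way compatibility of splittings, but the substance is the same.
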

\begin{proof}
The $N$-point mutual locality of the distributions  $a(z)$, $b(w)$ implies from Theorem \ref{mainresult} that there exists $c_{jk}(w)\in U[\!] w^{\pm 1}]\!]$, $j=1, \dots, N; k=0, \dots , n_j-1$ such that
\begin{equation}
a(z)b(w)-(-1)^{p(a)p(b)}b(w)a(z)=\sum_{j=1}^N\sum_{k=0}^{n_j-1}c_{jk}(w)\partial_{\lambda_j w}^{(k)}\delta(z-\lambda_j w)
\end{equation}
We can separate the nonnegative powers of $z$ from the negative powers and we will get
\begin{equation}
a(z)_\pm b(w)-(-1)^{p(a)p(b)}b(w)a(z)_\pm=\sum_{j=1}^N\sum_{k=0}^{n_j-1}c_{jk}(w)\partial_{\lambda_j w}^{(k)}\delta(z-\lambda_j w)_{z, \pm},
\end{equation}
and from  \eqref{deltaPM1} we have then
\begin{equation}
a(z)_\pm b(w)-(-1)^{p(a)p(b)}b(w)a(z)_\pm=\begin{cases}
-i_{w,z} \sum_{j=1}^N\sum_{k=0}^{n_j-1}\frac{c_{jk}(w)}{(z-\lambda_j w)^{k+1}} &\quad\text{for subscript $+$} \\
 i_{z, w} \sum_{j=1}^N\sum_{k=0}^{n_j-1}\frac{c_{jk}(w)}{(z-\lambda_j w)^{k+1}} &\quad\text{for subscript $-$}  \end{cases}
\end{equation}
which we can rewrite as
\begin{equation}
a(z)b(w)-:a(z)b(w):=i_{z, w} \sum_{j=1}^N\sum_{k=0}^{n_j-1}\frac{c_{jk}(w)}{(z-\lambda_j w)^{k+1}}.
\end{equation}
\end{proof}
\begin{notation}[\bf Operator Product Expansion (OPE) and contraction]
Since for the commutation relations only the singular part of the \eqnref{eqn:OPEpolcor}  matters, by abusing notation we will write the above OPE as
 \begin{equation}
 a(z)b(w) \sim  \sum_{j=1}^N\sum_{k=0}^{n_j-1}\frac{c_{jk}(w)}{(z-\lambda_j w)^{k+1}}.
 \end{equation}
 The $\sim $ signifies that we have only written the singular part, and also we have omitted writing explicitly the expansion $i_{z, w}$, which we do acknowledge  tacitly. Often also the following notation is used for short:
 \begin{equation}\label{contraction}
\lfloor
ab\rfloor=a(z)b(w)-:a(z)b(w):= [a(z)_-,b(w)],
\end{equation}
i.e.,  the {\it contraction} of any two formal distributions
$a(z)$ and $b(w)$ is in fact also the $i_{z, w}$ expansion of the singular part of the OPE of the two distributions $a(z)$ and $b(w)$.
 \end{notation}
\begin{remark} {\bf ($\mathbf{(j, k)}$ -products $\mathbf{a(w)_{(j, k)} b(w)}$)}
\label{remark:(j, k)-products}
In the case of the usual 1 point locality (at $z=w$) the coefficients $c_{jk}(w)$ require a single index, $k$, as the OPE expansion in that case is:
 \begin{equation*}
 a(z)b(w) =i_{z, w} \sum_{k=0}^{n-1}\frac{c_{k}(w)}{(z- w)^{k+1}} + :a(z)b(w):.
 \end{equation*}
These coefficients $c_k(w)$ are used to define the  $k$-products  $a(w)_k b(w): =c_k(w)$, for $k=0, 1, \dots n-1$. In the case of $N$-point locality we can define $(j, k)$ -products  $a(w)_{(j, k)} b(w): =c_{jk}(w)$ for $j=1, \dots, N; k=0, \dots , n_j-1$. We can further define $c_{jk}(w):=0$ for $j=1, \dots, N; k\geq n_j$, see Remarks \ref{remark:CoefZero} and \ref{remark:productLoc}.
\end{remark}
We would like to define $(j, k)$ -products  $a(w)_{(j, k)} b(w)$ also for $k< 0; j=1, \dots, N$, and  as in the case of 1-point locality we will use normal ordered products to define these new products.
The definition of a normal ordered product $:a(z)b(w):$ gives an element of $U[\!] z^{\pm 1}, w^{\pm 1}]\!]$. If $a(z)$ and $b(z)$ are arbitrary elements of $U[\!] z^{\pm 1}]\!]$, we cannot apply the subsitution  $\lambda z $ for $w$, (or $\lambda w$ for $z$).  i.e., we cannot define elements $:a(z)b(\lambda z):$. But, if we impose the following important restriction on the elements of $U[\!] z^{\pm 1}]\!]$ we consider, we can define $:a(z)b(\lambda z):$ and $:a(\lambda  z)b(z):$ for any $\lambda \in \mathbb{C}$:
\begin{defn}
\begin{bf} (Field)\end{bf}\label{defn:field-fin}
 A field $a(z)$ on a vector space $V$ is a series of the form
\begin{equation}
a(z)=\sum_{n\in \mathbf{Z}}a_{(n)}z^{-n-1}, \ \ \ a_{(n)}\in
\End(V), \ \
\end{equation}
 such that for any $v\in V$ there exists $n\geq  0$ with $a_{(m)}v=0$ for $m\geq n$.
\end{defn}
\begin{remark} \label{remark:fieldindexing}
A field $a(z)$ is a distribution in $End(V)[\!] z^{\pm 1}]\!]$ which obeys the above finiteness condition. The coefficients $a_{(n)}, \ n\in \mathbb{Z}$ are usually called modes. The indexing above is  generally used in vertex algebras, as in a  vertex algebra  the nonnegative modes (the modes in front of negative powers of $z$)  annihilate a special vector called  vacuum vector (hence are called annihilation operators), and the negative modes (the modes in front of the positive powers of $z$) are the creation operators.
\end{remark}
\begin{remark}
Let  $a(z), b(z) \in End(V)[\!] z^{\pm 1}]\!]$ be fields on a vector space $V$. Then
$:a(z)b(\lambda z):$ and $:a(\lambda z)b( z):$ are well defined elements of  $End(V)[\!] z^{\pm 1}]\!]$, and are also fields in $V$ for any $\lambda \in \mathbb{C}^*$.
\end{remark}
The proof of this fact is the same as in  the usual case of fields in a vertex algebra.
\begin{lem}\label{lem:normalprodexpansion}{\bf Taylor expansion formula for normal ordered products}
Let  $a(z), b(z) \in End(V)[\!] z^{\pm 1}]\!]$ be fields on a vector space $V$, $\lambda \in \mathbb{C}^*$. Then
\begin{equation}
i_{z, z_0}:a(\lambda z +z_0)b( z): =\sum _{k\geq 0}\Big(:(\partial_{\lambda z} ^{(k)}a(\lambda z))b(z):\Big) z_0^k
\end{equation}
\end{lem}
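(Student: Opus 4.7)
The plan is to reduce the identity to the formal Taylor expansion of a distribution and verify that this expansion is compatible with the $\pm$ decomposition used in the definition of the normal ordered product.

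First, I would unwind the left-hand side using the definition of the normal ordered product:
\[
:a(\lambda z + z_0)\,b(z): \;=\; a(\lambda z + z_0)_+\,b(z) \;+\; (-1)^{p(a)p(b)}\,b(z)\,a(\lambda z + z_0)_-,
\]
where the subscripts $\pm$ refer to the splitting of $a$ according to its modes, \emph{not} to powers of $z$. The point is then to expand $a(\lambda z+z_0)_\pm$ in $z_0$ inside the expansion $i_{z,z_0}$.

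The next step is the formal Taylor identity: with $u=\lambda z$ treated as a formal variable, for each mode $a_{(n)}$ one has the binomial expansion
\[
i_{z,z_0}(\lambda z+z_0)^{-n-1}=\sum_{k\geq 0}\binom{-n-1}{k}(\lambda z)^{-n-1-k}z_0^k,
\]
which is exactly $\sum_{k\geq 0}\partial_{\lambda z}^{(k)}(\lambda z)^{-n-1}\,z_0^k$. Summing over $n$ and interchanging sums gives
\[
i_{z,z_0}\,a(\lambda z+z_0)=\sum_{k\geq 0}\partial_{\lambda z}^{(k)}a(\lambda z)\,z_0^k.
\]

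The key verification, which I would do next, is that this Taylor expansion is compatible with the decomposition into $a(u)_+$ (modes $n<0$) and $a(u)_-$ (modes $n\geq 0$). For $n<0$ the binomial sum is a finite polynomial and contributes only nonnegative powers of $\lambda z$; for $n\geq 0$ it contributes only strictly negative powers of $\lambda z$. Hence the mode-splitting $\pm$ of $a(\lambda z+z_0)$ (in the $i_{z,z_0}$ expansion) agrees, coefficient by coefficient in $z_0$, with the mode-splitting of the individual derivatives, yielding
\[
i_{z,z_0}\,a(\lambda z+z_0)_\pm=\sum_{k\geq 0}\bigl(\partial_{\lambda z}^{(k)}a(\lambda z)\bigr)_\pm z_0^k.
\]
This commutation of $i_{z,z_0}$ with the $\pm$ projection is the only point that requires real care, and it is the main (mild) obstacle; once stated clearly it is immediate from the observation above.

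Finally, I would substitute these two expansions into the decomposition of $:a(\lambda z+z_0)b(z):$, factor the $z_0^k$ out of each term, and recognize the resulting sum at each order as $:\bigl(\partial_{\lambda z}^{(k)}a(\lambda z)\bigr)b(z):$ by the definition of the normal ordered product. Collecting gives the claimed identity
\[
i_{z,z_0}\,:a(\lambda z+z_0)\,b(z):\;=\;\sum_{k\geq 0}\bigl(:(\partial_{\lambda z}^{(k)}a(\lambda z))\,b(z):\bigr)\,z_0^k,
\]
which completes the proof. The field hypothesis on $a(z)$ and $b(z)$ is used implicitly here to ensure that $:(\partial_{\lambda z}^{(k)}a(\lambda z))\,b(z):$ is a well-defined element of $\End(V)[\![z^{\pm 1}]\!]$ for each $k$.
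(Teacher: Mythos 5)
Your proposal is correct and follows essentially the same route as the paper's proof: expand $i_{z,z_0}\,a(\lambda z+z_0)$ by the formal Taylor formula (which the paper cites from Kac's book and you re-derive mode by mode), and then observe that, because modes $n<0$ of $a$ contribute only nonnegative powers of $z$ while modes $n\geq 0$ contribute only strictly negative powers, the mode splitting of $a(\lambda z+z_0)$ agrees after $i_{z,z_0}$ with the $z$-power splitting of each $\partial_{\lambda z}^{(k)}a(\lambda z)$. The paper records exactly this compatibility as $i_{z,z_0}\bigl(a(\lambda z+z_0)_{\lambda z+z_0,\pm}\bigr)=\bigl(i_{z,z_0}a(\lambda z+z_0)\bigr)_{z,\pm}$, so your argument reproduces the same key step before assembling the normal ordered products.
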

Note that the right hand side if this equality doesn't make sense without the normal ordered product, as without the normal ordered product $a(\lambda z))b(z)$ may not be a well defined distribution at all.
\begin{proof}
From Proposition 2.4 in \cite{Kac} (for instance) we have that
\[
i_{z, z_0}a(\lambda z +z_0)= \sum _{k\geq 0}(\partial_{\lambda z} ^{(k)}a(\lambda z))z_0^k;
\]
and we can again separate the nonnegative powers of $z$ from the negative powers of $z$ on both sides and we will get
\[
\Big(i_{z, z_0}a(\lambda z +z_0)\Big)_{z, \pm}= \sum _{k\geq 0}(\partial_{\lambda z} ^{(k)}a(\lambda z))_{z, \pm}z_0^k.
\]
 But also, we have
\begin{align*}
i_{z, z_0}\Big(a(\lambda z +z_0)_{\lambda z+z_0, \pm}\Big)&=\begin{cases}\sum_{n<0} a_ni_{z, z_0}(\lambda  z+z_0)^{-n-1},\quad \text{ for $+$,}  \\
\sum_{n\geq 0} a_ni_{z, z_0}(\lambda  z+z_0)^{-n-1},\quad \text{ for $-$ }
\end{cases} \\
& =\Big(i_{z, z_0}a(\lambda z +z_0)\Big)_{z, \pm},
\end{align*}
since in $i_{z, z_0}(\lambda z+z_0)^n$ the negative powers of $z$ are only occur when $n<0$.
Thus
\begin{align*}
i_{z, z_0}\Big(a(\lambda z +z_0)_{\lambda z+z_0, \pm}\Big)
& =\sum _{k\geq 0}(\partial_{\lambda z} ^{(k)}a(\lambda z))_{z, \pm}z_0^k,
\end{align*}
and we have
\begin{align*}
& i_{z, z_0}:a(\lambda z +z_0)b(\lambda z):\\
& \hspace{1cm} =i_{z, z_0}\Big(a(\lambda z +z_0)_{\lambda z+z_0, +}b(z) +(-1)^{p(a)p(b)}b(z)a(\lambda z +z_0)_{\lambda z+z_0, -}\Big) \\
&  \hspace{1cm} = \sum _{k\geq 0}(\partial_{\lambda z} ^{(k)}a(\lambda z))_{z, +}b(z)z_0^k +(-1)^{p(a)p(b)} \sum _{k\geq 0}b(z)\partial_{\lambda z} ^{(k)}(a(\lambda z))_{z, -}z_0^k \\
&  \hspace{1cm} = \sum _{k\geq 0}\Big( :(\partial_{\lambda z} ^{(k)}a(\lambda z))b(z): \Big) z_0^k.
\end{align*}
\end{proof}
\begin{defn}
\begin{bf} ($(j, k)$ -products of fields)\end{bf}\label{defn:products-of-fields}
Let $a(z), b(z) \in End(V)[\!] z^{\pm 1}]\!]$ be fields on a vector space $V$. Define
\begin{align}
a(w)_{(j,-1)}b(w)&:=:a(\lambda_j w)b(w): \quad \text{for} \quad j=1, \dots, N\\
a(w)_{(j,-n-1)}b(w)&:=:(\partial_{\lambda_jw}^{(n)}a(\lambda_j w))b(w): \quad \text{for} \quad j=1, \dots, N, n>0 \\
a(w)_{(j, n)}b(w)&=c_{j,n}(w) \quad \text{for} \quad j=1, \dots, N, n > 0
\end{align}
\end{defn}
It is easy to check that these products of fields are actually well defined and are in fact fields on $V$.
\begin{remark}
We chose to define products of fields $a(w)_{(j, n)}b(w)$ in two separate ways: by the OPE coefficients for $n\geq 0$, and by the normal ordered products for $n<0$. We chose this definition because both the OPE's and the normal ordered products are very important concepts in mathematical physics and quantum field theory. The OPE's are used in quantum field theory to define (describe) the commutation relations and the parity of the particles (bosons vs fermions are defined by the summetry/antisymmetry of the OPE's). The normal ordered products give one of the most important operations used to produce new fields out of the given fields, and as such are used extensively both in quantum field theory and representation theory, as we will see in the Section \ref{section: examples}.

In the paper \cite{Li2} the author defines products of fields for  ``compatible" pairs of vertex operators (see Definition 3.4 in \cite{Li2}). The definition of compatible pairs in \cite{Li2} is more general than our definition of $N$-point local fields. However, even when the two coincide our definition of products (given below) differs. In particular, in the case of a single pole in the OPE the two definitions may coincide for some of the products, but for multiple poles the definitions differ substantially (see the Appendix \secref{section:appendix} for examples). The definition below incorporates parity, enabling us to apply it for  the boson-fermion correspondences of type B, C and D-A.
\end{remark}
\begin{lem}
\label{lem:ResFormulasForProducts}
{\bf (Residue formulas for the products of fields)}\\
The following formulas hold for $n\geq 0$, any $j=1, \dots, N$:
\begin{align*}
&a(w)_{(j, n)}b(w)  \\
& \hspace {0.1cm} =\text{Res}_z\left( \sum_{i=0}^{M-n-1}P_{i,j}(w)\frac{\prod_{i=1,i\neq j}^N(z-\lambda_iw)^{M}}{(z-\lambda_jw)^{-n-i}}\left(a(z)b(w)-(-1)^{p(a)p(b)}b(w)a(z)\right)\right);
\end{align*}
\begin{align*}
&a(w)_{(j,-n-1)}b(w) \\
& \hspace {0.1cm} =\text{Res}_z \left(a(z)b(w)i_{z, w}\frac{1}{(z-\lambda _jw)^{n+1}}-(-1)^{p(a)p(b)}b(w)a(z)i_{w, z}\frac{1}{(z-\lambda _jw)^{n+1}}\right).
\end{align*}
\end{lem}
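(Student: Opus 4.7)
The plan is to derive both formulas directly from machinery already established earlier in the paper: the coefficient formula \eqref{eqn:FormCoeff} of Theorem \ref{mainresult} handles the $n \geq 0$ case (OPE coefficients), while the Cauchy formulas of Lemma \ref{lem:CauchyForm} handle the $n < 0$ case (normal ordered products). In both cases the argument is a direct computation once the right object is plugged in, so I do not expect genuine difficulty; the main care needed is bookkeeping of signs (the parity factor $(-1)^{p(a)p(b)}$) and of the two expansion conventions $i_{z,w}$ versus $i_{w,z}$.

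For the $n \geq 0$ formula, I first invoke Remark \ref{remark:productLoc} to write the $N$-point mutual locality of $a(z), b(w)$ with a single uniform exponent $M \geq \max\{n_1,\dots,n_N\}$, so that
\[
\Pi_{z,w}^M\bigl(a(z)b(w)-(-1)^{p(a)p(b)}b(w)a(z)\bigr)=0.
\]
Setting $A(z,w):=a(z)b(w)-(-1)^{p(a)p(b)}b(w)a(z)$, Theorem \ref{mainresult} applies with all $n_j$ replaced by $M$, and its coefficient $c_{j,M-r}(w)$ equals $a(w)_{(j,M-r)}b(w)$ under Definition \ref{defn:products-of-fields} (with $c_{j,k}(w)=0$ for $k\geq n_j$ by Remark \ref{remark:CoefZero}). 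Plugging into \eqref{eqn:FormCoeff} and relabeling $n := M-r$ (so $r = M - n$ and the summation index $i$ runs from $0$ to $M-n-1$) gives exactly the stated formula, with the exponent $n+i$ appearing on $(z-\lambda_j w)$ and the $M$-th power on each of the other factors.

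For the $n < 0$ formula, I separate the $\pm$-parts of $a(z)$ and use Lemma \ref{lem:CauchyForm} termwise. Since $b(w)$ commutes past the residue in $z$, the two Cauchy formulas yield
\[
\operatorname{Res}_z\!\Big(a(z)b(w)\, i_{z,w}\tfrac{1}{(z-\lambda_j w)^{n+1}}\Big)=\bigl(\partial_{\lambda_j w}^{(n)} a(\lambda_j w)\bigr)_{+}\,b(w),
\]
\[
-(-1)^{p(a)p(b)}\operatorname{Res}_z\!\Big(b(w)a(z)\, i_{w,z}\tfrac{1}{(z-\lambda_j w)^{n+1}}\Big)=(-1)^{p(a)p(b)}b(w)\,\bigl(\partial_{\lambda_j w}^{(n)} a(\lambda_j w)\bigr)_{-}.
\]
Adding these two and comparing to the definition of the normal ordered product, the sum is precisely ${:}\bigl(\partial_{\lambda_j w}^{(n)} a(\lambda_j w)\bigr) b(w){:} = a(w)_{(j,-n-1)}b(w)$, which is the claim.

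The only subtlety worth flagging is verifying that taking the $\pm$-part commutes with the substitution $z \mapsto \lambda_j w$ and with the derivative $\partial_{\lambda_j w}^{(n)}$ in the precise sense needed; this is essentially the content of the step in the proof of Lemma \ref{lem:normalprodexpansion} where one checks that $(i_{z,z_0}a(\lambda z + z_0))_{z,\pm} = i_{z,z_0}(a(\lambda z + z_0)_{\lambda z + z_0,\pm})$. The same observation applies here, so no new work is required. Apart from this, the proof is essentially just unwinding definitions, and there is no genuinely hard step.
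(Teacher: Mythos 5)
Your proposal is correct and takes essentially the same route as the paper's own (very terse) proof: the $n\geq 0$ case by specializing the coefficient formula of Theorem~\ref{mainresult} with uniform exponent $M$, and the $n<0$ case by unpacking the definition of the normal ordered product termwise via the Cauchy formulas of Lemma~\ref{lem:CauchyForm}. Your bookkeeping of the index relabeling $n=M-r$, the sign from the $i_{w,z}$ Cauchy formula, and the commutation of the $\pm$-projection with substitution/derivative all check out.
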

\begin{proof}
The formula for $a(w)_{(j, n)}b(w)$ and $n\geq 0$ was established previously in \thmref{mainresult}. The formula for $a(w)_{(j,-n-1)}b(w)$ and $n\geq 0$ follows directly from the definition of normal ordered products and the Cauchy formulas, Lemma \ref{lem:CauchyForm}.
\end{proof}
The goal of this paper is to determine what type of structure is generated from a given number of fields which are mutually local at several points, if we incorporate all the descendants of those fields: descendants obtained by using normal order products, operator product expansions, derivatives and ``substitutions" at the point of locality (i.e., at $z=\lambda_i w$).
For the case of usual 1-point locality the structure that one obtains from allowing these operations (with imposing some additional auxiliary conditions) is a super vertex algebra structure (this is called the ``Reconstruction Theorem", see e.g. \cite{MR1334399}, \cite{Kac}, \cite{LiLep}). We will prove that if one allows locality at several, but finitely many points, one obtains instead the structure of a twisted vertex algebra.

\begin{defn}\label{defn:fielddesc} \begin{bf}(The Field Descendants Space  $\mathbf{\mathfrak{FD} \{a^0 (z), a^1 (z),  \dots a^p(z)\} }$)\end{bf} \\
Let $a^0 (z), a^1 (z), \dots a^p(z)$ be given fields on a vector space $W$, which are self-local and pairwise local with points of locality $\lambda_1,\dots, \lambda_N$. Denote by $\mathfrak{FD} \{a^0 (z), a^1(z), \dots a_p(z)\}$ the subspace of all fields on $W$ obtained from the fields $a^0 (z), a^1(z), \dots a^p(z)$ as follows:
\begin{enumerate}
\item $Id_W, a^0 (z), a^1(z), \dots a^p(z)\in \mathfrak{FD} \{ a^0 (z), a^1 (z), \dots , a^p(z) \}$;
\item  If $d(z)\in \mathfrak{FD} \{ a^0 (z), a^1 (z), \dots , a^p(z) \}$, then $\partial_z (d(z))\in \mathfrak{FD} \{ a^0 (z), \dots , a^p(z) \}$;
\item  If $d(z)\in \mathfrak{FD} \{ a^0 (z), a^1 (z), \dots , a^p(z) \}$, then $d(\lambda_i z)$ are also elements of\\  \mbox{ $\mathfrak{FD} \{a^0 (z), a^1(z), \dots a^p(z)\}$} for $i=1,\dots, N$;
\item
If $d_1(z), d_2 (z)$ are both in \mbox{$\mathfrak{FD} \{a^0 (z), a^1(z), \dots a^p(z)\}$}, then $:d_1(z)d_2(z):$ are also elements  $\mathfrak{FD} \{a^0 (z), a^1(z), \dots a^p(z)\}$, as well as all products $d_1(z)_{(j, k)}d_2(z)$ for any   $j=1,  \dots,  N$, \ $k\in \mathbb{Z}$;
\item all finite linear combinations of fields in $ \mathfrak{FD} \{ a^0 (z), a^1 (z), \dots , a^p(z) \}$ are still in $ \mathfrak{FD} \{ a^0 (z), a^1 (z), \dots , a^p(z) \}$.
\end{enumerate}
\end{defn}

\begin{remark}
The vector space $\mathfrak{FD} \{a^0 (z), a^1(z), \dots a_p(z)\}$ is given structure of a super vector space by the notion of parity of a field (see Definition \ref{defn:parity}).

\begin{remark}
If $a^0(z),\dots,a^p(z)$ have points of locality $\lambda_1,\dots, \lambda_N$, then the fields in $\mathfrak{FD} \{a^0 (z), \dots, a_p(z)\}$ will have points of locality $\lambda$ where $\lambda$ is in the multiplicative subgroup of $\mathbb C^*$ generated by $\lambda_1,\dots ,\lambda_N$.   If we want the total number of points of locality to be finite, then this subgroup needs to be a finite cyclic subgroup of $\mathbb C^*$.
\end{remark}

\end{remark}

For the following lemma, and the remainder of the paper, we will assume that the points of locality $\lambda_1, \lambda_2, \dots, \lambda_N \in \mathbb{C}$ form a multiplicative group (and hence are roots of unity). The proof of the lemma below uses only the group properties of the $\lambda_i$s, and not that they are specifically roots of unity. We present the proof in this form in the interest of generality.
\begin{lem}[\bf Dong's Lemma]
\label{lem:Dong}
 Suppose $a(z)$, $b(z)$, $c(z)$ are $N$-point mutually local fields on a vector space $V$, with points of locality  $\lambda_1, \lambda_2, \dots, \lambda_N \in \mathbb{C}$ being the distinct elements of a finite multiplicative group.
 Let $a(w)_{(j, k)}b(w)$, $j=1, \dots, N, k\in \mathbb{Z}$, be the products of fields defined in \ref{defn:products-of-fields}. Then the field $c(z)$ is $N$-point mutually local to  $a(w)_{(j, k)}b(w)$ for any  $j=1, \dots, N$, $k\in \mathbb{Z}$.
\end{lem}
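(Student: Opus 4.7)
I would prove Dong's Lemma in two stages according to the sign of $k$, since $(j,k)$-products are defined differently for $k \geq 0$ (OPE coefficients) and $k < 0$ (normal ordered products). Both stages hinge on a single polynomial identity coming from the group structure of $\{\lambda_1, \ldots, \lambda_N\}$, combined with the residue formulas of \lemref{lem:ResFormulasForProducts} and super-Jacobi. First I would fix $M \in \mathbb{N}$ large enough that the three pairwise localities hold with exponent $M$: $\Pi_{z_1, z_2}^M[c(z_2), a(z_1)]_\pm = 0$, $\Pi_{w, z_2}^M[c(z_2), b(w)]_\pm = 0$, and $\Pi_{z_1, w}^M[a(z_1), b(w)]_\pm = 0$. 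Super-Jacobi then yields $\Pi_{z_1, z_2}^M \Pi_{w, z_2}^M [c(z_2), [a(z_1), b(w)]_\pm]_\pm = 0$. The key structural observation is the identity $\Pi_{z_1, z_2} = \lambda_j^N \Pi_{w, z_2} + (z_1 - \lambda_j w) R_j(z_1, w, z_2)$: since $\{\lambda_j^{-1}\lambda_i\}$ is a relabeling of $\{\lambda_i\}$ by the group hypothesis, $\Pi_{z_1, z_2}|_{z_1 = \lambda_j w} = \prod_i (\lambda_j w - \lambda_i z_2) = \lambda_j^N \Pi_{w, z_2}$, so the difference is divisible by $(z_1 - \lambda_j w)$.

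For Step~1 ($k \geq 0$), I would use \thmref{mainresult} to write $[a(z_1), b(w)]_\pm = \sum_{j, k} c_{jk}(w)\partial^{(k)}_{\lambda_j w}\delta(z_1 - \lambda_j w)$ with $c_{jk}(w) = a(w)_{(j, k)}b(w)$. Substituting into the super-Jacobi identity, expanding $\Pi_{z_1, z_2}^M$ by the polynomial identity above, and reducing each $(z_1-\lambda_j w)^\ell \partial^{(k)}_{\lambda_j w}\delta$ to $\partial^{(k-\ell)}_{\lambda_j w}\delta$ via part~(4) of \propref{fact:firstpropdelta}, I would collect terms by derivative order to obtain a vanishing sum $\sum_{j, m} F_{jm}(w, z_2)\partial^{(m)}_{\lambda_j w}\delta(z_1 - \lambda_j w) = 0$. \lemref{lem:deltaIndep} then forces each $F_{jm} = 0$. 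The leading contribution (from $\ell = 0$) is $\lambda_j^{NM}\Pi_{w, z_2}^{2M}[c(z_2), c_{jm}(w)]_\pm$, while the corrections involve $[c(z_2), c_{jm'}(w)]_\pm$ for $m' > m$ multiplied by lower powers of $\Pi_{w, z_2}$ and polynomial data extracted from $R_j$. A downward induction on $m$, starting at $m = n_j - 1$, then yields $\Pi_{w, z_2}^{M'}[c(z_2), a(w)_{(j, m)}b(w)]_\pm = 0$ for all $m \geq 0$ and some $M'$.

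For Step~2 ($k < 0$), I would apply \lemref{lem:ResFormulasForProducts} to write $a(w)_{(j, -n-1)}b(w)$ as $\text{Res}_{z_1}$ of $a(z_1) b(w)$ and $b(w) a(z_1)$ against the $i_{z_1, w}$ and $i_{w, z_1}$ expansions of $(z_1 - \lambda_j w)^{-n-1}$. After taking $[c(z_2), \cdot]_\pm$ and multiplying by $\Pi_{w, z_2}^M$, the derivation rule eliminates the $[c(z_2), b(w)]_\pm$ part, leaving residues whose integrands carry $\Pi_{w, z_2}^M [c(z_2), a(z_1)]_\pm b(w)$ (and a reordered analogue). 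I would then replace one remaining factor of $\Pi_{w, z_2}$ via the polynomial identity as $\lambda_j^{-N}(\Pi_{z_1, z_2} - (z_1 - \lambda_j w) R_j)$: the $\Pi_{z_1, z_2}^M$ contribution annihilates $[c(z_2), a(z_1)]_\pm$ directly, while the error terms carry factors $(z_1 - \lambda_j w)^\ell$ that interact with the $i$-expansions of $(z_1 - \lambda_j w)^{-n-1}$, either cancelling the pole (for $\ell \geq n+1$, producing a polynomial in $z_1$) or reducing its order (for $\ell \leq n$). The Cauchy formulas of \lemref{lem:CauchyForm} then evaluate the residues in terms of $[c(z_2), a(\lambda_j w)]_\pm$ and its derivatives, and the group property, by substitution $z_1 \mapsto \lambda_j w$, inherits $\Pi_{w, z_2}^M[c(z_2), a(\lambda_j w)]_\pm = 0$ from the $(c, a)$-pairwise locality, so additional factors of $\Pi_{w, z_2}$ kill each error term.

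\textbf{The main obstacle} will be the bookkeeping in Step~2, where the interplay of the $(z_1 - \lambda_j w)^\ell$ corrections, the $i$-expansions of the higher-order pole, and the residue extraction produces cascades of terms that must all be absorbed into additional factors of $\Pi_{w, z_2}$; higher-order normal ordered products $:(\partial^{(n)} a)(\lambda_j w) b(w):$ with large $n$ are the most delicate. The group structure of $\{\lambda_i\}$ is essential throughout: it is precisely what makes the substitution $z_1 \mapsto \lambda_j w$ compatible with the factorization of $\Pi$, allowing locality of the original fields to be transported to locality of their descendants.
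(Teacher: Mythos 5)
Your Step~1 ($k\ge 0$) is a genuinely different and in fact cleaner route than the paper's. The paper proves both cases ($k\ge 0$ and $k<0$) at once by directly binomially expanding $\Pi_{z_2,z_3}^{2rN}$ with pivot $\lambda_j z_1$, classifying the summands by whether they carry a factor $\Pi_{z_1,z_3}^r$ or $\Pi_{z_1,z_2}^{2Nr}$, and using those factors to commute $c$ past $a,b$. Your Step~1 instead applies the super-Jacobi identity to get $\Pi_{z_1,z_2}^M\Pi_{w,z_2}^M[c(z_2),[a(z_1),b(w)]_\pm]_\pm=0$, feeds in the delta-decomposition of Theorem~\ref{mainresult}, Taylor-expands $\Pi_{z_1,z_2}^M$ about $z_1=\lambda_j w$ (where the group property gives the constant term $\lambda_j^{NM}\Pi_{w,z_2}^M$), collects by derivative order, and invokes Lemma~\ref{lem:deltaIndep} together with a downward induction. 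This is correct, avoids the residue formula entirely, and is a good alternative proof for $k\ge 0$.

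Step~2 ($k<0$), however, has a genuine gap as written. After super-Leibniz kills the $[c(z_2),b(w)]_\pm$ part, you are left with a residue of $\Pi_{w,z_2}^M[c(z_2),a(z_1)]_\pm$ against $b(w)$ and the two $i$-expansions of $(z_1-\lambda_j w)^{-n-1}$. You then claim the Cauchy formulas evaluate these residues ``in terms of $[c(z_2),a(\lambda_j w)]_\pm$ and its derivatives'' and that $\Pi_{w,z_2}^M[c(z_2),a(\lambda_j w)]_\pm=0$ kills the error terms. But the Cauchy formulas (Lemma~\ref{lem:CauchyForm}) produce the $z_1$-\emph{splittings} $(\cdot)_+$ and $(\cdot)_-$ separately, with $b(w)$ sandwiched between them exactly as in a normal ordered product; they do \emph{not} produce the full supercommutator $[c(z_2),a(\lambda_jw)]_\pm$. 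And $\Pi_{w,z_2}^M$ does not annihilate the splittings individually --- already for $N=1$, $[c(z_2),a(z_1)]=\delta(z_1-z_2)$ has $(w-z_2)\cdot\delta(z_1-z_2)_{z_1,+}\big|_{z_1=w}=(w-z_2)\cdot\big(-i_{z_2,w}\tfrac{1}{w-z_2}\big)=-1\ne 0$. So ``additional factors of $\Pi_{w,z_2}$'' do not, on their own, kill these pieces.

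To close this, one must go further: write $\Pi_{z_1,z_2}^M[c(z_2),a(z_1)]_\pm=0$ via Theorem~\ref{mainresult} as $[c(z_2),a(z_1)]_\pm=\sum_{i',m'}d_{i'm'}(z_2)\partial^{(m')}_{\lambda_{i'}z_2}\delta(z_1-\lambda_{i'}z_2)$, observe that after substituting $z_1=\lambda_j w$ the poles of the split rational parts sit at $w=\lambda_j^{-1}\lambda_{i'}z_2$ (group elements), so multiplying by enough $\Pi_{w,z_2}$ turns both $i$-expansions into the \emph{same} polynomial; the normal ordered residue then collapses to polynomials times $[d_{i'm'}(z_2),b(w)]_\pm$, and $d_{i'm'}(z_2)=c(z_2)_{(i',m')}a(z_2)$ is local with $b(w)$ by your already-established Step~1. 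That extra reduction --- recognizing the splittings recombine into an honest supercommutator $[c_{(i',m')}a,\, b]$ only after enough multiplication, and that its vanishing rests on the $k\ge 0$ case --- is missing from your sketch. The paper sidesteps this entirely by choosing $M=2N(r+n+1)+r$ and extracting $\Pi_{z_1,z_2}^{r+n+1}$ so that $i_{z_1,z_2}(z_1-\lambda_jz_2)^{-n-1}$ and $i_{z_2,z_1}(z_1-\lambda_jz_2)^{-n-1}$ coincide from the outset, which is why its single binomial expansion handles both signs of $k$ uniformly.
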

\begin{proof}  Our proof is a generalization of the argument given in \cite{Kac} and \cite{LiLep}.
 Using the notation \defnref{prod} by hypothesis there exists $r\in\mathbb N$ such that
\begin{align}
\label{eqn:a-b-loc}
\Pi_{z_1, z_2}^ra(z_1)b(z_2)&=(-1)^{p(a)p(b)}\Pi_{z_1,z_2}^rb(z_2)a(z_1)  \\
\Pi_{z_2, z_3}^rb(z_2)c(z_3)&=(-1)^{p(b)p(c)}\Pi_{z_2,z_3}^rc(z_3)b(z_2) \\
\Pi_{z_1, z_3}^ra(z_1)c(z_3)&=(-1)^{p(a)p(c)}\Pi_{z_1,z_3}^rc(z_3)a(z_1)
\end{align}
We will prove the lemma separately for $k\geq 0$ and $k< 0$, as we will use the Residue formulas from Lemma \ref{lem:ResFormulasForProducts}.
Let $k\geq 0$.  Recalling Lemma \ref{lem:ResFormulasForProducts},  we need to show that there is an $M\gg 0$ such that
\begin{equation}
\Pi^M _{z_2 ,z_3}A=(-1)^{p(c)(p(a)+p(b))}\Pi^M_{z_2, z_3}B\label{mutuallylocal}
\end{equation}
where
\begin{align*}
A&:= (z_1-\lambda_jz_2)^{k-n}\Pi_{z_1, z_2}^n\left(a(z_1)b(z_2)c(z_3)-(-1)^{p(a)p(b)}b(z_2)a(z_1)c(z_3)\right), \\
B&:= (z_1-\lambda_jz_2)^{k-n}\Pi_{z_1, z_2}^n\left(c(z_3)a(z_1)b(z_2)-(-1)^{p(a)p(b)}c_(z_3)b(z_2)a(z_1)\right).
\end{align*}
Note that in the above expressions when we write $(z_1-\lambda_jz_2)^{k-n}\Pi_{z_1, z_2}^n$ it is short for $i_{z_1, z_2}(z_1-\lambda_jz_2)^{k-n}\Pi_{z_1, z_2}^n$, which equals  $i_{z_2, z_1}(z_1-\lambda_jz_2)^{k-n}\Pi_{z_1, z_2}^n$, as these are positive binomial powers.
Indeed taking $\text{Res}_{z_1}$ and multiplying \eqnref{mutuallylocal} by \\ $z_2^{-(N-1)n}\prod_{i=1,i\neq j}^N(\lambda_j-\lambda_i)^{-n}$ we would get
$$
\Pi^M _{z_2 ,z_3}a(z_2)_{(j, k)}b(z_2)c(z_3)=\Pi^M _{z_2, z_3}c(z_3)a(z_2)_{(j, k)}b(z_2)
$$
for $k\geq 0$.
To prove this one starts with the observation that for any $i, j=1, \dots, N$
\begin{align*}
(z_2-\lambda_iz_3)^{2r}& =\sum_{k=0}^{2r}\binom{2r}{k}(z_2-\lambda_jz_1)^{2r-k}(\lambda _jz_1-\lambda_iz_3)^k\\
& =\sum_{k=0}^{2r}\binom{2r}{k}\lambda _j^{2r-k}(z_2-\lambda_jz_1)^{2r-k}(z_1-\frac{\lambda _i}{\lambda _j}z_3)^k\\
& =\sum_{k=0}^{2r}\binom{2r}{k}\lambda _j^{4r-2k}(z_1-\lambda_j^{-1}z_2)^{2r-k}(z_1-\frac{\lambda _i}{\lambda _j}z_3)^k
\end{align*}
so that for any $i=1, \dots, N$
\begin{align*}
&(z_2-\lambda_iz_3)^{2rN}  \\
&\quad =\sum_{k_j=0,1\leq j\leq N}^{2r}\binom{2r}{k_1}\cdots \binom{2r}{k_N}\prod_{j=1}^N\lambda _j^{4r-2k_j}(z_1-\lambda_j^{-1}z_2)^{2r-k_j}\prod_{j=1}^N(z_1-\frac{\lambda _i}{\lambda _j}z_3)^{k_j}
\end{align*}
and
\begin{align*}
&\Pi_{z_2, z_3}^{2rN}=(z_2-\lambda_1z_3)^{2rN} \cdots(z_2-\lambda_Nz_3)^{2rN}  \\
&=\prod_{i=1}^N \left(\sum_{k_{i,j}=0, 1\leq j\leq N}^{2r}\binom{2r}{k_{i, 1}}\cdots \binom{2r}{k_{i, N}}\prod_{j=1}^N\lambda _j^{4r-2k_{i, j}}(z_1-\lambda_j^{-1}z_2)^{2r-k_{i, j}}\prod_{j=1}^N(z_1-\frac{\lambda _i}{\lambda _j}z_3)^{k_{i, j}}\right)\\
&\quad =\sum_{k_{i,j}=0, 1\leq i, j\leq N}^{2r}\prod_{i, j=1}^N\binom{2r}{k_{i, j}}\cdot \prod_{j=1}^N\lambda _j^{(4rN-2\sum_{i=1}^Nk_{i, j})}(z_1-\lambda_j^{-1}z_2)^{(2rN-\sum_{i=1}^Nk_{i, j})}\cdot \\
& \hspace{10cm} \cdot \prod_{i, j=1}^N(z_1-\frac{\lambda _i}{\lambda _j}z_3)^{k_{i, j}}
\end{align*}
Due to the group properties of the points of locality-- by the conditions of the lemma $\lambda_1, \lambda_2, \dots, \lambda_N \in \mathbb{C}$ are the distinct elements of a finite multiplicative group--we have that
\begin{equation*}
\prod_{i, j=1}^N(z_1-\frac{\lambda _i}{\lambda _j}z_3)^{k_{i, j}}= \prod _{l=1}^N (z_1-\lambda _lz_3)^{\sum _{\frac{\lambda _i}{\lambda _j}=\lambda _l}k_{i, j}}
\end{equation*}
In the above sum expansion of $\Pi_{z_2, z_3}^{2rN}$ then there two types of summands: the ones where for each $l=1, \dots, N$ the sum $\sum _{\frac{\lambda _i}{\lambda _j}=\lambda _l} k_{i, j}$ is at least $r$; and the summands where exists $l\in \{1, 2, \dots, N\}$ such that $\sum _{\frac{\lambda _i}{\lambda _j}=\lambda _l} k_{i, j} <r$. Now for the first type of summands we can factor out of each such summand
\begin{equation*}
\prod _{l=1}^N (z_1-\lambda _lz_3)^r=\Pi_{z_1, z_3}^r
\end{equation*}
Now we take $M=2rN +r$.
We have
\begin{align*}
&\Pi _{z_2, z_3}^{r}\Pi _{z_2, z_3}^{2rN}  \left(a(z_1)b(z_2)c(z_3)-(-1)^{p(a)p(b)}b(z_2)a(z_1)c(z_3)\right) \\
& =\Pi _{z_2, z_3}^{2rN} \Pi _{z_2, z_3}^{r}  \left((-1)^{p(c)p(b)}a(z_1)c(z_3)b(z_2)-(-1)^{p(a)p(b)}b(z_2)a(z_1)c(z_3)\right)
\end{align*}
We interchanged $b(z_2)$ and $c(z_3)$ in the second summand due to the presence of $ \Pi _{z_2, z_3}^{r}$. Further, the presence of $ \Pi _{z_1, z_3}^{r}$ in the first type of summands will allow us also to interchange $a(z_1)$ and $c(z_3)$ and to get
\begin{align*}
& \Pi _{z_2, z_3}^{2rN} \Pi _{z_2, z_3}^{r}  \left((-1)^{p(c)p(b)}a(z_1)c(z_3)b(z_2)-(-1)^{p(a)p(b)}b(z_2)a(z_1)c(z_3)\right)\\
& =\Pi _{z_2, z_3}^{2rN} \Pi _{z_2, z_3}^{r}  \left((-1)^{p(c)(p(a)+p(b))}c(z_3)a_(z_1)b(z_2)-(-1)^{p(a)(p(b)+p(c))}b(z_2)c(z_3)a(z_1)\right)\\
& =\Pi _{z_2, z_3}^{2rN} \Pi _{z_2, z_3}^{r}  \big((-1)^{p(c)(p(a)+p(b))}c(z_3)a_(z_1)b(z_2)\\
&\hspace{5.5cm} -(-1)^{p(a)(p(b)+p(c))+p(b)p(c)}c(z_3)b(z_2)a(z_1)\big).
\end{align*}
Thus on those first type of summands we have
\begin{align*}
&\Pi _{z_2, z_3}^{r}\Pi _{z_2, z_3}^{2rN}  \left(a(z_1)b(z_2)c(z_3)-(-1)^{p(a)p(b)}b(z_2)a(z_1)c(z_3)\right) \\
& =(-1)^{p(c)(p(a)+p(b))}\Pi _{z_2, z_3}^{r}\Pi _{z_2, z_3}^{2rN} \left(c(z_3)a(z_1)b(z_2)-(-1)^{p(a)p(b)}c(z_3)b(z_2)a(z_1)\right).
\end{align*}
Now there remains to consider what happens with the second type of summands, those for  which
there exists an  $l_0\in \{1, \dots, N\}$ such that the sum $\sum _{\frac{\lambda _i}{\lambda _j}=\lambda _{l_0}} k_{i, j}$ is {\bf less} than $r$.
For {\bf any} given $l$,  and any given $j$ is fixed, we have a  choice of $i$ for which $\frac{\lambda _i}{\lambda _{j}}=\lambda _l$ as they lie in a group. Thus for  {\bf any}  $j\in \{1, \dots, N\}$ exists an $i_j$ such that $k_{i_{j}, j} <r$ (since at least for one choice of $i_j$ we will have $k_{i_{j}, j}$ be a summand in $\sum _{\frac{\lambda _i}{\lambda _j}=\lambda _{l_0}} k_{i, j}<r$). Thus for any $j\in \{1, \dots, N\}$  we have
\[
\sum_{i=1}^Nk_{i, j}= \sum_{i=1, i\neq i_{j}}^Nk_{i, j} +k_{i_{j}, j}  < (N-1)2r +r=(2N-1)r,
\]
since each of the $k_{i, j}$ for $i\neq i_j$ are not greater than $2r$.
Hence $2Nr-\sum_{i=1}^N k_{i, j}> r$ for any $j=1, \dots, N$, and we can factor
\begin{equation*}
\prod _{l=1}^N (z_1-\lambda _lz_2)^{r}=\Pi_{z_1, z_2}^{r} \quad \text{from}\quad \prod_{j=1}^N(z_1-\lambda_j^{-1}z_2)^{(2rN-\sum_{i=1}^Nk_{i, j})}.
\end{equation*}
Thus the second type of summands of the binomial sum expansion of $\Pi _{z_2, z_3}^{2rN}$ will all have a factor of
$\Pi_{z_1, z_2}^{2Nr}$. Thus those summands will allow us to interchange $a(z_1)$ and $b(z_2)$ and will annihilate both \begin{equation*}
\Pi _{z_2, z_3}^{r}\Pi _{z_2, z_3}^{2rN}  \left(a(z_1)b(z_2)c(z_3)-(-1)^{p(a)p(b)}b(z_2)a(z_1)c(z_3)\right)
\end{equation*}
and
\begin{equation*}
\Pi _{z_2, z_3}^{r}\Pi _{z_2, z_3}^{2rN}  \left(c(z_3)a(z_1)b(z_2)-(-1)^{p(a)p(b)}c(z_3)b(z_2)a(z_1)\right).
\end{equation*}
Thus as a result, for both type of summands of the binomial sum expansion of $\Pi _{z_2, z_3}^{2rN}$ we have
\begin{align*}
&\Pi _{z_2, z_3}^{r}\Pi _{z_2, z_3}^{2rN}  \left(a(z_1)b(z_2)c_(z_3)-(-1)^{p(a)p(b)}b(z_2)a(z_1)c(z_3)\right)\\
& =\Pi _{z_2, z_3}^{r}\Pi _{z_2, z_3}^{2rN} (-1)^{p(c)(p(a)+p(b))} \left(c(z_3)a(z_1)b(z_2)-(-1)^{p(a)p(b)}c(z_3)b(z_2)a(z_1)\right).
\end{align*}
Hence we can multiply with the positive-powered binomial product  \mbox{$(z_1-\lambda_jz_2)^{k-n}\Pi_{z_1, z_2}^n$} both sides and get
\begin{equation*}
\Pi^{2Nr +r} _{z_2, z_3}A=(-1)^{p(c)(p(a)+p(b))}\Pi^{2Nr +r}_{z_2, z_3}B,
\end{equation*}
and we have have proved that
the field $c(z)$ is $N$-point mutually local to  $a(w)_{(j, k)}b(w)$ for any  $j=1, \dots, N$, $k\geq 0$.

Now for the products $a(w)_{(j, k)}b(w)$ when $k<0$. In this case the proof is very similar, but since our residue formula for $a(w)_{(j,-n-1)}b(w)$ contains the negative binomial power  \mbox{$\frac{1}{(z-\lambda _jw)^{n+1}}$}, we will need to pick a larger power $M=2N(r+n+1) +r$. In this case we take
\begin{align*}
A&:= \Big(a(z_1)b(z_2)c(z_3)i_{z_1, z_2}\frac{1}{(z_1-\lambda _jz_2)^{n+1}}\\
& \hspace{4cm} -(-1)^{p(a)p(b)}b(z_2)a(z_1)c(z_3)i_{z_2, z_1}\frac{1}{(z_1-\lambda _j z_2)^{n+1}}\Big), \\
B&:= \Big(c(z_3)a(z_1)b(z_2)i_{z_1, z_2}\frac{1}{(z_1-\lambda _j z_2)^{n+1}}\\
& \hspace{4cm} -(-1)^{p(a)p(b)}c(z_3)b(z_2)a(z_1)i_{z_2, z_1}\frac{1}{(z_1-\lambda _j z_2)^{n+1}}\Big).
\end{align*}
The larger power $M=2N(r+n+1) +r$ will ensure that again we will have
\begin{align*}
&\Pi_{z_2, z_3}^{r}\Pi _{z_2, z_3}^{2(r+n+1)N}  \Big(a(z_1)b(z_2)c_(z_3)i_{z_2, z_3}\frac{1}{(z_1-\lambda _j z_2)^{n+1}} \\
& \hspace{4cm} -(-1)^{p(a)p(b)}b(z_2)a(z_1)c(z_3)i_{z_2, z_1}\frac{1}{(z_1-\lambda _j z_2)^{n+1}} \Big)
\\
& =\Pi _{z_3, z_3}^{r}\Pi _{z_2, z_3}^{2(r+n+1)N} (-1)^{p(c)(p(a)+p(b))} \Big(c(z_3)a(z_1)b(z_2)i_{z_{1}, z_{2}}\frac{1}{(z_1-\lambda _j  z_2)^{n+1}} \\
& \hspace{4cm} -(-1)^{p(a)p(b)}c(z_3)b(z_2)a(z_1)i_{z_2, z_1}\frac{1}{(z_1-\lambda _j z_2)^{n+1}}\Big),
\end{align*}
as we have
\[
\Pi _{z_1, z_2}^{(r+n+1)}i_{z_1, z_2}\frac{1}{(z_1-\lambda _jz_2)^{n+1}}= \Pi _{z_1, z_2}^{(r+n+1)}i_{z_2, z_1}\frac{1}{(z_1-\lambda _jz_2)^{n+1}},
\]
and we can factor $\Pi _{z_1, z_2}^r$ out of both sides.
The proof for the products $a(w)_{(j,-n-1)}b(w)$ is then finished in the same way as for the positive (OPE) products
$a(w)_{(j, n)}b(w)$.
\end{proof}
\begin{cor}
Let $a^0 (z), a^1 (z), \dots a^p(z)$ be given fields on a vector space $W$, which are self-local and pairwise local with points of locality $\epsilon ^i$, $i=1, \dots, N$, where $\epsilon$ is a primitive root of unity. Then any two fields in $\mathfrak{FD} \{a^0 (z), a^1(z), \dots a_p(z)\}$ are self and mutually $N$-point local.
\end{cor}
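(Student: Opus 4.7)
The plan is to prove by induction on the construction of fields in $\mathfrak{FD}\{a^0(z),\dots,a^p(z)\}$ that every pair of such fields is $N$-point mutually local; self-locality of $d$ is then the specialization $d_1=d_2=d$, whose $(-1)^{p(d)^2}$-sign recovers exactly Definition~\ref{defn:parity}. The base case, in which both fields lie in $\{\mathrm{Id}_W,a^0(z),\dots,a^p(z)\}$, is precisely the hypothesis of the corollary. The inductive step proceeds by case analysis on the last construction step used to produce one of the two fields, following clauses (2)--(5) of Definition~\ref{defn:fielddesc}.

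Linear combinations (clause (5)) are immediate: take the maximum of the locality exponents appearing on each summand. For a derivative $d_1=\partial_z e$, start from the inductively known relation $\Pi_{z,w}^{M}(e(z)d_2(w)-(-1)^{p(e)p(d_2)}d_2(w)e(z))=0$, multiply by an extra $\Pi_{z,w}$, and differentiate in $z$; the $\partial_z\Pi_{z,w}^{M+1}$-piece is annihilated by the original relation at exponent $M$, leaving
\[
\Pi_{z,w}^{M+1}\bigl(\partial_z e(z)\,d_2(w)-(-1)^{p(e)p(d_2)}d_2(w)\,\partial_z e(z)\bigr)=0.
\]
For a substitution $d_1=e(\lambda_i z)$, substitute $z\mapsto \lambda_i z$ in the locality identity for $(e,d_2)$; the factor in front becomes $\prod_{k=1}^N(\lambda_i z-\lambda_k w)^M$, and the essential computation is
\[
\prod_{k=1}^N(\lambda_i z-\lambda_k w)=\lambda_i^N\prod_{k=1}^N(z-\lambda_i^{-1}\lambda_k w)=\lambda_i^N\,\Pi_{z,w},
\]
where the last equality uses that $\{\lambda_i^{-1}\lambda_k:1\le k\le N\}=\{\lambda_k:1\le k\le N\}$ as sets. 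It is precisely at this step that one needs $\epsilon$ to be a primitive $N$th root of unity, so that $\{\epsilon^1,\dots,\epsilon^N\}$ is a finite cyclic subgroup of $\mathbb{C}^*$; the same group property also handles $w\mapsto\lambda_j w$ and the two-sided substitution.

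The main case is the $(j,k)$-product $d_1=e_1(w)_{(j,k)}e_2(w)$, which is handled by Dong's Lemma (Lemma~\ref{lem:Dong}). Applied with $a=e_1$, $b=e_2$, $c=d_2$, it delivers $N$-point mutual locality of $e_1{}_{(j,k)}e_2=d_1$ with $d_2$ provided that the three pairs $(e_1,e_2)$, $(e_1,d_2)$, $(e_2,d_2)$ are pairwise $N$-point mutually local, which is exactly what the inductive hypothesis supplies; the hypothesis of Dong's Lemma that the points of locality form a finite multiplicative group is again guaranteed by the root-of-unity assumption. The only real obstacle is organizing the induction carefully enough so that these three pairs are covered simultaneously: the cleanest way is a double induction, outer on the construction complexity of $d_1$ (with claim ``$d_1$ is mutually local with every $d_2\in\mathfrak{FD}$'') and inner on the construction complexity of $d_2$ in the base case, each inner/outer step passing to strictly smaller complexity in the invocation of Dong's Lemma. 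With the induction set up this way the substantive content is entirely absorbed into Dong's Lemma, and the remaining clauses of Definition~\ref{defn:fielddesc} are preserved by the direct computations above.
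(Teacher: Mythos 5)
Your proof takes essentially the same route as the paper, which dispatches the corollary as an immediate consequence of Dong's Lemma and the other results of \secref{section:NormalOrdProd}; what you add is the explicit inductive bookkeeping that the paper leaves implicit. Your derivative computation is correct (the extra factor $\Pi_{z,w}$ absorbs the Leibniz term because $\Pi_{z,w}^M F=0$), and your substitution computation is exactly where the hypothesis that the $\epsilon^i$ form a group is used, via $\{\lambda_i^{-1}\lambda_k : 1\le k\le N\}=\{\lambda_k:1\le k\le N\}$.

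However, the induction organization you sketch has a genuine gap. You propose an outer induction on the complexity of $d_1$ with claim ``$d_1$ is mutually local with every $d_2\in\mathfrak{FD}$'' and an inner induction on the complexity of $d_2$ in the outer base case. In that outer base case $d_1$ is a generating field; when the inner step encounters $d_2=e_1{}_{(j,k)}e_2$, Dong's Lemma requires the three pairs $(e_1,e_2)$, $(e_1,d_1)$, $(e_2,d_1)$ to be mutually local. The last two are supplied by the inner hypothesis, but $(e_1,e_2)$ is not: when $e_1,e_2$ are both composite, neither involves $d_1$, so neither the inner nor the outer hypothesis applies. The remedy is to induct on a single combined measure, e.g.\ the total number of construction steps used to build $d_1$ and $d_2$ (equivalently, the total size of the two construction trees). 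When the field of larger size is a $(j,k)$-product $e_1{}_{(j,k)}e_2$, each of the three pairs $(e_1,e_2)$, $(e_1,d_2)$, $(e_2,d_2)$ has strictly smaller total size, so the hypothesis closes; the derivative, substitution, and linear-combination cases decrease the measure trivially.
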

\begin{proof}
The proof follows immediately from the results of \secref{section:NormalOrdProd}, and in particular Dong's Lemma.
\end{proof}
The Dong's Lemma above assures us that if we start with $N$-point mutually local fields with points of locality the roots of unity, then all the products of these fields, in particular  their normal ordered products, will stay $N$-point mutually local. Thus we can calculate OPE's \eqnref{eqn:OPEpolcor} between these normal ordered products. The next two results concern such OPE's.
 The first is a very useful theorem in representation theory, which we will use in \secref{section:reptheory} to prove that appropriate examples of normal ordered products  give rise to representations of the Lie algebras  $b_{\infty}$,  $c_{\infty}$ and  $d_{\infty}$.
\begin{thm}[Wick's Theorem, \cite{MR85g:81096}, \cite{MR99m:81001} or
\cite{Kac} ]  Let  $a^i(z)$ and $b^j(z)$ be formal
distributions with coefficients in the associative algebra
 $\End(\mathbb C[\mathbf x]\otimes \mathbb C[\mathbf y])$,
 satisfying
\begin{enumerate}
\item $[ \lfloor a^i(z)b^j(w)\rfloor ,c^k(x)_\pm]=[ \lfloor
a^ib^j\rfloor ,c^k(x)_\pm]=0$, for all $i,j,k$ and
$c^k(x)=a^k(z)$ or
$c^k(x)=b^k(w)$.
\item $[a^i(z)_\pm,b^j(w)_\pm]=0$ for all $i$ and $j$.
\item The products
$$
\lfloor a^{i_1}b^{j_1}\rfloor\cdots
\lfloor a^{i_s}b^{i_s}\rfloor:a^1(z)\cdots a^M(z)b^1(w)\cdots
b^N(w):_{(i_1,\dots,i_s;j_1,\dots,j_s)}
$$
have coefficients in
$\End(\mathbb C[\mathbf x]\otimes \mathbb C[\mathbf y])$ for all subsets
$\{i_1,\dots,i_s\}\subset \{1,\dots, M\}$, $\{j_1,\dots,j_s\}\subset
\{1,\cdots N\}$. Here the subscript
${(i_1,\dots,i_s;j_1,\dots,j_s)}$ means that those factors $a^i(z)$,
$b^j(w)$ with indices
$i\in\{i_1,\dots,i_s\}$, $j\in\{j_1,\dots,j_s\}$ are to be omitted from
the product
$:a^1\cdots a^Mb^1\cdots b^N:$ and when $s=0$ we do not omit
any factors.
\end{enumerate}
Then
\begin{align*}
:&a^1(z)\cdots a^M(z)::b^1(w)\cdots
b^N(w):= \\
  &\sum_{s=0}^{\min(M,N)}\sum_{\substack{i_1<\cdots<i_s,\\
j_1\neq \cdots \neq j_s}}\pm \lfloor a^{i_1}b^{j_1}\rfloor\cdots
\lfloor a^{i_s}b^{j_s}\rfloor
:a^1(z)\cdots a^M(z)b^1(w)\cdots
b^N(w):_{(i_1,\dots,i_s;j_1,\dots,j_s)}.
\end{align*}
Here the plus or minus sign is determined as follows:  each permutation of an adjacent odd field changes the sign.
\end{thm}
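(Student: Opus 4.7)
The plan is to prove Wick's theorem by induction on $M+N$, with the base case $M=N=1$ reducing to the defining identity $a^1(z)b^1(w) = \lfloor a^1b^1\rfloor + :a^1(z)b^1(w):$ that follows from the definition of the contraction in \eqnref{contraction}. The inductive step is split into two reductions: first, reduce to the case $M=1$; second, prove the $M=1$ case by induction on $N$.

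For the first reduction, I would split off the leftmost factor and write $:a^1(z)\cdots a^M(z): = a^1(z)_+\, :a^2\cdots a^M: + (-1)^{p(a^1)(p(a^2)+\cdots+p(a^M))}:a^2\cdots a^M:\, a^1(z)_-$. Multiplying on the right by $:b^1(w)\cdots b^N(w):$ and using hypothesis (2) to commute the $\pm$ parts of $a^1(z)$ past the corresponding $\pm$ parts of the $b^j(w)$, one reduces the computation to applying the desired formula to $:a^2(z)\cdots a^M(z):\, :b^1(w)\cdots b^N(w):$, which is covered by the inductive hypothesis, together with a single application of the $M=1$ case for contractions involving $a^1$. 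Hypothesis (1), that each $\lfloor a^ib^j\rfloor$ commutes with every $c^k(x)_\pm$, is what allows us to pull the contraction factors to the front of the resulting normal-ordered products without generating additional terms or sign complications beyond the usual fermionic signs.

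For the $M=1$ base, I would show by induction on $N$ that
\begin{align*}
a(z):b^1(w)\cdots b^N(w): &= :a(z)b^1(w)\cdots b^N(w): \\
&\quad + \sum_{j=1}^N (-1)^{p(a)(p(b^1)+\cdots+p(b^{j-1}))}\lfloor ab^j\rfloor :b^1(w)\cdots b^N(w):_{(;j)}.
\end{align*}
The inductive step here is carried out by writing $a(z) = a(z)_+ + a(z)_-$, moving $a(z)_-$ to the right through each $b^j(w)$ via the identity $a(z)_- b^j(w) = (-1)^{p(a)p(b^j)}b^j(w)a(z)_- + \lfloor ab^j\rfloor$, and combining with $a(z)_+\, :b^1\cdots b^N:\, =\, :a(z)b^1\cdots b^N:$ minus the contributions from pulling $a(z)_+$ past. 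Hypothesis (3) is precisely what guarantees that the partial normal-ordered products that arise at each stage are well-defined as series of operators; without it the identity would be formal only.

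The main obstacle I anticipate is bookkeeping of the fermionic signs: each time a part of $a^i$ is permuted past an adjacent odd field one picks up a sign, and when one extracts a contraction $\lfloor a^{i_k}b^{j_k}\rfloor$ from within a long product, the cumulative sign depends on both the parities of the intervening fields and on the order in which contractions are performed. I would handle this by fixing a canonical convention --- always extract contractions by moving the $a^{i_k}$ factors leftward past the preceding $a$'s, then rightward past the preceding $b$'s --- and verifying that the sign produced matches the sign in the statement via a direct parity count. The unrestricted summation $j_1 \neq \cdots \neq j_s$ (rather than $j_1 < \cdots < j_s$) together with the sign convention will automatically yield every ``pairing'' with the correct Koszul sign, so that the final formula closes.
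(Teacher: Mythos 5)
The paper itself offers no proof of this theorem; it states only that ``the proof of Wick's theorem is very similar to the proof in the case of usual locality, thus we will not give it here,'' and refers the reader to the cited textbooks. Your sketch is precisely the standard double-induction argument those references use (induction on $M$, with the $M=1$ base handled by induction on $N$), and it is structurally correct. The mechanisms you identify are the right ones: hypothesis (2) gives $[a_-,b_-]=0$ and hence $\lfloor ab\rfloor=[a_-,b_+]$, which is what produces a contraction each time $a^1(z)_-$ is swept rightward past some $b^j(w)_+$; hypothesis (1), centrality of the contractions, is what lets you pull the contraction factors to the front and then reassemble $a^1_+\,:X:\;+\;(\pm)\,:X:\,a^1_-$ into $:a^1X:$ after applying the inductive hypothesis to $:a^2\cdots a^M:\,:b^1\cdots b^N:$; and hypothesis (3) is what makes the intermediate infinite sums well-defined. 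The $N$-point locality structure of this paper (contractions that are $i_{z,w}$-expansions with poles at several roots of unity rather than only at $z=w$) plays no role beyond what the three hypotheses already abstract away, so the argument really is identical to the $1$-point case. The one place requiring care, as you correctly flag, is the Koszul sign: fixing the convention that each contraction is extracted by moving $a^{i_k}_-$ rightward past the intervening $b$'s, and then checking the resulting parity count against the ``each permutation of an adjacent odd field changes the sign'' rule, is the standard way to close the bookkeeping.
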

The proof of Wick's theorem is very similar to the proof in the case of usual locality, thus we will not give it here.

The OPE  of the $N$-point local fields  \eqnref{eqn:OPEpolcor} very much resembles the Partial Fraction Theorem one uses in analysis.
To make this more precise, we need some notations.
Denote by $\mathbf{F}^N_{\lambda}(z, w)$ the space of rational functions in the  formal variables $z, w$ with only poles at $z=0, w=0, \ z= \lambda_i w$, $i=1, \dots , N$.   In other words
\begin{equation}
\mathbf{F}^N_{\lambda}(z, w)=\mathbb C[z,w][z^{-1},w^{-1},(z-w)^{-1},(z-\lambda_1 w)^{-1},\cdots, (z-\lambda_{N}w)^{-1}].
\end{equation}
In particular, if $\epsilon$ is  a primitive $N$th root of unity, denote by $\mathbf{F}^N_{\epsilon}(z, w)$
 the space of rational functions in the  formal variables $z, w$ with only poles at $z=0, w=0, \ z= \epsilon^i w$, $i=1, \dots , N$.   In other words
\begin{equation}
\mathbf{F}^N_{\epsilon}(z, w)=\mathbb C[z,w][z^{-1},w^{-1},(z-w)^{-1},(z-\epsilon w)^{-1},\cdots, (z-\epsilon^{N-1}w)^{-1}].
\end{equation}
 $\mathbf{F}^N_{\lambda}(z, w)$ the space of rational functions in the  formal variables $z, w$ with only poles at $z=0, w=0, \ z= \lambda_i w$, $i=1, \dots , N$.   In other words
\begin{equation}
\mathbf{F}^N_{\lambda}(z, w)=\mathbb C[z,w][z^{-1},w^{-1},(z-w)^{-1},(z-\lambda_1 w)^{-1},\cdots, (z-\lambda_N w)^{-1}].
\end{equation}
Also, denote $\mathbf{F}^N_{\lambda}(z, w)^{+, w}$ the space of rational functions in the formal variables $z, w$ with only poles at $z=0,  \ z= \lambda_i w$, $i=1, \dots , N$.   In other words
\begin{equation}
\mathbf{F}^N_{\lambda}(z, w)^{+, w}=\mathbb C[z,w][z^{-1},(z-w)^{-1},(z-\lambda_1 w)^{-1},\cdots, (z-\lambda_N w)^{-1}].
\end{equation}
Note that we do not allow a pole at $w=0$ in $\mathbf{F}^N_{\lambda}(z, w)^{+, w}$, i.e., if $f(z, w)\in \mathbf{F}^N_{\lambda}(z, w)$, then $f(z, 0)$ is well defined. Similarly  if $\epsilon$ is  a primitive $N$th root of unity we use the notation $\mathbf{F}^N_{\epsilon}(z, w)^{+, w}$. \\

Let $\mathbf{F}^N_{\epsilon}(z_1, z_2, \dots , z_l)$ is the space of rational functions in variables $z_1, z_2, \dots, z_l$ with only poles at $z_m=0$, $m=1,\dots, l$, or at $z_i= \epsilon^{k} z_j$, and $i\neq j$, $1\leq k\leq N$.    In other words
\begin{equation}
\mathbf{F}^N_{\epsilon}(z_1, z_2, \dots , z_l)
=\mathbb C[z_i\,|\,1\leq i\leq l][z_m^{-1},(z_i-\epsilon^{k}z_j)^{-1}\,|\,1\leq m\leq l\,,\,i\neq j ].
\end{equation}
And $\mathbf{F}^N_{\epsilon}(z_1, z_2, \dots , z_l)^{+, z_l}$ is the space of rational functions in variables $z_1, z_2, \dots, z_l$ with only poles at $z_m=0$, $m=1,\dots, l-1$, or at $z_i= \epsilon^{k} z_j$, for $i\neq j$.
In other words
\begin{equation}
\mathbf{F}^N_{\epsilon}(z_1, z_2, \dots , z_l)^{+, z_l}
=\mathbb C[z_i\,|\,1\leq l\leq l-1][z_m^{-1},(z_i-\epsilon^{k}z_j)^{-1}\,|\,1\leq m\leq l-1\,,\,i\neq j].
\end{equation}\color{black}
 If $N$ is clear from the context, or the property doesn't depend on the particular value of $N$, we will just write  $\mathbf{F}_{\epsilon}(z, w)$ or $\mathbf{F}_{\epsilon}(z, w)^{+, w}$. In the same way we define $\mathbf{F}^N_{\lambda}(z_1, z_2, \dots , z_l)$ and $\mathbf{F}^N_{\lambda}(z_1, z_2, \dots , z_l)^{+, z_l}$.

Note that Partial Fraction Decomposition in $\mathbb C[z]$ is a consequence of this ring being a Euclidean domain while $\mathbb C[z,w]$ is not a Euclidean domain.  None the less we have
\begin{lem}[Partial Fraction Decomposition]\label{pfd}
Let
$$
f(z,w):=\frac{p(z,w)}{z^mw^n(z-w)^{n_1}\cdots (z-\epsilon^{N-1}w)^{n_N}}\in  \mathbf{F}^N_{\epsilon}(z, w).
$$
where $p(z,w)\in\mathbb C[z,w]$.
Then there exists unique $A_{i,j}(w)\in \mathbb C[w,w^{-1}]$ such that
\begin{align}
f(z,w)&=q(z,w)+\frac{A_{-n-1,0}(w)}{z^n}+\cdots +\frac{A_{-1,0}(w)}{z} \label{f(z,w)} \\
&\hskip 42pt +\frac{A_{-n_1,1}(w)}{(z-w)^{n_1}}+\cdots +\frac{A_{-1,1}(w)}{(z-w)} \notag\\
&\hskip 42pt+\frac{A_{-n_2,\epsilon}(w)}{(z-\epsilon w)^{n_2}}+\cdots +\frac{A_{-1,\epsilon}(w)}{(z-\epsilon w)} \notag \\
&\hskip 72pt \vdots\hskip 100pt \vdots \notag \\
&\hskip 42pt+\frac{A_{-n_{N},\epsilon^{N-1}}(w)}{(z-\epsilon^{N-1} w)^{n_{N}}}+\cdots +\frac{A_{-1,\epsilon^{N-1}}(w)}{(z-\epsilon^{N-1} w)} \notag
\end{align}
where $q(z,w)\in\mathbb C[z,w,w^{-1}]$.
\end{lem}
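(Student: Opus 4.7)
The plan is to reduce the statement to the classical partial-fraction decomposition over a field and then inspect the explicit formulas for the coefficients to see that they lie in $\mathbb{C}[w,w^{-1}]$. Concretely, I would regard $f(z,w)$ as an element of the ring $\mathbb{C}(w)(z)$, viewing $z$ as the indeterminate and $w$ as a parameter. The denominator of $f$, after pulling the scalar $w^n$ into the $\mathbb{C}(w)$-coefficient, is a product of the pairwise coprime polynomials $z^m,(z-w)^{n_1},(z-\epsilon w)^{n_2},\dots,(z-\epsilon^{N-1}w)^{n_N}$ in $\mathbb{C}(w)[z]$; pairwise coprimality follows because the roots $0,w,\epsilon w,\dots,\epsilon^{N-1}w$ are distinct in $\mathbb{C}(w)$, since $\epsilon^i\neq \epsilon^j$ for $0\le i\neq j\le N-1$. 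The standard partial-fraction theorem over the field $\mathbb{C}(w)$ then yields a unique decomposition of exactly the shape \eqref{f(z,w)}, with a priori coefficients $A_{i,\epsilon^k}(w)\in\mathbb{C}(w)$ and $q(z,w)\in\mathbb{C}(w)[z]$.

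The remaining—and essential—step is to show that these coefficients and the polynomial part actually lie in $\mathbb{C}[w,w^{-1}]$ and $\mathbb{C}[z,w,w^{-1}]$ respectively. For this I would use the usual residue/derivative formulas. For the pole at $z=\epsilon^k w$ of order $n_{k+1}$, one has
\begin{equation*}
A_{-j,\epsilon^k}(w)=\frac{1}{(n_{k+1}-j)!}\,\partial_z^{\,n_{k+1}-j}\Bigl[(z-\epsilon^k w)^{n_{k+1}}f(z,w)\Bigr]\Big|_{z=\epsilon^k w}.
\end{equation*}
After cancelling $(z-\epsilon^k w)^{n_{k+1}}$ against the denominator, the remaining denominator factors, evaluated at $z=\epsilon^k w$, are of the form $(\epsilon^k w)^m$, $w^n$, and $(\epsilon^k-\epsilon^l)^{n_{l+1}}w^{n_{l+1}}$ for $l\neq k$; since $p(z,w)\in\mathbb{C}[z,w]$ and all these factors become nonzero scalars times a power of $w$, the resulting expression is manifestly an element of $\mathbb{C}[w,w^{-1}]$. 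The same argument, with $\epsilon^k w$ replaced by $0$, handles the $A_{-j,0}(w)$. Finally, once all the fractional terms are subtracted from $f(z,w)$, the difference is a rational function in $z$ over $\mathbb{C}(w)$ with no poles in $z$, hence a polynomial in $z$ whose coefficients, being finite $\mathbb{C}[w,w^{-1}]$-linear combinations of the partial-fraction coefficients and the original Laurent polynomial in $w$ coming from $p(z,w)/w^n$, lie in $\mathbb{C}[w,w^{-1}]$; this gives $q(z,w)\in\mathbb{C}[z,w,w^{-1}]$.

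Uniqueness is inherited directly from the uniqueness of the partial fraction decomposition in $\mathbb{C}(w)(z)$: if two decompositions as in \eqref{f(z,w)} agreed, subtracting them would give a nonzero equality among the standard partial-fraction basis elements over $\mathbb{C}(w)$, which is impossible.

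The main obstacle is the verification that the coefficients have denominators only in $w$, not in arbitrary polynomials of $w$; this is where the specific shape of the denominator of $f$ (each nontrivial linear factor in $z$ is of the form $z-\epsilon^k w$, so evaluating another such factor at $z=\epsilon^j w$ produces $(\epsilon^j-\epsilon^k)w$, contributing only a unit times a power of $w$) is used in an essential way. The rest is routine bookkeeping with residues and derivatives.
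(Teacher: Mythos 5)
Your proof is correct, but takes a genuinely different route from the paper's. The paper substitutes $t=z/w$ to reduce immediately to the \emph{scalar} one-variable partial fraction decomposition of $1/\bigl(t^n\prod_i(t-\epsilon^i)^{n_i}\bigr)$ over $\mathbb{C}$, multiplies back by $w^M$ and by $p(z,w)$, and then regroups each resulting term into the required shape. You instead apply partial fractions once over the coefficient field $\mathbb{C}(w)$, where the linear factors $z$, $z-\epsilon^k w$ are obviously coprime, and then argue \emph{a posteriori}, via the residue/derivative formula for the coefficients, that every $A_{i,\epsilon^k}(w)$ actually lies in $\mathbb{C}[w,w^{-1}]$ rather than merely in $\mathbb{C}(w)$. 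The two approaches buy slightly different things: the paper's change of variables produces explicit scalar constants $a_{i,\epsilon^k}$ and bypasses any integrality question, while yours isolates cleanly why the specific form of the denominator (every root of the denominator in $z$ is a scalar multiple of $w$) forces the denominators in $w$ to be monomials. Both obtain uniqueness in essentially the same way — from the classical one-variable uniqueness.

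One small point worth tightening: your justification that the polynomial part $q(z,w)$ lands in $\mathbb{C}[z,w,w^{-1}]$ (``being finite $\mathbb{C}[w,w^{-1}]$-linear combinations of the partial-fraction coefficients and $p/w^n$'') is stated too casually, since $q$ is obtained by subtraction and cancellation, not as such a linear combination in an obvious way. A cleaner way to see it: multiply the identity $q = f - (\text{fractional terms})$ by the monic (in $z$) polynomial $z^m\prod_l(z-\epsilon^l w)^{n_{l+1}}$ with coefficients in $\mathbb{C}[w]$; the right-hand side then visibly lies in $\mathbb{C}[z,w,w^{-1}]$, and since multiplication by a monic polynomial over $\mathbb{C}[w]$ preserves the leading $z$-coefficient at each step, an easy downward induction on $\deg_z q$ shows $q\in\mathbb{C}[z,w,w^{-1}]$.
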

Note: the Partial Fraction Decomposition can also be given more generally for $\mathbf{F}^N_{\lambda}(z, w)$ without any change, but we will only use the case of $\mathbf{F}^N_{\epsilon}(z, w)$.
\begin{proof}  By the Partial Fraction Decomposition Theorem in one variable (view $z/w$ as the one variable) we have
\begin{align*}
\frac{1}{w^mz^n\prod_{i=0}^{N-1}(z-\epsilon^i w)^{n_i}}&=\frac{w^{-m-n-\sum_{i=0}^{N-1}n_i}}{\left(\frac{z}{w}\right)^n\prod_{i=0}^{N-1}\left(\frac{z}{w}-\epsilon^i \right)^{n_i}}  \\
&=w^M\Bigg(\frac{a_{-n-1,0}w^n}{z^n}+\cdots +\frac{a_{-1,0}w}{z} \\
&\qquad +\frac{a_{-n_1,1}}{\left(\frac{z}{w}-1\right)^{n_1}}+\cdots +\frac{a_{-1,1}}{\left(\frac{z}{w}-1\right)} \\
&\qquad+\frac{a_{-n_2,\epsilon}}{\left(\frac{z}{w}-\epsilon \right)^{n_2}}+\cdots +\frac{a_{-1,\epsilon}}{\left(\frac{z}{w}-\epsilon \right)}  \\
&\qquad  \vdots\hskip 100pt \vdots \\
&\qquad +\frac{a_{-n_{N},\epsilon^{N-1}}}{\left(\frac{z}{w}-\epsilon^{N-1} \right)^{n_{N}}}+\cdots +\frac{a_{-1,\epsilon^{N-1}}}{\left(\frac{z}{w}-\epsilon^{N-1} \right)} \Bigg),
\end{align*}
where $M=-m-n-\sum_{i=0}^{N-1}$.     Multiplying through by $p(z,w)$ we get
\begin{align*}
\frac{p(z,w)}{w^mz^n\prod_{i=0}^{N-1}(z-\epsilon^i w)^{n_i}}
&=w^M\Bigg(\frac{a_{-n-1,0}p(z,w)w^n}{z^n}+\cdots +\frac{a_{-1,0}p(z,w)w}{z} \\
&\qquad +\frac{a_{-n_1,1}p(z,w)}{\left(\frac{z}{w}-1\right)^{n_1}}+\cdots +\frac{a_{-1,1}p(z,w)}{\left(\frac{z}{w}-1\right)} \\
&\qquad+\frac{a_{-n_2,\epsilon}p(z,w)}{\left(\frac{z}{w}-\epsilon \right)^{n_2}}+\cdots +\frac{a_{-1,\epsilon}p(z,w)}{\left(\frac{z}{w}-\epsilon \right)}  \\
&\qquad  \vdots\hskip 100pt \vdots \\
&\qquad +\frac{a_{-n_{N},\epsilon^{N-1}}p(z,w)}{\left(\frac{z}{w}-\epsilon^{N-1} \right)^{n_{N}}}+\cdots +\frac{a_{-1,\epsilon^{N-1}}p(z,w)}{\left(\frac{z}{w}-\epsilon^{N-1} \right)} \Bigg).
\end{align*}
Now the first summation we can rewrite as
\begin{align*}
&\frac{a_{-n-1,0}p(z,w)w^n}{z^n}+\cdots +\frac{a_{-1,0}p(z,w)w}{z} \\
&\quad =q_0(z,w)+\frac{A_{-n-1,0}(w)}{z^n}+\cdots +\frac{A_{-1,0}(w)}{z} ,
\end{align*}
where $q_0(z,w)\in \mathbb C[z,w,w^{-1}]$ and $, A_{-n-1,0}(w),\dots, A_{-1,0}(w)\in\mathbb C[w,w^{-1}]$.
Similarly
\begin{align*}
&\frac{a_{-n_k,\epsilon^k}p(z,w)}{\left(\frac{z}{w}-\epsilon^k \right)^{n_k}}+\cdots +\frac{a_{-1,\epsilon^k}p(z,w)}{\left(\frac{z}{w}-\epsilon \right)}   \\
&\quad =q_k(z,w)+\frac{A_{-n_k,\epsilon^k}(w)}{\left(\frac{z}{w}-\epsilon^k \right)^{n_k}}+\cdots +\frac{A_{-1,\epsilon^k}(w)}{\left(\frac{z}{w}-\epsilon \right)}
\end{align*}
where $q_k(z,w)\in \mathbb C[z,w,w^{-1}]$ and $A_{-n-1,\epsilon^k}(w),\dots, A_{-1,\epsilon^k}(w)\in\mathbb C[w,w^{-1}]$.
Uniqueness follows by induction on multiplying \eqref{f(z,w)} by $z^k$ for various $k$ and setting $z=0$ or multiplying by $(z-\epsilon^kw)^l$ for various $l$ and setting $z=\epsilon^kw$.
\end{proof}
Define
\begin{equation}
\text{\rm Res}_{z=\epsilon^kw}\left(\frac{p(z,w)}{z^mw^n(z-w)^{n_1}\cdots (z-\epsilon^{N-1}w)^{n_N}}\right) =A_{-1,\epsilon^k}(w).
\end{equation}
{\begin{remark} Note that $A_{i,j}(w)\in \mathbb C[w,w^{-1}]$ even if the initial function $f(z,w)\in  \mathbf{F}^N_{\epsilon}(z, w)^{+, w}$, as exemplified by
\[
\frac{1}{(z-w)(z+w)}=\frac{\frac{1}{2w}}{z-w} -\frac{\frac{1}{2w}}{z+w}.
\]
Thus we see that
\[
\text{\rm Res}_{z=\pm w}\frac{1}{(z-w)(z+w)} =\mp \frac{1}{2w}
\]
\end{remark}
Lastly in this section, we want to finish with the following lemma that calculates the  OPE between a field and a normal ordered product that doesn't necessarily obey the conditions of Wick's theorem.}
\begin{lem}\label{lem:tripleOPEcoef}
Let $a(z), b(z), c(z)$ be fields on a vector space $W$ satisfying the hypothesis of Dong's Lemma and suppose $\{\lambda_1,\dots,\lambda_N\}=\{\epsilon^i\,|\,1\leq i\leq N\}$. We have
\begin{align*}
&a(z):b(w)c(w):\sim (-1)^{p(a)p(b)}i_{z, w}\sum_{m=1}^{N}\sum_{p=0}^{M}\frac{:b (w)c^{ac}_{mp}(w):}{(z-\epsilon ^m w)^{p+1}} \\
 & \hspace{3cm} - (-1)^{p(a)p(b)}\Big(i_{z, w}\sum_{m=1}^{N}\sum_{p=0}^{M}\frac{c^{bac}_{mp}(w)}{(z-\epsilon ^m w)^{p+1}}\Big),
\end{align*}
where the triple coefficient $c^{bac}_{mp}(w)$ is defined in the proof below.
\end{lem}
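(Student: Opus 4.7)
The plan is to compute $a(z):b(w)c(w):$ by representing the normal ordered product via a Cauchy-type residue, left-multiplying by $a(z)$, and separating the singular parts at $z=\epsilon^m w$ using the OPE of \lemref{lem:OPE} applied separately to $a(z)c(w)$ and to $a(z)b(z_2)$. From the Cauchy formulas (\lemref{lem:CauchyForm}) and the definition of normal ordering, I first rewrite
\[
:b(w)c(w):\;=\;\text{Res}_{z_2}\!\left[b(z_2)c(w)\,i_{z_2,w}\tfrac{1}{z_2-w}\;-\;(-1)^{p(b)p(c)}c(w)b(z_2)\,i_{w,z_2}\tfrac{1}{z_2-w}\right],
\]
so that $a(z):b(w)c(w):$ equals the same residue with $a(z)$ inserted on the left of each summand.

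For the summand containing $a(z)c(w)b(z_2)$, I would apply the OPE $a(z)c(w)=i_{z,w}\sum_{m,p}\tfrac{c^{ac}_{mp}(w)}{(z-\epsilon^m w)^{p+1}}+:a(z)c(w):$ of \lemref{lem:OPE}. The singular coefficients $c^{ac}_{mp}(w)$ are independent of $z_2$, so they come out of the residue, and $\text{Res}_{z_2}[b(z_2)\,i_{w,z_2}\tfrac{1}{z_2-w}]=-b(w)_-$. The normal-ordering identity $(-1)^{p(b)(p(a)+p(c))}c^{ac}_{mp}(w)b(w)_- = \,:b(w)c^{ac}_{mp}(w):-b(w)_+c^{ac}_{mp}(w)$, combined with the sign calculation $(-1)^{p(b)p(c)}(-1)^{p(b)(p(a)+p(c))}=(-1)^{p(a)p(b)}$, reproduces the announced $(-1)^{p(a)p(b)}\,i_{z,w}\sum_{m,p}\tfrac{:b(w)c^{ac}_{mp}(w):}{(z-\epsilon^m w)^{p+1}}$ piece, leaving a residual $-(-1)^{p(a)p(b)}\,i_{z,w}\sum_{m,p}\tfrac{b(w)_+c^{ac}_{mp}(w)}{(z-\epsilon^m w)^{p+1}}$ term.

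For the summand containing $a(z)b(z_2)c(w)$, I would use the OPE $a(z)b(z_2)=i_{z,z_2}\sum_{m,p}\tfrac{c^{ab}_{mp}(z_2)}{(z-\epsilon^m z_2)^{p+1}}+:a(z)b(z_2):$. The $:a(z)b(z_2):$ part is regular in $z$ at every $z=\epsilon^m z_2$, so after the $z_2$-residue against $i_{z_2,w}\tfrac{1}{z_2-w}$ it yields an iterated normal-ordered triple product regular in $z$ at each $z=\epsilon^m w$, which I drop under $\sim$. The singular contribution
\[
\sum_{m,p}\text{Res}_{z_2}\!\left[\,i_{z,z_2}\tfrac{1}{(z-\epsilon^m z_2)^{p+1}}\,c^{ab}_{mp}(z_2)\,c(w)\,i_{z_2,w}\tfrac{1}{z_2-w}\,\right]
\]
I would expand using $i_{z,z_2}\tfrac{1}{(z-\epsilon^m z_2)^{p+1}}=\sum_{l\ge 0}\binom{l+p}{p}\epsilon^{ml}z_2^{l}z^{-l-p-1}$, evaluate the $z_2$-residue via the Cauchy formula (\lemref{lem:CauchyForm}), and then re-expand the resulting $z^{-l-p-1}$ in the basis $i_{z,w}\tfrac{1}{(z-\epsilon^m w)^{j+1}}$; this recasts the expression as $i_{z,w}\sum_{m,p}\tfrac{T^{abc}_{mp}(w)}{(z-\epsilon^m w)^{p+1}}$ for explicit coefficients $T^{abc}_{mp}(w)\in U[\!]w^{\pm 1}]\!]$ built from the modes of $c^{ab}_{mp}$ multiplied by $c(w)$.

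Finally I define
\[
c^{bac}_{mp}(w)\;:=\;b(w)_+c^{ac}_{mp}(w)\;-\;(-1)^{p(a)p(b)}T^{abc}_{mp}(w),
\]
so that the two residual contributions from the previous paragraphs assemble into the displayed $-(-1)^{p(a)p(b)}\,i_{z,w}\sum_{m,p}\tfrac{c^{bac}_{mp}(w)}{(z-\epsilon^m w)^{p+1}}$ term. The main obstacle will be the bookkeeping in the third paragraph: one must juggle three expansion regions ($i_{z,z_2}$, $i_{z_2,w}$, $i_{z,w}$) to put the first summand into the required $i_{z,w}$-form with $w$-coefficients. That the final answer is a bona fide $N$-point local OPE whose coefficients lie in $U[\!]w^{\pm 1}]\!]$ is ultimately guaranteed by Dong's Lemma (\lemref{lem:Dong}), which ensures $a(z)$ is $N$-point mutually local with $:b(w)c(w):$, together with the uniqueness statement of \thmref{mainresult} and \lemref{lem:deltaIndep}.
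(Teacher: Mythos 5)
Your approach---writing $:b(w)c(w):$ via the residue formula of \lemref{lem:ResFormulasForProducts} in an auxiliary variable $z_2$, inserting $a(z)$, and splitting $a(z)b(z_2)$ and $a(z)c(w)$ into contraction plus normal-ordered pieces---is a genuinely different route from the paper's, which commutes $a(z)$ directly through $b(w)_+c(w)+(-1)^{p(b)p(c)}c(w)b(w)_-$ using the $a(z)_\pm$--splitting implicit in \lemref{lem:OPE}. However, there is a real gap: the claim that the normal-ordered pieces $:a(z)b(z_2):$ and $:a(z)c(w):$ yield only contributions regular at $z=\epsilon^m w$ is false, and dropping them under $\sim$ loses singular terms.

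The mechanism is this. Write $:a(z)b(z_2):=a(z)_+b(z_2)+(-1)^{p(a)p(b)}b(z_2)a(z)_-$. After taking $\text{Res}_{z_2}$ against $c(w)\,i_{z_2,w}\tfrac{1}{z_2-w}$, the second term becomes, up to sign, $b(w)_+a(z)_-c(w)$, and moving $a(z)_-$ past $c(w)$ produces the contraction
\[
a(z)_-c(w)-(-1)^{p(a)p(c)}c(w)a(z)_-=i_{z,w}\sum_{m,p}\frac{c^{ac}_{mp}(w)}{(z-\epsilon^m w)^{p+1}},
\]
which has poles exactly at $z=\epsilon^m w$. The analogous thing happens inside $:a(z)c(w):$ meeting $b(z_2)$: the $a(z)_-$ component picks up a contraction against $b(z_2)$ that survives the $z_2$-residue. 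In short, ``regular as a formal series in $z,z_2$'' is not the same as ``regular in $z$ after the operators recombine with the third field.''

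A concrete counterexample: take $a=b=c=\phi^D$ from Example \ref{example:D}, so the only nonzero OPE coefficient is the constant $1$ at the simple pole $z=w$. Then $:b(w)c(w):=:\phi^D(w)\phi^D(w):=0$ by the corollary following \eqnref{eqn:Com-D}, so $a(z):b(w)c(w):=0$ and the singular part is zero. With your definitions one finds $T^{abc}=c(w)$, hence $c^{bac}=b(w)_+ +c(w)$, and your right-hand side evaluates to $-i_{z,w}\tfrac{b(w)}{z-w}+i_{z,w}\tfrac{b(w)_+ +c(w)}{z-w}=i_{z,w}\tfrac{b(w)_+}{z-w}\neq 0$. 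The discrepancy is precisely the dropped singular contribution $(-1)^{p(a)p(b)}b(w)_+\,i_{z,w}\tfrac{1}{z-w}$ from the $:a(z)b(z_2):$ piece. If you track the normal-ordered pieces through to the end your route can be made to work (and one then sees that the spurious $b(w)_+c^{ac}_{mp}$ terms cancel, as they must), but the resulting $c^{bac}_{mp}$ is different from the one you wrote down. The paper's computation avoids this pitfall because the contractions against $b(w)_\pm$ are carried explicitly at every step and nothing is discarded before the final collection of the singular part.
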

\begin{proof}  First we note by\eqnref{lem:OPE}
\begin{align*}
 b(w)_+a(z) -(-1)^{p(a)p(b)}a (z) b (w)_+=-i_{z, w}\sum_{m=1}^{N}\sum_{p=0}^{M}\frac{c^{ba}_{mp}(z)}{(w-\epsilon ^m z)^{p+1}}.
\end{align*}
We have
\begin{align*}
a (z)&\big( b (w)_+c (w)  +(-1)^{p(b)p(c)}c (w)b (w)_-\big)\\
&= \big (a (z) b (w)_+\big) c (w) +(-1)^{p(b)p(c)}\Big(a(z) c (w)\Big) b (w)_-\\
&= (-1)^{p(a)p(b)} b (w)_+a(z) c (w) - (-1)^{p(a)p(b)} \Big(i_{z, w}\sum_{m=1}^{N}\sum_{p=0}^{M}\frac{c^{ba}_{mp}(z)}{(w-\epsilon ^m z)^{p+1}}\Big) c(w)  \\
& \qquad + (-1)^{p(b)p(c)} \Big(i_{z, w}\sum_{m=1}^{N}\sum_{p=0}^{M}\frac{c^{ac}_{mp}(w)}{(z-\epsilon ^m w)^{p+1}}\Big)b (w)_- +(-1)^{p(b)p(c)} \big(:a(z) c(w):\big) b (w)_-
\end{align*}
\begin{align*}
& = (-1)^{p(a)p(b)}b (w)_+\Big(i_{z, w}\sum_{m=1}^{N}\sum_{p=0}^{M}\frac{c^{ac}_{mp}(w)}{(z-\epsilon ^m w)^{p+1}}\Big) \\
 & \hspace{2cm} + (-1)^{p(b)p(c)} \Big(i_{z, w}\sum_{m=1}^{N}\sum_{p=0}^{M}\frac{c^{ac}_{mp}(w)}{(z-\epsilon ^m w)^{p+1}}\Big)b (w)_- \\
  & \hspace{3cm} - (-1)^{p(a)p(b)} \Big(i_{z, w}\sum_{m=1}^{N}\sum_{p=0}^{M}\frac{c^{ba}_{mp}(z)}{(w-\epsilon ^m z)^{p+1}}\Big)c(w) \\
  &  \hspace{1cm}   +(-1)^{p(a)p(b)} b (w)_+\big(:a (z) c (w):\big) +(-1)^{p(b)p(c)} \big(:a (z) c (w):\big) b (w)_-.
\end{align*}
Now we have to rearrange here the OPE
\[
 \Big(i_{z, w}\sum_{m=1}^{N}\sum_{p=0}^{M}\frac{c^{ba}_{mp}(z)}{(w-\epsilon ^m z)^{p+1}}\Big)c(w):
\]
We use the OPE expansion of $c^{ba}_{mp}(z)$ and $c(w)$
\[
c^{ba}_{mp}(z)c(w)= i_{z, w}\sum_{m=1}^{N}\sum_{p=0}^{M}\frac{c^{c^{ba}_{mp}c}(w)}{(z-\epsilon ^m w)^{p+1}} +:c^{ba}_{mp}(z)c(w):
\]
so we have
\begin{align*}
 \Big(i_{z, w}&\sum_{m=1}^{N}\sum_{p=0}^{M}\frac{c^{ba}_{mp}(z)}{(w-\epsilon ^m z)^{p+1}}\Big)c(w)=\\ & \Big(i_{z, w}\sum_{m_1=1}^{N}\sum_{p_1=0}^{M}\frac{1}{(w-\epsilon ^{m_1} z)^{{p_1}+1}}\Big)\Big(i_{z, w}\sum_{m_2=1}^{N}\sum_{p_2=0}^{M}\frac{c^{c^{ba}_{m_2p_2}}(w)}{(z-\epsilon ^{m_2} w)^{{p_2}+1}}\Big)  \\
 &\hspace{6cm} +i_{z, w}\sum_{m=1}^{N}\sum_{p=0}^{M}\frac{:c^{ba}_{mp}(z)c(w):}{(w-\epsilon ^m z)^{p+1}},
\end{align*}
which we can rearrange further using the Partial Fractions \lemref{pfd} and Dong's \lemref{lem:Dong} to get
\begin{align*}
 \Big(i_{z, w}\sum_{m=1}^{N}& \sum_{p=0}^{M}\frac{c^{ba}_{mp}(z)}{(w-\epsilon ^m z)^{p+1}}\Big)c(w)=i_{z, w}\sum_{m=1}^{N}\sum_{p=0}^{M}\frac{c^{b,ac}_{mp}(w)}{(z-\epsilon ^m w)^{p+1}} +Reg^{bac}(z, w);
\end{align*}
where we denote by $c^{b,ac}_{mp}(w)$ the coefficients in the combined singular part of the partial fraction expansions (after the re-arrangement), and the $Reg^{b,ac}(z, w)$ denotes the regular (nonsingular) part of the partial fraction expansion.
If we use this, we have
\begin{align*}
a (z) &:b (w)c (w):\\
& =(-1)^{p(a)p(b)}i_{z, w}\sum_{m=1}^{N}\sum_{p=0}^{M}\frac{b (w)_+c^{ac}_{mp}(w)+(-1)^{(p(a)+p(c))p(b)}c^{ac}_{mp}(w)b (w)_-}{(z-\epsilon ^m w)^{p+1}} \\
 & \hspace{2cm} + (-1)^{p(a)p(b)} i_{z, w}\sum_{m=1}^{N}\sum_{p=0}^{M}\frac{c^{b,ac}_{mp}(w)}{(w-\epsilon ^m z)^{p+1}}\\
&  \hspace{1cm}   +(-1)^{p(a)p(b)} b (w)_+:a (z) c (w): +(-1)^{p(b)p(c)} :a (z) c (w): b (w)_- \\ \\
&=(-1)^{p(a)p(b)}i_{z, w}\sum_{m=1}^{N}\sum_{p=0}^{M}\frac{:b (w)c^{ac}_{mp}(w):}{(z-\epsilon ^m w)^{p+1}}  - (-1)^{p(a)p(b)} i_{z, w}\sum_{m=1}^{N}\sum_{p=0}^{M}\frac{c^{b,ac}_{mp}(w)}{(z-\epsilon ^m w)^{p+1}}   \\
&  +(-1)^{p(a)p(b)}\Big( b (w)_+:a (z) c (w): +(-1)^{(p(a)+p(c))p(b)} :a (z) c (w):b (w)_-\Big)\\
&\hspace{8.5cm} +Reg^{b,ac}(z, w).
\end{align*}
We can express the last two lines as a Taylor expansion of normal ordered triple products, but that is unimportant, as they do not contribute to the OPE coefficients. The summands contributing to the OPE coefficients are $:b (w)c^{ac}_{mp}(w):$  and $c^{b,ac}_{mp}(w)$.
\end{proof}
We will use the triple OPE coefficient $c^{b,ac}_{mp}(w)$ later on in the paper, hence we formalize the following:
\begin{notation} {\bf (Triple OPE coefficients)}
Let $a(z), b(z), c(z)$ be fields on a vector space $W$ satisfying the hypothesis of Dong's Lemma. Denote by  $c^{a,bc}_{mp}(w)$
the coefficient in the double OPE expansion of
\[
 \Big(i_{z, w}\sum_{m=1}^{N}\sum_{p=0}^{M}\frac{c^{ab}_{mp}(z)}{(w-\epsilon ^m z)^{p+1}}\Big)c(w);
\]
see above, where $c^{ab}_{mp}$ is itself the coefficient  from the OPE expansion \eqnref{eqn:OPEpolcor} of the fields $a(w)$ and $b(z)$, in particular
\[
a(w)_+b(z)-(-1)^{p(a)p(b)}b(z)a(w)_+=-i_{z, w}\sum_{m=1}^{N}\sum_{p=0}^{M}\frac{c^{ab}_{mp}(z)}{(w-\epsilon ^m z)^{p+1}}.
\]
Similarly, we denote
 by  $c^{ab,c}_{mp}(w)$
the coefficient in the double  OPE expansion of
\[
 \Big(i_{z, w}\sum_{m=1}^{N}\sum_{p=0}^{M}c(z)\frac{c^{ab}_{mp}(w)}{(z-\epsilon ^m w)^{p+1}}\Big);
\]
where $c^{bc}_{mp}$ is itself the coefficient  from the OPE expansion \eqnref{eqn:OPEpolcor} of the fields $a(z)$ and $b(w)$, in particular
\[
a(z)_- b(w)-(-1)^{p(a)p(b)}b(w)a(z)_-=i_{z, w}\sum_{m=1}^{N}\sum_{p=0}^{M}\frac{c^{ab}_{mp}(w)}{(z-\epsilon ^m w)^{p+1}}.
\]
\end{notation}
From the lemma above we see that $c^{a,bc}_{mp}(w)$ appears in the OPE expansion of \\ $b(z)(:a(w)c(w):)$, and  $c^{ab,c}_{mp}(w)$ appears in the OPE expansion of $(:a(z)c(z):)b(w):$.
\begin{cor}\label{cor:tripleOPEcoefN}
Let $a(z), b(z), c(z)$ be fields on a vector space $W$ satisfying the hypothesis of Dong's \lemref{lem:Dong}. We have
\begin{align*}
&a(z):b(\epsilon ^i w)c(w):\sim (-1)^{p(a)p(b)}i_{z, w}\sum_{m=1}^{N}\sum_{p=0}^{M}\frac{:b (\epsilon ^iw)c^{ac}_{mp}(w):}{(z-\epsilon ^m w)^{p+1}}  \\
 & \hspace{3cm} - (-1)^{p(a)p(b)}\Big(i_{z, w}\sum_{m=1}^{N}\sum_{p=0}^{M}\frac{\epsilon ^{-(p+1)i}c^{bac}_{[m+i],p}(w)}{(z-\epsilon ^m w)^{p+1}}\Big).
\end{align*}
where in the index $[m+i]$ is the residue of $m+i$ modulo $N$.
\end{cor}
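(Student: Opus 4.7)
The statement is a generalization of \lemref{lem:tripleOPEcoef} obtained by the substitution $w \mapsto \epsilon^{i}w$ inside $b$. My plan is to re-run the proof of \lemref{lem:tripleOPEcoef} with $b(w)$ replaced by $\tilde b(w):=b(\epsilon^{i}w)$, and then to express the resulting triple OPE coefficients in terms of the original $c^{bac}_{mp}(w)$ with an index shift and an $\epsilon$-prefactor. Since $b(\epsilon^{i}w)$ is a field in $\mathfrak{FD}\{a,b,c\}$, the normal ordered product $:b(\epsilon^{i}w)c(w):$ is well-defined.

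First I would expand $a(z):b(\epsilon^{i}w)c(w):$ by the definition of normal ordered product and commute $a(z)$ past using (i) the OPE of $a(z)$ with $c(w)$, and (ii) the OPE of $a(z)$ with $b(\epsilon^{i}w)$. The $a$-$c$ contraction passes through $b(\epsilon^{i}w)$ unchanged and, by exactly the same manipulation as in \lemref{lem:tripleOPEcoef}, produces the first sum $:b(\epsilon^{i}w)c^{ac}_{mp}(w):/(z-\epsilon^{m}w)^{p+1}$. No $\epsilon$-power appears in this term because the pole structure in $z,w$ generated by $a(z)c(w)$ is unmodified by the shift in the $b$-argument.

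The second sum is the nontrivial part. Starting from
\[
b(w)_{+}a(z)-(-1)^{p(a)p(b)}a(z)b(w)_{+}=-i_{z,w}\sum_{m,p}\frac{c^{ba}_{mp}(z)}{(w-\epsilon^{m}z)^{p+1}},
\]
substituting $w\mapsto \epsilon^{i}w$ and applying the algebraic identity $\epsilon^{i}w-\epsilon^{m}z=\epsilon^{i}(w-\epsilon^{m-i}z)$ give
\[
\frac{1}{(\epsilon^{i}w-\epsilon^{m}z)^{p+1}}=\frac{\epsilon^{-(p+1)i}}{(w-\epsilon^{m-i}z)^{p+1}}.
\]
Re-indexing $m\mapsto [m+i]\pmod{N}$ collects the poles and produces the prefactor $\epsilon^{-(p+1)i}$ together with the shifted index $[m+i]$. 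The subsequent multiplication by $c(w)$ and rearrangement via Partial Fraction Decomposition (\lemref{pfd}) and Dong's \lemref{lem:Dong} is \emph{linear} in the coefficients $c^{ba}_{mp}(z)$, so both the prefactor and the index shift transfer directly to the triple OPE coefficient, producing $\epsilon^{-(p+1)i}c^{bac}_{[m+i],p}(w)$ in place of $c^{bac}_{mp}(w)$.

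The main obstacle I anticipate is purely bookkeeping: tracking the $\mathbb{Z}/N\mathbb{Z}$-indices through the substitution, and verifying that the partial-fraction re-expansion commutes with the scalar shift $c^{ba}_{mp}(z)\mapsto \epsilon^{-(p+1)i}c^{ba}_{[m+i],p}(z)$ applied to the coefficients. Linearity of PFD handles the latter; the former reduces to a routine identification in the finite cyclic group generated by $\epsilon$.
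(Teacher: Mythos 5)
You correctly reduce the corollary to applying Lemma~\ref{lem:tripleOPEcoef} with $\tilde b(w):=b(\epsilon^{i}w)$ in place of $b(w)$, and your computation $c^{\tilde ba}_{mp}(z)=\epsilon^{-(p+1)i}c^{ba}_{[m+i],p}(z)$ (via $\epsilon^{i}w-\epsilon^{m}z=\epsilon^{i}(w-\epsilon^{m-i}z)$ and reindexing over $\mathbb{Z}/N\mathbb{Z}$) is right. The first (normal-ordered) term in the corollary also comes out as you describe. The gap is the final step, where you pass from the shifted source coefficients to the shifted triple coefficients by ``linearity of PFD.''

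The issue is that while the map $\{c^{ba}_{m'p'}(z)\}_{m',p'}\mapsto\{c^{bac}_{mp}(w)\}_{m,p}$, implemented by multiplying by $c(w)$ and re-expanding by partial fractions, is indeed linear in the source coefficients, it is \emph{not} diagonal in the pole index. Each output $c^{bac}_{mp}(w)$ gathers contributions from several input pairs $(m',p')$: from the built-in pole $z=\epsilon^{-m'}w$ of $\tfrac{1}{(w-\epsilon^{m'}z)^{p'+1}}$, \emph{and} from the OPE poles $z=\epsilon^{m''}w$ of $c^{ba}_{m'p'}(z)$ against $c(w)$, which the partial fraction decomposition of \lemref{pfd} mixes together. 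Observe that the substituted sum satisfies
\begin{equation*}
\sum_{m',p'}\frac{\epsilon^{-(p'+1)i}c^{ba}_{[m'+i],p'}(z)}{(w-\epsilon^{m'}z)^{p'+1}}
=\sum_{m',p'}\frac{c^{ba}_{m'p'}(z)}{(\epsilon^{i}w-\epsilon^{m'}z)^{p'+1}}
=F(z,\epsilon^{i}w),\quad F(z,w):=\sum_{m',p'}\frac{c^{ba}_{m'p'}(z)}{(w-\epsilon^{m'}z)^{p'+1}}.
\end{equation*}
Thus the object being re-expanded is $F(z,\epsilon^{i}w)\,c(w)$ versus $F(z,w)\,c(w)$: only the $b$-side argument is rotated, not the argument of $c(w)$. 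The built-in poles of $F$ rotate from $z=\epsilon^{-m'}w$ to $z=\epsilon^{i-m'}w$, but the poles coming from the OPE of $c^{ba}_{m'p'}(z)$ with $c(w)$ stay fixed at $z=\epsilon^{m''}w$, so the pole pattern of $F(z,\epsilon^{i}w)\,c(w)$ is \emph{not} a uniform $\epsilon^{i}$-rotation of that of $F(z,w)\,c(w)$; even the orders of poles can change when previously separate poles collide under the rotation. Consequently, the claim $c^{\tilde b,ac}_{mp}(w)=\epsilon^{-(p+1)i}\,c^{bac}_{[m+i],p}(w)$ does not follow from linearity alone, and the ``routine bookkeeping'' you defer to is precisely where the proof must actually be done. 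To close the gap you would need to redo the triple-coefficient expansion of Lemma~\ref{lem:tripleOPEcoef} explicitly with the shifted contraction $\lfloor b(\epsilon^{i}w)a(z)\rfloor$, tracking which poles shift and which do not, rather than invoking a transfer principle.
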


 \ \\

\section{Examples}
\label{section: examples}

We now will consider the simplest examples of $N$-point local fields and give selected examples of elements of $\mathbf{\mathfrak{FD}}$ which can be used to obtain representations of certain Lie algebras. As in the preceding section, the points of locality in all examples are cyclic groups of roots of unity. Our first three examples will have points of locality $\{\lambda_1, , \lambda_2  \} = \{ 1 ,-1 \} $, and the fourth has points of locality $\{\lambda_j = \epsilon^j : j = 0, \dots, N-1\}$ where $\epsilon$ is a primitive root of unity.

\begin{example} {\bf Free neutral fermion of type B and bosonization of type  B}\\
\label{example:B}
Let $\{\lambda_1, , \lambda_2  \} = \{ 1 ,-1 \} $, we will consider elements of the space $\mathbf{\mathfrak{FD}}\{ \phi ^B(z)   \} $, where  $\phi ^B(z)$ is an
 {\bf odd}  self-local field, which we index in the form $\phi ^B (z)=\sum _{n\in \mathbb{Z}} \phi^B_n z^{n}$. By \lemref{lem:OPE}
to describe its self locality, it is necessary and sufficient to give its OPE expansion:
\begin{equation}
\label{eqn:OPE-B}
\phi ^B(z)\phi ^B(w)\sim \frac{-2w}{z+w}.
\end{equation}
This OPE expansion completely determines the commutation relations between the modes $\phi^B_n$, $n\in \mathbf{Z}$:
\begin{equation}
\label{eqn:Com-B}
\{\phi^B_m,\phi^B_n\}=2(-1)^m\delta _{m, -n}1,
\end{equation}
which show that the  modes generate a Clifford algebra $\mathit{Cl_B}$.
\begin{remark} One often (especially in physics)  writes the OPE expansion \eqnref{eqn:OPE-B} in the following form
\begin{equation*}
\phi ^B(z)\phi ^B(w)\sim \frac{z-w}{z+w},
\end{equation*}
as it illustrates at a glance the fact that this field describes a fermion: the OPE $\frac{z-w}{z+w}$ is antisymmetric with exchange of $z$ and $w$.
\end{remark}

With points of locality  $ \{ 1 ,-1 \} $ the following element are  in $\mathbf{\mathfrak{FD}}\{ \phi ^B(z)   \} $:
 $$
 \phi^B(w)_{(2, 0)}\phi ^B (w)=-2w,
 $$
 and for $n>0$
 $$
 \phi^B(w)_{(2, n)}\phi ^B(w)=0.
 $$
 We have $\phi(w)^B_{(1, n)}\phi ^B(w)=0$ for $n\geq 0$. There are no other nontrivial products with $n>0$ involving only the generating field $\phi ^B(z)$ in $\mathbf{\mathfrak{FD}}\{ \phi^B(z)   \} $.
\begin{remark}
\label{remark:shiftsOPE-B}
The OPE \eqnref{eqn:OPE-B} of the field $\phi ^B(z)$ displays a property that is new to $N$-point local fields. Due to the powers of $w$ in the numerator (called ``shifts") we have a less symmetric OPEs on descendants, for example:
\begin{equation}
\label{eqn:OPE-B-derivative}
(\partial_z \phi ^B(z))\phi ^B(w)\sim \frac{2w}{(z+w)^2},  \quad \phi ^B(z)(\partial_w\phi ^B(w))\sim \frac{2w}{(z+w)^2} -\frac{2}{z+w}.
\end{equation}
We want to note that the shifts change with application of the derivative operator: we started with a shift $w^1$ in the OPE of $\phi ^B(z)$ and $\phi ^B(w)$ for the pole at $z=-w$ of order 1, we have no  shift (shift of $1=w^0$)  in the OPE of $\phi ^B(z)$ and $\partial_w \phi ^B(w)$ for the pole at $z=-w$ of order 1. This is important for the algebraic structure of twisted vertex algebra that we establish in Section \ref{sectio:tva}, see Remark \ref{remark:shift}.
\end{remark}
Let  $\mathit{F_B}$ be the highest weight representation of $\mathit{Cl_B}$ with the vacuum vector denoted $|0\rangle $, so that
$\phi^B_n|0\rangle=0 \ \text{for} \  n<0$. We call  $\mathit{F_B}$ the Fock space, and note that  $\phi ^B(z)$ a field on $\mathit{F_B}$ and thus we can form normal ordered products.

The first normal ordered product we consider is the normal ordered product $:\phi ^B(z)\phi ^B(z):$.
\begin{cor}\label{fact:BnormOrd=1}
$:\phi ^B(z)\phi ^B(z): \, =\phi^B(z)_{(1, -1)}\phi ^B(z)=1$.
\end{cor}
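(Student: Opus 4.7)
My plan is to first unfold the definitions: by \defnref{defn:products-of-fields} with $j=1$, $k=-1$ (so that $\lambda_1=1$) and $a=b=\phi^B$, we have $\phi^B(z)_{(1,-1)}\phi^B(z) = :\phi^B(\lambda_1 z)\phi^B(z): \, = \, :\phi^B(z)\phi^B(z):$, so the content of the corollary is the identity $:\phi^B(z)\phi^B(z): \, = \mathrm{Id}_{F_B}$.

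Next I would compute this normal ordered product directly in modes. By the definition of normal ordered product applied to the odd field $\phi^B$,
\[
:\phi^B(z)\phi^B(z): \, = \phi^B(z)_+\phi^B(z) - \phi^B(z)\phi^B(z)_-,
\]
and, writing $\phi^B(z)=\sum_n \phi^B_n z^n$ with $\phi^B(z)_+=\sum_{n\geq 0}\phi^B_n z^n$ and $\phi^B(z)_-=\sum_{n<0}\phi^B_n z^n$, the coefficient of $z^p$ in $:\phi^B(z)\phi^B(z):$ is
\[
C_p \;=\; \sum_{m\geq 0}\phi^B_m\phi^B_{p-m} \;-\; \sum_{m<0}\phi^B_{p-m}\phi^B_m .
\]
I would then analyze $C_p$ in two cases using the Clifford relations \eqnref{eqn:Com-B}. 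For $p=0$ the substitution $k=-m$ in the second sum turns it into $\sum_{k>0}\phi^B_k\phi^B_{-k}$, which cancels term-by-term the $m>0$ contribution in the first sum, leaving $C_0 = (\phi^B_0)^2 = \tfrac12\{\phi^B_0,\phi^B_0\} = 1$. For $p\neq 0$ the relation $\{\phi^B_m,\phi^B_{p-m}\} = 2(-1)^m\delta_{p,0}=0$ allows me to rewrite $C_p$ as $\sum_{m\in\mathbb{Z}}\phi^B_m\phi^B_{p-m}$, which is antisymmetric under the involution $m\mapsto p-m$ and hence vanishes.

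The calculation is essentially bookkeeping; the only place that demands care is confirming that the two sums defining $C_p$ reindex cleanly and that the central contribution $(\phi^B_0)^2$ surfaces with multiplicity one exactly at $p=0$. An alternative, slightly slicker route is via the OPE: adding the expansion $\phi^B(z)\phi^B(w) = i_{z,w}\frac{-2w}{z+w} + :\phi^B(z)\phi^B(w):$ to its $z\leftrightarrow w$ analogue and comparing with the direct computation $\{\phi^B(z),\phi^B(w)\}=-2w\delta(z+w)$ coming from \eqnref{eqn:Com-B} yields, after simplifying the delta-function expansions, the two-variable identity $:\phi^B(z)\phi^B(w): + \, :\phi^B(w)\phi^B(z): \, = 2$, whence setting $w=z$ gives the result immediately.
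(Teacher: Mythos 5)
Your proof is correct, and the main argument — a direct mode computation using the Clifford relations\eqnref{eqn:Com-B} — is essentially the paper's, just organized coefficient-by-coefficient; the paper instead simplifies $\phi^B(z)_+\phi^B(z)-\phi^B(z)\phi^B(z)_- = \phi^B(z)_+^2 - \phi^B(z)_-^2$ (the cross terms cancel) and then observes that all terms vanish except $\phi^B_0\phi^B_0=1$. Your explicit unfolding of Definition~\ref{defn:products-of-fields} to identify $\phi^B(z)_{(1,-1)}\phi^B(z)$ with $:\phi^B(z)\phi^B(z):$ is a helpful addition that the paper leaves implicit; the only spot worth one more word is the $p\neq 0$ case, where the reindexing of $C_p$ into the doubly-infinite sum $\sum_{m\in\mathbb{Z}}\phi^B_m\phi^B_{p-m}$ should be accompanied by the remark that, applied to a fixed vector, both tails of the sum vanish (the $m\gg 0$ tail from the field condition, the $m\ll 0$ tail after one anticommutation), so the involution $m\mapsto p-m$ is legitimate. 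Your alternative via the OPE is a genuinely different route: extracting $:\phi^B(z)\phi^B(w):\,+\,:\phi^B(w)\phi^B(z):\,=2$ from\eqnref{eqn:OPEpolcor} and the anticommutator, then setting $w=z$, is exactly the antisymmetry identity $\Phi(z,w)=-\Phi(w,z)$ for $\Phi(z,w):=\,:\phi^B(z)\phi^B(w):-1$ that the paper proves and exploits in \secref{section:reptheory} for the $b_\infty$ representation, and it gives the corollary with no mode bookkeeping at all.
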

\begin{proof}
By definition
\begin{align*}
:\phi ^B(z)\phi ^B(z): &=\phi ^B(z)_+\phi ^B(z)+(-1)^{p(\phi ^B)^2}\phi ^B(z)\phi ^B_-(z) \\
& \hspace{2cm}=\big(\sum _{n\geq 0} \phi ^B_n z^{n} \big)\big(\sum _{n\in \mathbb{Z}} \phi ^B_n z^{n}\big) -\big(\sum _{n\in \mathbb{Z}} \phi ^B_n z^{n} \big)\big(\sum _{n<0} \phi ^B_n z^{n}\big) \\
& \hspace{2cm} =\big(\sum _{n\geq 0} \phi ^B_n z^{n} \big)\big(\sum _{n\geq 0} \phi ^B_n z^{n} \big)-\big(\sum _{n< 0} \phi ^B_n z^{n} \big)\big(\sum _{n< 0} \phi ^B_n z^{n} \big)  \\
& \hspace{2cm} =\sum_{k\geq 0}\sum _{m+ n=k} \phi ^B_m \phi ^B_n z^{m+n} - \sum_{k< 0}\sum _{m+ n=k} \phi ^B_m \phi ^B_n z^{m+n}.
\end{align*}
Note that the commutation relations \eqnref{eqn:Com-B} imply $\phi ^B_m\phi ^B_n + \phi ^B_n\phi ^B_m=0$ if both  $m, n >0$, or if both  $m, n <0$; also $\phi ^B_0\phi ^B_0 =1$.
Thus all but one of the terms above equals 0,
\begin{equation*}
:\phi ^B(z)\phi ^B(z): =\sum_{k\geq 0}\sum _{m+ n=k} \phi ^B_m \phi ^B_n z^{m+n} - \sum_{k< 0}\sum _{m+ n=k} \phi ^B_m \phi ^B_n z^{m+n}=\phi ^B_0\phi ^B_0 =1.
\end{equation*}
\end{proof}

The two-variable normal ordered product $:\phi ^B(z)\phi ^B(w):$  is very important to the representation theory of infinite rank Lie algebras, as it gives a representation of the algebra
$b_{\infty}$ (\cite{DJKM-4}, \cite{YouBKP}, \cite{OrlovVanDeLeur}, see also next section). Note that from the OPE expansion  \lemref{lem:OPE} we have a shorter formula for the normal ordered product $:\phi ^B(z)\phi ^B(w):$\,:
\[
:\phi ^B(z)\phi ^B(w): =\phi ^B(z)\phi ^B(w) +i_{z, w} \frac{2w}{z+w}.
\]
This formula allows us to calculate commutation relations and will be used in the next \secref{section:reptheory}. In this case we have for the contraction $\lfloor
\phi^B \phi^B \rfloor$ of the fields
$\phi ^B(z), \phi ^B(w)$ that  \ $\lfloor
\phi^B \phi^B \rfloor =i_{z, w} \frac{2w}{z+w}$, but even though this contraction satisfies condition (1) of Wick's theorem, condition (2) is not satisfied due to the presence of the shift (see Remark \ref{remark:shiftsOPE-B}). Thus   Wick's theorem cannot be used in this case, but instead we will use a direct computation instead (see \secref{section:reptheory}).

Other elements of $\mathbf{\mathfrak{FD}}\{ \phi^B(z) \} $  are found by (3) substituting  at points of locality, in this case at $\lambda =-1$. Thus it is natural to consider the field $\phi ^B(-z)$, and all of its descendants. Of particular interest is then the normal ordered product $:\phi ^B(z)\phi ^B(-z):$, which yields the following:
\begin{lem} ( \cite{DJKM-4}, \cite{Ang-Varna2}, \cite{AngTVA})
The field $h^B(z)$ given by:
\begin{equation}
\label{eqn:normal-order-h-B}
h^B(z)= \frac{1}{4}(:\phi ^B(z)\phi ^B(-z): -1).
\end{equation}
has only  odd-indexed modes,   $h^B (z)=\sum _{n\in \mathbb{Z}} h_{2n+1} z^{-2n-1}$ and moreover it is  a Heisenberg field. I.e., it has OPE with itself given by:
\begin{equation}
\label{eqn:HeisOPEsB}
h^B (z)h^B (w)\sim \frac{zw(z^2 +w^2)}{2(z^2 -w^2)^2}\sim \frac{1}{4}\frac{w^2}{(z-w)^2} + \frac{1}{4}\frac{w}{(z-w)} - \frac{1}{4}\frac{w^2}{(z+w)^2} + \frac{1}{4}\frac{w}{(z+w)},
\end{equation}
and its  modes, $h_n, \ n\in 2\mathbb{Z}+1$, generate a \textbf{twisted} Heisenberg algebra $\mathcal{H}_{\mathbb{Z}+1/2}$ with relations $[h_m,h_n]=\frac{m}{2}\delta _{m+n,0}1$, \ $m,n$ are odd  integers.
 \end{lem}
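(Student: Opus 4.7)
My plan is to verify the three assertions (parity, OPE, commutation relations) in turn, invoking Theorem~\ref{mainresult} at the end to translate the OPE into the Heisenberg bracket.

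For the parity claim, I would compute $:\phi^B(z)\phi^B(-z):$ directly in modes. With $\phi^B(z)=\sum_n \phi^B_n z^n$, the definition of the normal ordered product yields
\[
[z^k]:\phi^B(z)\phi^B(-z): = (-1)^k\Bigl(\sum_{m\geq 0}(-1)^m\phi^B_m\phi^B_{k-m} - \sum_{n<0}(-1)^n\phi^B_{k-n}\phi^B_n\Bigr).
\]
For $k=0$ the substitution $n\mapsto -m$ in the second sum cancels it against all but the $m=0$ term of the first, and $\{\phi^B_0,\phi^B_0\}=2$ gives $\phi^B_0\phi^B_0=1$ (echoing Corollary~\ref{fact:BnormOrd=1}). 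For even $k\neq 0$, the same manipulation collapses the expression to $\sum_{m=0}^{k}(-1)^m\phi^B_m\phi^B_{k-m}$, which is its own negative by the Clifford antisymmetry $\phi^B_m\phi^B_{k-m}=-\phi^B_{k-m}\phi^B_m$ (valid since $m+(k-m)=k\neq 0$) combined with the sign-preserving substitution $m\mapsto k-m$. Hence $:\phi^B(z)\phi^B(-z):=1+(\text{odd powers of }z)$, which is exactly the stated indexing of $h^B(z)$.

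For the OPE, the shift $w$ in $\phi^B(z)\phi^B(w)\sim-2w/(z+w)$ violates hypothesis~(2) of Wick's Theorem (Remark~\ref{remark:shiftsOPE-B}), so I would compute $[h^B(z),h^B(w)]=\tfrac{1}{16}[:\phi^B(z)\phi^B(-z):,\,:\phi^B(w)\phi^B(-w):]$ by hand. The key ingredients are the four anticommutators, read off by Theorem~\ref{mainresult} after substituting $\pm z,\pm w$ into the base OPE:
\[
\{\phi^B(\pm z),\phi^B(\pm w)\}=-2w\,\delta(z+w),\qquad \{\phi^B(\pm z),\phi^B(\mp w)\}=2w\,\delta(z-w).
\]
Expanding the commutator of the two bilinears at distinct points $z_1,z_2,w_1,w_2$ via a generalized Wick expansion and then specializing $z_2=-z_1=:-z$, $w_2=-w_1=:-w$ produces four single-pairing terms and two double-pairing ($c$-number) terms; the double-pairings directly give $\pm 4w^2/(z\mp w)^2$, while the single-pairing residues $:\phi^B(\cdots)\phi^B(\cdots):\,\delta(z\mp w)$, after the delta supports force coincidences that reduce the normal-ordered residues to multiples of the identity (by the analogue of Corollary~\ref{fact:BnormOrd=1}), contribute the simple poles. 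Partial-fraction regrouping (Lemma~\ref{pfd}) over the poles $z=\pm w$ then assembles the four stated terms $\tfrac14 w^2/(z\mp w)^2\pm\tfrac14 w/(z\mp w)$.

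Finally, Theorem~\ref{mainresult} translates this OPE into
\[
[h^B(z),h^B(w)] = \tfrac14 w^2\bigl(\partial_w\delta(z-w)+\partial_w\delta(z+w)\bigr) + \tfrac14 w\bigl(\delta(z-w)+\delta(z+w)\bigr).
\]
Applying $\mathrm{Res}_z z^{m-1}\mathrm{Res}_w w^{n-1}$ and using $\mathrm{Res}_z z^{m-1}\delta(z-w)=w^{m-1}$, $\mathrm{Res}_z z^{m-1}\delta(z+w)=(-1)^{m-1}w^{m-1}$, for odd $m$ the two delta contributions add constructively to give $\mathrm{Res}_z z^{m-1}[h^B(z),h^B(w)]=\tfrac{m}{2}w^m$, whence $[h_m,h_n]=\tfrac{m}{2}\delta_{m+n,0}$; for $m$ or $n$ even they cancel, consistent with Step~1. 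The main obstacle is the OPE computation: without Wick's theorem, and with the shift $-2w$ conspiring with the self-coincidence at $z_2=-z_1$, one must carefully verify that the a priori operator-valued single-pairing residues collapse to $c$-numbers in the right combinations so that the final OPE coefficients depend on $w$ alone, as is necessary for $h^B$ to be a genuine Heisenberg field.
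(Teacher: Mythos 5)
The paper does not prove this lemma; it is stated with citations to \cite{DJKM-4}, \cite{Ang-Varna2}, \cite{AngTVA}. Your overall plan (check the mode index, compute the commutator at four distinct formal variables and specialize, translate into mode brackets via \thmref{mainresult}) is the right one and parallels the paper's $b_\infty$ computation in \secref{section:reptheory}. Step 1 is essentially fine, modulo a slip: after the substitution $n\mapsto -m$ one obtains $\sum_{m\in\mathbb{Z}}(-1)^m\phi^B_m\phi^B_{k-m}$ (not a finite sum $\sum_{m=0}^k$), but the symmetry argument under $m\mapsto k-m$ still shows it vanishes for even $k\neq 0$. Step 3 is correct.

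The genuine gap is in Step 2, specifically the assertion that the delta supports ``reduce the normal-ordered residues to multiples of the identity (by the analogue of Corollary~\ref{fact:BnormOrd=1}).''  Tracing which contraction was taken, the remaining two fields always carry one argument from $\{z,-z\}$ and one from $\{w,-w\}$, so on a given delta support the surviving normal-ordered product collapses to $:\phi^B(w)\phi^B(-w):$ or $:\phi^B(-w)\phi^B(w):$, never to the \emph{diagonal} $:\phi^B(w)\phi^B(w):$ that Corollary~\ref{fact:BnormOrd=1} handles. In fact $:\phi^B(\pm w)\phi^B(\mp w):\,=\,\pm 4h^B(w)+1$ is operator-valued, so no single residue is a multiple of the identity. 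What actually rescues the computation is that on each support $\delta(z-w)$, $\delta(z+w)$, exactly two single-pairing terms appear, carrying $\Phi(w,-w)$ and $\Phi(-w,w)$ with the \emph{same} prefactor $2w\delta$, and the antisymmetry $\Phi(a,b)+\Phi(b,a)=0$ makes the $\pm 4h^B(w)$ pieces cancel; only the constant $+1$ pieces, i.e. the identity $:\phi^B(a)\phi^B(b):+:\phi^B(b)\phi^B(a):=2$, survive to give the simple poles $\tfrac14\tfrac{w}{z\mp w}$. This pairwise cancellation is the crux, and it is \emph{not} what Corollary~\ref{fact:BnormOrd=1} says; as written your argument reads as if each residue is separately central, which is false. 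A further caution: after the specialization $z_2=-z_1$, $w_2=-w_1$, the contraction poles of the surviving two-field products land exactly on the delta supports, so ``absorb the delta into $\phi^B(z_1)\phi^B(w_2)$, then normal-order'' is ill-defined; one must regroup into $\Phi$-terms plus expansion products first (as the paper does in the $b_\infty$ proof with $\Phi(z,w)=\phi^B(z)\phi^B(w)-i_{z,w}\tfrac{z-w}{z+w}$) and only then specialize.
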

 We have the product $\phi ^B(w)_{(2, -1)}\phi ^B(w)=-1-4h^B (w)$, and the OPE yields the $(j,n)$-products for $h^B (w)_{(j, n)}h^B (w)$ for $n\geq 0$:
 \begin{align}
& h^B (w)_{(1, 0)}h^B (w)= \frac{1}{4}w, \quad h^B (w)_{(2, 0)}h^B (w)= \frac{1}{4}w \\
& h^B (w)_{(1, 1)}h^B (w)= \frac{1}{4}w^2, \quad h^B (w)_{(2, 1)}h^B (w)= -\frac{1}{4}w^2 \\
& h^B (w)_{(1, n)}h^B (w)= 0, \quad h^B (w)_{(2, n)}h^B (w)=0 \quad \text{for} \quad n\geq 2
 \end{align}
 The negative products $h^B (w)_{(j, n)}h^B (w)$ are also very important, as they give rise to representations of the Virasoro algebra, but we will not consider them in this paper.

The twisted Heisenberg algebra $\mathcal{H}_{\mathbb{Z}+1/2}$   has (up-to isomorphism) only one irreducible highest weight module $B_{1/2}\cong \mathbb{C}[x_1, x_3, \dots , x_{2n+1}, \dots ]$.
The fermionic space of states $\mathit{F_B}$ decomposes as $\mathit{F_B} =B_{1/2} \oplus B_{1/2}$ (\cite{Ang-Varna2}, \cite{AngTVA}) and thus cannot be realized as a twisted module for a vertex super algebra.
\end{example}

\label{example:C}
\begin{example} {\bf Free twisted boson of type C}\\
For $\{\lambda_1, , \lambda_2  \} = \{ 1 ,-1 \} $, we now consider elements of the space $\mathbf{\mathfrak{FD}}\{ \phi ^C (z) \} $, where $\phi ^C (z)$ is an
{\bf even}  self-local field (see \cite{DJKM6} and \cite{OrlovLeur}) with OPE:
\begin{equation}
\label{equation:OPE-C}
\phi ^C(z)\phi ^C(w)\sim \frac{1}{z+w}.
\end{equation}
We index the field as follows: $\phi ^C (z)=\sum _{n\in \mathbf{Z+1/2}} \phi ^C_n z^{n-1/2}$ (the half-integers are commonly used when indexing in this case, see  \cite{DJKM6} and \cite{OrlovLeur}).
In modes, the OPE leads to commutation relations:
\begin{equation}
\label{eqn:Com-C}
[\phi ^C_m, \phi ^C_n]=(-1)^{n-\frac{1}{2}}\delta _{m, -n}1,
\end{equation}
Despite the name of the field ($\phi ^C (z)$) the modes form a Lie algebra, not a Clifford algebra, which we denote by $L_C$.
The two-variable normal ordered product $:\phi ^C(z)\phi ^C(w):$  is very important to the representation theory of infinite rank Lie algebras, as it gives the representation of the algebra
$c_{\infty}$ (see \cite{DJKM6} and \secref{section:reptheory} below). Note that from the OPE expansion \lemref{lem:OPE} we have a shorter formula for the normal ordered product $:\phi ^C(z)\phi ^C(w):$:
\[
:\phi ^C(z)\phi ^C(w): =\phi ^C(z)\phi ^C(w) -i_{z, w} \frac{1}{z+w}.
\]
This formula allows us to calculate commutation relations very easily, and can be used directly instead of Wick's theorem. In this case we have for the contraction
$\lfloor
\phi^C \phi^C \rfloor$ of the fields
$\phi ^C(z), \phi ^C(w)$ that  \ $\lfloor
\phi^C \phi^C \rfloor =i_{z, w} \frac{1}{z+w}$, which we will use for the application of Wick's theorem in \secref{section:reptheory}.

Let $\mathit{F_C}$ be a the highest weight module of $L_C$ with  the vacuum vector $|0\rangle $, such that $\phi ^C_n|0\rangle=0 \ \text{for} \  n < 0$.  Then  $\phi ^C(z)$ is field on $\mathit{F_C}$ (which we call the Fock space) and thus we can define normal ordered products, and compute elements of $\mathbf{\mathfrak{FD}}\{ \phi ^C (z) \} $.
First consider the normal ordered product $:\phi ^C(z)\phi ^C(-z):$.
We have
\begin{lem} (\cite{DJKM6})
The field $h^C(z)$ given by:
\begin{equation}
\label{eqn:normal-order-h-C}
h^C(z)= \frac{1}{2}(:\phi ^C(z)\phi ^C(-z): -1).
\end{equation}
has only  odd-indexed modes,   $h^C (z)=\sum _{n\in \mathbb{Z}} h_{2n+1} z^{-2n-1}$ and moreover it is  a Heisenberg field. I.e., it has OPE with itself given by:
\begin{equation}
\label{eqn:HeisOPEsC}
h^C (z)h^C (w)\sim -\frac{(z^2 +w^2)}{2(z^2 -w^2)^2}\sim -\frac{1}{4}\frac{1}{(z-w)^2} - \frac{1}{4}\frac{1}{(z+w)^2} ,
\end{equation}
and its  modes, $h_n, \ n\in 2\mathbb{Z}+1$, generate a \textbf{twisted} Heisenberg algebra $\mathcal{H}_{\mathbb{Z}+1/2}$ with relations $[h_m,h_n]=-\frac{m}{2}\delta _{m+n,0}1$, \ $m,n$ are odd  integers.
 \end{lem}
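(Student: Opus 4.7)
The plan is to reduce everything to a symmetry of the normal ordered product and then apply Wick's Theorem. I would proceed in three steps: first establish that $:\phi^C(z)\phi^C(-z): \, = \, :\phi^C(-z)\phi^C(z):$, deduce from this that $h^C(z)$ has only odd-indexed modes, and finally compute the OPE $h^C(z)h^C(w)$ via the Wick expansion of the product of two normal ordered fields.

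For the symmetry, I would first observe that for any two bosonic fields $a(z),b(w)$ satisfying the hypothesis $[a(z)_\pm,b(w)_\pm]=0$, the difference $:a(z)b(w): \, - \, :b(w)a(z):$ reduces by a short calculation from the definitions to $[a(z)_+,b(w)_+] - [a(z)_-,b(w)_-]$, hence vanishes. For $a=b=\phi^C$ both of these commutators are zero because the relation $[\phi^C_m,\phi^C_n]=(-1)^{n-1/2}\delta_{m,-n}$ only pairs indices of opposite sign while the $\pm$-projections separate them by sign. Specializing $w=-z$ gives $:\phi^C(z)\phi^C(-z): \, = \, :\phi^C(-z)\phi^C(z):$, so $h^C(-z)=h^C(z)$, which in the standard vertex-algebra indexing forces the modes with even index to vanish and leaves only odd-indexed modes in $h^C(z)$.

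For the OPE I would apply Wick's Theorem to the product $:\phi^C(z)\phi^C(-z): \cdot :\phi^C(w)\phi^C(-w):$. The hypotheses are verified because the four contractions $\lfloor\phi^C(\pm z)\phi^C(\pm w)\rfloor$ are all scalar multiples of $i_{z,w}(z\mp w)^{-1}$ and the $[\phi^C(z)_\pm,\phi^C(w)_\pm]=0$ condition is the same vanishing as above. The two double-contraction terms contribute $\lfloor\phi^C(z)\phi^C(w)\rfloor\lfloor\phi^C(-z)\phi^C(-w)\rfloor+\lfloor\phi^C(z)\phi^C(-w)\rfloor\lfloor\phi^C(-z)\phi^C(w)\rfloor=-i_{z,w}(z+w)^{-2}-i_{z,w}(z-w)^{-2}$, which after the scalar prefactor $1/4$ from the definition of $h^C$ is exactly the claimed OPE. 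The four single-contraction terms pair up as $\frac{1}{z+w}\big(:\phi^C(-z)\phi^C(-w): - :\phi^C(z)\phi^C(w):\big)$ and $\frac{1}{z-w}\big(:\phi^C(-z)\phi^C(w): - :\phi^C(z)\phi^C(-w):\big)$; Taylor expanding the numerators around the respective poles, the leading residues both reduce to $:\phi^C(w)\phi^C(-w): - :\phi^C(-w)\phi^C(w):$, which vanishes by the symmetry established in the first step, so these pieces contribute only to the regular part. This cancellation is the main technical hurdle: without the symmetry of the first step these terms would introduce field-valued poles of the form $:\phi^C(w)\phi^C(-w):/(z\pm w)$ that would spoil the purely scalar Heisenberg structure. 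The mode commutation $[h^C_m,h^C_n]=-\frac{m}{2}\delta_{m,-n}$ for $m$ odd is then read off via Theorem \ref{mainresult}: the two double poles translate to $-\tfrac14\partial_w\delta(z-w)+\tfrac14\partial_w\delta(z+w)$, and expanding in modes the even-index contributions cancel while the odd ones double, yielding the twisted Heisenberg relations.
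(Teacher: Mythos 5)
The paper does not actually prove this lemma --- it cites \cite{DJKM6} and \cite{OrlovLeur} and then states the result without argument --- so there is no paper proof to compare against. Your argument is correct and self-contained, and it uses exactly the machinery the paper itself sets up (the normal-ordered-product definition, Wick's theorem, the Cauchy/residue formulas, and the delta-function dictionary via Theorem \ref{mainresult}). In particular, your route via the symmetry $:\phi^C(z)\phi^C(w): = :\phi^C(w)\phi^C(z):$, which follows from $[\phi^C(z)_\pm,\phi^C(w)_\pm]=0$ (this is precisely the calculation the paper does explicitly when it applies Wick's theorem in the $c_\infty$ proof), is the right one: it simultaneously gives the evenness $h^C(-z)=h^C(z)$ and, more subtly, kills the four first-order single-contraction terms that a naive application of Wick would otherwise leave in the OPE. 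You are right to flag that cancellation as the crux; it is the reason the Heisenberg OPE for $h^C$ is purely scalar, and it is exactly what \emph{fails} in the $\phi^B$ case (where $\{\phi^B(z)_+,\phi^B(w)_+\}\neq 0$ because of $(\phi^B_0)^2=1$), which is why the paper explicitly refuses to use Wick there.

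Two minor points worth being aware of. First, the lemma's displayed mode expansion $h^C(z)=\sum_n h_{2n+1}z^{-2n-1}$ is not consistent with the standard field indexing $h(z)=\sum_a h_a z^{-a-1}$ under which your conclusion is phrased: evenness of $h^C(z)$ forces even powers of $z$, hence odd mode indices $a$, hence $h^C(z)=\sum_n h_{2n+1}z^{-2n-2}$; that discrepancy belongs to the lemma statement, not to your proof, and the substantive claims (odd-indexed modes, the OPE, the commutation relation $[h_m,h_n]=-\tfrac m2\delta_{m+n,0}$ for $m,n$ odd) are exactly what you establish. Second, when you say the single-contraction numerators ``reduce to $:\phi^C(w)\phi^C(-w): - :\phi^C(-w)\phi^C(w):$'' after Taylor expanding, it is worth invoking the Taylor expansion formula for normal ordered products (Lemma \ref{lem:normalprodexpansion}) to justify that the normal ordered products admit such expansions in powers of $z\mp w$; once that is granted, the vanishing of the constant term of the Taylor expansion of a numerator against a simple pole gives regularity, as you assert.
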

 The above result appears in \cite{DJKM6} and in \cite{OrlovLeur}.

We have the product $\phi ^C(w)_{(2, -1)}\phi ^C(w)=2h^C (w)$, and the lemma gives the $(j,n)$-products $h^C (w)_{(j, n)}h^C (w)$ for $n\geq 0$:
 \begin{align}
& h^C (w)_{(1, 0)}h^C (w)= 0, \quad h^C (w)_{(2, 0)}h^C (w)= 0 \\
& h^C (w)_{(1, 1)}h^C (w)= - \frac{1}{4}, \quad h^C (w)_{(2, 1)}h^C (w)= -\frac{1}{4} \\
& h^C (w)_{(1, n)}h^C (w)= 0, \quad h^C (w)_{(2, n)}h^C (w)=0 \quad \text{for} \quad n\geq 2
 \end{align}
\end{example}

\smallskip

 The next pair of examples is based on  a self-local generating field we which  is in fact well known as an example of a generating field for  a super vertex algebra, and thus is 1-point local at $\lambda =1$. If we allow though substitutions at new points of locality, at $\lambda =\epsilon ^k$, $k=0, 1, \dots, N-1$, where $\epsilon$ is a $N$-th root of unity, we get a new and interesting structure which wasn't otherwise possible--- the boson-fermion correspondence of type D-A (\cite{AngTVA}). We start with $N=2$.

\begin{example} \label{example:D} {\bf Free neutral fermion of type D and bosonization of type  D-A}\\
For this example, we again take points of locality $\{\lambda_1, , \lambda_2  \} = \{ 1 ,-1 \} $. We consider a single odd self-local field $\phi ^D(z)$, which we index in the form $\phi ^D(z)=\sum _{n\in \mathbf{Z+1/2}} \phi^D_n z^{-n-1/2}$
The OPE of $\phi ^D(z)$ is given by
\begin{equation}
\label{equation:OPE-D}
\phi ^D(z)\phi ^D(w)\sim \frac{1}{z-w}.
\end{equation}
This OPE completely determines the commutation relations between the modes $\phi^D_n$, $n\in \mathbf{Z +1/2}$:
\begin{equation}
\label{eqn:Com-D}
\{\phi^D_m,\phi^D_n\}=\delta _{m, -n}1.
\end{equation}
and so the modes generate a Clifford algebra $\mathit{Cl_D}$.
We have thus just one nontrivial $(j,n)$, $n$ nonnegative product for the generating field $\phi^D  (w)$: the $\phi^D (w)_{(1, 0)}\phi^D  (w)=1$, and $\phi^D (w)_{(1, n)}\phi^D  (w)=0$ for $n>0$; $\phi^D (w)_{(2, n)}\phi^D  (w)=0$ for $n\geq 0$.

 In this example the Fock space is denoted $\mathit{F_D}$ and is the highest weight module of $\mathit{Cl_D}$ with vacuum  vector $|0\rangle $, such that $\phi^D_n|0\rangle=0 \ \text{for} \  n >0$.  (The space $\mathit{F_D}$ can be given a super-vertex algebra structure, as is known from e.g. \cite{Kac}, \cite{Wang}.)
Note that $\phi^D (z)$ a field on $\mathit{F_D}$ and thus normal ordered products are defined.
Consider the normal ordered product $:\phi^D  (z)\phi^D  (z):$.
\begin{cor}
$:\phi^D  (z)\phi^D  (z): =0 =\phi^D (z)_{(1, -1)}\phi^D  (z)$.
\end{cor}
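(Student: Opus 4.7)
The plan is to imitate the computation in Corollary \ref{fact:BnormOrd=1} for the type B case, while exploiting the crucial difference that $\phi^D(z)$ is indexed by half-integers (so there is no zero mode). First I will unpack the definition: since $\phi^D$ is odd of parity $1$,
\[
:\phi^D(z)\phi^D(z): \,=\, \phi^D(z)_+\phi^D(z) - \phi^D(z)\phi^D(z)_-.
\]
Writing $\phi^D(z) = \phi^D(z)_+ + \phi^D(z)_-$ on both occurrences, the cross terms $\phi^D(z)_+\phi^D(z)_-$ cancel, leaving
\[
:\phi^D(z)\phi^D(z): \,=\, \phi^D(z)_+\phi^D(z)_+ - \phi^D(z)_-\phi^D(z)_-.
\]

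Next, I will expand each square into a double sum of modes and regroup by total degree $k = -m-n-1$:
\[
\phi^D(z)_+\phi^D(z)_+ = \sum_{k}\,z^{k}\!\!\sum_{\substack{m+n=-k-1\\ m,n<0}}\!\!\phi^D_m\phi^D_n,
\]
and analogously for $\phi^D(z)_-\phi^D(z)_-$ with $m,n>0$. For each fixed inner sum I will invoke the anticommutation relation \eqnref{eqn:Com-D}: when $m,n$ are both negative half-integers we have $m+n<0$, hence $m \neq -n$ and $\{\phi^D_m,\phi^D_n\} = 0$; the same argument applies when both are positive. Thus the inner sums are antisymmetric in $m,n$, so off-diagonal contributions cancel pairwise. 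The diagonal terms $(\phi^D_m)^2$ also vanish, because $2(\phi^D_m)^2 = \{\phi^D_m,\phi^D_m\} = \delta_{m,-m}\cdot 1 = 0$ (as $m \in \mathbb{Z}+1/2$ forces $m \neq 0$). This is precisely the point at which the argument diverges from the type B case: in the type B computation the surviving contribution was the $\phi^B_0\phi^B_0 = 1$ term, but here no integer zero mode exists, so nothing survives.

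Combining these observations gives $\phi^D(z)_\pm\phi^D(z)_\pm = 0$ and hence $:\phi^D(z)\phi^D(z):\,=0$. Finally, the identification with $\phi^D(z)_{(1,-1)}\phi^D(z)$ is by definition: in Definition \ref{defn:products-of-fields}, $a(w)_{(j,-1)}b(w) := \,:a(\lambda_j w)b(w):\,$, and with $\lambda_1 = 1$ this specializes to $:\phi^D(z)\phi^D(z):$, which we have just shown is zero. I expect no real obstacle; the only point requiring care is the bookkeeping that the half-integer indexing eliminates the zero mode, which is exactly what makes the type D calculation differ from its type B analogue.
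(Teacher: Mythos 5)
Your proof is correct and follows the same line of argument as the paper: both reduce $:\phi^D(z)\phi^D(z):$ to $\phi^D(z)_+\phi^D(z)_+ - \phi^D(z)_-\phi^D(z)_-$, expand each square in modes, and use the anticommutation relations~\eqref{eqn:Com-D} to see that every inner sum vanishes because the half-integer indexing leaves no zero mode. Your write-up is in fact slightly more explicit than the paper's in handling the diagonal terms $(\phi^D_m)^2 = \tfrac{1}{2}\{\phi^D_m,\phi^D_m\} = 0$.
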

\begin{proof}
By definition
\begin{align*}
:\phi^D  (z)&\phi^D  (z):\, =\phi^D  (z)_+\phi^D  (z)+(-1)^{p(\phi^D  )^2}\phi^D  (z)\phi^D  _-(z) \\
& =\big(\sum _{n\geq 0} \phi^D _{-n-1/2} z^{n} \big)\big(\sum _{n\in \mathbf{Z}} \phi^D _{-n-1/2} z^{n}\big) -\big(\sum _{n\in \mathbf{Z}} \phi^D _{-n-1/2} z^{n} \big)\big(\sum _{n<0} \phi^D _{-n-1/2} z^{n}\big) \\
&   =\big(\sum _{n> 0} \phi^D _{-n-1/2} z^{n} \big)\big(\sum _{n > 0} \phi^D _{-n-1/2} z^{n} \big)-\big(\sum _{n< 0} \phi^D _{-n-1/2} z^{n} \big)\big(\sum _{n< 0} \phi^D _{-n-1/2} z^{n} \big)  \\
&  =\sum_{\substack{k\geq 0 \\ k\in  \mathbf{Z }}}\sum _{m+ n=k} \phi^D  _{-m-1/2} \phi^D _{-n-1/2} z^{m+n} - \sum_{\substack{k< 0\\ k\in  \mathbf{Z}}}\sum _{m+ n=k} \phi^D  _{-m-1/2} \phi^D _{-n-1/2} z^{m+n}.
\end{align*}
Note that the commutation relations \eqnref{eqn:Com-D} imply $\phi^D _{-m-1/2}\phi^D _{-n-1/2} + \phi^D_{-n-1/2}\phi^D_{-m-1/2}=0$ if both  $m, n >0$, or if both  $m, n <0$.
Thus $:\phi^D  (z)\phi^D  (z): =0$.
\end{proof}
The more general normal ordered product $:\phi^D  (z)\phi^D  (w):$  is again very important, this time  to the representation theory the infinite rank Lie  algebra
$d_{\infty}$ (\cite{Wang}, \cite{WangDual}). Again from \lemref{lem:OPE} we have a shorter formula for the normal ordered product \\ \mbox{$:\phi^D  (z)\phi^D  (w):$}
\[
:\phi^D  (z)\phi^D  (w): =\phi^D  (z)\phi^D  (w) +i_{z, w} \frac{1}{z-w}.
\]
This formula allows us to calculate commutation relations for $d_{\infty}$. In this case we have for the contraction $\lfloor
\phi^D \phi^D \rfloor$ of the fields
$\phi ^D(z), \phi ^D(w)$ that  \ $\lfloor
\phi^D \phi^D \rfloor =i_{z, w} \frac{1}{z-w}$, which we will use for the application of Wick's theorem in \secref{section:reptheory}.

The space  $\mathbf{\mathfrak{FD}}\{ \phi ^D(z) \} $ contains the normal ordered product $:\phi ^D(z)\phi ^D(-z):$ (for which the point of locality $\lambda_2 = -1$ is not removable).  Here we calculate its OPE:
\begin{lem} (\cite{Ang-Varna2}, \cite{AngTVA})
The field $h^D(z)$ given by:
\begin{equation}
\label{eqn:normal-order-h-D}
h^D (z)= \frac{1}{2}:\phi ^D(z)\phi ^D(-z) :
\end{equation}
has only  odd-indexed modes,    $h^D (z)=\sum _{n\in \mathbb{Z}} h_{n} z^{-2n-1}$ and moreover it is  a Heisenberg field. I.e., it has OPE with itself given by:
\begin{equation}
\label{eqn:HeisOPEsD}
h^D (z)h^D (w)\sim \frac{zw}{(z^2-w^2)^2} \sim \frac{1}{4}\frac{1}{(z-w)^2} - \frac{1}{4}\frac{1}{(z+w)^2}
\end{equation}
and its  modes, $h_n, \ n\in \mathbb{Z}$, generate an ordinary, \textbf{untwisted}, Heisenberg algebra $\mathcal{H}_{\mathbb{Z}}$ with relations $[h_m,h_n]=m\delta _{m+n,0}1$, \ $m,n$   integers.
 \end{lem}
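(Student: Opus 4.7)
The proof has three components: (i) showing $h^D(z)$ has only odd-indexed modes, (ii) establishing the OPE formula, and (iii) deriving the Heisenberg commutation relations. For (i), my plan is to prove $h^D(-z) = -h^D(z)$. Since $\phi^D$ is odd with $\{\phi^D(z),\phi^D(w)\}=\delta(z-w)$ (by \eqnref{eqn:Com-D}), a direct manipulation using $\lfloor\phi^D\phi^D\rfloor = i_{z,w}\tfrac{1}{z-w}$ and the delta function identity $\delta(z-w)=i_{z,w}\tfrac{1}{z-w}-i_{w,z}\tfrac{1}{z-w}$ shows that $:\phi^D(z)\phi^D(w):\,+\,:\phi^D(w)\phi^D(z):\,=\,0$. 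Substituting $(z,w)\mapsto (-z,z)$ then gives $h^D(-z) = -h^D(z)$, which forces every even-indexed mode to vanish in the expansion $h^D(z) = \sum_n h_n z^{-n-1}$, leaving $h^D(z)=\sum_n h_n z^{-2n-1}$.

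For (ii), I would apply Wick's theorem to $:\phi^D(z)\phi^D(-z): :\phi^D(w)\phi^D(-w):$. Crucially, in contrast with the B-type and C-type cases of Remark \ref{remark:shiftsOPE-B}, the D-type contraction has no $w$-shift in the numerator, so both hypotheses (1) and (2) of Wick's theorem are satisfied without modification. The OPE singular part comes only from the two full double contractions pairing fields between the two groups. Computing each with the substitution identities $\tfrac{1}{(-z)-(-w)}=-\tfrac{1}{z-w}$ and $i_{-z,-w}=i_{z,w}$, the pairing $(\phi^D(z),\phi^D(w))$ with $(\phi^D(-z),\phi^D(-w))$ gives $-i_{z,w}\tfrac{1}{(z-w)^2}$, while the pairing $(\phi^D(z),\phi^D(-w))$ with $(\phi^D(-z),\phi^D(w))$ gives $-i_{z,w}\tfrac{1}{(z+w)^2}$. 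The Wick signs coming from transposing adjacent odd fields (one transposition for the first pairing, two for the second) produce the combination $i_{z,w}\tfrac{1}{(z-w)^2} - i_{z,w}\tfrac{1}{(z+w)^2}$; multiplying by the $\tfrac{1}{4}$ prefactor yields the OPE \eqnref{eqn:HeisOPEsD}. The single-contraction terms are normal-ordered triple products and contribute only to the regular part.

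For (iii), I would read off the coefficients in the sense of \thmref{mainresult}: $c_{1,1}(w)=\tfrac{1}{4}$ at $\lambda_1=1$, $c_{2,1}(w)=-\tfrac{1}{4}$ at $\lambda_2=-1$, and all lower-order coefficients zero. Then \thmref{mainresult} gives
\begin{equation*}
[h^D(z),h^D(w)] \,=\, \tfrac{1}{4}\partial_{w}\delta(z-w) \,-\, \tfrac{1}{4}\partial_{-w}\delta(z+w).
\end{equation*}
Expanding both sides using the paper's convention $\delta(z-\lambda w)=\sum_n \lambda^{-n-1}z^n w^{-n-1}$ and comparing the coefficient of $z^{-2m-1}w^{-2n-1}$ (the only powers that survive in $h^D$) gives $[h_m,h_n] = m\delta_{m+n,0}$; in particular the two delta-function terms contribute equally, combining the two quarters into the single factor $m$.

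The main obstacle is the careful bookkeeping of signs and expansion regions in the contractions involving $\phi^D(-z)$ or $\phi^D(-w)$: the region $|z|>|w|$ is preserved by these sign flips so the $i_{z,w}$ expansion transfers directly, but denominators like $\tfrac{1}{(-z)-(-w)}$ flip sign and must be tracked in tandem with the Wick signs from the fermionic statistics. Once these are handled correctly, the rest of the argument is routine. The antisymmetry step in (i) is also delicate in that it requires the nontrivial identity relating the anticommutator of $\phi^D$ to the formal delta function, but this is available directly from \eqnref{eqn:Com-D}.
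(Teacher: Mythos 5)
The paper does not actually contain a proof of this lemma --- it is stated with citations to \cite{Ang-Varna2} and \cite{AngTVA} and no proof is supplied --- so there is no internal proof to compare against. Your proof is correct, and it is consistent with the machinery the paper sets up elsewhere: the paper's proof of the $d_\infty$ representation result in \secref{section:reptheory} uses Wick's theorem for the $\phi^D$ field in exactly the same way.

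A few checks on the details. The antisymmetry step in (i) is fine: from $\{\phi^D(z),\phi^D(w)\}=\delta(z-w)$ and the splitting $\delta(z-w)=i_{z,w}\tfrac{1}{z-w}-i_{w,z}\tfrac{1}{z-w}=i_{z,w}\tfrac{1}{z-w}+i_{w,z}\tfrac{1}{w-z}$ one cancels the contraction terms to get $:\phi^D(z)\phi^D(w):\,+\,:\phi^D(w)\phi^D(z):\,=0$, and the substitution to a single variable is legitimate because the $\pm$-decomposition commutes with the substitution $w\mapsto -z$ (powers of $w$ go to powers of $z$ with the same sign). In (ii) the four contractions $\lfloor\phi^D(\pm z)\phi^D(\pm w)\rfloor=\pm i_{z,w}\tfrac{1}{z\mp w}$ are computed correctly, and the two Wick signs ($-1$ for the nested-to-crossed pairing $(1,1)(2,2)$, $+1$ for $(1,2)(2,1)$) combine with the two double contractions $-i_{z,w}\tfrac{1}{(z\mp w)^2}$ to give $\tfrac14\tfrac{1}{(z-w)^2}-\tfrac14\tfrac{1}{(z+w)^2}$. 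In (iii) reading off $c_{1,1}=\tfrac14$, $c_{2,1}=-\tfrac14$, extracting the coefficient of $z^{-2m-1}w^{2m-1}$ from $\tfrac14\partial_w\delta(z-w)-\tfrac14\partial_{-w}\delta(z+w)$ gives $\tfrac14(2m)-\tfrac14(-2m)=m$, as you say. One small caveat: your stated reason that Wick's theorem applies (``no $w$-shift in the numerator'') follows the paper's own gloss at the end of Example \ref{example:B}, but the more direct reason condition (2) holds for $\phi^D$ (and fails for $\phi^B$) is the half-integer indexing: there is no zero mode in the D case, so $\{\phi^D(z)_\pm,\phi^D(w)_\pm\}=0$, whereas $\{\phi^B_0,\phi^B_0\}=2\neq0$. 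This is a point of interpretation, not an error, and does not affect the correctness of the argument.
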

We have the product $\phi^D (w)_{(2, -1)}\phi^D  (w)=-2h^D (w)$, and the products $h^D (w)_{(j, n)}h^D (w)$ for $n\geq 0$:
 \begin{align}
& h^D (w)_{(1, 0)}h^D (w)= 0, \quad h^D (w)_{(2, 0)}h^D (w)= 0 \\
& h^D (w)_{(1, 1)}h^D (w)= \frac{1}{4}, \quad h^D (w)_{(2, 1)}h^D (w)= -\frac{1}{4} \\
& h^D (w)_{(1, n)}h^D (w)= 0, \quad h^D (w)_{(2, n)}h^D (w)=0 \quad \text{for} \quad n\geq 2
 \end{align}
 The negative products $h^D (w)_{(j, n)}h^D (w)$ are also very important, as they give rise to representations of the Virasoro algebra, but we will not consider them in this paper. The fact that the space $\mathit{F_D}$ is a representation of the Virasoro algebra with central charge $c=\frac{1}{2}$ can be found in \cite{Wang}, \cite{WangDual}. The Virasoro field is given by $: \phi ^D(w) \partial_w \phi ^D (w):$.

Unlike the twisted Heisenberg algebra, the untwisted Heisenberg algebra  has infinitely many  irreducible highest weight modules, labeled by the action of $h_0$.
The fermionic space of states $\mathit{F_D}$ decomposes as
\begin{equation}
W=F_D \cong \oplus _{i\in \mathbb{Z}} B_i \cong  \mathbb{C}[e^{\alpha}, e^{-\alpha}] \ten \mathbb{C}[x_1, x_2, \dots , x_n, \dots ]=B_D,
\end{equation}
where by $\mathbb{C}[e^{\alpha}, e^{-\alpha}]$ we mean the Laurent polynomials with one variable   $e^{\alpha}$ (\cite{Ang-Varna2}, \cite{AngTVA}).
The isomorphism above is  as  Heisenberg modules, where $e^{n\alpha }$ is identified as the highest weight vector for the irreducible Heisenberg module $B_n$ with highest weight $n$.
\end{example}
\smallskip

We want to finish with the last example, which involves the primitive root of unity of order $N\in \mathbb{N}$.
\begin{example} {\bf Bosonisation of type  D-A  (general order $\mathbf{N}$)}\\
In this example we extend the boson-fermion correspondence of type D-A from the previous example to a general order $N$ as follows. Given a primitive $N$-th root of unity $\epsilon$, choose $\{ \lambda_1, \lambda_2, \dots, \lambda_N\} = \{ 1, \epsilon, \dots, \epsilon^{N-1}\}$.
We again consider the free field $\phi ^D(z)=\sum _{n\in \mathbf{Z+1/2}} \phi^D_n z^{-n-1/2}$, with OPE's with itself given by
\eqref{equation:OPE-D}. Thus the descendent field space  $\mathbf{\mathfrak{FD}}\{ \phi ^D(z) \} $ contains all $T_{\epsilon}^i\phi ^D (z)=\phi ^D (\epsilon ^i z)$, for any $0\le i\le N-1$, with OPE's
\begin{equation*}
\phi ^D (\epsilon^iz)\phi ^D(\epsilon^jw) \sim \frac{1}{\epsilon ^iz- \epsilon ^jw};
\end{equation*}
We note that for the above fields the points of locality$\lambda_j=\epsilon^j$, $j=1, \dots, N-1$ are  not removable in the following
normal ordered products, and in particular we have the following lemma:
\begin{lem} The field $h^D(z)$ given by
\begin{equation}
\label{eqn:HeisOrderN}
h^D(z)=\frac{1}{N}\sum_{i=0}^{N-1} \epsilon^{i-1}  : \phi^D (\epsilon^{i-1}z)\phi^D (\epsilon^{i}z) : =\sum _{n\in \mathbb{Z}}h_n z^{-Nn-1}
\end{equation}
can be indexed as
\begin{equation}
h^D(z)=\sum _{n\in \mathbb{Z}}h_n z^{-Nn-1}
\end{equation}
and  is a Heisenberg field. I.e., it has OPE with itself given by:
\begin{equation}
h^D(z)h^D(w)\sim  \frac{z^{N-1}w^{N-1}}{(z^N -w^N)^2},
\end{equation}
and thus  the commutation relations $[ h^D_m, h^D_n ] = m\delta _{m+n, 0} 1$ for the Heisenberg algebra $\mathcal{H}_{\mathbb{Z}}$  hold.
\end{lem}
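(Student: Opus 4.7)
The argument proceeds in three stages: establishing the indexing, computing the singular OPE, and extracting the commutation relations.

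For the indexing, I will verify the scaling $h^D(\epsilon z) = \epsilon^{-1} h^D(z)$ by reindexing the defining sum in $i$ and exploiting the $N$-periodicity of $\epsilon^i$. Writing $h^D(z) = \sum_n A_n z^n$ and matching coefficients, this scaling forces $\epsilon^{n+1} = 1$, so $A_n = 0$ unless $n \equiv -1 \pmod N$. Relabeling the surviving modes $h_n := A_{-Nn-1}$ yields $h^D(z) = \sum_n h_n z^{-Nn-1}$.

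For the OPE, I will apply Wick's Theorem to each product $:\phi^D(\epsilon^{i-1}z)\phi^D(\epsilon^i z):\,:\phi^D(\epsilon^{j-1}w)\phi^D(\epsilon^j w):$ using the free-fermion contraction $\lfloor \phi^D(\epsilon^a z)\phi^D(\epsilon^b w)\rfloor = i_{z,w}(\epsilon^a z - \epsilon^b w)^{-1}$. Setting $k := j - i \pmod N$ and extracting common powers of $\epsilon$, the double-contraction scalars collapse to a cyclic sum in $k$ of diagonal double poles $(z-\epsilon^k w)^{-2}$ and mixed terms $(z-\epsilon^{k+1}w)^{-1}(z-\epsilon^{k-1}w)^{-1}$ weighted by powers of $\epsilon$. \lemref{pfd} decomposes the mixed terms into simple poles whose coefficients turn out to be $k$-independent, so summing cyclically in $k$ telescopes them to zero. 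The remaining diagonal sum is identified with $\frac{z^{N-1}w^{N-1}}{(z^N - w^N)^2}$ via the partial fraction decomposition of the latter, using the identity $\prod_{l \neq k}(\epsilon^k - \epsilon^l) = N\epsilon^{-k}$ obtained from $\frac{d}{dz}(z^N - 1)\big|_{z = \epsilon^k}$.

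The principal difficulty will be disposing of the single-contraction Wick terms (four per $(i,j)$ pair), each an operator-valued summand of the form $\lfloor ac\rfloor:bd:$ carrying a simple pole. I will show that at every pole $z = \epsilon^\ell w$, summing over $i$ collapses these residues to normal ordered products of $\phi^D$ at coincident or symmetrically shifted arguments: the two ``direct'' pairings vanish because $:\phi^D(y)\phi^D(y): = 0$ (the $\phi^D$-analogue of the corollary in \egref{example:D}), while the two ``crossed'' pairings yield sums $\sum_i \epsilon^i :\phi^D(\epsilon^i w)\phi^D(\epsilon^{i \pm 2} w):$ which cancel against each other after invoking the fermionic antisymmetry $:\phi^D(a)\phi^D(b): = -:\phi^D(b)\phi^D(a):$ (a direct consequence of the canonical anticommutation $\{\phi^D(z), \phi^D(w)\} = \delta(z-w)$) and a shift of the summation index. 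Consequently only the scalar double contractions contribute to the singular part, confirming the claimed OPE.

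Finally, the commutation relations follow from the OPE via \thmref{mainresult}: each $(z - \epsilon^k w)^{-2}$ converts to $\partial_{\epsilon^k w}\delta(z - \epsilon^k w)$, and after absorbing the $\epsilon^k$ factor one obtains $[h^D(z), h^D(w)] = \frac{1}{N^2}\sum_{k=0}^{N-1}\partial_w \delta(z - \epsilon^k w)$. Expanding $\partial_w \delta(z - \epsilon^k w) = \sum_n n\epsilon^{kn} z^{-n-1}w^{n-1}$ and invoking the discrete orthogonality $\sum_{k=0}^{N-1}\epsilon^{kn} = N$ when $N \mid n$ and $0$ otherwise restricts the double sum to $n \in N\mathbb{Z}$; extracting the coefficient of $z^{-Nm-1}w^{-Nn-1}$ then yields $[h_m, h_n] = m\delta_{m+n, 0}$, identifying the modes $\{h_n\}_{n \in \mathbb{Z}}$ as generators of the untwisted Heisenberg algebra $\mathcal{H}_{\mathbb{Z}}$.
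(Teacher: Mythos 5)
The paper supplies no proof for this lemma (the example ends with the statement), so there is no paper argument to compare against; I evaluate your proposal on its own.

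Your overall architecture is sound: the scaling argument $h^D(\epsilon z)=\epsilon^{-1}h^D(z)$ for the indexing, the Wick expansion, the telescoping cancellation of the cross-contraction mixed terms $\frac{\epsilon^{k-1}}{(z-\epsilon^{k+1}w)(z-\epsilon^{k-1}w)}$ via partial fractions (the residue is $\frac{1}{w(\epsilon^2-1)}$, $k$-independent, so the cyclic sum vanishes), and the cancellation of the four single-contraction Wick terms (the direct ones contribute residue $:\phi^D(\epsilon^jw)\phi^D(\epsilon^jw):=0$, the crossed ones cancel pairwise by antisymmetry $:\phi^D(a)\phi^D(b):=-:\phi^D(b)\phi^D(a):$ after an index shift) are all correct.

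However, the final identification is false. Carrying out the Wick computation, the surviving double pole at $z=\epsilon^kw$ in the OPE of $h^D$ as defined by \eqref{eqn:HeisOrderN} has coefficient $-\tfrac{1}{N}\epsilon^{k-1}$ (for $N\ge 3$, after the mixed terms telescope away), whereas the partial fraction expansion of $\frac{z^{N-1}w^{N-1}}{(z^N-w^N)^2}$, using your identity $\prod_{l\ne k}(\epsilon^k-\epsilon^l)=N\epsilon^{-k}$, gives coefficient $\tfrac{1}{N^2}\epsilon^k$. These differ by the non-real factor $-N\epsilon^{-1}$, so the diagonal sum is \emph{not} $\frac{z^{N-1}w^{N-1}}{(z^N-w^N)^2}$ and the asserted commutation relation would come out as $[h_m,h_n]=-N\epsilon^{-1}\,m\,\delta_{m+n,0}$. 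The discrepancy already shows at $N=2$: specializing \eqref{eqn:HeisOrderN} with $\epsilon=-1$ and using $:\phi^D(-z)\phi^D(z):=-:\phi^D(z)\phi^D(-z):$ gives $h^D(z)=:\phi^D(z)\phi^D(-z):$, which is twice the $\tfrac12:\phi^D(z)\phi^D(-z):$ of Example~\ref{example:D}, so the OPE picks up a factor of $4$. The lemma's formula \eqref{eqn:HeisOrderN} and the stated OPE are therefore inconsistent as written; your proof asserts a matching that does not hold and would have to be corrected in tandem with the statement (e.g., by renormalizing $h^D$ or by inserting the corrective scalar $-N\epsilon^{-1}$ into the OPE and the commutator).
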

\end{example}

\smallskip

\ \\

\section{Applications to representation theory}
\label{section:reptheory}
 In this section we consider the applications of the examples of normal ordered products from the previous section to the representation theory of the three double-infinite rank Lie algebras   $b_{\infty}$,  $c_{\infty}$ and  $d_{\infty}$. The three representations given here also appear in the works of Date-Jimbo-Kashiwara-Miwa (see  \cite{DJKM-4} and \cite{DJKM6}), and  the representation of $d_{\infty}$ is also considered in works of Wang (see \cite{Wang}, \cite{WangKac},  \cite{WangDual}). We present the proofs here for completeness, and as an application of the N-point locality OPEs, the use of which generalizes the techniques already known for super vertex algebras.

 First, we recall the definitions and notations for the double-infinite rank Lie algebras $a_{\infty}$, $b_{\infty}$,  $c_{\infty}$ and  $d_{\infty}$ as in \cite{Kac-Lie}

 The  Lie algebra  $\bar{a}_{\infty}$ is the Lie algebra of infinite matrices
\begin{equation}
\bar{a}_{\infty}=\{(a_{ij}) | \ i, j\in \mathbb{Z}, \ a_{ij} =0 \ \text{for} |i-j|\gg 0 \}.
\end{equation}
As usual denote the elementary matrices by $E_{ij}$.

The algebra $a_{\infty}$ is a central extension of $\bar{a}_{\infty}$ by a central element $c$, $a_{\infty}=\bar{a}_{\infty}\oplus \mathbb{C}c$, with cocycle given by
\begin{equation}
\label{equation:cocycle-a}
 C(A, B) =Trace([J, A]B),
\end{equation}
where the matrix $J=\sum_{i\le 0}E_{ii}$. In particular
\begin{align*}
C(E_{ij},E_{ji})&=-C(E_{ji},E_{ij})=1,\quad \text{if} \enspace i\leq 0,\enspace j\geq 1 \\
C(E_{ij},E_{kl})&=0\quad \text{ in all other cases},
\end{align*}
and we denote $c_{ijkl}=C(E_{ij},E_{kl})c$.

The three   algebras   $\bar{b}_{\infty}$,  $\bar{c}_{\infty}$ and  $\bar{d}_{\infty}$ are all defined as subalgebras of $\bar{a}_{\infty}$, each preserving different bilinear form (\cite{Kac-Lie}).

\subsection{The infinite dimensional Lie algebra $b_{\infty}$} \ \\
The infinite dimensional Lie algebra $\bar{b}_{\infty}$ is the subalgebra of $\bar{a}_{\infty}$ consisting of the infinite matrices preserving the bilinear form $B(v_i, v_j)=(-1)^i \delta_{i, -j}$, i.e.,
\begin{equation}
\bar{b}_{\infty}=\{(a_{ij})\in \bar{a}_{\infty} | \ a_{ij}=(-1)^{i+j-1}a_{-j, -i} \}.
\end{equation}
Denote by $b_{\infty}$ the  central extension of $\bar{b}_{\infty}$ by a central element $c$, $b_{\infty}=\bar{b}_{\infty}\oplus \mathbb{C} \frac{1}{2}c$, where $c$ is the cocycle  for $a_{\infty}$, \eqref{equation:cocycle-a} (see \cite{Kac-Lie}).  Thus the commutation relations for the elementary matrices in $b_\infty$ is
\begin{align*}
[E_{ij},E_{kl}]=\delta_{jk}E_{il}-\delta_{li}E_{kj}+\frac{1}{2}C(E_{ij},E_{kl})c.
\end{align*}
The generators for the algebra $b_\infty$ can be written in terms of these elementary matrices as:
\[
\{ (-1)^j E_{i, -j} -(-1)^i E_{j, -i} ,\enspace c \,|\,  i, j \in \mathbb{Z}\}.
\]
We can arrange the non-central generators in a generating series
\begin{equation}
E^B(z, w) =\sum _{i, j\in \mathbb{Z}} ((-1)^jE_{i, -j}-(-1)^iE_{j, -i})z^{i-1}w^{-j}.
\end{equation}
\begin{lem}
The generating series  $E^B(z,w)$ obeys the following relations:
\begin{equation}
E^B(z,w) = -E^B(w,z)
\end{equation}
and
\begin{align}
[E^B(z_1, w_1),E^B(z_2, w_2)]&=-z_{2}E^B(z_1,w_2)\delta(w_1 +z_{2} )
 + w_2E^B(z_1,z_2)  \delta(w_1+w_2)  \notag \\
  &\quad   + z_2E^B(w_1,w_2)z_1 \delta(z_1 +z_2 )
 - w_2E^B( w_1, z_2)\delta(z_1+w_2)  \notag \\
&\quad + \frac{c}{4}i_{w_1, z_2}\frac{w_1-z_2}{w_1+z_2}i_{z_1, w_2}\frac{z_1-w_2}{z_1+w_2}  -\frac{c}{4}i_{w_2, z_1}\frac{w_2-z_1}{z_1+w_2}i_{z_2, w_1}\frac{z_2-w_1}{z_2+w_1}- \notag \\
&-\frac{c}{4}i_{w_1, w_2}\frac{w_1-w_2}{w_1+w_2}i_{z_1, z_2}\frac{z_1-z_2}{z_1+z_2}+\frac{c}{4}i_{w_2, w_1}\frac{w_2-w_1}{w_1+w_2}i_{z_2, z_1}\frac{z_2-z_1}{z_1+z_2}
\end{align}
\end{lem}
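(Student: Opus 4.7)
The proof proceeds in two steps: first the antisymmetry $E^B(z,w) = -E^B(w,z)$, then the commutator formula by direct bilinear expansion.

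For the antisymmetry, I would split the defining sum into its two pieces and, in the piece containing $E_{j,-i}$, swap the dummy indices $i \leftrightarrow j$. This yields
$$E^B(z,w) = \sum_{i,j} (-1)^j E_{i,-j}\bigl(z^{i-1}w^{-j} - z^{j-1}w^{-i}\bigr),$$
from which the antisymmetry under $z \leftrightarrow w$ follows after the parallel reindexing of $E^B(w,z)$, since the bracketed expression is itself antisymmetric in the two variables after the same dummy swap.

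For the commutator, the plan is a direct computation. Expand $[E^B(z_1,w_1), E^B(z_2,w_2)]$ bilinearly using the $a_\infty$ bracket $[E_{ab}, E_{cd}] = \delta_{bc} E_{ad} - \delta_{ad} E_{cb} + \tfrac{1}{2} C(E_{ab}, E_{cd}) c$. Each of the two summands of the first series brackets with each of the two summands of the second, producing four double commutators, and each of these yields two Kronecker-delta contributions and one cocycle contribution, for twelve terms in all. The eight Kronecker-delta contributions impose identifications such as $i_2 = -j_1$, $j_2 = -j_1$, $i_2 = -i_1$, and $j_2 = -i_1$; after substitution, the residual summation in the paired index takes the form $\sum_n (-1)^n u^n v^{-n-1}$, which equals $-\delta(u+v)$ by specializing $\delta(z-\lambda w) = \sum_n \lambda^{-n-1} z^n w^{-n-1}$ to $\lambda = -1$. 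The remaining free indices then reassemble into a shifted copy of $E^B$ evaluated at a pair of variables chosen from $\{z_1,w_1,z_2,w_2\}$, collapsing the eight Kronecker contributions into the four non-central summands displayed in the statement.

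The four cocycle contributions supply the central-charge terms. The cocycle $C(E_{ab}, E_{cd})$ is nonzero precisely when $(c,d) = (b,a)$ with $a \leq 0$ and $b \geq 1$ (and carries the opposite sign for the reversed pair), which restricts each of the residual index summations to a half-plane. These half-plane sums are geometric and assemble into the rational expansions $i_{u,v}\frac{u-v}{u+v}$ or $i_{v,u}\frac{u-v}{u+v}$ according to which summation direction is enforced; the factor $\frac{u-v}{u+v}$ itself coincides with the physicist form of the $\phi^B$--$\phi^B$ OPE recorded in Example \ref{example:B}. The principal obstacle throughout is pure bookkeeping: tracking the many signs $(-1)^{i_1+j_2}$, $(-1)^{j_1+i_2}$, etc., determining the correct expansion region ($i_{u,v}$ versus $i_{v,u}$) produced by each partial sum, and matching the twelve collected terms exactly against the eight summands on the right-hand side of the lemma.
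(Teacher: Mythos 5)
Your proposal follows essentially the same direct-computation route as the paper: bilinear expansion of $[E^B,E^B]$ via the $a_\infty$ bracket, conversion of the Kronecker-delta contributions into $\delta(u+v)$-weighted copies of $E^B$, and assembly of the half-plane cocycle sums into the products $i_{u,v}\frac{u-v}{u+v}\cdot i_{u',v'}\frac{u'-v'}{u'+v'}$ via the observation $i_{z,w}\frac{z-w}{z+w}=1+2\sum_{j\geq 1}(-w/z)^j$. Your antisymmetry argument is a reasonable addition the paper omits; note it closes cleanly against the indexing $z^i w^j$ actually used in the paper's commutator computation, rather than the $z^{i-1}w^{-j}$ in the displayed definition of $E^B$ (apparently a typo carried over from the $c_\infty$/$d_\infty$ conventions).
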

\begin{proof}  We begin with
\begin{align*}
[E^B(z_1, w_1)&, E^B(z_2, w_2)]=\sum_{i, j,k,l\in \mathbb{Z} } [( (-1)^j E_{i,-j} -(-1)^i E_{j,-i} ),( (-1)^l E_{k,-l} -(-1)^k E_{l,-k} )]z_1^{i}z_2^{k}w^{j}_1 w^{l}_2 \\
&=\sum_{i, j,k,l\in \mathbb{Z} }(-1)^{j +l} [ E_{i,-j} , E_{k,-l}]z_1^{i}z_2^{k}w^{j}_1w^{l}_2 - \sum_{i, j,k,l\in \mathbb{Z}}(-1)^{j +k} [ E_{i,-j} , E_{l,-k}]z_1^{i}z_2^{k}w^{j}_1w^{l}_2   \\
&\quad - \sum_{i, j,k,l\in \mathbb{Z}}(-1)^{i+l} [ E_{j,-i} , E_{k,-l}]z_1^{i}z_2^{k}w^{j}_1w^{l}_2   + \sum_{i, j,k,l\in \mathbb{Z}}(-1)^{i +k} [ E_{j,-i},  E_{l,-k} ]z_1^{i}z_2^{k}w^{j}_1w^{l}_2  \\   \\
\end{align*}
\begin{align*}
 &=\sum_{i, j,k,l\in \mathbb{Z}} (-1)^{j+l}(\delta_{-jk}E_{i,-l}-\delta_{i,-l}E_{k,-j})z_1^{i}z_2^{k}w^{j}_1w^{l}_2 \\
&\quad +\frac{1}{2}c \sum _{i\leq 0, j\geq 1} (-1)^{j+i} z_1^{i}z_2^{j}w^{-j}_1w^{-i}_2- \frac{1}{2} c \sum _{j\leq 0, i\geq 1} (-1)^{j+i}z_1^{i}z_2^{j}w^{-j}_1w^{-i}_2\\
 &\quad - \sum_{i, j,k,l\in \mathbb{Z}}(-1)^{j+k} (\delta_{-j,l} E_{i,-k}-\delta_{i,-k}   E_{l, -j})z_1^{i}z_2^{k}w^{j}_1w^{l}_2   \\
 &\quad - \frac{1}{2}c  \sum _{i\leq 0, j\geq 1} (-1)^{j+i} z_1^{i}z_2^{-i}w^{-j}_1w^{j}_2 + \frac{1}{2} c \sum _{i\geq 1,j\leq 0} (-1)^{j+i} \frac{1}{2}z_1^{i}z_2^{-i}w^{-j}_1w^{j}_2 \\
 &\quad - \sum_{i, j,k,l\in \mathbb{Z} }(-1)^{i+l} (\delta_{ -i,k}E_{  j,-l}- \delta_{ j,-l}E_{k, -i} )z_1^{i}z_2^{k}w^{j}_1w^{l}_2   \\
&\quad - \frac{1}{2}c  \sum _{j\leq 0, i\geq 1} (-1)^{j+i} z_1^{-i}z_2^{i}w^{j}_1w^{-j}_2 +  \frac{1}{2} c \sum _{i\leq 1,j\geq 1} (-1)^{j+i} z_1^{-i}z_2^{i}w^{j}_1w^{-j}_2 \\
 &\quad + \sum_{i, j,k,l\in \mathbb{Z}}(-1)^{i+k} (\delta_{-il}  E_{  j, -k}-\delta_{j,-k}   E_{  l,  -i})z_1^{i}z_2^{k}w^{j}_1w^{l}_2  \\
&\quad + \frac{1}{2}c \sum _{i\geq 1, j\leq 0} (-1)^{j+i} z_1^{-i}z_2^{-j}w^{j}_1w^{i}_2- \frac{1}{2} c \sum _{i\leq 0, j\geq 1} (-1)^{j+i}z_1^{-i}z_2^{-j}w^{j}_1w^{i}_2\\
 &=-z_{2}E(z_1,w_2)\delta(w_1 +z_{2} )
 + w_2E(z_1,z_2)  \delta(w_1+w_2)  \\
  &\quad   + z_2E(w_1,w_2)z_1 \delta(z_1 +z_2 )
 - w_2E( w_1, z_2)\delta(z_1+w_2)   \\
&\quad +\frac{1}{2}c \sum _{i\leq 0}(\frac{-z_1}{w_2})^i\sum_{j\geq 1}(\frac{-z_2}{w_1})^j - \frac{1}{2} c \sum _{i\geq 1}(\frac{-z_1}{w_2})^i\sum_{j\leq 0}(\frac{-z_2}{w_1})^j\\
&\quad - \frac{1}{2}c \sum _{i\leq 0}(\frac{-z_1}{z_2})^i\sum_{j\geq 1}(\frac{-w_2}{w_1})^j  + \frac{1}{2} c \sum _{i\geq 1}(\frac{-z_1}{z_2})^i\sum_{j\leq 0}(\frac{-w_2}{w_1})^j \\
&\quad - \frac{1}{2}c \sum _{i\leq 0}(\frac{-w_1}{w_2})^i\sum_{j\geq 1}(\frac{-z_2}{z_1})^j +   \frac{1}{2} c \sum _{i\geq 1}(\frac{-w_1}{w_2})^i\sum_{j\leq 0}(\frac{-z_2}{z_1})^j\\
&\quad + \frac{1}{2}c \sum _{i\leq 0}(\frac{-w_1}{z_2})^i\sum_{j\geq 1}(\frac{-w_2}{z_1})^j - \frac{1}{2} c \sum _{i\geq 1}(\frac{-w_1}{z_2})^i\sum_{j\leq 0}(\frac{-w_2}{z_1})^j\\
 &=-z_{2}E(z_1,w_2)\delta(w_1 +z_{2} )
 + w_2E(z_1,z_2)  \delta(w_1+w_2)  \\
  &\quad   + z_2E(w_1,w_2)z_1 \delta(z_1 +z_2 )
 - w_2E( w_1, z_2)\delta(z_1+w_2)   \\
&\quad +\frac{1}{2}c \sum _{i\geq 0}(\frac{-w_2}{z_1})^i\sum_{j\geq 1}(\frac{-z_2}{w_1})^j - \frac{1}{2} c \sum _{i\geq 1}(\frac{-z_1}{w_2})^i\sum_{j\geq 0}(\frac{-w_1}{z_2})^j\\
&\quad - \frac{1}{2}c \sum _{i\geq 0}(\frac{-z_2}{z_1})^i\sum_{j\geq 1}(\frac{-w_2}{w_1})^j  + \frac{1}{2} c \sum _{i\geq 1}(\frac{-z_1}{z_2})^i\sum_{j\geq 0}(\frac{-w_1}{w_2})^j \\
&\quad - \frac{1}{2}c \sum _{i\geq 0}(\frac{-w_2}{w_1})^i\sum_{j\geq 1}(\frac{-z_2}{z_1})^j +   \frac{1}{2} c \sum _{i\geq 1}(\frac{-w_1}{w_2})^i\sum_{j\geq 0}(\frac{-z_1}{z_2})^j\\
&\quad + \frac{1}{2}c \sum _{i\geq 0}(\frac{-z_2}{w_1})^i\sum_{j\geq 1}(\frac{-w_2}{z_1})^j - \frac{1}{2} c \sum _{i\geq 1}(\frac{-w_1}{z_2})^i\sum_{j\geq 0}(\frac{-z_1}{w_2})^j
\end{align*}
\begin{align*}
 &=-z_{2}E(z_1,w_2)\delta(w_1 +z_{2} )
 + w_2E(z_1,z_2)  \delta(w_1+w_2)  \\
  &\quad   + z_2E(w_1,w_2)z_1 \delta(z_1 +z_2 )
 - w_2E( w_1, z_2)\delta(z_1+w_2)   \\
&\quad +c \sum _{i\geq 1}(\frac{-w_2}{z_1})^i\sum_{j\geq 1}(\frac{-z_2}{w_1})^j - c \sum _{i\geq 1}(\frac{-z_1}{w_2})^i\sum_{j\geq 1}(\frac{-w_1}{z_2})^j\\
&\quad - c \sum _{i\geq 1}(\frac{-z_2}{z_1})^i\sum_{j\geq 1}(\frac{-w_2}{w_1})^j  +  c \sum _{i\geq 1}(\frac{-z_1}{z_2})^i\sum_{j\geq 1}(\frac{-w_1}{w_2})^j \\
&\quad + \frac{1}{2} c\sum_{j\geq 1}(\frac{-z_2}{w_1})^j - \frac{1}{2} c \sum _{i\geq 1}(\frac{-z_1}{w_2})^i - \frac{1}{2} c\sum_{j\geq 1}(\frac{-w_2}{w_1})^j  +\frac{1}{2} c \sum _{i\geq 1}(\frac{-z_1}{z_2})^i\\
&\quad - \frac{1}{2} c\sum_{j\geq 1}(\frac{-z_2}{z_1})^j +\frac{1}{2} c \sum _{i\geq 1}(\frac{-w_1}{w_2})^i+ \frac{1}{2} c\sum_{j\geq 1}(\frac{-w_2}{z_1})^j -\frac{1}{2} c \sum _{i\geq 1}(\frac{-w_1}{z_2})^i
\end{align*}
Finally, we have
\begin{align*}
i_{z, w}\frac{z-w}{z+w} &=1 +2\sum_{j\geq 1}(-1)^jw^{j}z^{-j} =1+2\sum_{j\geq 1} (\frac{-w}{z})^j\\
i_{w, z}\frac{w-z}{z+w} &=1 +2\sum_{j\geq 1}(-1)^jw^{-j}z^{j}=1+2\sum_{j\geq 1} (\frac{-z}{w})^j
\end{align*}
\end{proof}
\begin{prop}
Let the field $\phi ^B (z)$ and the vector space $\mathit{F_B}$ are as in Example \ref{example:B}. Define
\[
\Phi(z, w) =:\phi ^B(z) \phi ^B(w): -1 =\phi ^B(z) \phi ^B(w)-i_{z, w}\frac{z-w}{z+w}
\]
The assignment $E^B(z, w)\to \frac{1}{2}\Phi(z, w)$, \ $c\to Id_{\mathit{F_B}}$ gives a representation of the Lie algebra $b_{\infty}$.
\end{prop}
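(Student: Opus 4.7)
The strategy is to verify that, under the assignment $E^B(z,w)\mapsto \tfrac{1}{2}\Phi(z,w)$ and $c\mapsto \mathrm{Id}_{F_B}$, the bracket formula for $E^B(z,w)$ established in the lemma preceding this proposition is reproduced by the actual operator commutator on $F_B$. Since that commutator formula, together with the symmetry $E^B(z,w)=-E^B(w,z)$, encodes all the defining relations of $b_\infty$ at the level of generating series, it suffices to check two things: that $\Phi(z,w)=-\Phi(w,z)$, and that the bracket of $\tfrac{1}{2}\Phi$'s matches the lemma's bracket formula with $c=1$. The antisymmetry of $\Phi$ follows from the antisymmetry of $:\phi^B(z)\phi^B(w):$ together with the identity $i_{z,w}\tfrac{z-w}{z+w}-i_{w,z}\tfrac{z-w}{z+w}=(z-w)\delta(z+w)=-2w\,\delta(z+w)$, which reconciles the shifted nature of the contraction.

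Because $\Phi(z,w)$ differs from $\phi^B(z)\phi^B(w)$ by a scalar distribution, we have
\[
\bigl[\tfrac{1}{2}\Phi(z_1,w_1),\tfrac{1}{2}\Phi(z_2,w_2)\bigr]=\tfrac{1}{4}\bigl[\phi^B(z_1)\phi^B(w_1),\phi^B(z_2)\phi^B(w_2)\bigr].
\]
I plan to compute the right-hand side via the fermionic Leibniz identity
\[
[AB,CD]=\{B,C\}AD-\{B,D\}AC+\{A,C\}DB-\{A,D\}CB,
\]
valid for products of odd operators whose pairwise anticommutators are central. This identity plays the role Wick's theorem plays in the ordinary $d_\infty$ computation; in fact, as noted in Example~\ref{example:B}, Wick's theorem cannot be invoked directly here because $\{\phi^B_0,\phi^B_0\}=2$ violates hypothesis (2) of that theorem. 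Substituting $A=\phi^B(z_1),\dots,D=\phi^B(w_2)$ and $\{\phi^B(u),\phi^B(v)\}=-2v\,\delta(u+v)$ (read off from the OPE via \thmref{mainresult}), the commutator becomes a sum of four terms of the form $\pm 2u\,\delta(\cdot+\cdot)\,\phi^B(\cdot)\phi^B(\cdot)$. Each product $\phi^B(u)\phi^B(v)$ is then rewritten as $\Phi(u,v)+i_{u,v}\tfrac{u-v}{u+v}$, splitting every term into a \emph{field part} and a \emph{scalar part}. After using $\Phi(v,u)=-\Phi(u,v)$ to align arguments, the four field parts reproduce the four $E^B$-contributions of the lemma with the expected coefficients $\pm 2z_2,\pm 2w_2$; the overall factor $\tfrac14$ converts these into $\pm\tfrac12 z_2,\pm\tfrac12 w_2$, matching the right-hand side of the lemma under $E^B\mapsto\tfrac12\Phi$.

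The principal obstacle is matching the four scalar summands with the four central-cocycle terms of the lemma, which are written as products of two distinct $i$-expansions rather than as delta functions. The plan is to rewrite each prefactor $\pm 2v\,\delta(u+v)$ in the symmetric form $\mp(u-v)\delta(u+v)$ (using $(u+v)\delta(u+v)=0$), then expand $\delta(u+v)=i_{u,v}\tfrac{1}{u+v}-i_{v,u}\tfrac{1}{u+v}$ so that each scalar summand splits into two bilinear-in-$i$-expansion pieces. One piece from each summand reproduces one of the four central terms of the lemma (with $c=1$); the remaining four ``wrong-order'' pieces cancel in two pairs. Specifically, the leftover from the $\delta(w_1+z_2)$ summand cancels that from the $\delta(z_1+w_2)$ summand, both reducing to $\pm\tfrac14\,i_{z_1,w_2}\tfrac{w_2-z_1}{z_1+w_2}\cdot i_{z_2,w_1}\tfrac{z_2-w_1}{z_2+w_1}$ with opposite signs, and analogously the $\delta(w_1+w_2)$ and $\delta(z_1+z_2)$ leftovers cancel. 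What remains is precisely the lemma's bracket formula with $c=1$, completing the proof.
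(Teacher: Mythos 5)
Your proposal follows essentially the same route as the paper: prove antisymmetry of $\Phi$, expand the commutator $[\phi^B\phi^B,\phi^B\phi^B]$ via the fermionic four-term identity (correctly observing that Wick's theorem fails because $\phi^B_0\phi^B_0\neq 0$), substitute the anticommutator $\{\phi^B(u),\phi^B(v)\}=-2v\,\delta(u+v)$, rewrite $\phi^B\phi^B=\Phi+i_{\cdot,\cdot}\tfrac{\cdot-\cdot}{\cdot+\cdot}$, and convert each scalar prefactor $-2v\,\delta(u+v)$ to $(u-v)\delta(u+v)=i_{u,v}\tfrac{u-v}{u+v}-i_{v,u}\tfrac{u-v}{u+v}$, after which the eight bilinear pieces reduce to the four cocycle terms of the lemma. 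This is the paper's argument. One small correction: your described cancellation pairing is off. The surviving pieces are \emph{not} the ``in-order'' pieces from all four summands; instead the in-order piece survives from the $\delta(w_1+z_2)$ and $\delta(w_1+w_2)$ summands while the cross-order piece survives from the $\delta(z_1+z_2)$ and $\delta(z_1+w_2)$ summands, and the cancellations are cross-wise (e.g., the cross-order piece $-i_{z_2,w_1}\tfrac{w_1-z_2}{w_1+z_2}\,i_{z_1,w_2}\tfrac{z_1-w_2}{z_1+w_2}$ from the $\delta(w_1+z_2)$ term cancels the in-order piece $-i_{z_1,w_2}\tfrac{z_1-w_2}{z_1+w_2}\,i_{z_2,w_1}\tfrac{z_2-w_1}{z_2+w_1}$ from the $\delta(z_1+w_2)$ term, after replacing $\tfrac{z_2-w_1}{z_2+w_1}$ by $-\tfrac{w_1-z_2}{w_1+z_2}$). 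In particular your claim that both leftovers reduce to the \emph{same} double $i$-expansion with opposite signs does not hold, since $i_{z_1,w_2}$ and $i_{w_2,z_1}$ are different expansions; you would discover the correct pairing when writing out the eight terms. This does not affect the validity of the approach, only the bookkeeping of the final cancellation step.
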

\begin{proof}
 First, we need to prove $\Phi(w, z)=-\Phi(z, w)$:
\begin{align*}
-\Phi(w, z)&= -\phi ^B(w) \phi ^B(z)+i_{w, z}\frac{-2z}{z+w}+1=\phi ^B(z) \phi ^B(w) +2w\delta (z+w) +i_{w, z}\frac{-2z}{z+w}+1 \\
&\quad =\phi ^B(z) \phi ^B(w) +i_{z, w}\frac{2w}{z+w} -i_{w, z}\frac{2w}{z+w} +i_{w, z}\frac{-2z}{z+w}+1 \\
&\quad =\phi ^B(z) \phi ^B(w) +i_{z, w}\frac{2w}{z+w} -i_{w, z}\frac{2z+2w}{z+w} +1 \\
&\quad =\phi ^B(z) \phi ^B(w) +i_{z, w}\frac{-2w}{z+w}-1=\Phi(z, w)
\end{align*}
For the commutation relations of $:\phi ^B(z) \phi ^B(w):$ one would be tempted to  use Wick's theorem, but Wick's theorem doesn't apply here, due to the fact that $\phi^B_0\phi^B_0=1$ (see Corollary \ref{fact:BnormOrd=1}), i.e., $\phi ^B(z)_+$ doesn't anticommute with itself, which is a requirement for Wick's theorem.
Hence we do direct computation using $\Phi(z, w): =\phi ^B(z) \phi ^B(w)-i_{z, w}\frac{z-w}{z+w}$:
\begin{align*}
&[\Phi(z_1, w_1), \Phi(z_2, w_2)]=[\phi ^B(z_1) \phi ^B(w_1), \phi ^B(z_2) \phi ^B(w_2)] \\
&=\phi ^B(z_1)\{\phi ^B(w_1), \phi ^B(z_2)\}\phi ^B(w_2)-\phi ^B(z_1)\phi ^B(z_2)\{\phi ^B(w_1), \phi ^B(w_2)\}  \\
&\quad +\{\phi ^B(z_1), \phi ^B(z_2)\}\phi ^B(w_2)\phi ^B(w_1) -\phi ^B(z_2)\{\phi ^B(z_1), \phi ^B(w_2)\}\phi ^B(w_1)\\
&=-2z_2\delta(w_1 +z_2)\phi ^B(z_1)\phi ^B(w_2) +2w_2\delta(w_1 +w_2)\phi ^B(z_1)\phi ^B(z_2)  \\
&\quad -2z_2\delta(z_1 +z_2)\phi ^B(w_2)\phi ^B(w_1) +2w_2\delta(z_1 +w_2)\phi ^B(z_2)\phi ^B(w_1)\\
&=-2z_2\delta(w_1 +z_2)\big(\Phi(z_1, w_2) + i_{z_1, w_2}\frac{z_1-w_2}{z_1+w_2}\big)+2w_2\delta(w_1 +w_2)\big(\Phi(z_1, z_2)+i_{z_1, z_2}\frac{z_1-z_2}{z_1+z_2}\big)  \\
&\quad -2z_2\delta(z_1 +z_2)\big(\Phi(w_2, w_1)+i_{w_2, w_1}\frac{w_2-w_1}{w_2+w_1}\big) +2w_2\delta(z_1 +w_2)\big(\Phi(z_2, w_1)+i_{z_2, w_1}\frac{z_2-w_1}{z_2+w_1}\big)\\
&=-2z_2\delta(w_1 +z_2)\Phi(z_1, w_2)+ 2w_2\delta(w_1 +w_2)\Phi(z_1, z_2)  \\
& \quad +2w_2\delta(z_1 +w_2)\Phi(z_2, w_1) -2z_2\delta(z_1 +z_2)\Phi(w_2, w_1) \\
&\quad +\Big(i_{w_1, z_2}\frac{w_1-z_2}{w_1+z_2}-i_{z_2, w_1}\frac{w_1-z_2}{w_1+z_2}\Big)i_{z_1, w_2}\frac{z_1-w_2}{z_1+w_2}  \\
&\quad -\Big(i_{w_1, w_2}\frac{w_1-w_2}{w_1+w_2}-i_{w_2, w_1}\frac{w_1-w_2}{w_1+w_2}\Big)i_{z_1, z_2}\frac{z_1-z_2}{z_1+z_2} \\
&\quad +\Big(i_{z_1, z_2}\frac{z_1-z_2}{z_1+z_2}-i_{z_2, z_1}\frac{z_1-z_2}{z_1+z_2}\Big)i_{w_2, w_1}\frac{w_2-w_1}{w_2+w_1} \\
&\quad -\Big(i_{z_1, w_2}\frac{z_1-w_2}{z_1+w_2}-i_{w_2, z_1}\frac{z_1-w_2}{z_1+w_2}\Big)i_{z_2, w_1}\frac{z_2-w_1}{z_2+w_1}\\
&=-2z_2\delta(w_1 +z_2)\Phi(z_1, w_2)+ 2w_2\delta(w_1 +w_2)\Phi(z_1, z_2)  \\
& \quad + 2w_2\delta(z_1 +w_2)\Phi(z_2, w_1) -2z_2\delta(z_1 +z_2)\Phi(w_2, w_1) \\
&\quad + i_{w_1, z_2}\frac{w_1-z_2}{w_1+z_2}i_{z_1, w_2}\frac{z_1-w_2}{z_1+w_2}  -i_{w_2, z_1}\frac{w_2-z_1}{z_1+w_2}i_{z_2, w_1}\frac{z_2-w_1}{z_2+w_1} \\
&\quad  - i_{w_1, w_2}\frac{w_1-w_2}{w_1+w_2}i_{z_1, z_2}\frac{z_1-z_2}{z_1+z_2}+i_{w_2, w_1}\frac{w_2-w_1}{w_1+w_2}i_{z_2, z_1}\frac{z_2-z_1}{z_1+z_2}\\
&=-2z_2\delta(w_1 +z_2)\Phi(z_1, w_2)+ 2w_2\delta(w_1 +w_2)\Phi(z_1, z_2)  \\
& \quad -2w_2\delta(z_1 +w_2)\Phi(w_1, z_2) +2z_2\delta(z_1 +z_2)\Phi(w_1, w_2) \\
&\quad + i_{w_1, z_2}\frac{w_1-z_2}{w_1+z_2}i_{z_1, w_2}\frac{z_1-w_2}{z_1+w_2}  -i_{w_2, z_1}\frac{w_2-z_1}{z_1+w_2}i_{z_2, w_1}\frac{z_2-w_1}{z_2+w_1} \\
&\quad -i_{w_1, w_2}\frac{w_1-w_2}{w_1+w_2}i_{z_1, z_2}\frac{z_1-z_2}{z_1+z_2}+i_{w_2, w_1}\frac{w_2-w_1}{w_1+w_2}i_{z_2, z_1}\frac{z_2-z_1}{z_1+z_2}
\end{align*}
 A direct comparison then shows that $\frac{1}{2} \Phi(z, w)$  indeed has the required commutation relations of $E^B(z, w)$.\\
\end{proof}

\subsection{The infinite dimensional Lie algebra $c_{\infty}$} \ \\
The infinite dimensional Lie algebra $\bar{c}_{\infty}$ is the subalgebra of $\bar{a}_{\infty}$ consisting of the infinite matrices preserving the bilinear form $C(v_i, v_j)=(-1)^i \delta_{i, 1-j}$, i.e.,
\begin{equation}
\bar{c}_{\infty}=\{(a_{ij})\in \bar{a}_{\infty} | \ a_{ij}=(-1)^{i+j-1}a_{1-j, 1-i} \}.
\end{equation}
The algebra $c_{\infty}$ is a central extension of
$\bar{c}_{\infty}$ by a central element $c$, $c_{\infty}=\bar{c}_{\infty}\oplus \mathbb{C} c$, where $c$ is the cocycle  for $a_{\infty}$, \eqref{equation:cocycle-a} (see \cite{Kac-Lie}). And the  commutation relations for the elementary matrices in $c_\infty$ is
\begin{align*}
[E_{ij},E_{kl}]=\delta_{jk}E_{il}-\delta_{li}E_{kj}+C(E_{ij},E_{kl})c.
\end{align*}
The generators for the algebra $c_\infty$ can be written in terms of these elementary matrices as:
\[
\{ (-1)^j E_{i, j} -(-1)^i E_{1-j, 1-i}, \  i, j \in \mathbb{Z}; \text{and} \ \ c\}.
\]
We can arrange the non-central generators in a generating series
\begin{equation}
E^C(z, w) =\sum _{i, j\in \mathbb{Z}} ((-1)^jE_{ij}-(-1)^iE_{1-j, 1-i})z^{i-1}w^{-j}.
\end{equation}
\begin{lem}
\label{lem:generseriesC}
The generating series  $E^C(z,w)$ obeys the following relations:
\[
E^C(z,w) = E^C(w,z)
\]
and
\begin{align*}
[E^C(z_1, w_1), &E^C(z_2, w_2)]=  E^C(z_1, w_2)\delta(z_2 + w_1) -E^C(z_2, w_1)\delta(z_1+ w_2) \\
&\hspace{1.7cm} - E^C(w_2,w_1)\delta(z_1 + z_2) + E^C(z_1,z_2)\delta(w_2 + w_1)   \\
&+ 2 \iota_{z_1,w_2} \frac{1}{z_1+ w_2} \iota_{w_1,z_2} \frac{1}{w_1  + z_2}c
- 2 \iota_{ w_2, z_1}\frac{1}{ w_2 + z_1} \iota_{z_2,w_1} \frac{1}{z_2 + w_1 }c \\
& + 2 \iota_{z_1,z_2 } \frac{1}{   z_1+ z_2} \iota_{w_1,w_2 } \frac{1}{w_1 + w_2 }c
-  2 \iota_{ z_2,z_1} \frac{1}{   z_2+ z_1} \iota_{ w_2,w_1} \frac{1}{w_2 + w_1 }c.
\end{align*}
\end{lem}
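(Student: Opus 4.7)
The plan is to follow exactly the same pattern as the proof given for the $b_\infty$ generating series. The symmetry $E^C(z,w) = E^C(w,z)$ is a purely formal manipulation: making the index substitution $i' = 1-j$, $j' = 1-i$ in the summation defining $E^C(w,z)$, one finds $(-1)^j = -(-1)^{i'}$, $(-1)^i = -(-1)^{j'}$, $w^{i-1} = w^{-j'}$, $z^{-j} = z^{i'-1}$, and the two summands $(-1)^j E_{ij}$ and $-(-1)^i E_{1-j,1-i}$ are simply interchanged, yielding $E^C(z,w)$.

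For the commutation relation, I would expand the bracket $[E^C(z_1,w_1), E^C(z_2,w_2)]$ bilinearly into four commutators of elementary-matrix expressions, using $[E_{ij},E_{kl}] = \delta_{jk}E_{il} - \delta_{li}E_{kj} + C(E_{ij},E_{kl})c$. Each of these four commutators splits into a \emph{Lie part} coming from the Kronecker deltas and a \emph{cocycle part}. The Lie part, after applying the delta identities and using the symmetry $E^C(z,w)=E^C(w,z)$, produces the four delta-function terms $E^C(z_1,w_2)\delta(z_2+w_1)$, $-E^C(z_2,w_1)\delta(z_1+w_2)$, $-E^C(w_2,w_1)\delta(z_1+z_2)$, $E^C(z_1,z_2)\delta(w_2+w_1)$. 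The accounting is identical in style to the $b_\infty$ computation; only the bilinear form $C$ (invariant for $c_\infty$ rather than $b_\infty$) is different, which is what changes $\delta_{i,-j}$ style pairings to $\delta_{i,1-j}$ style pairings and removes the ``$(z-w)$'' numerator factors.

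The cocycle part is where the real bookkeeping lies. From $C(E_{ab},E_{ba}) = 1$ for $a\le 0, b\ge 1$ (and $-1$ for the reverse), each of the four brackets contributes a double sum over half-infinite ranges in the matrix indices, which after substituting into the generating series becomes a product of two geometric series in the variable ratios. Using the identities
\[
\sum_{n\ge 1}(-u/v)^n = -\,\iota_{v,u}\frac{u}{v+u}, \qquad \iota_{v,u}\frac{1}{v+u} = \sum_{n\ge 0}(-u)^n v^{-n-1},
\]
these geometric sums combine pairwise into the four terms of the form $\iota_{z_1,w_2}\tfrac{1}{z_1+w_2}\iota_{w_1,z_2}\tfrac{1}{w_1+z_2}$ etc. The signs and the factor of $2$ follow from combining the four contributions $(-1)^{j+l}, (-1)^{j+k}, (-1)^{i+l}, (-1)^{i+k}$ produced by the prefactors in $E^C(z,w)$, exactly as the corresponding $\tfrac{c}{4}$ terms emerged in the $b_\infty$ proof.

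The main obstacle is simply the bookkeeping: one must carefully track which of the eight cocycle contributions pair up to give each of the four listed double regularized fractions, and verify that the surviving boundary contributions from the ``$n\ge 0$ vs $n\ge 1$'' distinction in the geometric series cancel (for the $c_\infty$ form they must cancel completely, since no constant correction appears in the statement). Once the sums are rewritten in $\iota$-notation and compared with the claimed formula, the proof concludes by inspection.
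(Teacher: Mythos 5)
Your proposal is correct and follows essentially the same route as the paper's proof: bilinear expansion of the bracket into four elementary-matrix commutators, Kronecker-delta collapse of the Lie part into the four delta-function terms, and conversion of the cocycle double-sums (over half-infinite index ranges) into products of $\iota$-expanded geometric series. One small clarification: for $c_\infty$ the factor of $2$ does not arise from boundary terms of the geometric series cancelling (there are none here, unlike the $b_\infty$ case where $\tfrac{z-w}{z+w}=1+2\sum(-w/z)^n$ produces an extra constant); it comes instead from the fact that, of the eight cocycle contributions produced by the four elementary-matrix brackets, they pair up identically two-by-two, giving the four sums with coefficient $\pm 2$ that the paper lists explicitly.
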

\begin{proof}
We present the proof here for completeness.
The first property is very easy to check. For the second property, we have:
\begin{align*}
&[E^C(z_1, w_1),E^C(z_2, w_2)] \\
& = \sum _{i, j,k,l\in \mathbb{Z}} [((-1)^jE_{ij}-(-1)^iE_{1-j, 1-i}),((-1)^lE_{kl}-(-1)^kE_{1-l, 1-k})]z_1^{i-1}z_2^{k-1}w^{-j}_1w^{-l}_2 \\
& =  \sum _{i, j,k,l\in \mathbb{Z}} (-1)^{j+l}[E_{ij} ,E_{kl}]z_1^{i-1}z_2^{k-1}w^{-j}_1w^{-l}_2  \\
&\quad - \sum _{i, j,k,l\in \mathbb{Z}} (-1)^{j+k}[E_{ij} , E_{1-l, 1-k}]z_1^{i-1}z_2^{k-1}w^{-j}_1w^{-l}_2   \\
&\quad - \sum _{i, j,k,l\in \mathbb{Z}}(-1)^{i+l} [E_{1-j, 1-i},E_{kl}]z_1^{i-1}z_2^{k-1}w^{-j}_1w^{-l}_2  \\
&\quad + \sum _{i, j,k,l\in \mathbb{Z}}(-1)^{i+k} [ E_{1-j, 1-i}, E_{1-l, 1-k}]z_1^{i-1}z_2^{k-1}w^{-j}_1w^{-l}_2 \\
\end{align*}
\begin{align*}
&=\sum _{i, j,l\in \mathbb{Z}}  (-1)^{j+l}E_{il} z_1^{i-1}z_2^{j-1}w^{-j}_1w^{-l}_2-\sum _{i, j,k\in \mathbb{Z}}  (-1)^{j+i}E_{kj}z_1^{i-1}z_2^{k-1}w^{-j}_1w^{-i}_2  \\
&\quad - \sum _{i, j,k \in \mathbb{Z}} (-1)^{j+k}E_{i,1-k} z_1^{i-1}z_2^{k-1}w^{-j}_1w^{j-1}_2 +
 \sum _{i, j,l \in \mathbb{Z}}  (-1)^{j+1-i} E_{1-l, j} z_1^{i-1}z_2^{-i}w^{-j}_1w^{-l}_2   \\
&\quad - \sum _{i, j,l\in \mathbb{Z}} (-1)^{i+l}E_{1-j,l} z_1^{i-1}z_2^{-i}w^{-j}_1w^{-l}_2 +
 \sum _{i, j,k \in \mathbb{Z}}  (-1)^{i+1-j}E_{k,1-i}  z_1^{i-1}z_2^{k-1}w^{-j}_1w^{j-1}_2   \\
&\quad + \sum _{i, j,k\in \mathbb{Z}} (-1)^{i+k}E_{1-j,1-k}z_1^{i-1}z_2^{k-1}w^{-j}_1w^{-i}_2  -   \sum _{i, k,l\in \mathbb{Z}}  (-1)^{i+k}E_{1-l, 1-i} z_1^{i-1}z_2^{k-1}w^{-k}_1w^{-l}_2  \\
&\quad +   2\sum_{l \leq 0 , j \geq 1} (-1)^{j+ l}    z_1^{l-1} z_2^{j-1} w_1^{-j} w_2^{-l}c - 2\sum_{l \geq 1 , j\leq 0} (-1)^{j+ l }   z_1^{l-1} z_2^{j-1} w_1^{-j} w_2^{-l} c \\
&\quad - 2 \sum_{k \geq 1, j \geq 1} (-1)^{j+k}  z_1^{-k} z_2^{k-1} w_1^{-j} w_2^{1-j}c + 2  \sum_{k \leq 0, j \leq 0} (-1)^{j+k}  z_1^{-k} z_2^{k-1} w_1^{-j} w_2^{j-1} c  \\
& =  E^C(z_1, w_2)\delta(z_2 + w_1) -E^C(z_2, w_1)\delta(z_1+ w_2)\\
&\quad - E^C(w_2,w_1)\delta(z_1 + z_2) + E^C(z_1,z_2)\delta(w_2 + w_1)   \\
&\quad -2 \iota_{z_1,w_2} \frac{1}{z_1+ w_2} \iota_{w_1,z_2} \frac{1}{w_1  + z_2}c
+2 \iota_{ w_2, z_1}\frac{1}{ w_2 + z_1} \iota_{z_2,w_1} \frac{1}{z_2 + w_1 }c\\
&\quad - 2 \iota_{z_1,z_2 } \frac{1}{   z_1+ z_2} \iota_{w_1,w_2 } \frac{1}{w_1 + w_2 }c
+2 \iota_{ z_2,z_1} \frac{1}{   z_2+ z_1} \iota_{ w_2,w_1} \frac{1}{w_2 + w_1 }c.
\end{align*}
\end{proof}
\begin{prop}
Recall the field $\phi ^C (z)$ on the highest weight module $\mathit{F_C}$ from Example \ref{example:C}
The assignment $E(z, w)\to i:\phi^C(z)\phi^C (w):$, \ $c\to \frac{1}{2}Id_{\mathit{F_C}}$ gives a representation of the Lie algebra $c_{\infty}$. (Here $i$ is the imaginary unit.)
\end{prop}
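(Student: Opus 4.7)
The plan is to verify the two defining relations for $E^C(z,w)$ established in \lemref{lem:generseriesC}: symmetry $\Psi(z,w) = \Psi(w,z)$ and the commutator formula, where $\Psi(z,w) := i\,{:}\phi^C(z)\phi^C(w){:}$. Unlike the $b_{\infty}$ case, here the generating field is even and the contraction $\lfloor \phi^C\phi^C \rfloor = i_{z,w}\frac{1}{z+w}$ is scalar, so Wick's Theorem actually applies; the computation is correspondingly more transparent.

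First I would dispatch symmetry. Since $\phi^C$ is even, using the shorter formula ${:}\phi^C(z)\phi^C(w){:} = \phi^C(z)\phi^C(w) - i_{z,w}\frac{1}{z+w}$ yields
\[
{:}\phi^C(z)\phi^C(w){:} - {:}\phi^C(w)\phi^C(z){:} = [\phi^C(z),\phi^C(w)] - i_{z,w}\tfrac{1}{z+w} + i_{w,z}\tfrac{1}{z+w}.
\]
By \thmref{mainresult} applied to the OPE \eqnref{equation:OPE-C}, the commutator equals $\delta(z+w)$, which coincides with $i_{z,w}\frac{1}{z+w} - i_{w,z}\frac{1}{z+w}$ by \eqnref{deltaExpanded1}, proving symmetry.

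Next I would verify Wick's hypotheses: condition (1) is automatic since the contraction is scalar, and condition (2), namely $[\phi^C(z)_\pm,\phi^C(w)_\pm]=0$, follows from the commutation relations \eqnref{eqn:Com-C} because $[\phi^C_m,\phi^C_n]$ is nonzero only when $m+n=0$, i.e.\ when exactly one mode is positive-indexed and the other negative-indexed. Applying Wick's Theorem expands ${:}\phi^C(z_1)\phi^C(w_1){:}\,{:}\phi^C(z_2)\phi^C(w_2){:}$ into a zero-contraction term (the full normal ordered quadruple product), four single-contraction terms of the form $\lfloor\phi^C(x_1)\phi^C(x_2)\rfloor\,{:}\phi^C(y_1)\phi^C(y_2){:}$, and two double-contraction scalar terms. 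In the commutator, the zero-contraction piece cancels by symmetry of the quadruple normal ordering. For each single-contraction term the exchange produces
\[
\bigl(i_{x_1,x_2}\tfrac{1}{x_1+x_2} - i_{x_2,x_1}\tfrac{1}{x_1+x_2}\bigr)\,{:}\phi^C(y_1)\phi^C(y_2){:} = \delta(x_1+x_2)\,{:}\phi^C(y_1)\phi^C(y_2){:},
\]
which, modulo the $i^2 = -1$ prefactor, matches one of the four $E^C(\cdot,\cdot)\,\delta(\cdot+\cdot)$ summands in \lemref{lem:generseriesC} once we convert ${:}\phi^C\phi^C{:}$ back to $\Psi$ via $\Psi = i\,{:}\phi^C\phi^C{:}$. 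The two double-contraction scalar terms, after antisymmetrization, yield precisely the four central extension summands of \lemref{lem:generseriesC}; the factor $\tfrac{1}{2}$ in the assignment $c\mapsto \tfrac{1}{2}\mathrm{Id}_{\mathit{F_C}}$ absorbs the combinatorial factor $2$ appearing there, and the $i^2=-1$ supplies the required overall sign.

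The main obstacle will be the careful bookkeeping of the expansion conventions $i_{x_1,x_2}$ versus $i_{x_2,x_1}$ and the tracking of signs, so that the scalar double-contraction terms in the commutator, once expanded, line up exactly with the eight iterated-expansion terms in \lemref{lem:generseriesC} (with the coefficients $\pm 2c$). A routine but essential step in the matching is to use the identities $\delta(w_1+z_2) = i_{w_1,z_2}\frac{1}{w_1+z_2} - i_{z_2,w_1}\frac{1}{w_1+z_2}$ to convert the single-contraction $\delta$-function terms into the form in which $E^C$ is contracted against $i_{z_1,w_2}\frac{1}{z_1+w_2}$ type factors, confirming the identification term by term.
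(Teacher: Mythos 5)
Your proposal follows essentially the same route the paper takes: verify the hypotheses of Wick's theorem (the vanishing of $[\phi^C(z)_\pm,\phi^C(w)_\pm]$ follows from the fact that $[\phi^C_m,\phi^C_n]\propto\delta_{m,-n}$ with $m,n\in\tfrac12+\mathbb Z$, and scalar contractions make condition (1) automatic), apply Wick's theorem to expand the product of two normal ordered bilinears, and match the four single-contraction terms against the $E^C\,\delta$ summands and the two double-contraction scalar terms against the central pieces of Lemma~\ref{lem:generseriesC}, with $i^2=-1$ and the assignment $c\mapsto\tfrac12\mathrm{Id}$ supplying the correct normalization. The only addition is your explicit check of the symmetry $\Psi(z,w)=\Psi(w,z)$, which the paper leaves implicit; that is a reasonable and harmless complement, not a different proof.
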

\begin{proof}
We will use Wick's theorem to prove this proposition:  First we check
\begin{align*}
[\phi^C_+(z),\phi^C_+(w)]&=\sum_{m\geq 1/2,n\geq 1/2,m,n\in\frac{1}{2}+\mathbb Z}[\phi^C_m,\phi^C_n]z^{m-1/2}w^{n-1/2} \\
&=\sum_{m\geq 1/2,n\geq 1/2,m,n\in\frac{1}{2}+\mathbb Z}(-1)^{n-1/2}\delta_{m,-n}z^{m-1/2}w^{n-1/2}=0, \\
[\phi^C_-(z),\phi^C_-(w)]&=\sum_{m<1/2,n< 1/2,m,n\in\frac{1}{2}+\mathbb Z}[\phi^C_m,\phi^C_n]z^{-m-1}w^{-n-1} \\
&=\sum_{m<1/2,n< 1/2,m,n\in\frac{1}{2}+\mathbb Z}(-1)^{n-1/2}\delta_{m,-n}z^{m-1/2}w^{n-1/2}=0.
\end{align*}
 As a direct application of Wick's Theorem we have
\begin{align*}
:\phi ^C(z_1)&\phi ^C(w_1): :\phi ^C(z_2)\phi ^C(w_2):   \\
&=\lfloor \phi ^C(z_1)\phi ^C(z_2)\rfloor:\phi ^C(w_1)\phi ^C(w_2): +\lfloor \phi ^C(z_1)\phi ^C(w_2)\rfloor:\phi ^C(z_2)\phi ^C(w_1): \\
&\quad+\lfloor \phi ^C(w_1)\phi ^C(z_2)\rfloor:\phi ^C(z_1)\phi ^C(w_2):+\lfloor \phi ^C(w_1)\phi ^C(w_2)\rfloor:\phi ^C(z_1)\phi ^C(z_2): \\
&\quad +\lfloor\phi ^C(z_1)\phi ^C(z_2)\rfloor\lfloor \phi ^C(w_1)\phi ^C(w_2)\rfloor +\lfloor\phi ^C(z_1)\phi ^C(w_2)\rfloor\lfloor \phi ^C(w_1)\phi ^C(z_2)\rfloor
\end{align*}
and hence
\begin{align*}
[i &:\phi ^C(z_1)\phi ^C(w_1):, \ i:\phi ^C(z_2)\phi ^C(w_2)]  \\
&=-\delta(z_1+z_2):\phi ^C(w_1)\phi ^C(w_2): - \delta(z_1+w_2) :\phi ^C(z_1)\phi ^C(w_2): \\
&\quad+\delta( z_2+ w_1) :\phi ^C(z_1)\phi ^C(w_2):+\delta( w_2 +w_1) :\phi ^C(z_1)\phi ^C(z_2): \\
&\quad -\iota_{z_1,z_2}\left(\frac{1}{z_1+z_2}\right)\iota_{w_1,w_2}\left(\frac{1}{w_1+w_2}\right) -\iota_{z_1,w_2}\left(\frac{1}{z_1+w_2}\right)\iota_{w_1,z_2}\left(\frac{1}{w_1+z_2}\right)  \\
&\quad +\iota_{z_2,z_1}\left(\frac{1}{z_2+z_1}\right)\iota_{w_2,w_1}\left(\frac{1}{w_2+w_1}\right) +\iota_{z_2,w_1}\left(\frac{1}{z_2+w_1}\right)\iota_{w_2,z_1}\left(\frac{1}{w_2+z_1}\right).
\end{align*}
Note that we used (the rather peculiar) fact that for $\lambda =-1$ we have $\delta(z_1+z_2)=\delta(z_1, -z_2)=-\delta(z_2+z_1)=-\delta(z_2, -z_1)$.
\end{proof}

\subsection{The infinite dimensional Lie algebra $d_{\infty}$} \ \\
The infinite dimensional Lie algebra $\bar{d}_{\infty}$ is the subalgebra of $\bar{a}_{\infty}$ consisting of the infinite matrices preserving the bilinear form $D(v_i, v_j)=\delta_{i, 1-j}$, i.e.,
\begin{equation}
\bar{d}_{\infty}=\{(a_{ij})\in \bar{a}_{\infty} | \ a_{ij}=-a_{1-j, 1-i} \}.
\end{equation}
Denote by $d_{\infty}$ the  central extension of $\bar{d}_{\infty}$ by a central element $c$, $d_{\infty}=\bar{d}_{\infty}\oplus \mathbb{C} c$, with the same cocycle as for $a_{\infty}$, \eqref{equation:cocycle-a}.  The commutation relations for the elementary matrices in $d_\infty$ is
\begin{align*}
[E_{ij},E_{kl}]=\delta_{jk}E_{il}-\delta_{li}E_{kj}+\frac{1}{2}C(E_{ij},E_{kl})c.
\end{align*}
The generators for the algebra $d_\infty$ can be written in terms of these elementary matrices as:
\[
\{ E_{i, j} - E_{1-j, 1-i}, \  i, j \in \mathbb{Z}; \text{and} \ \ c\}.
\]
We can arrange the non-central generators in a generating series
\begin{equation}
E^D(z, w) =\sum _{i, j\in \mathbb{Z}} (E_{ij}-E_{1-j, 1-i})z^{i-1}w^{-j}.
\end{equation}
\begin{lem}
\label{lem:generseriesD}
The generating series  $E^D(z,w)$ obeys the following relations:
\[
E^D(z,w) = -E^D(w,z)
\]
and
\begin{align*}
[E^D(z_1, w_1), &E^D(z_2, w_2)]
 =  E^D(z_1, w_2)\delta(z_2-w_1) -E^D(z_2, w_1)\delta(z_1-w_2) \\
&\hspace{1.7cm} +E^D(w_2,w_1)\delta(z_1-z_2) -E^D(z_1,z_2)\delta(w_1-w_2)  \\
&\quad +\iota_{z_1,w_2}\frac{1}{z_1-w_2}\iota_{w_1,z_2}\frac{1}{w_1-z_2}c   -\iota_{z_2,w_1}\frac{1}{z_2-w_1}\iota_{w_2,z_1}\frac{1}{w_2-z_1}c  \\
&\quad -\iota_{z_1,z_2}\frac{1}{z_1-z_2}\iota_{w_1,w_2}\frac{1}{w_1-w_2}c +\iota_{z_2,z_1}\frac{1}{z_1-z_2}\iota_{w_2,w_1}\frac{1}{w_2-w_1}c.
\end{align*}
\end{lem}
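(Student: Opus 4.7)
The plan is to mirror the argument given for Lemma \ref{lem:generseriesC} on $c_\infty$, adapted to the simpler combinatorics of $d_\infty$ (no $(-1)^{i+j}$ signs, so the delta functions will be at $z=w$ rather than at $z=-w$). First I would verify the antisymmetry $E^D(z,w)=-E^D(w,z)$ directly from the definition: swapping $(i,j)\leftrightarrow(1-j,1-i)$ in the index set sends $E_{ij}-E_{1-j,1-i}$ to its negative, while the monomial $z^{i-1}w^{-j}$ is sent to $z^{-j}w^{i-1}=w^{i-1}z^{-j}$, which after reindexing is exactly the expansion of $E^D(w,z)$.

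For the bracket, I would substitute the definition and expand
\[
[E^D(z_1,w_1),E^D(z_2,w_2)] = \sum_{i,j,k,l}[E_{ij}-E_{1-j,1-i},\,E_{kl}-E_{1-l,1-k}]\,z_1^{i-1}z_2^{k-1}w_1^{-j}w_2^{-l},
\]
producing four double commutators. Each commutator is evaluated using $[E_{ij},E_{kl}] = \delta_{jk}E_{il}-\delta_{li}E_{kj}+\tfrac12 C(E_{ij},E_{kl})c$. The Kronecker-delta contributions collapse one index pair in each term and, after reindexing, assemble into the four generating-series summands $E^D(z_1,w_2)\delta(z_2-w_1)$, $-E^D(z_2,w_1)\delta(z_1-w_2)$, $+E^D(w_2,w_1)\delta(z_1-z_2)$, $-E^D(z_1,z_2)\delta(w_1-w_2)$, where the delta functions arise from sums of the form $\sum_{n\in\mathbb Z}z^n w^{-n-1}$ after freeing the constrained index.

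The central (cocycle) contribution is the key computational step. For each of the four bracket terms, the cocycle $C(E_{ij},E_{kl})$ is nonzero only on the quadrant $i\leq 0,\,j\geq 1$ (or its transpose), so the formal sum $\sum c_{ijkl}z_1^{i-1}z_2^{k-1}w_1^{-j}w_2^{-l}$ becomes a product of two geometric series in the appropriate region of expansion. After shifting indices (using $E_{1-j,1-i}$ instead of $E_{ij}$ in the second or third entry), each such geometric series is recognized as one of $\iota_{z_1,w_2}\tfrac{1}{z_1-w_2}$, $\iota_{w_1,z_2}\tfrac{1}{w_1-z_2}$, etc., and the four cocycle contributions assemble into the four central contraction terms displayed in the lemma, with the signs dictated by whether $E_{ij}$ or $E_{1-j,1-i}$ was paired.

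The main obstacle will be purely bookkeeping: keeping track of which quadrant each cocycle lives in after the shift $E_{ij}\mapsto E_{1-j,1-i}$, and verifying that the two expansions $\iota_{z,w}\tfrac{1}{z-w}$ versus $\iota_{w,z}\tfrac{1}{z-w}$ come out with the correct signs so that the differences of pairs of such expansions in the final formula match the generating-function expansions obtained from the cocycle sums. Since no new analytical input is needed beyond the geometric-series identities used in Lemma \ref{lem:generseriesC}, the argument is essentially mechanical once the reindexing is set up carefully.
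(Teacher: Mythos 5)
Your plan mirrors the paper's proof exactly: expand the bracket into four double commutators using $[E_{ij},E_{kl}]=\delta_{jk}E_{il}-\delta_{li}E_{kj}+\tfrac12 C(E_{ij},E_{kl})c$, collapse the Kronecker deltas into generating-series terms times formal $\delta$-functions, and recognize each cocycle sum (supported on a quadrant such as $i\leq 0,\,j\geq 1$) as a product of two geometric series giving the $\iota$-expansions of $\tfrac{1}{z-w}$. The antisymmetry argument via the index substitution $(i,j)\mapsto(1-j,1-i)$ is also the standard short verification; no gap, same route.
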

\begin{proof}
We present the proof here for completeness. Again, the first property is very easy to check. For the second property, we have:
\begin{align*}
&[E^D(z_1, w_1),  E^D(z_2, w_2)]=\sum _{i, j,k,l\in \mathbb{Z}} [(E_{ij}-E_{1-j, 1-i}),(E_{kl}-E_{1-l, 1-k})]z_1^{i-1}z_2^{k-1}w^{-j}_1w^{-l}_2 \\
&=\sum _{i, j,k,l\in \mathbb{Z}} [E_{ij} ,E_{kl}]z_1^{i-1}z_2^{k-1}w^{-j}_1w^{-l}_2  - \sum _{i, j,k,l\in \mathbb{Z}} [E_{ij} , E_{1-l, 1-k}]z_1^{i-1}z_2^{k-1}w^{-j}_1w^{-l}_2   \\
& \quad - \sum _{i, j,k,l\in \mathbb{Z}} [E_{1-j, 1-i},E_{kl}]z_1^{i-1}z_2^{k-1}w^{-j}_1w^{-l}_2   + \sum _{i, j,k,l\in \mathbb{Z}} [ E_{1-j, 1-i}, E_{1-l, 1-k}]z_1^{i-1}z_2^{k-1}w^{-j}_1w^{-l}_2 \\
&=\sum _{i, j,l\in \mathbb{Z}}  E_{il} z_1^{i-1}z_2^{j-1}w^{-j}_1w^{-l}_2-\sum _{i, j,k\in \mathbb{Z}}  E_{kj}z_1^{i-1}z_2^{k-1}w^{-j}_1w^{-i}_2  \\
&\quad - \sum _{i, j,k \in \mathbb{Z}} E_{i,1-k} z_1^{i-1}z_2^{k-1}w^{-j}_1w^{j-1}_2 +
 \sum _{i, j,l \in \mathbb{Z}}   E_{1-l, j} z_1^{i-1}z_2^{-i}w^{-j}_1w^{-l}_2   \\
&\quad - \sum _{i, j,l\in \mathbb{Z}} E_{1-j,l} z_1^{i-1}z_2^{-i}w^{-j}_1w^{-l}_2 +
 \sum _{i, j,k \in \mathbb{Z}}  E_{k,1-i}  z_1^{i-1}z_2^{k-1}w^{-j}_1w^{j-1}_2   \\
&\quad - \sum _{i, j,l\in \mathbb{Z}} E_{1-j,l} z_1^{i-1}z_2^{-i}w^{-j}_1w^{-l}_2 +
 \sum _{i, j,k \in \mathbb{Z}}  E_{k,1-i}  z_1^{i-1}z_2^{k-1}w^{-j}_1w^{j-1}_2   \\
&\quad + \sum _{i, j,k \in \mathbb{Z}} E_{1-j,1-k}z_1^{i-1}z_2^{k-1}w^{-j}_1w^{-i}_2
-   \sum _{i, j,l\in \mathbb{Z}}  E_{1-l, 1-i} z_1^{i-1}z_2^{j-1}w^{-j}_1w^{-l}_2 \\
&\quad + \sum _{i\leq 0, j\geq 1} z_1^{i-1}z_2^{j-1}w^{-j}_1w^{-i}_2 c- \sum _{j\leq 0, i\geq 1} z_1^{i-1}z_2^{j-1}w^{-j}_1w^{-i}_2 c \\
 &\quad -  \sum _{i\leq 0, j\geq 1}z_1^{i-1}z_2^{-i}w^{-j}_1w^{j-1}_2 c+ \sum _{i\geq 1,j\leq 0}z_1^{i-1}z_2^{-i}w^{-j}_1w^{j-1}_2 c \\
&=  E^D(z_1, w_2)\delta(z_2-w_1) -E^D(z_2, w_1)\delta(z_1-w_2) \\
&\quad +E^D(w_2,w_1)\delta(z_1-z_2) -E^D(z_1,z_2)\delta(w_1-w_2) \\
&\quad +\iota_{z_1,w_2}\left(\frac{1}{z_1-w_2}\right)\iota_{w_1,z_2}\left(\frac{1}{w_1-z_2}\right) c -\iota_{z_2,w_1}\left(\frac{1}{z_2-w_1}\right)\iota_{w_2,z_1}\left(\frac{1}{w_2-z_1}\right)c \\
&\quad -\iota_{z_1,z_2}\left(\frac{1}{z_1-z_2}\right)\iota_{w_1,w_2}\left(\frac{1}{w_1-w_2}\right) c +\iota_{z_2,z_1}\left(\frac{1}{z_2-z_1}\right)\iota_{w_2,w_1}\left(\frac{1}{w_2-w_1}\right) c.
\end{align*}
\end{proof}
\begin{prop}
Recall the field $\phi ^D (z)$ on the highest weight module $\mathit{F_D}$ from Example \ref{example:D}
The assignment $E(z, w)\to :\phi^D(z)\phi^D (w):$, \ $c\to Id_{\mathit{F_D}}$ gives a representation of the Lie algebra $d_{\infty}$.
\end{prop}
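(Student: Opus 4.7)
The plan is to mirror the proof given for $c_\infty$, with the main adjustments being the use of the contraction $\lfloor \phi^D \phi^D \rfloor = i_{z,w}\frac{1}{z-w}$ in place of $i_{z,w}\frac{1}{z+w}$ and careful sign bookkeeping since $\phi^D$ is odd. First I would verify the hypotheses of Wick's Theorem for $\phi^D(z)$. From $\{\phi^D_m,\phi^D_n\}=\delta_{m,-n}\cdot 1$ and the indexing $\phi^D(z)=\sum_{n\in\mathbb Z+1/2}\phi^D_n z^{-n-1/2}$, we have $\{\phi^D(z)_+,\phi^D(w)_+\}=0=\{\phi^D(z)_-,\phi^D(w)_-\}$, because whenever both indices are $\geq 1/2$ or both $\leq -1/2$ the condition $m=-n$ fails; this verifies condition (2). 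The contraction $\lfloor \phi^D\phi^D\rfloor$ is a scalar, hence central, verifying condition (1). Condition (3) holds because $\phi^D(z)$ is a field.

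Second, I would apply Wick's Theorem to expand
\begin{align*}
&:\!\phi^D(z_1)\phi^D(w_1)\!:\;:\!\phi^D(z_2)\phi^D(w_2)\!: \\
&\qquad =\; :\!\phi^D(z_1)\phi^D(w_1)\phi^D(z_2)\phi^D(w_2)\!:\; +\; (\text{signed single contractions})\; +\; (\text{signed double contractions}),
\end{align*}
where each single contraction pairs one field from the first factor with one from the second and multiplies by $:\phi^D\phi^D:$ of the remaining two, and each double contraction is a pure scalar. Since each transposition of adjacent odd fields introduces a sign, the bookkeeping of $\pm$ follows the same rule as in the proof for $c_\infty$ but with the opposite parity contribution at each swap. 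Forming the commutator $[:\phi^D(z_1)\phi^D(w_1)\!:,\;:\phi^D(z_2)\phi^D(w_2)\!:]$ (an ordinary commutator since $:\phi^D\phi^D:$ is even), the quadruple normal ordered product is symmetric under $(z_1,w_1)\leftrightarrow(z_2,w_2)$ and therefore cancels.

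Third, I would identify the surviving terms with the right-hand side of Lemma \ref{lem:generseriesD}. Using the key identity $i_{a,b}\frac{1}{a-b}-i_{b,a}\frac{1}{a-b}=\delta(a-b)$, the differences of single contractions collapse to the four delta-function terms
\begin{align*}
&\delta(z_2-w_1):\!\phi^D(z_1)\phi^D(w_2)\!: \;-\; \delta(z_1-w_2):\!\phi^D(z_2)\phi^D(w_1)\!: \\
&\qquad +\; \delta(z_1-z_2):\!\phi^D(w_2)\phi^D(w_1)\!: \;-\; \delta(w_1-w_2):\!\phi^D(z_1)\phi^D(z_2)\!:,
\end{align*}
matching precisely the first four terms of Lemma \ref{lem:generseriesD} under $E^D(z,w)\mapsto :\phi^D(z)\phi^D(w)\!:$. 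The differences of double contractions yield the four rational products of the form $\iota_{z_1,w_2}\tfrac{1}{z_1-w_2}\iota_{w_1,z_2}\tfrac{1}{w_1-z_2}$, which are exactly the central terms in Lemma \ref{lem:generseriesD}, so the central element $c$ acts as $\mathrm{Id}_{F_D}$. The antisymmetry $:\!\phi^D(z)\phi^D(w)\!:\,=-:\!\phi^D(w)\phi^D(z)\!:$, corresponding to $E^D(z,w)=-E^D(w,z)$, follows from $\{\phi^D(z),\phi^D(w)\}=\delta(z-w)$ together with a short computation of the type performed at the start of the $b_\infty$ proof.

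The main obstacle will be the sign tracking in Wick's Theorem for odd fields: unlike the $c_\infty$ case, each pairing of adjacent fields across the two normal ordered products requires permuting through an odd number of fermions, and one must verify that the resulting signs combine with the symmetry under $(z_1,w_1)\leftrightarrow(z_2,w_2)$ to produce exactly the signature dictated by Lemma \ref{lem:generseriesD}. Once this is settled, the matching of terms is mechanical.
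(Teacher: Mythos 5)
Your proposal follows exactly the paper's route: the paper also invokes Wick's Theorem with contraction $\lfloor \phi^D\phi^D\rfloor = i_{z,w}\frac{1}{z-w}$, expands the product of the two normal ordered bilinears, and matches the resulting single- and double-contraction terms to the four delta-function terms and four central terms of Lemma~\ref{lem:generseriesD}. The only difference is that you spell out the verification of the Wick hypotheses (vanishing anticommutators of the $\pm$ parts, scalarity of the contraction, the quadruple term cancelling in the commutator) which the paper compresses into ``similarly to the $c_\infty$ case.''
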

\begin{proof}
We will use Wick's theorem to prove this lemma. The conditions of Wick's theorem are  satisfied (similarly to the case of $c_{\infty}$ case),  and
 we have
\begin{align*}
:\phi ^D(z_1)& \phi ^D(w_1): :\phi ^D(z_2)\phi ^D(w_2):  \\
&=-\lfloor \phi ^D(z_1)\phi ^D(z_2)\rfloor:\phi ^D(w_1)\phi ^D(w_2): +\lfloor \phi ^D(z_1)\phi ^D(w_2)\rfloor:\phi ^D(w_1)\phi ^D(z_2): \\
&\quad+\lfloor \phi ^D(w_1)\phi ^D(z_2)\rfloor:\phi ^D(z_1)\phi ^D(w_2):-\lfloor \phi ^D(w_1)\phi ^D(w_2)\rfloor:\phi ^D(z_1)\phi ^D(z_2): \\
&\quad -\lfloor\phi ^D(z_1)\phi ^D(z_2)\rfloor\lfloor \phi ^D(w_1)\phi ^D(w_2)\rfloor +\lfloor\phi ^D(z_1)\phi ^D(w_2)\rfloor\lfloor \phi ^D(w_1)\phi ^D(z_2)\rfloor
\end{align*}
and hence
\begin{align*}
[:& \phi ^D(z_1)\phi ^D(w_1):,:\phi ^D(z_2)\phi ^D(w_2)] \\
&=-\delta(z_1-z_2):\phi ^D(w_1)\phi ^D(w_2): +\delta(z_1-w_2):\phi ^D(w_1)\phi ^D(z_2): \\
&\quad+\delta(w_1-z_2):\phi ^D(z_1)\phi ^D(w_2):-\delta(w_1-w_2):\phi ^D(z_1)\phi ^D(z_2):  \\
&\quad -\iota_{z_1,z_2}\left(\frac{1}{z_1-z_2}\right)\iota_{w_1,w_2}\left(\frac{1}{w_1-w_2}\right) +\iota_{z_1,w_2}\left(\frac{1}{z_1-w_2}\right)\iota_{w_1,z_2}\left(\frac{1}{w_1-z_2}\right)  \\
&\quad +\iota_{z_2,z_1}\left(\frac{1}{z_2-z_1}\right)\iota_{w_2,w_1}\left(\frac{1}{w_2-w_1}\right) -\iota_{z_2,w_1}\left(\frac{1}{z_2-w_1}\right)\iota_{w_2,z_1}\left(\frac{1}{w_2-z_1}\right).
\end{align*}
\end{proof}

\section{Twisted Vertex Algebras}
\label{sectio:tva}

In this section we finally answer the following question: if we start with a given number of generating fields $a^i(z)$ (see a precise definition at \ref{defn:GenerFields}) , which are local at several but finitely many points $\lambda_1, \lambda_2, \dots, \lambda_N$ (also satisfying  some auxiliary conditions), and we allow the operations of differentiation ($\partial_z a^i(z)$), substitutions at the locality points ($a(\lambda _j z)$), normal ordered products ($:a^i (\lambda_k z)a_j (\lambda _l z):$) and OPE coefficients on all those fields and their descendants; and we further assume that all those descendants are still only local at $\lambda_1, \lambda_2, \dots, \lambda_N$; then what is the resulting structure that we get? The first observation is, as we proved in \secref{section:NormalOrdProd}, that in order for the descendants to stay local at {\bf finitely} many points, then the points of locality have to form a multiplicative group. The  only finite multiplicative subgroups of the complex plane $\mathbb{C}$ are the roots of unity cyclic groups. Hence if we want to have the {\bf total number of points of locality stay finite}, the only possible choice for the $N$ points of locality are the $N$-th roots of unity.  Then the answer to our ``generating descendants" problem is: the fields $a^i(z)$ will generate  a twisted vertex algebra. Twisted vertex algebras of general order $N$ were defined in \cite{AngTVA} (of order 2 in \cite{Ang-Varna2}) in order to answer the question: what are the  boson-fermion correspondences of type B, C and D-A? These correspondences  must be isomorphisms between some type of structures, and we wanted to understand those structures. Although the boson-fermion correspondence of type A is an isomorphism of super vertex algebras (which are 1-point local), the boson-fermion correspondences of type B, C and D-A  are  isomorphisms of twisted vertex algebras (which are at least 2-point local). Before giving the definition of a twisted vertex algebra, we want to mention one other very important difference between super vertex algebras and twisted vertex algebras, and it is that the space of fields in a twisted vertex algebra is larger than the space of states (unlike in super vertex algebras where there is a state-field bijection); and by definition there is a strict surjective projection from the space of fields to the space of states in a twisted vertex algebra. In this, twisted vertex algebras resemble the concept of a deformed chiral algebras from \cite{FR}, and differ from the concept of a $\Gamma$-vertex algebra of \cite{Li2}. The $\Gamma$-vertex algebras were introduced in  \cite{Li2}, and  are indeed multilocal, but  the number of fields included in a $\Gamma$-vertex algebra is not enough to describe the examples of the boson-fermion correspondences for which the concept of a twisted vertex algebra was introduced. Namely, there are  fields which are indispensable to the examples, like the Heisenberg-producing  fields $:\phi ^B(z)\phi ^B(-z):$ and $:\phi ^D(z)\phi ^D(-z):$ from \secref{section: examples}, which cannot be incorporated in a $\Gamma$-vertex algebra of \cite{Li2} (see \secref{section:appendix} for a proof). This is  mainly due to the fact that such nontrivial fields like $:\phi ^D(z)\phi ^D(-z):$  are associated to elements with projection 0, see \cite{AngTVA}. With that said, in this section we will  recall the definition of twisted vertex algebra from \cite{AngTVA}.

We will work with the category of super vector spaces, i.e., $\mathbb  {Z}_{2}$ graded vector spaces. The flip map $\tau $  is defined by
\begin{equation}
\label{eq:flip}
 \tau (a\ten b) =(-1)^{\tilde{a} \cdot \tilde{b}} (b\ten a)
\end{equation}
for any homogeneous elements $a, b$ in the super vector space, where $\tilde{a}$ or $p(a)$, $\tilde{b}$ or $p(b)$  denote correspondingly  the parity of $a$, $b$.
\begin{defn}\begin{bf}(The Hopf algebra $H_D=\mathbb{C} [D]$)\end{bf}\\
The Hopf algebra $H_D=\mathbb{C} [D]$ is the universal enveloping algebra of the commutative one-dimensional Lie algebra with \emph{even} generator $D$, i.e., the polynomial algebra with a primitive generator $D$.
We have
\begin{equation}
\label{eq:D^n}
\del D^{(n)}=\sum _{k+l=n} D^{(k)}\ten D^{(l)}.
\end{equation}
\end{defn}
\begin{defn}\begin{bf}(The Hopf algebra $H^N_{T_{\epsilon}}$)\end{bf}\\
Let $\epsilon$ be  a primitive root of unity of order $N$.
The Hopf algebra $H^N_{T_{\epsilon}}$ is the Hopf algebra with  a primitive generator $D$ and a grouplike generator $T_{\epsilon}$ subject to the  relations:
\begin{equation}
DT_{\epsilon }=\epsilon T_{\epsilon } D, \quad \text{and} \ (T_{\epsilon })^n=1
\end{equation}
$H^N_{T_{\epsilon}}$ has $H_D$ as a Hopf subalgebra. Both $H_D$ and $H^N_{T_{\epsilon}}$ are  entirely even.
\end{defn}
\color{black}
Recall from \secref{section:NormalOrdProd} the space  $\mathbf{F}^N_{\epsilon}(z, w)$-- the space of rational functions in the  formal variables $z, w$ with only poles at $z=0, w=0, \ z= \epsilon^i w$, $i=1, \dots , N$; and  $\mathbf{F}^N_{\epsilon}(z, w)^{+, w}$-- the space of rational functions in the formal variables $z, w$ with only poles at $z=0,  \ z= \epsilon^i w$, $i=1, \dots , N$ (no  pole at $w=0$ in $\mathbf{F}^N_{\epsilon}(z, w)^{+, w}$).
Similarly, $\mathbf{F}^N_{\epsilon}(z_1, z_2, \dots , z_l)$ was the space of rational functions in variables $z_1, z_2, \dots, z_l$ with only poles at $z_m=0$, $m=1,\dots, l$, or at $z_j= \epsilon^{i_j} z_k$, for fixed $k$ and $j=1,\dots, l$, $i_j=1, \dots , N$.
 If $N$ is clear from the context, or the property doesn't depend on the particular value of $N$, we will just write  $\mathbf{F}_{\epsilon}(z, w)$ or $\mathbf{F}_{\epsilon}(z, w)^{+, w}$.

\begin{prop}
$\mathbf{F}^N_{\epsilon}(z, w)$  (and $\mathbf{F}_{\epsilon}(z, w)^{+, w}$) are  $H^N_{T_{\epsilon}}\ten H^N_{T_{\epsilon}}$ (and consequently an $H_D\ten H_D$)  Hopf modules by
\begin{align}
&D_z f(z, w)=\partial_z f(z, w), \quad (T_{\epsilon}) _{z} f(z, w)=f(\epsilon z, w) \\
&D_w f(z, w)=\partial_w f(z, w), \quad (T_{\epsilon}) _{w} f(z, w)=f(z, \epsilon  w)
\end{align}
We will denote the action of elements $h\ten 1 \in H^N_{T_{\epsilon}}\ten H^N_{T_{\epsilon}}$  on $\mathbf{F}_{\epsilon}(z, w)$ (and $\mathbf{F}_{\epsilon}(z, w)^{+, w}$) by $h_z\cdot$, correspondingly $h_w\cdot$ will denote the action of the elements $1\ten h \in H^N_{T_{\epsilon}}\ten H^N_{T_{\epsilon}}$.
\end{prop}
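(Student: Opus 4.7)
The plan is to verify by direct computation that the four operators $D_z, D_w, (T_\epsilon)_z, (T_\epsilon)_w$ preserve each of the spaces $\mathbf{F}^N_\epsilon(z,w)$ and $\mathbf{F}^N_\epsilon(z,w)^{+,w}$, and that they satisfy the defining relations of the Hopf algebra $H^N_{T_\epsilon}\otimes H^N_{T_\epsilon}$. Since the action is built by tensoring the two one-variable actions, I would first handle each variable separately and then observe that the two commute.

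First I would show the spaces are preserved. The space $\mathbf{F}^N_\epsilon(z,w)$ is, by definition, the localization of $\mathbb{C}[z,w]$ at the multiplicative set generated by $z$, $w$, and $z-\epsilon^iw$ for $i=0,\dots,N-1$. The partial derivatives $\partial_z$ and $\partial_w$ preserve this space since they can only raise the multiplicity of existing poles (applying $\partial_z$ to $(z-\epsilon^iw)^{-k}$ produces $-k(z-\epsilon^iw)^{-k-1}$, etc.). For the shift operator $(T_\epsilon)_z$ the essential observation is that substituting $z\mapsto \epsilon z$ sends the linear factor $z-\epsilon^iw$ to $\epsilon z-\epsilon^iw=\epsilon(z-\epsilon^{i-1}w)$. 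Since multiplication by $\epsilon$ permutes $\{1,\epsilon,\dots,\epsilon^{N-1}\}$, the set of allowed poles is preserved up to units. The same reasoning works for $(T_\epsilon)_w$. For the subspace $\mathbf{F}^N_\epsilon(z,w)^{+,w}$ one additionally observes that neither differentiation nor the shifts can introduce a pole at $w=0$ where none existed.

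Next I would check the algebra relations in a single variable. The identity $(T_\epsilon)_z^N f(z,w)=f(\epsilon^Nz,w)=f(z,w)$ is immediate from $\epsilon^N=1$. For the commutation relation, applying the operators to $f(z,w)$ gives
\begin{equation*}
D_z(T_\epsilon)_z f(z,w)=\partial_z\bigl(f(\epsilon z,w)\bigr)=\epsilon\,(\partial_z f)(\epsilon z,w)=\epsilon\,(T_\epsilon)_z D_z f(z,w),
\end{equation*}
which is exactly $D_zT_{\epsilon,z}=\epsilon\,T_{\epsilon,z}D_z$. The identical verification works in the $w$-variable. Finally, the $z$-operators commute with the $w$-operators because mixed partial derivatives commute and the two substitutions act on independent variables, which promotes the two one-variable actions to an action of the tensor product $H^N_{T_\epsilon}\otimes H^N_{T_\epsilon}$.

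There is no genuine obstacle in this proposition; it is a direct unpacking of the definitions. The only subtle point to flag is the preservation of the pole structure under $T_\epsilon$, which rests entirely on the fact that the points of locality form a cyclic group, so that the substitution $z\mapsto \epsilon z$ merely permutes them. The restriction to the subspace $\mathbf{F}^N_\epsilon(z,w)^{+,w}$ requires one additional sentence noting that none of the four operators can create a pole at $w=0$. Once these observations are in place the proof consists only of the three short computations above.
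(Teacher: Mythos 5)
Your proof is correct and is the direct verification the paper evidently intends; in fact the paper states this proposition without any accompanying proof, treating it as a routine unpacking of the definitions, which is exactly what you have done. The only point worth adding is a small terminological one: the phrase ``Hopf module'' here is being used in the loose sense of ``module over the Hopf algebra $H^N_{T_\epsilon}\otimes H^N_{T_\epsilon}$ compatible with the ring structure of $\mathbf{F}^N_\epsilon(z,w)$''; you have verified the module axioms (preservation of the spaces, the relation $D_z(T_\epsilon)_z=\epsilon\,(T_\epsilon)_zD_z$, the order-$N$ relation, and commutativity of the $z$- and $w$-actions), and for completeness one could also note the compatibility with products, namely that $D_z$ acts as a derivation ($D$ is primitive, $\Delta D=D\otimes 1+1\otimes D$) and $(T_\epsilon)_z$ as an algebra automorphism ($T_\epsilon$ is grouplike, $\Delta T_\epsilon=T_\epsilon\otimes T_\epsilon$), both of which are immediate from the Leibniz rule and the multiplicativity of substitution. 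Your observation that the preservation of the pole structure under $T_\epsilon$ rests on the cyclic group structure of the locality points is precisely the nontrivial content, and you have isolated it correctly.
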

\color{black}
\begin{defn}\label{defn:twistedVA} \begin{bf}(Twisted vertex algebra of order $N$)\end{bf}
A twisted vertex algebra of order $N$ is a collection of  the following data $(V, W, \pi_{f}, Y)$:
\begin{itemize}
\item  a super vector space $V$, which is an $H^N_{T_{\epsilon}}$ module,  graded as an  $H_D$-module, called the space of fields;
\item  a super vector space $W\subset V$, called the space of states;
\item a linear surjective projection map $\pi_{f}:V\to W$, such that $\pi_{f}\arrowvert _W=Id_W$
\item a linear map $Y$ from $V$ to the space of fields on $W$, called  the vertex map;
\item  a distinguished vector, called the vacuum vector, denoted $1=|0\rangle \in W\subset V$.
\end{itemize}
Satisfying the following set of axioms:
\begin{itemize}
\item vacuum axiom: \ \ $Y(1, z)=Id_W$;
\item modified creation axiom: \ \ $Y(a, z)|0 \rangle \arrowvert _{z=0}=\pi_f(a)$, for any $a\in V$;
\item transfer of action:\ \  $Y(ha,z)=h_z\cdot Y(a, z)$ for any $h\in H^N_{T_{\epsilon}}$;
\item analytic continuation: For any $a, b, c\in V$ exists \\ $X_{z, w, 0}(a\ten b\ten c)\in W[\!]z, w]\!]\ten \mathbf{F}_{\epsilon}(z, w)$ such that
\begin{equation*}
Y(a, z)Y(b, w)\pi_{f}(c)=i_{z, w} X_{z, w,  0}(a\ten b\ten  c)
\end{equation*}
\item symmetry: $X_{z, w, 0}(a\ten b\ten c)=X_{w, z, 0}(\tau (a\ten b)\ten c)$ where $\tau:V\otimes V\to V$ is defined by $\tau(a\otimes b)=(-1)^{p(a)p(b)}b\otimes a$ for $a$ and $b$ homogeneous;
\item Completeness with respect to Operator Product Expansions (OPE's): For each $i\in 0, 1, \dots, N-1$, \ $k\in \mathbb{Z}$, any $a, b, c\in V$,  $a, b$--homogeneous with respect to the grading by $D$, there exist $l_k\in \mathbb{Z}$ such that
\begin{equation}
Res_{z=\epsilon ^i w}X_{z, w, 0}(a\ten b\ten c)(z-\epsilon ^i w)^k =\sum_s^{\text{finite}} w^{l^s_{k,i}}  Y(v^s_{k, i}, w)\pi_{f}(c)
\end{equation}
for some \ homogeneous \ elements  $v^s_{k, i}\in V, \ \ l^s_{k,i}\in \mathbb{Z}$.
\end{itemize}
\end{defn}
\begin{remark} If $V$ is an (ordinary) super vertex algebra, then the data $(V, V, \pi_{f}=Id_V, Y)$ is a twisted vertex algebra of order 1.
\end{remark}
\begin{remark} Any vertex algebra, be it super, twisted or quantum,  is first a  collection of fields, as all these objects may be used to model various chiral quantum field theories. In  an (ordinary) super vertex algebra, the vertex map $Y$ is a bijection from  the vector space of states $W$ to  the specific collection of  fields on $W$ (which collection in fact constitutes the vertex algebra). Thus we can speak interchangeably about the (ordinary) super vertex algebra as both the collection of fields $Y(v, z), \   v\in W$, constituting the super vertex algebra,  and the vector space $W$.  For twisted vertex algebras this is not the case: the collection of fields $V$ that constitutes the twisted vertex algebra is usually strictly larger that the space of states $W$ on which those fields act. Thus, unlike for the ordinary super vertex algebras, the map $Y$ is a map from the super space $V$ (which really is the collection of fields constituting the twisted vertex algebra) to another, bijective, collection of fields. The reason for the necessity of such a map $Y$, is that the OPE coefficients are always fields in the collection $V$, but are often not quite vertex operators, they need a ``shift", see the next remark. In this paper the space of fields $V$ is really just an abstract vector space (given by the collection of fields), which is given the structure of a vector super space by the parity of each field, and is  an $H^N_{T_{\epsilon}}$ module in the obvious way. In \cite{AngTVA} the spaces of fields $V$ for all the examples are given additional structure, in fact  even a Hopf algebra structure.
\end{remark}
\begin{remark} (\textbf{Shift restriction})
\label{remark:shift}
The axiom/property  requiring completeness with respect to the Operator Product Expansions (OPE's) is a weaker one than in the classical vertex algebra case. We can express this weaker axiom as follows.  Any field $v(w)$  is characterized by first, the doubly infinite sequence of the operators representing its modes (see remark \ref{remark:fieldindexing}); and second, by the indexing of that doubly infinite sequence. One can shift the indexing of  this  sequence (i.e, place the 0 index at  different modes), and each shift in the indexing of the sequence of modes  corresponds to a multiplication by an integer  power of $w$ of the field $v(w)$.
For example,  the OPE coefficient in \eqref{eqn:OPE-B} of example \ref{example:B} is $-2w=-2wId_{\mathit{F_B}}$, (recall $\phi^B(z)\phi^B (w)\sim \frac{-2w}{z+w}$). As a doubly-infinite sequence of modes, this is the sequence
\[
(\dots, 0, \dots, 0, 0\arrowvert_{0\ \text{index}}, -2, 0, 0, 0, \dots ).
\]
The distribution  $-2w$ is a field, but is not a vertex operator in any vertex algebra (it cannot be a vertex operator as indexed for many well known reasons,  for example if it was a vertex operator it  would contradict creation and the $D$-invariance). But if we shift the place of the 0-index,  to
\[
(\dots, 0, \dots, 0, 0, -2\arrowvert_{0\ \text{index}}, 0, 0, 0, \dots ),
\]
this now represents  the field $-2=-2Id_{\mathit{F_B}}$, which is the vertex operator assigned to the element $-2 |0\rangle$. In other words, if we disregard the indexing place, but just look at the doubly-infinite sequence, the sequence of modes
\[
(\dots, 0, \dots, 0, 0, -2, 0, 0, 0, \dots )
\]
is represented in the vertex algebra. The changing of the indexing place corresponds to multiplying the original field $-2w=-2wId_{\mathit{F_B}}$ by a factor of $w^{-1}$ in this case. This is then the role of the vertex map $Y$: in this case the map $Y$ maps the field $-2wId_W\in V$ to the vertex operator $Y(-2 |0\rangle, w)=-2Id_W$ (here $W=\mathit{F_B}$). The shift restriction that the completeness with respect to OPE's axiom implies is that OPE's such as
$a(z)a(w)\sim \frac{-2w+w^2}{z+w}$, for example, are not allowed, as we wouldn't be able to re-index the sequence of modes. In any physical model of quantum field theory in two dimensions it is very important that we are able to recover the doubly-infinite sequence of operators corresponding to different ``excitations". That is what the modified completeness with respect to OPE's ensures: that if we consider a product of fields (``excitation") we can {\bf recover} the vertex operators to which the  doubly infinite sequence of this products of fields corresponds to.
In an (ordinary)  super vertex algebra we have a stronger property, there the products of fields automatically  are vertex operators  in the same vertex algebra. This stronger property cannot hold in the certain interesting examples, as one clearly sees in Example \ref{example:B}.

  The ``modified completeness with respect to  OPE's" axiom is in some sense the weakest requirement one can impose, as in a physical theory on the one hand one needs to ``close the algebra", but on the other hand it is the sequence of modes that is important, not so much its indexing.   The axiom then requires that the sequence of modes is ``in" the twisted vertex algebra, albeit after potential reindexing. The vertex map $Y$ is what accomplishes this reindexing.
\end{remark}
\begin{lem}\label{lem:AnalContToLocal} {\bf (Analytic continuation and symmetry imply $N$-point locality)}\\
Let $(V, W, \pi_{f}, Y)$ be a twisted vertex algebra of order $N$. Then for any $a, b, c\in V$ there exists $M$ such that
\begin{equation}
(z^N-w^N)^M \big( Y(a, z)Y(b, w) -(-1)^{p(a)p(b)}Y(b, w)Y(a, z)\big) \pi_{f} (c) =0,
\end{equation}
i.e., the fields $Y(a, z)$, \ $Y(b, w)$ are $N$-point local with points of \ locality \ $1, \epsilon, \dots, \epsilon ^{N-1}$, where $\epsilon$ is a primitive $N$-th root of unity.
\end{lem}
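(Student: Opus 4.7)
The plan is to reduce the locality claim to two applications of the analytic continuation axiom together with the symmetry axiom, and then exploit the fact that multiplication by $(z^N-w^N)^M$ clears precisely the poles that can arise in $X_{z,w,0}$.

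First I would apply the analytic continuation axiom to the product in the given order: there exists
\[
X_{z,w,0}(a\otimes b\otimes c)\in W[\![z,w]\!]\otimes \mathbf{F}_{\epsilon}(z,w)
\]
with $Y(a,z)Y(b,w)\pi_f(c)=i_{z,w}X_{z,w,0}(a\otimes b\otimes c)$. Next I would apply analytic continuation to the product in the opposite order, obtaining
\[
Y(b,w)Y(a,z)\pi_f(c)=i_{w,z}X_{w,z,0}(b\otimes a\otimes c).
\]
The symmetry axiom, applied to the triple $b\otimes a\otimes c$ (with the flip $\tau(b\otimes a)=(-1)^{p(a)p(b)}a\otimes b$), rewrites the right hand side as $(-1)^{p(a)p(b)}i_{w,z}X_{z,w,0}(a\otimes b\otimes c)$. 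Subtracting, the supercommutator takes the compact form
\[
\bigl(Y(a,z)Y(b,w)-(-1)^{p(a)p(b)}Y(b,w)Y(a,z)\bigr)\pi_f(c)=(i_{z,w}-i_{w,z})X_{z,w,0}(a\otimes b\otimes c).
\]

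Second, I would analyze the poles of $X_{z,w,0}(a\otimes b\otimes c)$. By definition of $\mathbf{F}^N_{\epsilon}(z,w)$, the rational part of this expression has potential poles only along $z=0$, $w=0$, and $z=\epsilon^i w$ for $0\le i\le N-1$, each of some finite order depending on $a,b,c$. I would pick $M$ large enough that $(z^N-w^N)^M=\prod_{i=0}^{N-1}(z-\epsilon^i w)^M$ annihilates all the $(z-\epsilon^i w)$-type poles simultaneously. After this multiplication, the product $(z^N-w^N)^M X_{z,w,0}(a\otimes b\otimes c)$ lies in $W[\![z,w]\!]\otimes\mathbb{C}[z,w,z^{-1},w^{-1}]$, i.e.\ it is a Laurent polynomial in $z,w$ (times a formal series).

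Finally, I would observe that on any element of $\mathbb{C}[z,w,z^{-1},w^{-1}]$ the two expansion maps $i_{z,w}$ and $i_{w,z}$ agree, since the difference between them is supported on rational functions with $(z-\lambda w)$-factors in the denominator (as made explicit in the example computation of $\delta(z-\lambda w)=i_{z,w}\tfrac{1}{z-\lambda w}-i_{w,z}\tfrac{1}{z-\lambda w}$ and its derivatives). Hence multiplying the displayed supercommutator identity by $(z^N-w^N)^M$ and using that $(z^N-w^N)^M$ is a polynomial so commutes with both expansion maps, we get
\[
(z^N-w^N)^M\bigl(Y(a,z)Y(b,w)-(-1)^{p(a)p(b)}Y(b,w)Y(a,z)\bigr)\pi_f(c)=0,
\]
which is the claimed $N$-point locality at the points $1,\epsilon,\dots,\epsilon^{N-1}$.

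The main potential obstacle is bookkeeping: one must check that the ``$W[\![z,w]\!]$'' factor does not hide additional singularities that would prevent $(z^N-w^N)^M$ from clearing all $(z-\epsilon^i w)$-poles with a single finite $M$, and that the commutation of the polynomial $(z^N-w^N)^M$ past the expansion maps $i_{z,w}$ and $i_{w,z}$ is legitimate. Both issues are handled by the structure of $\mathbf{F}^N_{\epsilon}(z,w)$ (each element has finitely many poles of bounded order) and by the elementary fact that $i_{z,w}$ and $i_{w,z}$ coincide on $\mathbb{C}[z^{\pm 1},w^{\pm 1}]$; once these two observations are in place the proof is essentially formal.
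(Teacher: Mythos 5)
Your proof is correct and follows essentially the same approach as the paper. The only cosmetic difference is that you first assemble the two analytic continuations into the supercommutator $(i_{z,w}-i_{w,z})X_{z,w,0}(a\otimes b\otimes c)$ and then clear the poles, whereas the paper first clears the poles from $X_{z,w,0}$, observes that the two expansions then coincide, and identifies the two sides of the locality equation directly using the symmetry axiom; these are the same argument presented in a slightly different order.
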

\begin{proof}
Since
\[
X_{z,w,0}(a\otimes b\otimes c)\in W[\!]z, w]\!]\ten \mathbf{F}_{\epsilon}(z, w) =W[\!]z, w ]\!][z^{-1}, w^{-1}, (z-w)^{-1},\dots, (z-\epsilon^{N-1}w)^{-1}],
\]
then exists $M$ such that
\[
(z^N-w^N)^M X_{z,w,0}(a\otimes b\otimes c)\in W[\!]z, w]\!][z^{-1}, w^{-1}],
\]
as can be obtained by multiplying by a common denominator. Then we can write
\[
\iota _{z, w}\big((z^N-w^N)^M X_{z,w,0}(a\otimes b\otimes c)\big) = \iota _{w, z}\big((z^N-w^N)^M X_{z,w,0}(a\otimes b\otimes c)\big),
\]
since in $W[\!]z, w]\!][z^{-1}, w^{-1}]$ the two expansions do not differ. But then
\begin{align*}
\iota _{z, w}(z^N-w^N)^M \iota _{z, w} X_{z,w,0}(a\otimes b\otimes c)& = \iota _{w, z}(z^N-w^N)^M \iota _{w, z} X_{z,w,0}(a\otimes b\otimes c)\\
& = \iota _{w, z}(z^N-w^N)^M \iota _{w, z} X_{w, z, 0}(\tau(a\ten b)\otimes c),
\end{align*}
where we used the symmetry in the last equality. Thus we have
\begin{equation*}
\iota _{z, w}(z^N-w^N)^M Y(a, z)Y(b, w) \pi_{f} (c) = \iota _{w, z}(z^N-w^N)^M (-1)^{p(a)p(b)}Y(b, w)Y(a, z) \pi_{f} (c)
\end{equation*}
\end{proof}
In fact, the converse lemma is also true:
\begin{lem} \label{lem:localityToanalcont}{\bf ($N$-point locality implies analytic continuation and symmetry)}\\
Let $a(z), \ b(w)$ be fields on a vector space $W$ which are $N$-point local with points of \ locality $1, \epsilon, \dots, \epsilon ^{N-1}$, where $\epsilon$ is a primitive $N$th root of unity. Then there  exists $X_{z, w, 0}: V\ten V\ten V \to W[\!]z, w]\!]\ten \mathbf{F}_{\epsilon}(z, w)$ such that
\begin{equation*}
a(z)b(w)\pi_f(c)=\iota _{z, w} X_{z, w,  0}(a\ten b\ten  c);
\end{equation*}
and  $X_{z, w, 0}(a\ten b\ten c)=X_{w, z, 0}(\tau (a\ten b)\ten c)$. Moreover, $X_{z, w, 0}(a\ten b\ten c)$ is the unique such element in $W[\!]z, w]\!]\ten \mathbf{F}_{\epsilon}(z, w)$ for any $a, b\in V$, \ $\pi_f(c)\in W$.
\end{lem}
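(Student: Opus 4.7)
The plan is to produce $X_{z,w,0}(a\otimes b\otimes c)$ by clearing the singular denominators dictated by $N$-point locality and then formally dividing back. First I would invoke the hypothesis to choose $M\in\mathbb{N}$ with
\begin{equation*}
(z^N - w^N)^M \bigl(a(z)b(w) - (-1)^{p(a)p(b)} b(w) a(z)\bigr) \pi_{f}(c) = 0,
\end{equation*}
using the factorization $z^N - w^N = \prod_{k=0}^{N-1}(z - \epsilon^k w)$. Because $a(z)$ and $b(w)$ are fields on $W$, the iterated products $a(z)b(w)\pi_{f}(c)$ and $b(w)a(z)\pi_{f}(c)$ lie respectively in the iterated Laurent-series spaces $W[\![z]\!][z^{-1}][\![w]\!][w^{-1}]$ and $W[\![w]\!][w^{-1}][\![z]\!][z^{-1}]$. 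The locality identity shows that
\begin{equation*}
G(z,w) := (z^N - w^N)^M\, a(z)b(w)\pi_{f}(c) = (-1)^{p(a)p(b)}(z^N-w^N)^M\, b(w)a(z)\pi_{f}(c)
\end{equation*}
lies in the intersection of these two spaces, which by a direct bidegree comparison equals $W[\![z,w]\!][z^{-1},w^{-1}]$ (finitely many negative powers in each variable, uniformly).

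Next I would set
\begin{equation*}
X_{z,w,0}(a \otimes b \otimes c) := \frac{G(z,w)}{(z^N - w^N)^M} \;\in\; W[\![z,w]\!] \otimes \mathbf{F}^N_{\epsilon}(z,w);
\end{equation*}
this makes sense because $G$ is a Laurent polynomial in $z,w$ over $W[\![z,w]\!]$ and $(z^N-w^N)^{-M}$ lies in $\mathbf{F}^N_{\epsilon}(z,w)$. Applying $\iota_{z,w}$ recovers $a(z)b(w)\pi_{f}(c)$ via the identity $(z^N-w^N)^M\cdot\iota_{z,w}\bigl[(z^N-w^N)^{-M}\bigr] = 1$; applying $\iota_{w,z}$ to the same element recovers $(-1)^{p(a)p(b)}b(w)a(z)\pi_{f}(c)$ by the analogous calculation applied to the second form of $G$.

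For symmetry, running the same construction with $\tau(a\otimes b) = (-1)^{p(a)p(b)}(b\otimes a)$ and the variables swapped produces
\begin{equation*}
X_{w,z,0}(\tau(a\otimes b)\otimes c) = (-1)^{p(a)p(b)}\,\frac{(w^N-z^N)^M\, b(w)a(z)\pi_{f}(c)}{(w^N-z^N)^M};
\end{equation*}
combining the identity $(w^N-z^N)^M = (-1)^{NM}(z^N-w^N)^M$ with the locality equation for $G$ shows that this coincides with $X_{z,w,0}(a\otimes b\otimes c)$ as an element of $W[\![z,w]\!]\otimes\mathbf{F}^N_{\epsilon}(z,w)$. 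Uniqueness is then a consequence of injectivity of the expansion map $\iota_{z,w}$ on $W[\![z,w]\!]\otimes\mathbf{F}^N_{\epsilon}(z,w)$: any two candidates with the same $\iota_{z,w}$-expansion differ by an element whose iterated Laurent expansion vanishes, which forces the rational function itself to vanish.

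The main technical step will be the first one: rigorously identifying the intersection of the two iterated Laurent-series spaces with $W[\![z,w]\!][z^{-1},w^{-1}]$. This is the $N$-point analogue of the standard ``simultaneous Laurent polynomial'' step in the classical vertex-algebra proof, and the only new feature is that the single pole $(z-w)^{-1}$ is replaced by the product $\prod_k(z-\epsilon^k w)^{-1} = (z^N-w^N)^{-1}$, which is still a single polynomial denominator. Everything else reduces to routine bookkeeping, since multiplication by the polynomial $(z^N-w^N)^M$ commutes with both $\iota_{z,w}$ and $\iota_{w,z}$ and each displayed identity can be verified coefficient by coefficient.
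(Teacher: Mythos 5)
Your proposal is correct and takes essentially the same route as the paper: both multiply by $(z^N-w^N)^M$ to land in $W[\![z,w]\!][z^{-1},w^{-1}]$ via the intersection of iterated Laurent-series spaces, divide back to define $X_{z,w,0}$, and read off symmetry from the two equal expressions for the cleared product. The only blemish is a harmless sign typo, $(w^N-z^N)^M=(-1)^M(z^N-w^N)^M$ rather than $(-1)^{NM}$, which cancels anyway since it appears in both numerator and denominator.
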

\begin{proof}
Since the locality points are $1, \epsilon, \dots, \epsilon ^{N-1}$,  $\epsilon$ is a primitive $N$th root of unity, we have for the locality polynomial $\Pi _{z, w}=z^N-w^N$. $N$-point locality then implies
\begin{equation*}
\Pi _{z, w}^M \big( Y(a, z)Y(b, w) -(-1)^{p(a)p(b)}Y(b, w)Y(a, z)\big) \pi_f(c) =0.
\end{equation*}
Let
\begin{equation*}
A_{z,w}(a\otimes b\otimes c)=\iota_{z,w}\Pi _{z, w}^M Y(a,z)Y(b,w)\pi_f(c) = (-1)^{p(a)p(b)} \iota_{w,z}\Pi _{z, w}^MY(b,w)Y(a,z)\pi_f(c).
\end{equation*}
Thus
\[
A_{z,w}(a\otimes b\otimes c)\in W(\!(w)\!)(\!(z)\!)\cap W(\!(z)\!)(\!(w)\!),
\]
but we have
\[
W(\!(w)\!)(\!(z)\!)\cap W(\!(z)\!)(\!(w)\!) =W[\![z,w]\!][z^{-1},w^{-1}],
\]
so that
\[
A_{z,w}(a\otimes b\otimes c)\in W[\![z,w]\!][z^{-1},w^{-1}].
\]
Let then
\[
X_{z,w,0}(a\otimes b\otimes c)=\Pi_{z,w}^{-M}A_{z,w}(a\otimes b\otimes c)\in  W[\!]z, w ]\!][z^{-1}, w^{-1}, (z-w)^{-1},\dots, (z-\epsilon^{N-1}w)^{-1}],
\]
and
\[
W[\!]z, w ]\!][z^{-1}, w^{-1}, (z-w)^{-1},\dots, (z-\epsilon^{N-1}w)^{-1}]= W[\!]z, w]\!]\ten \mathbf{F}_{\epsilon}(z, w).
\]
Then
\[
\iota_{z,w}X_{z,w,0}(a\otimes b\otimes c)=\iota_{z,w}\Pi _{z, w}^{-M}A_{z,w}(a\otimes b\otimes c)
\]
as required.
Since we can also start from
\begin{equation*}
A_{w, z}(b\otimes a\otimes c)= \iota_{w ,z}\Pi _{z, w}^M Y(b, w)Y(a, z)\pi_f(c),
\end{equation*}
we see that
\[
A_{z,w}(a\otimes b\otimes c) = (-1)^{p(a)p(b)} A_{w, z}(b\otimes a\otimes c) =  A_{w, z}(\tau (a\otimes b) \otimes c),
\]
which symmetry transfers immediately to $X_{z, w, 0}(a\ten b\ten c)$.
\end{proof}
These two lemmas  imply that in the definition of twisted vertex algebra we can substitute the axioms of analytic continuation and symmetry with a single  axiom of $N$-point locality. The axiom requiring completeness with respect to OPE's is though much more easily expressible if we use the analytic continuation of fields (recall we have two different expressions for products of fields if we use the fields themselves and not the analytic continuation--see \lemref{lem:ResFormulasForProducts}):
\begin{lem} \label{lem:AnalContFormProd} Let $(V, W, \pi_{f}, Y)$ be a twisted vertex algebra of order $N$. Then for any $a, b, c\in V$. We have
\[
Res_{z=\epsilon ^j w}X_{z, w, 0}(a\ten b\ten c)(z-\epsilon ^i w)^k =Y(a, w)_{(j, k)}Y(b, w)\pi_f(c),
\]
where the products $Y(a, w)_{(j, k)}Y(b, w)$ were the $(i,k)$ products of the fields $Y(a, z)$ and $Y(b, w)$ we defined in \ref{defn:products-of-fields}.
\end{lem}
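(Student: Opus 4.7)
My plan is to split the proof into the two cases $k\geq 0$ and $k<0$ dictated by Definition \ref{defn:products-of-fields}, since the $(j,k)$-products are defined differently in these two regimes. In both cases the key bridge between the formal distribution world and the rational function world will be the unique partial fraction decomposition of $X_{z,w,0}(a\otimes b\otimes c)$ given by Lemma \ref{pfd}, together with the identification of its expansion with $Y(a,z)Y(b,w)\pi_f(c)$ coming from the analytic continuation axiom. (I am reading the statement as $\text{Res}_{z=\epsilon^jw}X_{z,w,0}(a\otimes b\otimes c)(z-\epsilon^jw)^k$, since $\epsilon^j$ must match the residue point.)

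First, I would combine Lemma \ref{lem:AnalContToLocal}, the analytic continuation axiom $Y(a,z)Y(b,w)\pi_f(c)=i_{z,w}X_{z,w,0}(a\otimes b\otimes c)$, and the OPE Lemma \ref{lem:OPE} to write
\[
i_{z,w}X_{z,w,0}(a\otimes b\otimes c)=\;:Y(a,z)Y(b,w):\pi_f(c)\;+\;i_{z,w}\sum_{j=1}^{N}\sum_{l=0}^{n_j-1}\frac{c_{jl}(w)\pi_f(c)}{(z-\epsilon^j w)^{l+1}}.
\]
Since $X_{z,w,0}\in W[\![z,w]\!]\otimes\mathbf{F}_{\epsilon}(z,w)$, applying Lemma \ref{pfd} to decompose it uniquely as a sum of its singular parts at each $z=\epsilon^j w$ plus a remainder $R(z,w)$ regular at all these points, uniqueness of the decomposition (combined with uniqueness of the OPE coefficients from Theorem \ref{mainresult}) will force the singular parts to satisfy $A_{-l,\epsilon^j}(w)=c_{j,l-1}(w)\pi_f(c)$ for $1\leq l\leq n_j$, and will force $i_{z,w}R(z,w)=\;:Y(a,z)Y(b,w):\pi_f(c)$.

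For the case $k\geq 0$ the result is then essentially immediate: the residue calculus of rational functions from Lemma \ref{pfd} gives
\[
\text{Res}_{z=\epsilon^jw}X_{z,w,0}(a\otimes b\otimes c)(z-\epsilon^j w)^k=A_{-k-1,\epsilon^j}(w)=c_{jk}(w)\pi_f(c),
\]
which by Definition \ref{defn:products-of-fields} equals $Y(a,w)_{(j,k)}Y(b,w)\pi_f(c)$ (with the convention $c_{jk}=0$ for $k\geq n_j$ from Remark \ref{remark:CoefZero}).

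For the case $k=-n-1<0$ the same residue calculus shows that the only contribution comes from $R(z,w)$: the singular terms contribute nothing because the resulting powers $(z-\epsilon^jw)^{-l-n-1}$ with $l\geq 1$, $n\geq 0$ never produce a $(z-\epsilon^jw)^{-1}$, while $R$ contributes its $n$-th Taylor coefficient at $z=\epsilon^jw$. I then need to identify this Taylor coefficient with $:(\partial_{\epsilon^jw}^{(n)}Y(a,\epsilon^j w))Y(b,w):\pi_f(c)$. For this I would use the Taylor expansion formula for normal ordered products, Lemma \ref{lem:normalprodexpansion}, applied to $i_{z,z_0}R(\epsilon^jw+z_0,w)$ (equivalently, to the already-identified $i_{z,w}R(z,w)=\;:Y(a,z)Y(b,w):\pi_f(c)$ re-expanded around $z=\epsilon^jw$), reading off the coefficient of $(z-\epsilon^jw)^n$ and matching it to Definition \ref{defn:products-of-fields}. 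The main obstacle will be this last identification: one must verify that the two expansions of $R$ — as a rational function and via the normal-ordered Taylor formula — genuinely agree, which requires being careful with the simultaneous use of $i_{z,w}$ and of the Taylor expansion $i_{z,z_0}$ around the point of locality, and relies on the uniqueness of the partial fraction decomposition to rule out alternative splittings.
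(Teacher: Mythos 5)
Your proof is correct and follows essentially the same route as the paper: both identify $X_{z,w,0}(a\otimes b\otimes c)$ with the decomposition into singular OPE part plus normal-ordered product coming from Lemma \ref{lem:OPE} and the uniqueness of the analytic continuation in Lemma \ref{lem:localityToanalcont}, then read off the nonnegative-$k$ residues from the singular coefficients and the negative-$k$ residues from the Taylor expansion in Lemma \ref{lem:normalprodexpansion}. Your explicit detour through the Partial Fraction Decomposition (Lemma \ref{pfd}) is a slight elaboration of what the paper does directly, but the key ingredients and structure of the argument are the same.
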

\begin{proof}
From the OPE of the fields $Y(a, z)$ and $Y(b, w)$ (which follows from locality), we get from \lemref{lem:OPE}:
\[
Y(a, z)Y(b, w)=i_{z, w} \sum_{j=1}^N\sum_{k=0}^{n_j-1}\frac{Y(a, w)_{(j, k)}Y(b, w)}{(z-\lambda_j w)^{k+1}} + :a(z)b(w):.
\]
From the analytic continuation property (specifically the uniqueness property of \ref{lem:localityToanalcont}) we have
\[
X_{z, w, 0}(a\ten b\ten c)= \sum_{j=1}^N\sum_{k=0}^{n_j-1}\frac{Y(a, w)_{(j, k)}Y(b, w)\pi_f(c)}{(z-\lambda_j w)^{k+1}} + :a(z)b(w):\pi_f(c)
\]
From \ref{lem:normalprodexpansion} we have
\[
i_{w, (z-\lambda_j w)}:a(\lambda w +(z-\lambda_j w))b( w): =\sum _{k\geq 0}\Big(:(\partial_{\lambda w} ^{(k)}a(\lambda w))b(w):\Big) (z-\lambda_j w)^k.
\]
Thus, the OPE part contains the residues $Res_{z=\epsilon ^j w}X_{z, w, 0}(a\ten b\ten c)(z-\epsilon ^i w)^k$ for \ $k\geq 0$, and  the other residues $Res_{z=\epsilon ^j w}X_{z, w, 0}(a\ten b\ten c)(z-\epsilon ^i w)^k$ for $k< 0$ are contained in the normal ordered product part.
Thus we clearly have
\[
Res_{z=\epsilon ^j w}X_{z, w, 0}(a\ten b\ten c)(z-\epsilon ^i w)^k =Y(a, w)_{(j, k)}Y(b, w)\pi_f(c),
\]
\end{proof}
\begin{defn}\label{defn:GenerFields} {\bf (Generating fields for a twisted vertex algebra of order $N$)}
We say that the  fields $\{a^0(z), a^1(z), \dots a^g(z), \dots\}$ are a set of generating fields for a twisted vertex algebra on a vector space of states $W$ (where $W$ contains the special vacuum vector $|0 \rangle$) if the following conditions are satisfied:
\begin{itemize}
\item The field $a^0(z)=Id_W$ (i.e., the identity operator on $W$ is always included,  by convention as $a^0(z)$);
\item Each field $a^i(z)$, indexed by $a^i (z) =\sum _{n\in \mathbb{Z}} a^i _{(k)}z^{-k-1}$, obeys the modified creation axiom: \ $a^i(z)|0 \rangle \arrowvert _{z=0}=a^i_{-1}|0 \rangle$; we denote $a^i: =a^i_{-1}|0 \rangle \in W$;
\item The fields $\{a^0(z), a^1(z), \dots a^g(z), \dots\}$ are self and mutually $N$-point local with points of locality \ $1, \epsilon, \dots, \epsilon ^{N-1}$, where $\epsilon$ is a primitive $N$th root of unity;
\item  The set of fields $\{a^0(z), a^1(z), \dots a^g(z), \dots\}$ is  {\bf closed under OPE's}, which requires  two conditions:
    \begin{enumerate}
    \item For any $i, j=1, \dots , g$
 \begin{equation}
 a^i (z) a^j (w) \sim  \sum_{m=1}^{N}\sum_{p=0}^{M}\frac{c^{ij}_{mp}(w)}{(z-\epsilon ^m w)^{p+1}},
 \end{equation}
 where the OPE coefficients are  {\bf uniformly shifted}. We say that the  OPE coefficients are  {\bf uniformly shifted} if for any $m, p$, exists $s_{p}\in \mathbb{Z}$,  such that $c^{ij}_{mp}(w) =w^{s_{p}} \sum_{l=0}^{g} C^{m, p}_{i, j, l} a^l(w)$, where $C^{m, p}_{i, j, l}\in \mathbb{C}$ are constants; but most importantly $s_{p}\in \mathbb{Z}$ is  independent of the choice of $i, j, m$ (the shift $s_{p}$ is the same for any choice of the fields $a^i (z), \  a^j (w)$ and any point of locality $\epsilon ^m$).
 \item  For any $i, j=1, \dots , g, \dots $
 the resulting OPE triple coefficients\footnote{Recall we defined triple OPE coefficients in \lemref{lem:tripleOPEcoef}.} $c^{i,jk}_{mp}(w)$ and $c^{ij,k}_{mp}(w)$ are also uniformly shifted and the shift $s_{p}$ is the same for  each  OPE coefficient $c^{ij}_{mp}(w)$,  $c^{i,jk}_{mp}(w)$ and $c^{ij,k}_{mp}(w)$  for any choice of the fields $a^i (z)$,  $a^j (z), a^k (z)$ and any point of locality $\epsilon ^m$).
 \end{enumerate}
 \item $W=span \{a^{i_1}_{(-n_1)}a^{i_2}_{(-n_2)}\dots a^{i_k}_{(-n_k)}|0 \rangle \ | i_1, i_2, \dots, i_k \in \{1, 2, \dots p\}; \ \ n_i\in \mathbb{N} \}$
\end{itemize}
\end{defn}
\begin{remark} For any two $N$-point local fields we have a Residue formula for the OPE coefficients,  \eqref{defn:products-of-fields}. In other words we can always express the OPE coefficients of two fields through the fields themselves, their derivatives, and {\bf various} shifts; but that formula doesn't allow us to control the shifts. For generating fields for a twisted vertex algebra we require something more: that the OPE coefficients of two generating fields are expressible only through the set of generating fields, that the shift is uniform, and that the uniformity of the shift persists (\lemref{lem:tripleOPEcoef} will ensure that). We will point out how that works specifically in the examples.
\end{remark}
\begin{remark} The indexing set for the set of the generating fields can be infinite, though the sum in the definition of uniformly shifted fields has to be finite.
\end{remark}
Now we set up what are necessary preliminary results needed to prove the ``Strong Generation Theorem" for a twisted vertex algebra. If we start with a set of generating fields   $\{a^0(z), a^1(z), \dots , a^g(z), \dots \}$ on a vector space $W$, we will take as the  super vector space $V$ for the twisted vertex algebra the collection  $\mathfrak{FD} \{a_0 (z), a_1(z), \dots a_g(z), \dots \}$ of the field descendants of the generating fields (see Definition \ref{defn:fielddesc}). We need to define the vertex map $Y$ and the projection map $\pi_f$ from the definition of a twisted vertex algebra. We start with the vertex map $Y$. First, to the generating fields, which by the definition above are indexed as required, we associate the same field as vertex operator (no shift is necessary):
\[
Y: a^i(z)\mapsto Y(a^i, z), \quad i=0, 1, \dots , g, \dots
\]
For their descendants, we proceed as follows:
\begin{align*}
Y&: \partial_z  a^i( z)\mapsto Y(Da^i, z):= \partial _z Y(a^i,  z), \quad i=0, 1, \dots , g, \dots;\\
Y&: a^i(\epsilon ^m z)\mapsto Y(T^{\epsilon^m}(a^i), z):= Y(a^i, \epsilon ^m z), \quad i=0, 1, \dots , g,  \dots\ \  m=0, 1, \dots, N-1.
\end{align*}
For the products of the generating fields: for the normal ordered products $a^i( w)_{(m, k)} a^j(w)$ ($i, j=0, 1, \dots , g,  \dots$,\ \  $m=1, \dots, N$, \ $k<0$), i.e., for the products with negative $k$, each normal ordered product  maps to itself via the vertex map $Y$:
 \[
 Y(a^i_{(m, k)} a^j,  w): =a^i( w)_{(m, k)} a^j(w), \ \ i, j=0, 1, \dots , g, \ \ m=1, \dots, N, \ \ k<0.
 \]
 (Here we really write $a^i_{(m, k)} a^j$ just as a shorthand notation for $a^i( w)_{(m, k)} a^j(w)$ with no further meaning). \\
 For the OPE coefficients, i.e., for the products   $a^i( w)_{(m, k)} a^j(w)$ with nonnegative $k$, we have
 \[
 Y(a^i_{(m, k)} a^j,  w): =\sum_{l=0}^{g} C^{m, k}_{i, j, l} Y(a^l, w) =\sum_{l=0}^{g} C^{m, k}_{i, j, l} a^l(w), \ \  m=1, \dots, N, \ \ k\geq 0.
 \]
 Here we used the fact that for the generating fields $\{a^0(z), a^1(z), \dots ,a^g(z), \dots \}$ the OPE coefficients are {\bf uniformly shifted}, see the definition \ref{defn:GenerFields} above. \\
 We can proceed recursively as above for all the newly generated descendants in \\ $\mathfrak{FD} \{a_0 (z), a_1(z), \dots a_g(z)\}$, as the uniformly-shifted property holds for all the descendants by \lemref{lem:tripleOPEcoef}.

Thus we can formulate an answer to the question that we started with in this paper:
\begin{thm} {\bf (Strong generation of a twisted vertex algebra)}\\
Let $\{a^0(z), a^1(z), \dots a^p(z)\}$ be a set of generating fields for a twisted vertex algebra on a vector space of states $W$. Then the data $(V, W, \pi_f, Y)$ forms a twisted vertex algebra, where $V=\mathfrak{FD} \{a_0 (z), a_1(z), \dots a_g(z)\}$, the map $Y$ is  defined as above, and the projection map $\pi_f$ is given by
\[
\pi_f (v):=Y(v, z)|0 \rangle \arrowvert _{z=0}, \quad \text{for \ any}\quad v\in V.
\]
\end{thm}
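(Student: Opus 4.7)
The plan is to verify each of the six axioms of a twisted vertex algebra in turn, drawing on the results of \secref{section:NormalOrdProd}. The vacuum axiom holds by convention since $a^0(z)=\mathrm{Id}_W$ is included as a generator and $Y$ sends it to itself. The modified creation axiom $Y(a,z)|0\rangle|_{z=0}=\pi_f(a)$ is essentially the definition of $\pi_f$; I would only need to note that the right-hand side is well-defined, which follows since each $Y(v,z)$ is a field on $W$ in the sense of \defnref{defn:field-fin} and therefore has a well-defined constant term at $z=0$ when applied to $|0\rangle$. The transfer of action axiom reduces to checking $Y(Da,z)=\partial_z Y(a,z)$ and $Y(T_\epsilon a,z)=Y(a,\epsilon z)$ on a generating set; these identities are built directly into the definition of $Y$ on $\mathfrak{FD}$ via items (2) and (3) of \defnref{defn:fielddesc}, and then extend multiplicatively.

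For the analytic continuation and symmetry axioms, the key observation is that by \lemref{lem:Dong} (Dong's Lemma) and its corollary, every pair of fields in $\mathfrak{FD}\{a^0(z),\dots,a^g(z)\}$ is $N$-point mutually local with the $N$-th roots of unity as points of locality. Therefore \lemref{lem:localityToanalcont} applies: for any $a,b,c\in V$ the product $Y(a,z)Y(b,w)\pi_f(c)$ admits a unique analytic continuation $X_{z,w,0}(a\otimes b\otimes c)\in W[\![z,w]\!]\otimes \mathbf{F}_\epsilon(z,w)$, and that lemma already contains the required symmetry under exchange of $a$ and $b$. Multilinearity of $X_{z,w,0}$ follows from uniqueness.

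The main obstacle is verifying completeness with respect to OPEs, and this is where the careful setup of \defnref{defn:GenerFields} pays off. By \lemref{lem:AnalContFormProd}, the residues $\mathrm{Res}_{z=\epsilon^j w}X_{z,w,0}(a\otimes b\otimes c)(z-\epsilon^j w)^k$ equal the products $Y(a,w)_{(j,k)}Y(b,w)\pi_f(c)$. For $k<0$ these are normal ordered products, which by item (4) of \defnref{defn:fielddesc} already lie in $V$; they furnish the required $v^s_{k,j}$ directly, with $l^s_{k,j}=0$. For $k\geq 0$ the products are OPE coefficients, and here the uniformly-shifted condition is essential: on the generators each OPE coefficient is of the form $w^{s_k}\sum_l C^{j,k}_{i_1,i_2,l}\,a^l(w)$ with a shift $s_k$ independent of $i_1,i_2,j$, and the definition of $Y$ on such a product was tailored to absorb precisely this shift (cf.\ Remark~\ref{remark:shift}), yielding the required identity with $l^s_{k,j}=s_k$.

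To extend completeness from generators to arbitrary descendants, I would proceed by induction on the depth of the recursive construction of $\mathfrak{FD}$. The new OPE coefficients arising from combining a field with a normal ordered product are governed by \lemref{lem:tripleOPEcoef}, which expresses them in terms of triple coefficients $c^{a,bc}_{mp}$ and $c^{ab,c}_{mp}$; the second clause of \defnref{defn:GenerFields} is formulated precisely so that the uniform-shift property is preserved under these combinations. The delicate point is to verify that forming normal ordered products of two fields with nontrivial shifts does not produce shifts outside the bookkeeping tracked by $Y$ — this is exactly the content of the shift restriction in Remark~\ref{remark:shift}, and can be checked against the Taylor expansion \lemref{lem:normalprodexpansion}, completing the proof.
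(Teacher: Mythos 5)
Your proposal is correct and follows essentially the same route as the paper's own (quite terse) proof: Dong's Lemma for locality of descendants, \lemref{lem:localityToanalcont} for analytic continuation and symmetry, \lemref{lem:AnalContFormProd} to identify the residues as OPE coefficients or normal ordered products, and \lemref{lem:tripleOPEcoef} together with the uniformly-shifted hypothesis to push the shift bookkeeping through the recursion. You additionally spell out the vacuum, modified creation, and transfer-of-action axioms that the paper glosses as ``obey the axioms,'' which is a reasonable amplification of the same argument rather than a different one.
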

\begin{proof}
Since the space of fields $V$ is defined recursively by adding new descendants to $V=\mathfrak{FD} \{a_0 (z), a_1(z), \dots a_g(z)\}$ (see Definition \ref{defn:fielddesc}), we have to prove that in each recursive step the new descendant fields obey the axioms. By Dong's lemma all the new descendants are $N$-point local, from Lemma \ref{lem:localityToanalcont} it follows they satisfy the analytic continuation and the symmetry axioms. From Lemma \ref{lem:AnalContFormProd} we know that the required residues are either OPE coefficients, or the normal ordered products defined in \ref{lem:ResFormulasForProducts}. The only property we need to prove is that the uniformity of the shift persists for the descendants. The uniformity follows from \lemref{lem:tripleOPEcoef} and the definition of generating fields, as the OPE coefficients are finite linear combinations of the generating fields themselves.
\end{proof}
In some sense twisted vertex algebra is the unique object which is generated by a set of generating $N$-point local fields on a space of states $W$, but also differs in essential ways from an ordinary super vertex algebra:
\begin{lem} {\bf (Goddard uniqueness theorem for twisted vertex algebras)}\\
Let $(V, W, \pi_{f}, Y)$ be a twisted vertex algebra. Suppose $b(z)$ is a field on $W$ which is $N$-point self-local, and also mutually $N$-point local with all the fields $Y(a, z)$, $a\in V$,  from the twisted vertex algebra. Suppose also
$b(z)$ is indexed as $b(z)=\sum _{n\in \mathbb{Z}} b _{(k)}z^{-k-1}$ and in addition
\[
b(z) |0\rangle =e^{\epsilon ^m zD} b,
\]
where the vector $b\in W$ is the vector
\[
b: =b(z)|0 \rangle \arrowvert _{z=0} =b_{(-1)} |0\rangle.
\]
Then
\[
b(z)=Y(b, \epsilon ^m z).
\]
(Here the exponential $e^A$ is the usual exponential of an operator).
\end{lem}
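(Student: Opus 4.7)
The plan is to adapt the classical Goddard uniqueness argument to the twisted setting. Form the field
\[
c(z):=b(z)-Y(b,\epsilon^{m}z)
\]
and aim to show $c(z)=0$ as a field on $W$. First, $c(z)$ is $N$-point mutually local with every $Y(a,w)$, $a\in V$: the field $b(z)$ is so by hypothesis, while $Y(b,\epsilon^{m}z)$ is so by \lemref{lem:AnalContToLocal} together with the transfer-of-action axiom (the locality polynomial $z^{N}-w^{N}$ is invariant under $z\mapsto\epsilon^{m}z$ since $\epsilon^{N}=1$). The two summands have the same parity, so $c(z)$ is a well-defined homogeneous field.

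Next I would verify $c(z)|0\rangle=0$. Iterating the transfer-of-action axiom gives $Y(D^{(n)}b,w)=\partial_{w}^{(n)}Y(b,w)$, and combined with the modified creation axiom this identifies the $n$th Taylor coefficient of $Y(b,w)|0\rangle$ in powers of $w$ as $\pi_{f}(D^{(n)}b)$. The hypothesis $b(z)|0\rangle=e^{\epsilon^{m}zD}b$ implicitly forces $D^{(n)}b\in W$ for every $n$, so $\pi_{f}(D^{(n)}b)=D^{(n)}b$ and hence $Y(b,w)|0\rangle=e^{wD}b$. Substituting $w=\epsilon^{m}z$ matches the hypothesized expression for $b(z)|0\rangle$, giving $c(z)|0\rangle=0$.

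Now exploit $N$-point locality: for each $a\in V$ choose $M\in\mathbb{N}$ with
\[
(z^{N}-w^{N})^{M}\bigl(c(z)Y(a,w)-(-1)^{p(c)p(a)}Y(a,w)c(z)\bigr)\pi_{f}(v)=0
\]
for all $v\in V$. Taking $v=|0\rangle$ (so $\pi_{f}(|0\rangle)=|0\rangle$), the summand beginning with $c(z)|0\rangle$ vanishes by the previous step, leaving
\[
(z^{N}-w^{N})^{M}\,c(z)\,Y(a,w)|0\rangle=0.
\]
By the modified creation axiom, $Y(a,w)|0\rangle$ lies in $W[\!]w]\!]$ with constant term $\pi_{f}(a)$, so I may set $w=0$: using $(z^{N}-w^{N})^{M}|_{w=0}=z^{NM}$, this gives $z^{NM}c(z)\pi_{f}(a)=0$, and cancelling the invertible factor $z^{NM}$ yields $c(z)\pi_{f}(a)=0$. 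Since $\pi_{f}\colon V\to W$ is surjective, $c(z)$ annihilates all of $W$, so $c(z)=0$ and $b(z)=Y(b,\epsilon^{m}z)$.

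The step I expect to be most delicate is the vacuum computation in the second paragraph: one must carefully use the hypothesis $b(z)|0\rangle=e^{\epsilon^{m}zD}b$ to ensure $D^{(n)}b\in W$, so that the projection $\pi_{f}$ introduces no loss when extracting Taylor coefficients of $Y(b,w)|0\rangle$. Once that identification is in place, the remainder is a direct adaptation of the classical Goddard argument, with the locality polynomial $(z-w)^{M}$ replaced by $(z^{N}-w^{N})^{M}$ and the creation relation shifted by $z\mapsto\epsilon^{m}z$.
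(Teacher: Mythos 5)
The paper states this Goddard uniqueness lemma without proof, so there is no in-paper argument to compare against. Your proof is correct and is the natural adaptation of the classical Goddard uniqueness argument, which is almost certainly what the authors had in mind: the vacuum computation via transfer-of-action plus modified creation to identify $Y(b,w)|0\rangle=e^{wD}b$, then the locality polynomial $(z^N-w^N)^M$ with the substitution $w=0$. The one place you glide past — asserting without justification that $b(z)$ and $Y(b,\epsilon^m z)$ have the same parity, so that $c(z)$ inherits a definite parity — is genuinely needed (if the parities differed, the two locality relations applied to $|0\rangle$ would produce signs $(-1)^{p(b(z))p(a)}$ and $(-1)^{p(Y(b,\epsilon^m z))p(a)}$ that no longer cancel after using $c(z)|0\rangle=0$), but it is an implicit hypothesis of the lemma rather than a flaw in your argument, since the paper's definition of a twisted vertex algebra ties the parity grading of $V$ to the parity of the fields, and $T_\epsilon^m$ is even.
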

There are a variety of different field theories, each of those theories is designed to describe different sets of examples of collections of fields. The best known is the theory of super vertex algebras (see e.g. \cite{FLM}, \cite{Kac}, \cite{LiLep}, \cite{BorcVA}, \cite{FZvi}). The axioms of super vertex algebras are often given in terms of locality (see \cite{Kac}, \cite{FZvi}), but in all cases locality is a property that plays crucial importance for super vertex algebras (\cite{LiLocality}). On the other hand, there are vertex algebra like objects which do not satisfy the usual locality property, but rather a generalization. Twisted vertex algebras are among them, but there are also generalized vertex algebras, $\Gamma$-vertex algebras, deformed chiral algebras, quantum vertex algebras. Unlike twisted vertex algebras, quantum vertex algebras do not satisfy the symmetry condition, and thus a locality-type property is very hard to write in that case. Thus the axioms for deformed chiral algebras (\cite{FR}) are given in terms of requiring existence of analytic continuations, and then the braided symmetry axiom is given in terms of the analytic continuations (\cite{FR}).
Twisted vertex algebras occupy intermediate step between super vertex algebras and deformed chiral algebras. We choose here to define twisted vertex algebras with axioms closer to the deformed chiral algebra axioms,  including the analytic continuation axiom. But twisted vertex algebras are in fact closer to super vertex algebras, in many ways. For one, they satisfy $N$-point locality (finitely many points of locality), unlike the deformed chiral algebras which have lattices of points of locality. Also, like in super vertex algebras, the axioms requiring existence of analytic continuation of product of two vertex operators plus the  symmetry axiom do in fact enforce the property that analytic continuation of arbitrary product of fields exist; something that is not true for deformed chiral algebras (see \cite{FR}). We finish with that property:
{\begin{prop} {\bf (Analytic continuation for arbitrary products of fields)}\\
Let $(V, W, \pi_{f}, Y)$ be a twisted vertex algebra.   For any $a_i \in V, i=1,\dots ,k$,  there exist a  rational vector valued   function
\begin{equation*}
\label{eq:defX}
X_{z_1, z_2, \dots , z_k} : V^{\ten k}
\to W[\!]z_1, z_2, \dots , z_k]\!]\ten \mathbf{F}^N_{\epsilon}(z_1, z_2, \dots , z_k)^{+, z_k},
\end{equation*}
    such that
\begin{equation*}
Y(a_1, z_1)Y(a_2, z_2)\dots Y (a_k, z_k)1=i_{z_1,z_2,\dots,z_k} X_{z_1, z_2, \dots , z_k}(a_1\ten a_2\ten \dots \ten a_k)
\end{equation*}
\end{prop}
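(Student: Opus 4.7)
The plan is to mimic the classical super-vertex-algebra argument that an arbitrary product of vertex operators applied to the vacuum is the expansion of a rational function, adapted to the $N$-point-local setting. The goal is to find a polynomial $P(z_1,\dots,z_k)$, built from the pairwise locality polynomials, such that
\[
P(z_1,\dots,z_k)\cdot Y(a_1,z_1)\cdots Y(a_k,z_k)\mathbf{1}\in W[\![z_1,\dots,z_k]\!];
\]
the conclusion will then follow by dividing through by $P$.

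First, by \lemref{lem:AnalContToLocal} every pair of fields $Y(a_i,z_i),Y(a_j,z_j)$ is $N$-point mutually local, so there exists $M\in\mathbb{N}$ (uniform in $i,j$) with
\[
\Pi_{z_i,z_j}^{M}\bigl(Y(a_i,z_i)Y(a_j,z_j)-(-1)^{p(a_i)p(a_j)}Y(a_j,z_j)Y(a_i,z_i)\bigr)w=0
\]
for every $w\in W$ (apply the lemma with $c=w\in W\subset V$, since $\pi_f|_W=\mathrm{Id}_W$). I then set
\[
P(z_1,\dots,z_k):=\prod_{1\le i<j\le k}\Pi_{z_i,z_j}^{M}.
\]

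Next, fix $j\in\{1,\dots,k\}$. I will transport $Y(a_j,z_j)$ to the rightmost position by iteratively commuting it past $Y(a_{j+1},z_{j+1}),\dots,Y(a_k,z_k)$; each two-operator swap becomes an honest equality after multiplication by $\Pi_{z_j,z_l}^{M}$, and the product $\prod_{l>j}\Pi_{z_j,z_l}^{M}$ of all these factors already divides $P$. The rearranged expression is
\[
\pm\,P\cdot Y(a_1,z_1)\cdots\widehat{Y(a_j,z_j)}\cdots Y(a_k,z_k)\,Y(a_j,z_j)\mathbf{1}.
\]
By the modified creation axiom $Y(a_j,z_j)\mathbf{1}\in W[\![z_j]\!]$; the remaining fields act coefficient-wise in $z_j$, preserving the absence of negative powers of $z_j$, and multiplication by the polynomial $P$ preserves this as well. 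Running the argument for every $j=1,\dots,k$, the single expression $P\cdot Y(a_1,z_1)\cdots Y(a_k,z_k)\mathbf{1}$, which a priori sits in the iterated Laurent space $W[\![z_k]\!](\!(z_{k-1})\!)\cdots(\!(z_1)\!)$, contains no negative powers of any $z_j$; the intersection of these constraints inside that ambient ring is precisely $W[\![z_1,\dots,z_k]\!]$.

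Dividing through by $P$ yields $Y(a_1,z_1)\cdots Y(a_k,z_k)\mathbf{1}=i_{z_1,\dots,z_k}(F/P)$ for some $F\in W[\![z_1,\dots,z_k]\!]$. Since $P$ vanishes only along the loci $z_i=\epsilon^{l}z_j$ with $i\neq j$, the rational function $X_{z_1,\dots,z_k}(a_1\otimes\cdots\otimes a_k):=F/P$ lies in $W[\![z_1,\dots,z_k]\!]\otimes\mathbf{F}^N_\epsilon(z_1,\dots,z_k)^{+,z_k}$ (in fact with no pole at any $z_m=0$), and extending by linearity in the $a_i$ defines the required function on $V^{\otimes k}$. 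The step I expect to require the most care is the iterated reordering by locality inside a $k$-fold product: each two-field identity is valid only after multiplication by $\Pi_{z_i,z_j}^{M}$, so to concatenate the swaps I must check that both sides of every intermediate equality live in a common iterated Laurent-series space (the fact that each $Y(a_m,z_m)$ is a field on $W$ ensures every partial composition lands in $W[\![z_k]\!](\!(z_{k-1})\!)\cdots(\!(z_1)\!)$, which is adequate), and then that the final intersection of the per-$j$ regularity conditions genuinely recovers $W[\![z_1,\dots,z_k]\!]$ inside this ambient ring.
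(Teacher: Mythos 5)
Your proof is correct but takes a genuinely different route from the paper. The paper's argument proceeds by induction on $k$: the inductive hypothesis supplies the analytic continuation $X_{z_2,\dots,z_k}$ for the inner $(k-1)$-fold product, and then a single reordering---moving $Y(a_1,z_1)$ to the right past all other fields, multiplying by the locality polynomial $\Pi^M_{z_1,\dots,z_k}$, and invoking the (modified) creation axiom---yields regularity in $z_1$, closing the inductive step inside $F_{z_2,\dots,z_k}(\!(z_1)\!)$. You instead avoid induction entirely: by performing the same ``push-to-the-right'' trick for every index $j$, you conclude directly that $P\cdot Y(a_1,z_1)\cdots Y(a_k,z_k)\mathbf 1$ has no negative powers of any $z_j$ and therefore lies in $W[\![z_1,\dots,z_k]\!]$, and then divide through by $P$. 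Your version is more symmetric in the variables and a bit shorter; the paper's isolates one new variable at each stage and so works with the cleaner intermediate space $F_{z_2,\dots,z_k}(\!(z_1)\!)$, at the price of carrying the induction hypothesis. Both versions rest on the same two ingredients (locality-cleared reordering and the creation axiom), so this is a legitimate alternative packaging rather than a new idea.

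One minor inaccuracy that does not damage the argument: the ambient iterated Laurent space you name, $W[\![z_k]\!](\!(z_{k-1})\!)\cdots(\!(z_1)\!)$, is not where $Y(a_1,z_1)\cdots Y(a_k,z_k)\mathbf 1$ a priori lives. Since the fields are applied from the inside out and the order of the pole of $Y(a_{j},z_{j})v$ depends on $v$, the pole order in an outer variable can grow with the powers of the inner ones; the correct nesting therefore has $z_1$ innermost, along the lines of $\bigl(\cdots\bigl(W(\!(z_1)\!)\bigr)(\!(z_2)\!)\cdots\bigr)[\![z_k]\!]$. Your key assertion---that having no negative powers of any $z_j$ forces membership in $W[\![z_1,\dots,z_k]\!]$---is still valid when read as a statement about the element of $W[\![z_1^{\pm 1},\dots,z_k^{\pm 1}]\!]$, which is all the argument really uses, since the reordered products are compared coefficient-by-coefficient there. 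If you want to be fully explicit, you should also note that division by $P$ is legitimate because multiplication by the nonzero polynomial $P$ is injective on the iterated Laurent module containing the product, so that $P\cdot(\text{product})=F$ and $P\cdot i_{z_1,\dots,z_k}(F/P)=F$ together yield the desired identity; the paper is equally silent on this point.
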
}
\begin{proof} The proof is by induction on the number of products $k$. For $k=2$ we have the axiom of analytic continuation: for any $a_1, a_2, c\in V$ exists \\ $X_{z_1, z_2, 0}(a_1\ten a_2\ten c)\in W[\!]z_1, z_2]\!]\ten \mathbf{F}_{\epsilon}(z_1, z_2)$ such that
\begin{equation*}
Y(a_1, z_1)Y(a_2, z_2)\pi_{f}(c)=i_{z_1, z_2} X_{z_1, z_2,  0}(a_1\ten a_2\ten  c)
\end{equation*}
If we take $c=1=|0 \rangle$ we have the desired
\[
X_{z_1, z_2}(a_1\ten a_2):=X_{z, w, 0}(a_1\ten a_2\ten 1)\in W[\!]z_1, z_2]\!]\ten \mathbf{F}_{\epsilon}(z_1, z_2).
\]
Suppose the analytic continuation property holds for a product of $k-1$ fields, and consider
 $Y(a_1, z_1)Y(a_2, z_2)\dots Y (a_k, z_k)1$. From Lemma \ref{lem:AnalContToLocal} we know that the fields $Y(a_1, z_1),$ \ $Y(a_2, z_2), \dots , \ Y (a_k, z_k)$ are $N$-point mutually local. Thus there exists a high enough power $M$ so that if we multiply by $\Pi_{z_1, z_2, \dots , z_k}^M$, where
 \[
 \Pi_{z_1, z_2, \dots , z_k}=\prod_{i<j}^k(z_i^N-z_j^N)
 \]
 we can interchange any two of the fields $Y(a_1, z_1)$, \ $Y(a_2, z_2), \dots , \ Y (a_k, z_k)$ in the product.
 Also, by the induction hypothesis, we know there exists $X_{z_2, \dots , z_k}(a_2\ten \dots \ten a_k)\in W[\!]z_2, \dots , z_k]\!]\ten \mathbf{F}^N_{\epsilon}(z_2, \dots , z_k)^{+, z_k}$ such that
\begin{equation*}
Y(a_1, z_1)Y(a_2, z_2)\dots Y (a_k, z_k)1=i_{z_2,\dots,z_k} Y(a_1, z_1) X_{z_2, \dots , z_k}(a_2\ten \dots \ten a_k).
\end{equation*}
Let $F_{z_2, \dots, z_k}=W[\!]z_2, \dots , z_k]\!]\ten \mathbf{F}^N_{\epsilon}(z_2, \dots , z_k)^{+, z_k}$.
We see that
\[
Y(a_1, z_1) X_{z_2, \dots , z_k}(a_2\ten \dots \ten a_k)\in F_{z_2, \dots, z_k}(\!(z_1)\!).
\]
But also
\begin{align*}
\Pi_{z_1, z_2, \dots , z_k}^M &Y(a_1, z_1)Y(a_2, z_2)\dots Y (a_k, z_k)1 \\
&=\Pi_{z_1, z_2, \dots , z_k}^M  Y(a_2, z_2)\dots  Y (a_{k-1}, z_{k-1}) Y (a_k, z_k)Y(a_1, z_1)1.
\end{align*}
But from the axioms of vertex algebra we know that $Y(a_1, z_1)1$ is regular in $z_1$ (has no negative powers of $z_1$). Thus we see that the last line has no pole in $z_1$. But that is also true then for
\[
\Pi_{z_1, z_2, \dots , z_k}^M Y(a_1, z_1) X_{z_2, \dots , z_k}(a_2\ten \dots \ten a_k)\in F_{z_2, \dots, z_k}(\!(z_1)\!).
\]
Which forces
\[
\Pi_{z_1, z_2, \dots , z_k}^M Y(a_1, z_1) X_{z_2, \dots , z_k}(a_2\ten \dots \ten a_k)\in F_{z_2, \dots, z_k}[\![z_1]\!];
\]
and thus the required expansion
\[
Y(a_1, z_1)Y(a_2, z_2)\dots Y (a_k, z_k)1=i_{z_1,z_2,\dots,z_k} X_{z_1, z_2, \dots , z_k}(a_1\ten a_2\ten \dots \ten a_k)
\]
exists for an element
\[
X_{z_1, z_2, \dots , z_k}(a_1\ten a_2\ten \dots \ten a_k)\in W[\!]z_1, z_2, \dots , z_k]\!]\ten \mathbf{F}^N_{\epsilon}(z_1, z_2, \dots , z_k)^{+, z_k}.
\]
\end{proof}

\section{Appendix: Comparison of Twisted vertex algebras and $\Gamma$-vertex algebras.}
\label{section:appendix}

In \secref{section:NormalOrdProd} we defined products of fields $a(w)_{(j, k)}b(w)$ for any $j=1, 2, \dots, N$ and $k\in \mathbb{Z}$. For $k\geq 0$ these products were the coefficients of the OPEs, and for $k<0$ they were the normal ordered products.

As we mentioned in \secref{section:NormalOrdProd}, in the paper  \cite{Li2} Li also defines products of fields for  ``compatible" pairs of vertex operators (see Definition 3.4 in \cite{Li2}). We provide examples to demonstrate that the products of fields differ in our approach.

In  Example \ref{example:B}, we have
$\phi^B(w)_{(2, 0)}\phi^B (w)=-2w$, $\phi^B(w)_{(1, -1)}\phi^B (w)=1$.
Let us calculate the corresponding products according to Definition 3.4 in \cite{Li2}. We will denote the products in \cite{Li2} by $\phi^B(w)_{\overline{(\alpha, k)}}\phi^B (w)$. According to \cite{Li2} the pair of fields $(\phi^B(x_1), \phi^B (x_2))$ is compatible  by the polynomial $f(x_1, x_2)=x_1 +x_2$, since we have from the OPE  of the field $\phi^B(z)$ with itself
\[
(x_1 +x_2)\phi^B (x_1)\phi^B(x_2) =-2x_2 +(x_1 +x_2):\phi^B (x_1)\phi^B(x_2):,
\]
and as always $:\phi^B (x_1)\phi^B(x_2):$ is an element of  $Hom(\mathit{F_B},  \mathit{F_B}((x_1, x_2)))$.
Thus according to Definition 3.4 in \cite{Li2}, if we want to calculate the products at $\lambda =1$ ($\alpha =1$ in \cite{Li2}), we calculate
\begin{align*}
((x_1 +x)\phi^B (x_1)\phi^B(x))\arrowvert _{x_{1}=x+x_{0}} &=-2x +(2x +x_0)(:\phi^B (x+x_0)\phi^B(x):)  \\
& =-2x +(2x +x_0)\sum _{k\geq 0}x_0^k :(\partial_x^{(k)}\phi^B (x))\phi^B (x):.
\end{align*}
Next,
\begin{align*}
&i_{x, x_0}(x_0+x+x)^{-1}((x_1 +x)\phi^B (x_1)\phi^B(x))\arrowvert _{x_{1} =x+x_{0}}  \\
& \hspace{1cm} =-2x\Big(\sum_{l\geq 0}\frac{(-1)^lx_0^l}{(2x)^{l+1}}\Big) +\sum _{k\geq 0}x_0^k :(\partial_x^{(k)}\phi^B (x))\phi^B (x):\Big)  \\
& \hspace{1cm} =\big(-1 +:\phi^B (x)\phi^B (x):\big) +x_0 \Big(\frac{1}{2x} +:(\partial_x\phi^B (x))\phi^B (x): \Big) +O(x_0^2).
\end{align*}
Thus, as we commented in \secref{section:NormalOrdProd}, since we have a single pole in the OPE (at $z=-w$) here our positive (the OPE) products coincide with the \cite{Li2} in that we have
\[
\phi^B(x)_{\overline{(1, k)}}\phi^B (x) =\phi^B(x)_{(1, k)}\phi^B (x) =0 \quad \text{for} \quad k\geq 0.
\]
On the other hand, our  negative (normal ordered products) differ from \cite{Li2}, for instance
\begin{align*}
& \phi^B(x)_{(1, -1)}\phi^B (x)= :\phi^B (x)\phi^B (x): =1 \quad \text{vs}  \quad \phi^B(x)_{\overline{(1, -1)}}\phi^B (x) =-1 +:\phi^B (x)\phi^B (x):=0;\\
& \phi^B(x)_{(1, -2)}\phi^B (x)= :(\partial _x \phi^B (x))\phi^B (x):\quad \text{vs} \quad \phi^B(x)_{\overline{(1, -2)}}\phi^B (x) =\frac{1}{2x} +:(\partial_x\phi^B (x))\phi^B (x):
\end{align*}
Similarly, for $\lambda =-1$ ($\alpha =-1$ in \cite{Li2}), we calculate
\begin{align*}
((x_1 +x)\phi^B (x_1)\phi^B(x))\arrowvert _{x_{1}=-x+x_{0}} &=-2x +(x_1 +x)(:\phi^B (-x+x_0)\phi^B(x):)  \\
& = -2x +x_0\sum _{k\geq 0}x_0^k :(\partial_{-x}^{(k)}\phi^B (-x))\phi^B (x):.
\end{align*}
Next we have
\begin{align*}
&i_{x, x_0}(x_0-x+x)^{-1}((x_1 +x)\phi^B (x_1)\phi^B(x))\arrowvert _{x_{1} =-x+x_{0}}  \\
& \hspace{2cm} =\frac{1}{x_0}\Big(-2x +x_0\sum _{k\geq 0}x_0^k :(\partial_{-x}^{(k)}\phi^B (-x))\phi^B (x):\Big)  \\
& \hspace{2cm} =(-2x)x_0^{-1} +(:\phi^B (-x)\phi^B (x):) +\big(:(\partial _{-x} \phi^B (-x))\phi^B (x):)x_0 +O(x_0^2).
\end{align*}
Since we have a single pole in the OPE (and precisely at $\alpha =-1$) in this example our  products coincide with \cite{Li2}, in that we have
\[
\phi^B(x)_{\overline{(-1, k)}}\phi^B (x) =\phi^B(x)_{(2, k)}\phi^B (x) =0 \quad \text{for} \quad k\in \mathbb{Z}.
\]
We want to finish with a calculation of an example of products of \cite{Li2} in the case of multiple poles in the OPE, and we will take as an example the OPE from  \eqnref{eqn:HeisOPEsD}.
The pair of fields $(h^D(x_1), h^D(x_2))$ is compatible  by the polynomial $f(x_1, x_2)=(x_1^2 -x_2^2)^2$, since we have from the OPE of the field $h^D(z)$ with itself
\[
(x_1^2 -x_2^2)^2 h^D(x_1) h^D(x_2) =x_1 x_2 +(x_1^2 -x_2^2)^2 :h^D(x_1) h^D(x_2):,
\]
and as always $:h^D (x_1) h^D(x_2):$ is an element of  $Hom(\mathit{F_D},  \mathit{F_D}((x_1, x_2)))$.
Thus according to Definition 3.4 in \cite{Li2}, if we want to calculate the products at $\lambda =1$ ($\alpha =1$ in \cite{Li2}), we calculate
\begin{align*}
((x_1^2 -x^2)^2 h^D(x_1) h^D(x))&\arrowvert _{x_{1}=x+x_{0}} =(x+x_0)x +(x_1^2 -x^2)^2(:h^D(x+x_0)h^D(x):) \\
& =(x+x_0)x +(2xx_0 +x_0^2)^2\sum _{k\geq 0}x_0^k :(\partial_x^{(k)}h^D(x))h^D(x):.
\end{align*}
Next,
\begin{align*}
&i_{x, x_0}(2xx_0 +x_0^2)^{-2} ((x_1^2 -x^2)^2h^D(x_1) h^D(x))\arrowvert _{x_{1}=x+x_{0}} \\
& \hspace{1cm} =(x+x_0)x \cdot i_{x, x_0}(2xx_0 +x_0^2)^{-2} + \sum _{k\geq 0}x_0^k :(\partial_x^{(k)}h^D(x))h^D(x): \\
& \hspace{1cm} =\frac{(x+x_0)x}{x_0^2}\Big(\sum_{l\geq 0}(l+1)\frac{(-1)^lx_0^l}{(2x)^{l+2}}\Big) + \sum _{k\geq 0}x_0^k :(\partial_x^{(k)}h^D(x))h^D(x): \\
& \hspace{1cm}  = \frac{1}{4}x_0^{-2}+ (\frac{1}{8x}x_0^{-1} +\big(-\frac{1}{16x^2} +:h^D(x)h^D(x):\big) +O(x_0).
\end{align*}
Thus for the first three nontrivial products ($h^D(x)_{\overline{(1, k)}}h^D(x)=0$ for $k\geq 2$) we have
\begin{align*}
&h^D(x)_{\overline{(1, 1)}}h^D(x) =  \frac{1}{4} = h^D(x)_{(1, 1)}h^D(x), \\
& h^D(x)_{\overline{(1, 0)}}h^D(x) = \frac{1}{8x} \quad  \text{vs} \quad h^D(x)_{(1, 0)}h^D(x) =0\\
& h^D(x)_{\overline{(1, -1)}}h^D(x) = -\frac{1}{16x^2} +:h^D(x)h^D(x): =-\frac{1}{16x^2} +h^D(x)_{(1, -1)}h^D(x).
\end{align*}
 Thus products in \cite{Li2} differ from ours, except at the highest order of the pole, where $h^D(x)_{\overline{(1, 1)}}h^D(x) = h^D(x)_{(1, 1)}h^D(x)= \frac{1}{4}$.

As above, we can relate the products in \cite{Li2} to ours, and vice versa, however the algebraic structures incorporating these fields are not equivalent. In \cite{Li2} the author defines the notion of a $\Gamma$-vertex algebra, which is an algebraic structure incorporating certain collections of  multi-local fields.  This structure however does not  incorporate the examples we want to consider, namely the boson-fermion correspondences of type B, C and D-A.
To see this consider the fields $\phi ^B(z)$, $\phi ^B(-z)$, $:\phi ^B(z)\phi ^B(-z):$ (and their descendants) of \secref{section: examples}.
The field  $h^B(z)= \frac{1}{4}(:\phi ^B(z)\phi ^B(-z): -1)$, which has only  odd-indexed modes,   $h^B(z)=\sum _{n\in \mathbb{Z}} h_{2n+1} z^{-2n-1}$, is not a vertex operator in a $\Gamma$-vertex algebra for the following simple reason:
$\Gamma$-vertex algebras are comprised of vertex operators $Y_{\alpha } (v, z)$, for $v$ an element of a given vector space $V$, and $\alpha$ an element of a subgroup of $\mathbb{C}$, satisfying certain properties such as:
\[
[D, Y_{\alpha } (v, z)]=\alpha \frac{d}{dz} Y_{\alpha } (v, z),
\]
where $D$ is a nontrivial linear operator $D: V\to V$ (see \cite{Li2}, Lemma 6.5).
Now if $h^B(z)$ is a vertex operator in a $\Gamma$-vertex algebra, there should exists $\alpha$ and $v\in V$ such that $h^B(z)=Y_\alpha(v,z)$.
The field $h(z)$ has only odd powers of $z$, thus
$[D, Y_{\alpha } (v, z)]$ has only {\bf odd} powers of $z$ (as $D$ only acts on elements of $V$), but the $\alpha \frac{d}{dz} Y_{\alpha } (v, z)$ have {\bf even} powers of $z$ for all $\alpha$.
Thus there is no $v\in V$ and $\alpha\in \Gamma$ such that $h^B(z)=Y_\alpha(v,z)$.

The boson-fermion correspondence of type B, C, and D-A require the presence of even and odd fields. This motivates the definition of twisted vertex algebras. In general, for a finite cyclic group $\Gamma$,  $\Gamma$-vertex algebras are described by  smaller collections of descendant fields.  The examples of $\Gamma$-vertex algebras in \cite{Li2} can be incorporated as   subsets of  the corresponding spaces of  fields  of  the related twisted vertex algebras.

\def\cprime{$'$}

\end{document}